\documentclass[11pt]{article}
\usepackage[letterpaper,margin=1in]{geometry}

\usepackage{setspace}
\usepackage[T1]{fontenc}
\usepackage{amsmath, amsthm, amssymb}
\usepackage[table]{xcolor}
\usepackage{tabularx, booktabs} 
\usepackage{bm}

\usepackage{mathtools}
\usepackage{exscale,relsize}
\usepackage{newpxmath}
\usepackage{inconsolata, stmaryrd}
\usepackage{tikz, graphicx, subfig, float, tcolorbox, framed, venndiagram, listings, color, enumerate, enumitem, pdfpages}  
\usepackage{sidecap, multirow}
\usetikzlibrary{calc}

\usepackage{hyperref}
\usepackage{cleveref}

\definecolor{Header}{RGB}{230,249,255} %
\definecolor{Body}{RGB}{230,242,252}    

\usepackage[boxruled,vlined,commentsnumbered,titlenotnumbered,linesnumbered]{algorithm2e}

\hypersetup{colorlinks,breaklinks,urlcolor=[rgb]{0,0,0},linkcolor=[rgb]{0,0,0.8},citecolor=[rgb]{0,0,0.8}}

\tcbset{
  colframe=black, 
  colback=white, 
  boxrule=0.5pt, 
  arc=2mm, 
  outer arc=2mm, 
  boxsep=2mm, 
}

\DeclareMathOperator{\poly}{poly}

\renewcommand{\le}{\leqslant}
\renewcommand{\ge}{\geqslant}
\renewcommand{\leq}{\leqslant}
\renewcommand{\geq}{\geqslant}
\newcommand{\A}{\mathcal{A}}
\newcommand{\B}{\mathcal{B}}
\newcommand{\E}{\mathcal{E}}
\newcommand{\F}{\mathbb{F}}
\newcommand{\G}{\mathcal{G}}
\newcommand{\I}{\mathcal{I}}
\renewcommand{\L}{\mathcal{L}}

\newcommand{\U}{\mathcal{U}}
\newcommand{\V}{\mathcal{V}}
\newcommand{\X}{\mathcal{X}}

\newcommand{\cost}{{\bm c}}
\newcommand{\wt}{{\bm wt}}
\newcommand{\NMC}{{\textsc{NMC}}}

\newcommand{\rank}{{\mathrm{rank}}}

\newcommand{\rev}{{\mathrm{rev}}}

\newcommand{\val}{{\mathrm{val}}}

\SetNlSty{bfseries}{\color{black}}{}
\newtheorem{theorem}{Theorem}[section] 
\newtheorem{definition}[theorem]{Definition}

\newtheorem{lemma}[theorem]{Lemma}
\newtheorem{corollary}[theorem]{Corollary}
\newtheorem{observation}[theorem]{Observation}
\newtheorem{remark}[theorem]{Remark}

\newcommand{\head}{{\mathrm{head}}}
\newcommand{\tail}{{\mathrm{tail}}}
\newcommand{\trail}{P}

\newcommand{\BMF}{Bounded Max-Flow}

\date{}
\author{
  Keerti Choudhary\thanks{Department of Computer Science and Engineering, IIT Delhi, India. Email: keerti@cse.iitd.ac.in} \and
  Amit Kumar\thanks{Department of Computer Science and Engineering, IIT Delhi, India. Email: amitk@cse.iitd.ac.in} \and
  Lakshay Saggi\thanks{Department of Computer Science and Engineering, IIT Delhi, India. Email: csz228231@cse.iitd.ac.in}
}
\title{Sensitivity Oracles for Matroid Packing, Matroid Covering, and\\ Matching Problems with Applications}

\begin{document}
\maketitle
\thispagestyle{empty}

\begin{abstract}
Sensitivity oracles preprocess a graph so that queries can be answered after any $f$ edge insertions and deletions, without recomputing from scratch. For structural optimization problems the known landscape is limited: for flows and cuts, all known compact oracles handle only $f\le2$ failures; existing oracles for $s$- and global min-cut apply only to undirected graphs; and for matchings, arborescence and spanning-tree packings, and arboricity, no efficient oracle is known for $f>1$. We present a unified algebraic framework based on sensitivity oracles for matroid packing, covering, and parity of sparse linear matroids, yielding the first oracles supporting an arbitrary number $f$ of updates across all of these problems (all constructions randomized Monte-Carlo).

Concretely, we obtain efficient oracles for exact $(s,t)$-max-flow/min-cut, resolving an open problem of Baswana, Bhanja, and Pandey (ICALP'22) with near-optimal space; for all-pairs $k$-bounded flow, generalizing the near-optimal reachability oracle of Brand and Saranurak (FOCS'19, the case $k=1$); the first oracles for any $f$ for directed $s$- and global min-cut; oracles for $k$-disjoint arborescences, $k$-disjoint spanning trees, colorful spanning trees, and arboricity; and oracles for the existence of an $\alpha$-factor, with perfect matching as the case $\alpha=1$.

We further introduce the \emph{subset sensitivity model}, in which updates are confined to a susceptible edge set of size $\sigma$ fixed during preprocessing. Here we decouple updates from the matroid representation and eliminate the dependence on $k$ and the matroid density altogether: all of the above are supported with $\widetilde O(f^\omega)$ query time and $O(f\sigma^2)$ space. We also prove a matching $\Omega(\min\{\sigma^2,n^2\})$-bit lower bound when $f\ge2$, establishing optimality.

\end{abstract}

\newpage
\setcounter{tocdepth}{2}
\tableofcontents
\thispagestyle{empty}

\newpage
\section{Introduction}
\label{sec:introduction}
\setcounter{page}{1}

Real-world networks, including communication, transportation, and infrastructure networks, are inherently subject to failures and transient perturbations. Links may fail, capacities may fluctuate, and new connections may be introduced. 

This motivates the study of \emph{sensitivity oracles} (also called fault-tolerant oracles): compact data structures that preprocess a graph (or a more general combinatorial object) so that, for any small set of $f$ updates (edge insertions, deletions, or weight changes), queries about the updated object can be answered without recomputation. Over the past two decades, the sensitivity setting has been studied extensively for shortest paths and distances~\cite{DTCR08, DuanP09, WeimannY13, BrandS19, ChechikC20, GrandoniVW19, GuR21,KarczmarzS23}, reachability~\cite{BaswanaCR16, Choudhary16, BrandS19}, and connectivity~\cite{PatrascuT:07,DuanP09, BaswanaCR16, LongS22}.

In contrast, the landscape is starkly different for \emph{structural optimization} problems such as max-flows and min-cuts, arborescence and spanning-tree packings, matchings and degree-constrained subgraphs, and matroid intersection or parity. Here, essentially all known oracles handle only $f \le 2$ failures, and the underlying combinatorial techniques appear inherently tied to small values of $f$:
\begin{itemize}[leftmargin=2em, itemsep=1pt, topsep=2pt]
\item For a fixed pair $(s,t)$, 
Baswana, Bhanja, and Pandey~\cite{BaswanaBP22} designed an $(s,t)$-min-cut oracle for $f\le 2$ edge failures with $O(n^2)$ space and $O(1)$ query time, using the submodularity structure of minimum and near-minimum cuts, and posed the case $f>2$ as an open problem. Recently, the authors in~\cite{AhiCPPS26} handled general $f$ via directed flow augmentation~\cite{KimKPW22}, but with space and query time scaling as $2^{O_f(\lambda^4\log k)}$ in the flow value $\lambda$, which is quite expensive for graphs with large cuts or multiple failures. For $f\geq 3$ the relevant minimum cuts interact in increasingly complex ways, and no compact oracle avoiding this exponential dependence is known.
\item For all-pairs min-cut, the only known result is the single-failure oracle of Baswana and Pandey \cite{BaswanaP22}; for $s$-min-cut and global min-cut, oracles are known only for $f \le 2$ and only in undirected graphs~\cite{DinitzN95, Bhanja25}; in directed graphs nothing non-trivial is known.
\item For matchings, degree-constrained subgraphs ($\alpha$-factors), disjoint arborescences, colorful spanning trees, arboricity, and matroid packing/covering/parity, no sensitivity oracles were known for any $f>1$, even though the \emph{dynamic} versions of some of these problems have received attention in the past~\cite{BlikstadMNT23, VosG26, GuptaP13, BernsteinS15, BernsteinS16, VosC26, BhattacharyaKSW23, Kiss23, AzarmehrBR24, Behnezhad23}, along with results about fault-tolerant subgraphs~\cite{AssadiB19, BanerjeeGRS22, BansalCDW24}.
\end{itemize}
Even seemingly simple decision questions like \emph{does the graph remain strongly connected after $f$ updates? does it still admit a perfect matching?} had no efficient answer. This raises the following central question of this paper:

\begin{tcolorbox}
\emph{
Is there a unifying technique that yields compact, efficient sensitivity oracles, for an arbitrary number $f$ of updates, across a vast class of structural optimization problems?
}
\end{tcolorbox}

We answer this question affirmatively, and our route is \emph{algebraic}. Algebraic techniques involving maintaining determinants and inverses of symbolic matrices under low-rank updates have recently proven remarkably effective 
for dynamic and fault-tolerant variants of shortest paths and reachability~\cite{Sankowski05, WeimannY13, BrandS19, GuR21, KarczmarzS23}, dynamic matchings~\cite{BrandNS19, BrandKZ26}, dynamic SCC~\cite{KarczmarzSmule23} along with many other applications~\cite{Brand21}. 
Yet no general framework 
for \emph{sensitivity} oracles for matroid problems has previously been studied. We bridge this gap: rather than solving each graph problem separately, we design sensitivity oracles for the fundamental matroid 
primitives that underlie them all: \emph{matroid packing}, \emph{matroid covering}, and \emph{matroid matching (parity)}.

\subsection{Our Contributions}
Let $M_1, M_2$ be two linear matroids of rank $r$ on a common ground set $E$ of size $m$, with integer element weights in $[-W,W]$. Given $k \ge 1$, the \emph{$k$-packing} problem asks whether there is a set $S \subseteq E$ of size $kr$ that can be partitioned into $k$ pairwise disjoint common bases of $M_1$ and $M_2$ (the two partitions may differ); the weighted version asks for the minimum weight of such a set. The \emph{$k$-covering} problem asks whether the entire ground set can be partitioned into $k$ independent sets of a given matroid. In the \emph{general sensitivity model}, the oracle must support up to $f$ arbitrary element updates (insertions, deletions, or weight changes) anywhere in the ground set.

For \emph{sparse} linear matroids (constant many non-zeros per column of the linear representation), we obtain the following.

\begin{table}[htbp]
\centering
\resizebox{0.92\linewidth}{!}{
\renewcommand{\arraystretch}{1.2}
\begin{tabular}{@{}llccc@{}}
\toprule
\textbf{Problem} & \textbf{Ref} & \textbf{Preprocessing} & \textbf{Space} & \textbf{Query Time} \\
\midrule
$k$-Packing (decision) & Thm~\ref{thm:sensitivity_packing} & $O(k^2 m + (kr)^\omega)$ & $O(km+(kr)^2 \log(kr))$ & $O((f k^2)^\omega)$ \\
$k$-Packing (weighted) & Thm~\ref{thm:sensitivity_packing} & $\widetilde{O}(k^2 m + W (kr)^3)$ & $O(km+W (kr)^3 \log(kr))$ & $\widetilde{O}(W kr (f k^2)^\omega)$ \\
$k$-Covering (decision) & Thm~\ref{thm:sensitivity_covering} & $O((kr)^\omega)$ & $O((kr)^2 \log(kr))$ & $O((f k)^\omega)$ \\
\bottomrule
\end{tabular}}
\vspace{-1mm}
\caption{Sensitivity oracles for matroid $k$-packing and $k$-covering.}
\label{tab:main}
\end{table}

Instantiating this framework, and combining it with new problem-specific reductions, yields the first sensitivity oracles supporting an arbitrary number $f$ of updates for a broad range of problems.

\paragraph{All-pairs Bounded Flow.} For directed unit-capacity graphs with costs in $[-W,W]$ and a flow bound $k$, we obtain an oracle that, for any pair $(s,t)$ and any $f$ edge updates, report $\min(k,\lambda_{G+U}(s,t))$ in $O((fk^2)^\omega\log k)$ time, where $\lambda_{G+U}(s,t)$ denotes the value of $(s,t)$-max-flow in the updated graph. Further, the oracle identifies the nearest $(s,t)$ min-cut and computes the minimum cost of the corresponding flow, when $\lambda_{G+U}(s,t) < k$ (see Table~\ref{table:gen-flow}). Our result extends the reachability and distance sensitivity oracles of Brand and Saranurak~\cite{BrandS19} (the special case $k=1$), which are known to be near-optimal for constant $f$; for $k>1$ no comparable oracles were known even for undirected graphs, where the state of the art was the single-failure all-pairs min-cut oracle of~\cite{BaswanaP22}.

\paragraph{Fixed-pair $(s,t)$-max-flow.} Our most notable application is for a fixed source-sink pair with \emph{arbitrary} flow value. Applying the all-pairs oracle with $k=n$ would be unnecessarily expensive; instead we present a flow-rerouting reduction whose effective capacity bound is $k = O(f)$, thereby eliminating the blowup associated with large flow values. After any $f$ updates the oracle outputs the updated flow value in just $O(f^{3\omega})$ time (see~\Cref{table:gen-flow}).

This fully resolves the open problem of Baswana, Bhanja, and Pandey~\cite{BaswanaBP22} for every $f$. Moreover, the space is essentially optimal for any constant $f\geq 2$ due to $\Omega(n^2)$ space lower bound by Bhanja~\cite{Bhanja24}. In contrast with~\cite{AhiCPPS26}, whose bounds degrade as $2^{O_f(k^4\log k)}$, our bounds are polynomial in $f$ and independent of $k$. 

\begin{table}[htbp]
\centering
\resizebox{0.82\linewidth}{!}{
\renewcommand{\arraystretch}{1.2}
\begin{tabular}{@{}lllll@{}}
\midrule
\textbf{Query pair} & \textbf{\#Failures} & \textbf{Space} & \textbf{Query Time} & \textbf{Ref.} \\
\midrule
\multicolumn{5}{@{}l}{\emph{Prior work}} \\
fixed $(s,t)$ & $f \le 2$ & $O(n^2)$ & $O(1)$  & \cite{BaswanaBP22} \\
fixed $(s,t)$ & $f \le 2$ & $O(\lambda n)$ & $O(1)$  & \cite{AhiCPPS26} \\
fixed $(s,t)$ & any $f$ & $2^{O_f(\lambda^4\log\lambda)}\, n\log n$ & $2^{O_f(\lambda^4\log\lambda)}\log n$  & \cite{AhiCPPS26} \\[1mm]
\midrule
\multicolumn{5}{@{}l}{\emph{This paper}} \\
$V \times V$, $k$-bounded & any $f$ & $\widetilde O((kn)^2)$ & $\widetilde O((fk^2)^\omega)$ & Thm~\ref{theorem:FT-all-pairs-flow} \\
\quad nearest min-cut & any $f$ & $\widetilde O((kn)^2)$ & $O((fk^2)^\omega\, n)$ & Thm~\ref{theorem:FT-all-pairs-flow} \\
\quad min-cost flow & any $f$ & $\widetilde O(W(kn)^3)$ & $\widetilde{O}((fk^2)^\omega\, kWn)$ & Thm~\ref{theorem:FT-all-pairs-flow} \\
fixed $(s,t)$, exact & any $f$ & $\widetilde O((fn)^2)$ & $\widetilde{O}(f^{3\omega})$ & Thm~\ref{theorem:st-flow} \\
\quad nearest min-cut & any $f$ & $\widetilde O((fn)^2)$ & $\widetilde{O}(f^{3\omega}\, n)$ & Thm~\ref{theorem:st-flow} \\
\quad min-cost max-flow & any $f$ & $\widetilde O(W(fn)^3)$ & $\widetilde{O}(f^{3\omega+1}\, Wn)$ & Thm~\ref{theorem:st-flow} \\
\midrule
\end{tabular}
}
\vspace{-1mm}
\caption{Sensitivity oracles for $(s,t)$- max-flow and min-cut in directed graphs. Here $\lambda$ is the value of $(s,t)$-max-flow in $G$.}
\vspace{-2mm}
\label{table:gen-flow}
\end{table}

\paragraph{$s$-min-cut and global min-cut in directed graphs.} No non-trivial sensitivity oracles were known for these problems in directed graphs; undirected results~\cite{DinitzN95, Bhanja25} existed for $f\le 2$. We present the first directed $s$-min-cut and global min-cut oracles for arbitrary $f$ with $O(f^2 n^3\log n)$ space and $\widetilde{O}(f^{3\omega+1})$ query time. Additionally, the oracle reports the corresponding cut in $\widetilde{O}(f^{3\omega}n)$ time. Further, we provide $\Omega(n^2)$ lower bound for these problems, for any constant $f\geq 2$. 
We also present an alternate oracle for $s$-min-cut and global min-cut via arborescence packings (see \Cref{cor:s-global-mincut-via-arb}) that reports the cut \emph{value} in just $\widetilde{O}(f^\omega)$ time, at the cost of $O(fm^2)$ space, exhibiting a complementary space-query trade-off.

\paragraph{Spanning structures.}
Our framework also yields sensitivity oracles for a broad class of spanning structure problems. As a basic example, consider the problem of deciding whether $G$ admits a reachability tree rooted at $s$ after $f$ updates. 
Existing pairwise reachability oracle of Brand and Saranurak~\cite{BrandS19} require $\Omega(n)$ queries\footnote{Note that $\text{poly}(f)$ query time is possible using work of \cite{BrandS19} when updates are restricted to edge deletions, but the general setting of both insertions and deletions is more challenging.}
to certify this property, and with query time depending only on $f$ was previously not known.
We instead obtain an oracle for deciding the existence of $k$ edge-disjoint arborescences rooted at a \emph{query} root $s$ in $O((fk^2)^\omega)$ time. Instantiating $k=1$ yields an $O(f^\omega)$-time oracle for rooted reachability trees. 

As an immediate consequence, since $k$-strong connectivity amounts to existence of $k$-arborescence packing in both directions from any root, we decide $k$-strong connectivity of $G+U$ in $O((fk^2)^\omega)$ time with $O((kn)^2\log n)$ space (\Cref{cor:scc}). For comparison, for $k=1$, the oracle of~\cite{BaswanaCR19scc} reports the strongly connected components after $f$ failures using $O(2^f n^2)$ space and $\widetilde{O}(2^f n)$ query time; for the decision problem we replace both exponential dependencies by $\mathrm{poly}(f)$.

The same guarantees extend to deciding the existence of $k$ edge-disjoint spanning trees. We also obtain the first sensitivity oracle for colorful (rainbow) spanning trees, with $O(f^\omega)$ query time and quadratic space, and additionally provide a matching lower bound on the oracle size. Finally, we give the first sensitivity oracle for deciding whether the updated graph has arboricity (the minimum number of forests needed to cover edges of a graph) at most $k$, answering queries in $O((fk)^\omega)$ time (see \Cref{table:gen-spaning-struct}).

\begin{table}[!ht]
\centering
\resizebox{0.72\linewidth}{!}{
\renewcommand{\arraystretch}{1.2}
\begin{tabular}{@{}lccc@{}}
\midrule
\textbf{Problem} & 
\textbf{Space} & 
\textbf{Query Time} &
\textbf{Ref}\\ 
\midrule
$k$-Disjoint Arborescences  
& 
$O((kn)^2 \log n)$ & 
$O((fk^2)^\omega)$ &
Thm~\ref{thm:sensitivity_arborescences} \\

$k$-Disjoint Spanning Trees~ \  
& 
$O((kn)^2 \log n)$ & 
$O((fk^2)^\omega)$ &
Thm~\ref{thm:sensitivity_k_spanning_trees} \\

Colorful Spanning Tree  
& 
$O(n^2 \log n)$ &
$O(f^\omega)$ &
Thm~\ref{thm:sensitivity_colorful} \\

$k$-Forest Covering  
& 
~ \ $O((kn)^2 \log n)$ \ & 
$O((fk)^\omega)$ &
Thm~\ref{thm:sensitivity_k_forests} \\
\midrule
\end{tabular}
}
\caption{Sensitivity oracles (decision version) for spanning structure problems. All results have corresponding weighted versions with additional $(Wkn)$-multiplicative factors in space and query.}
\label{table:gen-spaning-struct}
\end{table}

\paragraph{Matroid matching, matchings, and $\alpha$-factors.}
Finally, our framework extends to \emph{matroid matching (parity)}: for a $c$-sparse linear matroid of rank $r$ on $m$ pairs, we obtain sensitivity oracles deciding the existence of a parity basis after $f$ pair updates in $O((c^2f)^\omega)$ time using $O(r^2\log r)$ space. Since matroid parity generalizes both matroid intersection and non-bipartite matching, these oracles have immediate implications for matchings and $\alpha$-factors.
(Given a function $\alpha: V \to \mathbb{Z}^+$, an \emph{$\alpha$-factor} is a subgraph of $G$ in which every vertex $v$ has degree exactly $\alpha(v)$. This generalizes the notion of perfect matching where $\alpha$ is identically one.)

For degree bounds $\alpha(v)\le k$, this yields an $\alpha$-factor existence oracle with $O((kn)^2\log n)$ space and $O((fk^2)^\omega)$ query time. As a special case, setting $\alpha= 1$ gives a perfect matching oracle with $O(n^2\log n)$ space and $O(f^\omega)$ query time. We complement this with an $\Omega(n^2)$ space lower bound for perfect matching oracles for $f\ge2$ updates, establishing space optimality up to logarithmic factors. The standard reduction from maximum to perfect matching further gives maximum matching size within the same bounds up to polylogarithmic factors. Handling \emph{arbitrary} degree bounds requires a different approach: the standard expansion is prohibitively large, and unlike maximum flow, $\alpha$-factors admit no natural residual graph. We develop a residual-graph analogue based on closed alternating trails in the symmetric difference of two $\alpha$-factors, yielding an oracle with $O((fn)^2\log n)$ space and $O(f^{3\omega})$ query time, independent of $\max_v\alpha(v)$.
\medskip

Note that all decision oracles above achieve $\widetilde{O}_{f,k}(n^2)$ or $\widetilde{O}_{f,k}(n^3)$ space, work for sparse matroids, and have $\mathrm{poly}(fk^2)$ query time. The dependence on $k$ arises from our generic reduction, which maps $f$ updates in the original instance to $\Theta(fk^2)$ updates in the auxiliary matrix representation on which our framework operates. This leads to a natural question: can one eliminate the dependence on $k$ and obtain query time depending only on the number $f$ of updates?

\begin{tcolorbox}
\emph{Can the dependence on $k$ (and on the matroid density) be removed from the query time altogether, possibly at the cost of more but still polynomial space?}
\end{tcolorbox}

\paragraph{Removing the dependence on $k$, and Subset Sensitivity.}
We answer this question affirmatively. The $k$-dependence in the general model arises because every update modifies the oracle's internal matroid representation, and the cost of each such modification grows with $k$. Our key idea is to \emph{decouple updates from the representation}. The price of decoupling is space, which now scales with the square of the set of elements that may ever be added or deleted, which in the worst case is $O(m^2)$. 

Additionally, we observe that if the set of edge updates is restricted to a smaller set of size $\sigma$, then the space of our oracle has only a quadratic dependence on $\sigma$.
We therefore introduce the \emph{subset sensitivity model}, where, a set $E_Q$ of \emph{susceptible} elements of size $\sigma$, possibly including elements not initially present, is specified at preprocessing time, and all updates are restricted to $E_Q$. This captures applications where changes are localized,
such as a known set of vulnerable links or candidate edges, and reduces the space requirement to $O(\sigma^2)$. 

Applying this primitive throughout our reductions yields subset sensitivity oracles for $(s,t)$-max-flow, $s$- and global min-cut, all-pairs bounded flow, arborescences, spanning trees, colorful spanning trees, arboricity, matroid parity, and arbitrary-degree $\alpha$-factors with $O(f^\omega)$ query time for decision variants (see \Cref{table:subset}). This trade-off is inherent: we prove that $\Omega(\min\{\sigma^2,n^2\})$ bits are necessary for flows, cuts, colorful spanning trees, and matching in this model, already for $f=2$ (see \Cref{sec:lb}), establishing the optimality of our $O(\sigma^2)$ space bounds.

As a direct application with $k=1$, we obtain subset sensitivity oracles (see \Cref{cor:subset}) for reachability with $O(\sigma^2)$ space and $O(f^\omega)$ query time, and distances with $\widetilde{O}(Wn\sigma^2)$ space and $\widetilde{O}(Wnf^\omega)$ query time.
This matches the query time of the general-model reachability and distance oracle of~\cite{BrandS19}, while replacing the $n^2$ space dependence by $\sigma^2$.

\begin{table}[htbp]
\centering
\begin{center}
\resizebox{0.76\linewidth}{!}{
\renewcommand{\arraystretch}{1.2}
\begin{tabular}{@{}lccc@{}}
\midrule
\textbf{Problem} & \textbf{Space} & \textbf{Query Time} & \textbf{Ref} \\
\midrule
$V_0\times V_0$ Bounded Flow $(V_0\subseteq V)$ & $O((\sigma + k|V_0|)^2)$ & $O((f + k)^\omega)$ & Thm~\ref{theorem:subset-allpairs-flow} \\
$(s,t)$-Max-Flow & $O(\sigma^2)$ & $\widetilde{O}(f^\omega)$ & Thm~\ref{theorem:subset-st-flow} \\
$s$-Min-Cut Value & $O(f\sigma^2)$ & $\widetilde{O}(f^{\omega })$ & Cor~\ref{cor:s-global-mincut-via-arb} \\
Global Min-Cut Value & $O(f\sigma^2)$ & $\widetilde{O}(f^{\omega })$ & Cor~\ref{cor:s-global-mincut-via-arb} \\
$k$-Arborescences & $O(\sigma^2)$ & $O((f + k)^\omega)$ & Thm~\ref{theorem:subset-arborescences} \\
$k$-Spanning Trees & $O(\sigma^2)$ & $O(f^\omega)$ & Thm~\ref{thm:subset_sensitivity_k_spanning_trees} \\
Colorful Spanning Tree & $O(\sigma^2)$ & $O(f^\omega)$ & Thm~\ref{thm:subset_sensitivity_colorful} \\
$k$-Forest Covering & $O(\sigma^2)$ & $O(f^\omega)$ & Thm~\ref{thm:subset_sensitivity_k_forests} \\
$r$-rank Matroid Parity & $O(\sigma^2)$ & $O(f^\omega)$ & Thm~\ref{thm:parity_subset_sensitivity} \\
Arbitrary degree $\alpha$-factor & $O(\sigma^2)$ & $O(f^\omega)$ & Thm~\ref{thm:subset-arbitrary-alpha} \\
\midrule
\end{tabular}
}
\end{center}
\vspace{-3mm}
\caption{Subset sensitivity oracles (decision variants).}
\label{table:subset}
\end{table}

\subsection{Technical Overview}
\label{sec:techniques}
We now describe the main ideas underlying our framework. To illustrate the key techniques, we develop our algebraic approach through the lens of the $(s,t)$-max-flow sensitivity oracle. We first restrict attention to the decision (unweighted) versions; the weighted versions follow naturally by introducing an indeterminate $Y$ whose exponents encode weights. Throughout, $\omega$ is the matrix multiplication exponent and $\F$ is a sufficiently large finite field.

\subsubsection{From Reachability to $k$-Bounded Flow}
Consider the problem of preprocessing a directed graph $G$ such that for any query consisting of a subset  of $f$ edge updates, the updated $(s,t)$-max-flow value can be quickly reported. Prior sensitivity oracles for cuts and flows have been predominantly combinatorial~\cite{BaswanaBP22, AhiCPPS26, BaswanaP22}. These methods analyze the structure of minimum and near-minimum $(s,t)$-cuts and how failures interact with them. While successful for small $f$, this analysis becomes increasingly intricate as the number of failures grows, causing query times or space to blow up exponentially with the flow value~\cite{AhiCPPS26}.

We instead take an algebraic approach. Our starting point is the work of Brand and Saranurak~\cite{BrandS19}, who obtained reachability and distance sensitivity oracles for any number of failures. Since reachability is the special case of deciding whether the $(s,t)$-max-flow is at least $1$, a natural question is:

\begin{quote}
\emph{Can this algebraic paradigm be lifted from reachability ($k=1$) to $k$-bounded flow---deciding whether at least $k$ edge-disjoint $(s,t)$-paths survive $f$ edge updates?}
\end{quote}

We answer this question by building a general algebraic framework based on matroids.

\paragraph{Modeling Flow as Matroid Intersection.}
Given a directed graph $H=(V_H, E_H)$ with designated terminals $s$ and $t$, we can encode the existence of exactly $\lambda$ edge-disjoint $(s, t)$-paths as a matroid intersection problem. We first augment $H$ by adding exactly $k$ zero-cost self-loops at every vertex $v \in V_H \setminus \{s, t\}$, and $k-\lambda$ zero-cost self-loops at $s$ and $t$. We then define two partition matroids on this augmented edge set:
\begin{itemize}
    \item $M_{\mathrm{out}}$: A subset $S \subseteq E_H$ is independent if every vertex $v \in V_H \setminus \{t\}$ has at most one outgoing edge in $S$. 
    \item $M_{\mathrm{in}}$: A subset $S \subseteq E_H$ is independent if every vertex $v \in V_H \setminus \{s\}$ has at most one incoming edge in $S$.
\end{itemize}
A common basis of $M_{\mathrm{out}}$ and $M_{\mathrm{in}}$ is a subset $S$ that satisfies both degree constraints tightly. Combinatorially, such a set $S$ forces the existence of exactly $\lambda$ edge-disjoint $(s, t)$-paths, with the self-loops absorbing the remaining degree budget at each vertex. Furthermore, a minimum-weight common basis corresponds exactly to the minimum-cost routing of these paths.
\medskip

To test for $k$-bounded flow, we must check if the graph can support the packing of $k$ such structures. This is equivalent to checking if the ground set $E_H$ contains a subset that can be partitioned into $k$ common bases of $M_{\mathrm{out}}$ and $M_{\mathrm{in}}$.

Our resolution to this is to pass to \emph{$k$-fold unions}. The $k$-fold union $M^{\vee k}$ of a matroid $M$ is a matroid whose independent sets are the union of $k$ independent sets of $M$. Because partition matroids are strongly base-orderable, checking for a $k$-packing of common bases is perfectly equivalent to checking for a \emph{single} common basis of the $k$-fold unions $M_{\mathrm{out}}^{\vee k}$ and $M_{\mathrm{in}}^{\vee k}$.

To represent these unions algebraically, we use the randomized construction of Narayanan, Saran and Vazirani~\cite{NarayananSV94}. If a linear matroid $M$ of rank $r$ has a representation matrix $A \in \F^{r \times m}$, its $k$-fold union $M^{\vee k}$ can be represented by stacking $k$ copies of $A$, each scaled column-wise by a random diagonal matrix $C_i$:
$$
U = \begin{pmatrix} A C_1 ; A C_2 ; \ldots ; A C_k \end{pmatrix} \in \F^{kr \times m}
$$
Let $U_{\mathrm{out}}$ and $U_{\mathrm{in}}$ be the $kr \times m$ union representation matrices for $M_{\mathrm{out}}^{\vee k}$ and $M_{\mathrm{in}}^{\vee k}$. We associate a formal variable $Y$ and an independent random scalar $x_e$ with each element $e$, forming a diagonal weight matrix $D \in \F[Y]^{m \times m}$ where $D_{e,e} = x_e Y^{W + \wt(e)}$.

Then the \emph{mixed Laplacian} is defined as:
\[
\L_{\mathrm{pack}} \;=\; U_{\mathrm{out}} \, D \, U_{\mathrm{in}}^T \;\in\; \F^{kr\times kr}
\]
By a $k$-fold Cauchy-Binet expansion, the determinant of $\L_{\mathrm{pack}}$  enumerates the common bases of the unions. \emph{Thus, $k$-bounded flow is reduced entirely to evaluating whether $\det(\L_{\mathrm{pack}}) \neq 0$ (for existence) and isolating its minimum $Y$-degree (for minimum cost).}

\paragraph{Sensitivity via Sparse Determinant Updates.}
The power of this formulation in the sensitivity setting lies in sparsity. An update to an edge $e$ in the graph only modifies the $e$-th rank-one term in the Laplacian sum $\L_{\mathrm{pack}} = \sum_{e\in E} D_{e,e} (u_{\mathrm{out}})_e (u_{\mathrm{in}})_e^T$. Because the base partition matroids are extremely sparse ($O(1)$ non-zeros per column), the union columns $(u_{\mathrm{out}})_e$ and $(u_{\mathrm{in}})_e$ have at most $O(k)$ non-zeros. Consequently:
\begin{quote}
\em Any $f$ edge updates in the graph translate to a sparse additive update to $\L_{\mathrm{pack}}$ containing only $O(fk^2)$ non-zero entries.
\end{quote}

To evaluate this perturbed determinant, we utilize the Matrix Determinant Lemma, which allows us to compute the determinant of a matrix under a low-rank update in time proportional to the update size, provided we have precomputed its inverse. We leverage the data structure of Brand and Saranurak~\cite{BrandS19} to achieve this.

\emph{Remark on Singular Matrices:} The original tool of \cite{BrandS19} required the base matrix $\L_{\mathrm{pack}}$ to be invertible (i.e., the base graph must already possess the required flow). We remove this limitation by introducing a random low-rank numeric pad $UV^T$ during preprocessing that forces invertibility. When a query arrives, we fold the numeric pad back into the query update as a structured low-rank perturbation (\Cref{lemma:BrandS-det-General-Q}), allowing the determinant lemma to work seamlessly even if the base matrix is identically singular.

With this tool, an $f$-edge update can be evaluated in $O((fk^2)^\omega)$ time for the decision version, and $\widetilde{O}(Wkr(fk^2)^\omega)$ time for the weighted version. 

\paragraph{Extension to All-Pairs queries via Parallel Elements.}
To extend this to dynamic all-pairs ($V \times V$) queries without blowing up the matrix dimension, we introduce a static dummy source $s^*$ and dummy sink $t^*$ during preprocessing. At query time, to test $(s,t)$-flow, we dynamically insert $k$ parallel edges from $s^*$ to $s$ and from $t$ to $t^*$. 
Crucially, parallel copies of an edge share the exact same sparse column support in the base matroid representations. Thus, adding $k$ parallel edges only perturbs an $O(k)\times O(k)$ block in the Laplacian-contributing $O(k^2)$ modified entries, exactly the same cost as a \emph{single} edge update. Therefore, an all-pairs query with $f$ updates costs $O((fk^2 + k^2)^\omega) = O((fk^2)^\omega)$ time, rather than a naive multiplicative blowup of $O(((f+k)k^2)^\omega)$.

\subsubsection{Two Strategies for Exact $(s,t)$-Max-Flow}
The bounded flow oracle performs well for small flow values, whereas extending it to exact, unbounded $(s,t)$-max-flow remains challenging. In particular, setting $k=n$ in the all-pairs oracle leads to extremely slow query time.

We develop two distinct strategies to resolve this: one focusing on algorithmic rerouting, and the other on structurally decoupling the matrix updates.

\paragraph{Strategy 1: Residual Graph Reduction}
Here we consider the question if recomputing the exact flow value after $f$ edge updates can be reduced to All-pairs bounded flow problem. Our core insight is that \emph{sensitivity does not require computing the entire flow, only repairing it.}

During preprocessing, we compute an exact max-flow $h$ of value $\lambda_G(s,t)$. Consider a set of edge failures $F$, of which $d \le f$ edges actually carried flow in $h$. Deleting these $d$ edges destroys exactly $d$ units of flow, leaving $d$ units of excess at the severed tail vertices and $d$ units of deficit at the head vertices.

To repair the flow, we must re-route these $d$ units of excess to the deficits. This is equivalent to finding $d$ edge-disjoint paths in the \emph{residual graph} $(G-F)_h$. By attaching a dummy source to the excess vertices and a dummy sink to the deficit vertices, repairing the exact max-flow is reduced completely to an all-pairs bounded-flow query on the residual graph with an effective capacity bound of  $k = d \le f$.

To support insertions and accurately track the new flow value, we pad the base graph $G$ with $2f$ zero-cost parallel $(s,t)$-buffer edges, saturating $f$ of them in $h$. Insertions and deletions simply force the repaired flow to saturate fewer or more of these buffer edges, allowing us to obtain the new max-flow value $\lambda_{G+U}(s,t)$ directly off the buffer usage via binary search (\Cref{lemma:reroute-1}).

By plugging this residual graph formulation back into our all-pairs bounded flow oracle with $k=O(f)$, we drop the global capacity $n$ from the query time entirely. This resolves the open problem of~\cite{BaswanaBP22} by yielding an exact max-flow oracle bounded purely by $\widetilde{O}(f^{3\omega})$ query time and $O((fn)^2 \log n)$ space. Costs are handled seamlessly by tracking the cost of the rerouting paths in the residual graph (\Cref{lemma:reroute-2}).

\paragraph{Strategy 2: The Subset Sensitivity Model (Decoupling)}

While the residual rerouting strategy successfully removed $k$ from the query time, the query time still turns out to be $O(f^{3\omega})$. This leads to a second question:

\begin{quote}
\emph{Can we remove the dependence on $k$ from the algebraic update time directly, maintaining an $O(f^\omega)$ query cost while simultaneously shrinking the space requirement?}
\end{quote}

The algebraic bottleneck in the general model occurs because every physical edge update forces the oracle to explicitly rewrite $O(k^2)$ entries in the dense union representation. 

We resolve this by \emph{decoupling the updates from the representations.} 
We \emph{freeze} the representation matrices $A := U_{\mathrm{out}}$ and $B := U_{\mathrm{in}}$ at preprocessing. Instead of updating the matroids directly, we realize every network update as a simple \emph{diagonal switch} in the weight matrix $D$. An element is either toggled on, off, or reweighted. For an update set $S \subseteq Q$ of size $f$, the query matrix becomes:
\[
L' \;=\; A D' B^T \;=\; L + A_S\, \Delta_S\, B_S^T
\]
and by \Cref{lemma:matrix_det},
\[
\det(L') \;=\; \det(L)\cdot\det\bigl(I_f + B_S^T L^{-1} A_S\;\Delta_S\bigr).
\]
The only data a query needs is the $f\times f$ matrix $B_S^T L^{-1} A_S$. Interestingly, we observe that if the locus of edge updates is confined to a \emph{susceptible set} $Q$ of size $\sigma$ declared at preprocessing, then it suffices to compute and store the $\sigma\times \sigma$ principal submatrix 
\[
U_Q \;:=\; B_Q^T\, L^{-1}\, A_Q.
\]
This allows us to also introduce the \textbf{Subset Sensitivity Model}, where a susceptible set of elements $Q$ (of size $\sigma$) is designated at preprocessing time. We thus obtain the following result.

\begin{lemma}[Subset determinant oracle; see \Cref{lemma:det}]
\label{lem:ov-det}
Let $A, B \in \F_d[Y]^{r\times m}$, let $D$ be diagonal, $L = ADB^T$ (possibly singular), and let $Q\subseteq[m]$, $|Q|=\sigma$, be fixed at preprocessing. There is a data structure that, for any $D'$ differing from $D$ in $\le f$ positions within $Q$, computes $\det(AD'B^T)$:
for $d=0$, with $O(\sigma^2 r^{\omega-2} + r^\omega)$ preprocessing, $O(\sigma^2)$ space, and $O(f^\omega)$ query time.
\end{lemma}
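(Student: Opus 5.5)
Handle the invertible case first, then reduce the singular case to it with a random low-rank pad.

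\textbf{Invertible case.} Suppose first that $L = ADB^T$ is invertible. Let $S\subseteq Q$ with $|S|\le f$ be the positions where $D'$ differs from $D$, and write $\Delta_S = D'_{S,S}-D_{S,S}$, an $f\times f$ diagonal matrix. Then
\[
AD'B^T = L + A_S\Delta_S B_S^T.
\]
By the matrix determinant lemma (\Cref{lemma:matrix_det}),
\[
\det(AD'B^T)=\det(L)\det\bigl(I_f + B_S^TL^{-1}A_S\Delta_S\bigr).
\]
The matrix $B_S^TL^{-1}A_S$ is exactly the $S\times S$ submatrix of $U_Q := B_Q^TL^{-1}A_Q$.

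\textbf{Preprocessing.}
\begin{itemize}
\item Compute $L$. Only the part of $ADB^T$ needed is formed, at cost at most $r^\omega$ plus $O(\sigma^2r^{\omega-2})$.
\item Compute $L^{-1}$ and $\det(L)$ in $O(r^\omega)$.
\item Compute $U_Q = B_Q^T(L^{-1}A_Q)$ by blocked rectangular multiplication. Both products are $\sigma\times r$ times $r\times r$ or $r\times\sigma$; splitting into $r\times r$ blocks gives $O((\sigma/r)^2 r^\omega + r^\omega)=O(\sigma^2r^{\omega-2}+r^\omega)$ when $\sigma\ge r$, and $O(r^\omega)$ otherwise.
\item Store only $U_Q$ and the scalar $\det(L)$, using $O(\sigma^2)$ space.
\end{itemize}

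\textbf{Query.} Read off the $f\times f$ submatrix indexed by $S$, form $I_f+(U_Q)_{S,S}\Delta_S$, and compute its determinant in $O(f^\omega)$. Since $d=0$, all entries are field scalars.

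\textbf{Singular case (the main obstacle).} When $L$ may be singular, there is no inverse to precompute. I would pad with random low-rank terms, as in \Cref{lemma:BrandS-det-General-Q}:
\begin{enumerate}
\item Append $r$ auxiliary columns to $A$ and $B$, namely $A^+=[A\mid I_r]$ and $B^+=[B\mid I_r]$, with a random diagonal block $R$ appended to $D$. Then $L^+ = L + R$.
\item By Schwartz--Zippel, $L^+$ is invertible with high probability over $R$. Indeed, $\det(L+R)$ as a polynomial in the entries of $R$ has the monomial $\prod R_{ii}$ with coefficient $1$, so it is nonzero.
\item Precompute $(L^+)^{-1}$ and $U_{Q'}$ for the enlarged susceptible set $Q'=Q\cup P$, where $P$ is the set of $r$ pad indices.
\end{enumerate}

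The obstacle is that removing the pad at query time is a rank-$r$ update, not rank $f$. That costs $r^\omega$ and $\sigma r$ space, which the statement's bounds do not allow. I see two ways to handle it.
\begin{itemize}
\item \textbf{Avoid removing the pad.} Use a pad of rank only as large as needed. Alternatively, treat $\det(AD'B^T+R)$ as a polynomial in a scaling variable $Z$ multiplying $R$, and recover the constant term of that polynomial by evaluation and interpolation at several random points.
\item \textbf{Restrict to the relevant subspace.} Precompute a Schur-complement-style compression: eliminate the pad variables once at preprocessing, and store the resulting $\sigma\times\sigma$ generalized quantities $B_Q^T L^{\#} A_Q$ together with rank information. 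The query then reduces to an $O(f)\times O(f)$ determinant.
\end{itemize}

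I would first try the route that keeps query time $O(f^\omega)$: choose the pad of rank $r-\rank(L)$ supported so that it provably must cancel against the $f$ updated terms. I expect verifying this to be the delicate step of the proof.
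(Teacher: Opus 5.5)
Your invertible case matches the paper's argument, but there is a genuine gap in the singular case, which is the real content of the lemma. The route you say you would try first, a pad that ``provably must cancel against the $f$ updated terms'', is not what is needed.

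The missing idea is a dichotomy on the rank deficiency $\delta := r-\operatorname{rank}(L)$. Since $AD'B^T = L + A_S\Delta_S B_S^T$ and the second term has rank at most $f$, you get $\operatorname{rank}(AD'B^T)\le r-\delta+f$.
\begin{itemize}
\item If $\delta>f$, every query answer is $0$, so you store a flag and answer in $O(1)$.
\item If $\delta\le f$, pad with a random term of rank only $\delta$: set $L_{\mathrm{pad}} := L+UV^T$ with random $U,V\in\F^{r\times\delta}$. This is invertible with high probability by Schwartz--Zippel, because some choice works: take the columns of $U$ (resp.\ $V$) to span a complement of the column space (resp.\ row space) of $L$.
\end{itemize}
Nothing has to cancel. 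You subtract the pad explicitly as part of the low-rank perturbation:
\[
AD'B^T \;=\; L_{\mathrm{pad}} + \begin{pmatrix} A_S & U\end{pmatrix}\begin{pmatrix}\Delta_S & 0\\ 0 & -I_\delta\end{pmatrix}\begin{pmatrix} B_S & V\end{pmatrix}^T .
\]
The matrix determinant lemma then reduces the query to a determinant of size $(f+\delta)\times(f+\delta)$, hence at most $2f\times 2f$, which costs $O(f^\omega)$. Its inner block is a submatrix of the precomputed $(\sigma+\delta)\times(\sigma+\delta)$ matrix $(B_Q\mid V)^T L_{\mathrm{pad}}^{-1}(A_Q\mid U)$. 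Because $\delta\le f\le\sigma$, this takes $O(\sigma^2)$ space and leaves the preprocessing bound unchanged. Your obstacle, that removing the pad is a rank-$r$ update, disappears exactly because the case $\delta>f$ is trivial. This is the paper's proof.

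Neither of your listed alternatives closes the gap within the stated bounds. Interpolating $\det(AD'B^T+ZR)$ in $Z$ needs a separate precomputed inverse and $\sigma\times\sigma$ table for each evaluation point. That is $\Theta(r)$ points for a full-rank $R$, or $\Theta(\delta)$ even for a rank-$\delta$ pad, multiplying both space and query time by that factor. The generalized-inverse / Schur-complement route is not specified concretely enough to check.
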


By leveraging this subset determinant lemma on our fixed $(s,t)$ flow formulations, we remove the dependence on matrix dimensions and packing parameters from the query time, yielding an exact $(s,t)$-max flow oracle with an optimal $\widetilde{O}(f^\omega)$ query time and strict $O(\sigma^2)$ space.

\subsubsection{Other Applications of Matroid Intersection}
Our algebraic framework extends far beyond $(s,t)$-flows. We discuss some of these applications below.

\vspace{-2mm}

\paragraph{$s$-min-cut and global min-cut.}
The $s$-min-cut is $\min_{v\ne s}\lambda(s,v)$; storing an $(s,v)$ oracle for every $v$ makes a query $n$ evaluations, which is too slow. For \emph{deletions only} there is a simple fix: if the $s$-min-cut decreases, some deleted edge $(x_i,y_i)$ crosses the new minimum cut, hence the new cut separates $s$ from some $y_i$; so it suffices to query the $\le f$ sinks $\{y_i\}$, plus compare against the old value. \emph{Insertions break this}: the new min-cut may avoid all updated edges yet still be smaller than the old cut \emph{would have been after insertions}, the baseline itself moves. We restore the argument with a preprocessing trick based on Frank's augmentation approach~\cite{Frank90}. We compute an \emph{optimal} set $I^*$ of $f$ edges whose insertion maximizes the $s$-min-cut, and let $\lambda^* = \mu_{G+I^*}(s)$. For any update set $U$ (with $|U|\le f$ insertions), $\mu_{G+U}(s) \le \lambda^*$ by optimality of $I^*$; and if the minimum cut of $G+U$ is not crossed by any edge of $I^*\cup U$, its capacity is identical in $G+U$ and $G+I^*$, hence at least $\lambda^*$. Therefore, we have (\Cref{lemma:insertion-deletion-cut})
\[
\mu_{G+U}(s) \;=\; \min\Bigl(\lambda^*,\; \min_{y\in Y,\,y\neq s}\lambda_{G+U}(s,y)\Bigr),
\qquad Y = \text{heads of } I^*\cup U,\ |Y|\le 2f,
\]
reducing the query to $O(f)$ fixed-pair max-flow queries against precomputed $(s,v)$-oracles. Global min-cut follows by symmetry: it equals $\min(\mu_G(s), \mu_{G^{\mathrm{rev}}}(s))$ for any fixed $s$.

\paragraph{Spanning structures and arboricity}
Arborescence packing is the second application of the same framework: by Edmonds' theorem, $k$ disjoint $z$-rooted arborescences correspond to a $k$-packing of the graphic matroid against the in-degree partition matroid. To support a \emph{query-time root} $s$, we keep an isolated dummy root $z$ at preprocessing and, at query time, insert $k$ parallel edges $(z,s)$; by the cheap-parallel-elements refinement this bundle costs only $O(k^2)$ additional Laplacian entries.

As in $(s,t)$-max-flow scenario, the parallel copies of one element share the same sparse column support, so all $k$ copies together touch only an $O(k)\times O(k)$ submatrix ($O(k^2)$ entries total), the same as a \emph{single} update (\Cref{rem:parallel-edges-update-packing-laplacian}). Thus a query consisting of $f$ genuine updates plus a bundle of $k$ parallel query edges costs $O((fk^2 + k^2)^\omega) = O((fk^2)^\omega)$, not $O(((f+k)k^2)^\omega)$. This observation is small but load-bearing: it is what makes \emph{dynamic-root} arborescence queries and \emph{all-pairs} flow queries as cheap as static-root ones.
So a dynamic-root query is exactly as cheap as a static one with query time $O((fk^2)^\omega)$, and $O(f^\omega)$ for $k=1$, which is the single-source reachability-tree oracle highlighted in the introduction. Spanning-tree packing ($M_1 = M_2 = $ graphic), colorful spanning trees (graphic vs.\ color-partition, $k=1$), and arboricity ($k$-covering of the graphic matroid, using the random-augmentation covering oracle) follow the same recipe with $\delta = f$ column updates.

Arboricity (partitioning a graph into $k$ forests) is a \emph{covering} problem. Here the goal is to determine if the entire edge set is independent in the $k$-fold graphic union, i.e., whether the $kr \times m$ (where $r=n-1$) representation union matrix $U$ has full column rank. We convert rank to determinant by squaring it up, appending $kr-m$ uniformly random columns $R$. A Laplace-expansion argument proves that the square matrix is non-singular, $\det([U \mid R]) \neq 0$, w.h.p.\ exactly when the rank is full (\Cref{lemma:aug-linear-rep}) and the graph is $k$-coverable. Since the random columns never change and each edge update touches $O(k)$ entries, we apply our determinant oracles directly, yielding a query time of $O((fk)^\omega)$ (\Cref{thm:sensitivity_covering,thm:sensitivity_k_forests}).

\subsubsection{Matroid Parity and $\alpha$-Factors}

\paragraph{Sparse linear matroid parity.}
The parity (matroid matching) problem admits an analogous algebraic formulation: the determinant of the skew-symmetric matrix $ADB^T - BDA^T$ is the square of a Pfaffian enumerating parity bases (\Cref{thm:parity_enumeration}). For $c$-sparse matroids, $f$ pair updates modify $O(c^2 f)$ entries, and this yields the parity oracle (\Cref{thm:parity_general_sensitivity}), and in turn the bounded-degree $\alpha$-factor and perfect matching oracles.

\paragraph{$\alpha$-factors with arbitrary degree bounds.}
For arbitrary $\alpha$, the parity instance has dimension $\sum_v\alpha(v)$, which can be $\Theta(nm)$. To tackle this, we fix a min-cost $\alpha$-factor $M$ at preprocessing; for any updated optimum $M'$, the symmetric difference $M\triangle M'$ decomposes into closed alternating trails. We prove two facts (i) there exists optimal $M'$ such that the difference decomposes into at most $f$ trails; and (ii) a trail with no closed alternating sub-trail meets each vertex in at most two $M$-edges and two $M'$-edges (else a sub-trail could be extracted, an argument inspired by the bicoloured-cycle technique of \cite{FleischnerS05}). Thus, an optimal $\alpha$-factor of $G+U$ drops at most $2f$ edges of $M$ at every vertex. This local bound lets us build an auxiliary two-layer graph (a drop layer for $M$-edges, an add layer for non-$M$-edges, coupled by parallel edges and unit budget gadgets), on which factors of $G+U$ correspond exactly, with matching costs, to factors of the auxiliary graph (\Cref{thm:bijection}). The bounded-degree oracle with $k=O(f)$ then gives $O((fn)^2\log n)$ space and $O(f^{3\omega})$ query time, independent of $\max_v\alpha(v)$ (\Cref{thm:sensitivity-arbitrary-alpha}).

\subsection{Related Works}

\vspace{2mm}
\noindent
\emph{Matroid Algorithms.~}
Polynomial-time algorithms for matroid union and intersection date back to
Edmonds~\cite{Edmonds2003} and Lawler~\cite{Lawler75}, followed by a long
line of work improving exact and approximate algorithms for matroid
intersection, union, packing, and covering~\cite{Cunningham86, Harvey09,
Quanrud24, Terao25}. More recently, there has been growing interest in
matroid algorithms over uncertain inputs. Blikstad, Mukhopadhyay, Nanongkai,
and Tu~\cite{BlikstadMNT23} maintain a minimum-weight basis in the
decremental setting using $\widetilde{O}(\sqrt{\rank(M)})$ rank queries per
update. Vos and Grilnberger~\cite{VosG26} gave a deterministic dynamic
$(1\pm\epsilon)$-approximation for the packing number $\Phi$ and covering
number $\beta$ of a matroid, using $\Phi \cdot \mathrm{poly}(\log n,
1/\epsilon)$ (resp.\ $\beta \cdot \mathrm{poly}(\log n, 1/\epsilon)$) rank
queries per update. Sparsification techniques of Huang and
Sellier~\cite{HuangS24} led to progress in streaming matroid intersection;
Chandrasekaran, Chekuri, and Zhu~\cite{ChandrasekaranCZ25} initiated the
study of base packing in online matroids, and Buchbinder, Gupta, Hathcock,
Karlin, and Sarkar~\cite{BuchbinderGHKS24} gave an
$\widetilde{O}(\mathrm{OPT})$-competitive algorithm for intersecting a
matroid with an online partition matroid. Fault-tolerant aspects of matroids
have also been studied recently~\cite{BentertFGM25, DeyK26}.

\vspace{4mm}
\noindent
\emph{Static Algebraic Algorithms.~}
Cheung, Lau, and Leung~\cite{CheungLL11} pioneered the use of fast matrix
multiplication for graph connectivity, solving All-Pairs Connectivity in
$\widetilde{O}(m^\omega)$ time. Georgiadis, Italiano, Karanasiou,
Papadopoulos, and Parotsidis~\cite{GeorgiadisIKPP17} presented a
deterministic algorithm for all-pairs $2$-bounded edge (vertex)
connectivity in $\widetilde{O}(n^\omega)$ time.
Authors in~\cite{AbboudGIKPTUW19} extended this to general $k$ in the
vertex-capacitated setting, with a randomized $O((nk)^\omega)$-time
algorithm; they further showed that on DAGs, all-pairs $k$-bounded max-flow
can be solved in $O((k\log n)^{4^k+o(k)} \cdot n^\omega)$ time, and in
$O(2^{O(k^2)} \cdot mn)$ time for sparse DAGs. The framework
of~\cite{CheungLL11} was lifted to the $k$-bounded setting ($k$-APC) by
Akmal and Jin~\cite{AkmalJ23} in $\widetilde{O}((kn)^\omega)$ time, and
refined by Akmal~\cite{Akmal24} via enumerating polynomials. Analogous
algebraic algorithms exist for linear matroid
intersection~\cite{Harvey09}, linear matroid parity~\cite{CheungLL14}, and
weighted $\alpha$-factors~\cite{GabowS21_I, GabowS21_II}.

\vspace{4mm}
\noindent
\emph{Min-Cut Sensitivity Oracles.~}
Results of Picard and Queyranne~\cite{PicardQ82} and Dinitz and
Vainshtein~\cite{DinitzV00} imply that a single edge deletion's effect on
the $(s,t)$-max-flow value can be detected from the structure of the flow
itself. For a fixed pair $(s,t)$, Baswana, Bhanja, and
Pandey~\cite{BaswanaBP22} designed the first dual-failure sensitivity oracle
for $(s,t)$-min-cut, with $O(n^2)$ space and $O(1)$ query time, and posed
the design of compact oracles for more than two failures as an open problem.
Very recently, Ahi, Choudhary, Pande, Pushpraj, and Saggi~\cite{AhiCPPS26}
constructed a max-flow sensitivity oracle for one and two failures with
$O(\lambda n)$ space and $O(n)$ query time, together with an $\Omega(n)$
lower bound for $f \ge 2$. For all-pairs single-failure min-cut, Baswana and
Pandey~\cite{BaswanaP22} obtained an $O(n^2)$-space, $O(1)$-query oracle.
These combinatorial approaches exploit the structure of near-minimum cuts
and flow decompositions, and appear inherently limited to one or two
failures. Our framework resolves the open problem of~\cite{BaswanaBP22},
providing the first $(s,t)$-min-cut sensitivity oracle for any $f$, with
$O((fn)^2 \log n)$ space, $\widetilde{O}(f^{3\omega})$ query time for the
value, and $\widetilde{O}(f^{3\omega} n)$ time for reporting the nearest
min-cut.

\vspace{4mm}
\noindent
\emph{Fault-Tolerant Reachability and Distances.~}
Algebraic matrix tools have become a powerhouse for dynamic and
fault-tolerant reachability and distances.
Sankowski~\cite{Sankowski04} introduced dynamic reachability
algorithms based on maintaining the inverse of a symbolic adjacency matrix.
For weighted graphs, Weimann and Yuster~\cite{WeimannY13} employed matrix
adjoint computation for distance sensitivity, constructing an $f$-DSO with
$\widetilde{O}(n^{3-\alpha})$ space and $\widetilde{O}(n^{2-2(1-\alpha)/f})$
query time. Brand and Saranurak~\cite{BrandS19} substantially improved the
preprocessing and query time for multiple failures, obtaining
$\widetilde{O}(Wn^3)$ preprocessing and $\widetilde{O}(Wnf^\omega)$ query
time. Karczmarz and
Sankowski~\cite{KarczmarzS23} recently improved multiple-failure DSOs via
generic matrices and the Frobenius normal form. On the combinatorial side, \cite{DTCR08,BernsteinK09,DuanP:10,ChechikLPR:12,Bilo16-stacs,Choudhary16,BaswanaCR:17,ChechikC:17,DuanP:17,GIP17,GeorgiadisDIKP17,GuptaS18,BiloCFS21,ItalianoKP21,DuanR22,BiloCCCFKS23,Dey024}
studied the problem of designing fault-tolerant oracles for reachability, distances, and strong connectivity.

\vspace{4mm}
\noindent
\emph{Dynamic Algorithms.~}
In the fully dynamic regime, min-cut algorithms range from Thorup's
tree-packing approach~\cite{Thorup07} to recent expander-decomposition
methods~\cite{GoranciHNSTW23, JinST24}, sub-polynomial update times for
all-pairs min-cuts~\cite{JinS22}, and dynamic flow
advances~\cite{BrandCKLMGS24, BrandCKLPPSS24, ChenKLMP24, Karczmarz24}. 
Several recent works study dynamic graph algorithms with predictions~\cite{PengR23,
HuKP24, HenzingerSSY24, BrandFNP24, LiuS24, McCauleyMNS24, McCauleyMNNS25},
a model related in spirit to our subset sensitivity setting.

\subsection{Organization}
The remainder of this paper is organized as follows. We establish notation and algebraic primitives in Section~\ref{sec:preliminaries}. We next present our packing and covering oracles in Section~\ref{section:packing}. These primitives are applied to flows and cuts in Section~\ref{sec:flows}, and to spanning structures in Section~\ref{sec:trees}. We extend the framework to linear matroid parity and $\alpha$-factors in Sections~\ref{sec:matroid_parity} and~\ref{sec:alpha-factors} respectively. Our subset sensitivity results are presented in  Section~\ref{sec:subset}, and matching space lower bounds are presented in Section~\ref{sec:lb}.

\section{Preliminaries}
\label{sec:preliminaries}
Let $G = (V, E)$ be an uncapacitated directed graph with a cost function $\cost : E \to \{-W,\dots,W\}$. A \emph{trail} $\trail$ from edge $e_0$ to edge $e_\ell$ in $G$ is a sequence
of distinct edges $(e_0, e_1, \dots, e_\ell)$ with
$\head(e_i) = \tail(e_{i+1})$ for each $0 \le i < \ell$ (vertices may
repeat). The cost of a trail is the sum of its edge costs. A trail is a
\emph{path} if no vertex repeats.

For a subset of $U = (I, F)$ of edge updates consisting of a set $I$ of edge insertions and a set $F \subseteq E$ of edge deletions, with $|U| = |I| + |F| \le f$, we denote by $G+U := (V, (E \setminus F) \cup I)$ the graph obtained by applying the updates in $U$ to $G$. 

For a pair $s,t \in V$, we use $\lambda_G(s,t)$ to denote the maximum value of an $(s,t)$-flow in $G$. An \emph{$(s,t)$-cut} $C$ partitions the vertex set into $(A_C, B_C)$ with $s \in A_C$ and $t \in B_C$; its capacity is the number of edges directed from $A_C$ to $B_C$, and an $(s,t)$-min-cut is one with minimum capacity, of value $\lambda_G(s,t)$ by the max-flow min-cut theorem. For a fixed source $s \in V$, the \emph{$s$-min-cut} value is $\mu_G(s) := \min_{v \in V \setminus \{s\}} \lambda_G(s,v)$, and the \emph{global min-cut} value is $\mu_G := \min_{s \in V} \mu_G(s)$. Given an $(s,t)$-flow $h$ in $G$, the \emph{residual graph} $G_h$ is obtained from $G$ by reversing every edge that carries flow under $h$, i.e., replacing each edge $e = (u,v)$ with $h(e) = 1$ by its reverse $e^{\rev} = (v,u)$, where $\cost(e^{\rev}) = -\cost(e)$. We use $\val(h)$ to denote the value of the flow $h$.

An $(s,t)$-min-cut $C^*$ is said to be the {\em nearest min-cut} if for each $(s,t)$-min-cut $C$ we have $A_{C^*}\subseteq A_C$. 
We denote such a $C^*$ with $\NMC_G(s,t)$.
Ford and Fulkerson~\cite{FulkersonF:62} gave an algorithm for constructing the (unique) nearest $(s,t)$-min-cut. For the sake of completeness we state the following characterization of NMCs from~\cite{FulkersonF:62}.

\begin{lemma}
\label{lemma:nearest-mincut}
Let $C=\NMC_G(s,t)$ with partition $(A,B)$. Then $A$ consists precisely of those
$w\in V\setminus\{t\}$ for which $G_w:=G+(w,t)$ satisfies
$\lambda_{G_w}(s,t)=\lambda_{G}(s,t)+1$.
\end{lemma}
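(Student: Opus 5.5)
We prove both inclusions, using max-flow min-cut for $G$ and for $G_w$, together with the lattice structure of $(s,t)$-min-cuts. Write $\lambda=\lambda_G(s,t)$. Adding one edge changes any cut capacity by at most one, so $\lambda\le\lambda_{G_w}(s,t)\le\lambda+1$. Hence the claim reduces to: $\lambda_{G_w}(s,t)=\lambda$ if and only if some $(s,t)$-min-cut of $G$ has $w$ on the source side.

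\emph{Step 1 (if $w\notin A$, the flow does not increase).} Suppose $w\in V\setminus\{t\}$ and $w\notin A$. Then $w\in B$, so the new edge $(w,t)$ goes from $B$ to $B$. It does not cross $C=(A,B)$, so $C$ has capacity $\lambda$ in $G_w$, and therefore $\lambda_{G_w}(s,t)=\lambda$.

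\emph{Step 2 (if $w\in A$, the flow increases).} Suppose $w\in A$ and, for contradiction, $\lambda_{G_w}(s,t)=\lambda$.
\begin{itemize}
\item Take any $(s,t)$-min-cut $C'=(A',B')$ of $G_w$, of capacity $\lambda$.
\item Its capacity in $G$ is at most its capacity in $G_w$, hence at most $\lambda$. It is therefore also a min-cut of $G$.
\item Its capacity in $G$ is in fact exactly $\lambda$, so the extra edge $(w,t)$ does not cross $C'$. Since $t\in B'$, this forces $w\in B'$, i.e., $w\notin A'$.
\item By the nearest-min-cut property, $A\subseteq A'$, so $w\in A'$.
\end{itemize}
This contradicts $w\notin A'$, so $\lambda_{G_w}(s,t)=\lambda+1$.

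\emph{Step 3 (existence and uniqueness of $\NMC_G(s,t)$).} The intersection of the source sides of two min-cuts is again the source side of a min-cut, by submodularity of the cut function:
\[
c(A_1\cap A_2)+c(A_1\cup A_2)\le c(A_1)+c(A_2)=2\lambda .
\]
Both terms on the left are at least $\lambda$, so both equal $\lambda$. Taking the intersection over all min-cuts then gives the unique nearest one. This is the only point requiring care, and it is the standard argument of Ford and Fulkerson, which we cite.

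Combining Steps 1 and 2 shows that $A$ is exactly the set of $w\in V\setminus\{t\}$ with $\lambda_{G_w}(s,t)=\lambda+1$, as claimed.
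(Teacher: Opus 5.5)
Your proof is correct. Step 1 shows that $w\in B\setminus\{t\}$ gives $\lambda_{G_w}(s,t)=\lambda$, because the nearest min-cut itself is not crossed by $(w,t)$. Step 2 shows that $w\in A$ forces an increase: any cut of capacity $\lambda$ in $G_w$ would be a min-cut of $G$ with $w$ on its sink side, contradicting $A\subseteq A'$.

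There is one slip in your framing sentence. The reduction should read ``$\lambda_{G_w}(s,t)=\lambda$ iff some $(s,t)$-min-cut of $G$ has $w$ on the \emph{sink} side.'' As written (``source side'') it is false. For example, $w=s$ lies on the source side of every min-cut, yet $\lambda_{G_s}(s,t)=\lambda+1$. Steps 1 and 2 never use that sentence, so nothing breaks.

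The paper gives no proof of its own; it attributes the characterization to Ford and Fulkerson. The classical argument behind that citation goes through a maximum flow $h$. The source side of $\NMC_G(s,t)$ is the set $R$ of vertices reachable from $s$ in the residual graph $G_h$. Adding $(w,t)$ creates an augmenting path exactly when $w\in R$; when $w\notin R$, the cut $(R,V\setminus R)$ is not crossed by the new edge.

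Your argument avoids flows and residual graphs entirely. It uses only max-flow/min-cut duality and the defining containment property of the nearest min-cut, which makes it short and self-contained. The residual-graph route instead proves existence of the nearest min-cut directly, with no submodularity needed. It also yields an $O(m)$ way to compute $A$ once $h$ is known.

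Your Step 3 is correct but not strictly needed, since the lemma as stated already presupposes that $C=\NMC_G(s,t)$ exists.
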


\paragraph{Algebraic Primitives}
We now state the algebraic tools that we use to compute determinants of low-degree polynomial matrices in (near) matrix-multiplication time. We shall use $\F$ to denote a field. Recall  that $\F[Y]$ is  used to denote the ring of polynomials in $Y$ with coefficients from $\F$. Given a non-negative integer $d$, let $\F_d[Y]$ be the subset of polynomials in $\F[Y]$ of degree at most $d$. Finally, we shall use  $\F_d[Y]^{N \times N}$ to denote the set of $N \times N$ matrices whose entries belong to $\F_d[Y]$. 

\begin{lemma}
\label{lemma:det-adj-row}
For any matrix $M \in \F_d[Y]^{N \times N}$, one can compute $\det(M)$ in $\widetilde{O}(d\,N^{\omega})$ field operations.
\end{lemma}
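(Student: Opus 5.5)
The plan is to reduce the determinant of a polynomial matrix to determinants of ordinary field matrices, via evaluation and interpolation. Since every entry of $M$ has degree at most $d$, each term in the Leibniz expansion has degree at most $dN$. So $p(Y):=\det(M)$ is a polynomial of degree at most $dN$, and it is determined by its values at any $dN+1$ distinct points of $\F$. We assume, as the paper does throughout, that $\F$ is large enough, or pass to an extension field, so that such points exist.

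The naive route is to choose distinct points $y_0,\dots,y_{dN}$ and evaluate every entry at every point, then compute each scalar determinant $\det(M(y_j))$ in $O(N^\omega)$ time and interpolate. Entrywise evaluation costs $\widetilde O(dN)$ per entry by fast multipoint evaluation, so $\widetilde O(dN^3)$ over all $N^2$ entries. The determinants cost $O(dN\cdot N^\omega)$ in total, and interpolation adds $\widetilde O(dN)$. Altogether this gives $\widetilde O(dN^{\omega+1})$, which is a factor $N$ too slow. So the evaluation–interpolation idea alone does not suffice, and the real content of the lemma is avoiding the $dN$ separate scalar determinants.

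To remove that factor I would invoke the known fast algorithms for polynomial matrices rather than re-derive them. Storjohann's high-order lifting, or the deterministic algorithm of Labahn–Neiger–Zhou, computes $\det(M)$ for $M\in\F_d[Y]^{N\times N}$ in $\widetilde O(dN^\omega)$ field operations. The Las Vegas outline is as follows:
\begin{enumerate}
\item Shift $Y\mapsto Y+a$ for a random $a$, so that $M(0)$ is invertible with high probability.
\item Use high-order lifting to compute the rational solution $M^{-1}v$ of $Mx=v$, for a random vector $v$, in $\widetilde O(dN^\omega)$ time.
\item Recover the largest invariant factor from the denominators, and then obtain the full determinant by a triangularization or partial-linearization step of the same cost.
\end{enumerate}
Any one of these algorithms gives the claimed bound directly.

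The main obstacle is this speed-up from $dN^{\omega+1}$ to $dN^\omega$, and the write-up should therefore cite the polynomial-matrix determinant algorithm (with the randomized shift handling singular $M(0)$) rather than try to reprove it. Everything else is routine.
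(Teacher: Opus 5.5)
Your proposal is correct and matches the paper, which states this lemma without proof as a standard primitive from the polynomial-matrix literature (Storjohann's high-order lifting, or the deterministic algorithm of Labahn--Neiger--Zhou), so citing those algorithms is exactly the intended justification. One small point: those algorithms assume $\det(M)\not\equiv 0$, since otherwise no shift makes $M(a)$ invertible. You should therefore first evaluate $M$ at a random $a\in\F$ and return $0$ if $\det M(a)=0$; this is correct with high probability by Schwartz--Zippel and costs only $O(dN^2+N^\omega)$.
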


\begin{lemma}[Matrix Determinant Lemma]
\label{lemma:matrix_det}
Let $M \in \F_d^{N \times N}$ be an invertible matrix and let $U, V \in \F_d^{N \times k}$. Then
\[
    \det(M + U V^T) = \det(M) \cdot \det(I_k + V^T M^{-1} U).
\]
\end{lemma}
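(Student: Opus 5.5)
The plan is to compute the determinant of a single $(N+k)\times(N+k)$ block matrix in two different ways and compare the results. Set
\[
Z \;=\; \begin{pmatrix} M & -U \\ V^T & I_k \end{pmatrix}.
\]
All manipulations take place over a commutative ring: either $\F$ or the polynomial ring $\F[Y]$, which embeds in its field of fractions $\F(Y)$. Since $M$ is assumed invertible, $M^{-1}$ exists, and block Gaussian elimination together with multiplicativity of the determinant is valid throughout.

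\emph{First factorization, pivoting on $M$.} I would multiply $Z$ on the left by the block lower-triangular matrix $\begin{pmatrix} I_N & 0 \\ -V^T M^{-1} & I_k\end{pmatrix}$, which has determinant $1$. This clears the lower-left block and leaves the Schur complement $I_k + V^T M^{-1} U$ in the lower-right corner. The product is block upper triangular, so
\[
\det Z \;=\; \det(M)\,\det(I_k + V^T M^{-1} U).
\]

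\emph{Second factorization, pivoting on $I_k$.} I would multiply $Z$ on the right by $\begin{pmatrix} I_N & 0 \\ -V^T & I_k\end{pmatrix}$, which also has determinant $1$. This clears the lower-left block, and the upper-left block becomes $M + UV^T$. The result is $\begin{pmatrix} M+UV^T & -U \\ 0 & I_k\end{pmatrix}$, which is block upper triangular, so
\[
\det Z \;=\; \det(M+UV^T)\cdot\det(I_k) \;=\; \det(M+UV^T).
\]
Equating the two expressions for $\det Z$ gives the claimed identity.

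The only point needing care is the ring setting. This pivot requires no invertibility at all, since $I_k$ is trivially invertible. The first pivot uses $M^{-1}$, which is covered by the hypothesis. When the entries are polynomials in $\F_d[Y]$, I would run the computation in $\F(Y)$. There $M^{-1}$ has rational-function entries, and both sides of the identity are elements of $\F(Y)$, so the equality holds there. Beyond this bookkeeping there is no real obstacle: the key step is simply choosing the block matrix $Z$ so that its two Schur complements are exactly the two sides of the identity.
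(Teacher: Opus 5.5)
Your proposal is correct. Both block multiplications check out. Left-multiplying $Z$ by $\begin{pmatrix} I_N & 0 \\ -V^T M^{-1} & I_k\end{pmatrix}$ gives $\begin{pmatrix} M & -U \\ 0 & I_k + V^T M^{-1} U\end{pmatrix}$. Right-multiplying by $\begin{pmatrix} I_N & 0 \\ -V^T & I_k\end{pmatrix}$ gives $\begin{pmatrix} M+UV^T & -U \\ 0 & I_k\end{pmatrix}$. Multiplicativity of the determinant and the block-triangular formula hold over any commutative ring, so equating the two values of $\det Z$ is valid.

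The paper gives no proof of this lemma; it quotes it as a standard tool, so there is nothing to compare against. Your bordered-matrix (Schur complement) argument is the standard proof.

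Your remark about passing to $\F(Y)$ is exactly what the paper's usage requires. The lemma is later applied to polynomial matrices such as $L_{\mathrm{pad}}$ and $M_{\mathrm{pad}}$, whose inverses have rational-function entries (\Cref{lemma:det}, \Cref{lemma:BrandS-det-General-Q}). There, "invertible" means nonzero determinant in $\F(Y)$. The notation $\F_d$ in the statement is also loose, since degree-bounded polynomials are not closed under products. Working in $\F(Y)$ handles both points cleanly.
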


Our results also employ the  oracle by Brand and Saranurak~\cite{BrandS19} that preprocesses a matrix  $H$ to efficiently answer determinant queries following a limited number of updates to  $H$. 

\begin{lemma}[{\cite[Corollary of Theorem 4.1]{BrandS19}}]
\label{lemma:BrandS-det}
Let $M \in \F_d[Y]^{N \times N}$ be a matrix such that $\det(M) \neq 0$. Then, there is an algorithm that preprocesses $M$ using $\widetilde{O}(dN^{3})$ operations and creates a data structure occupying $O(d N^{3} \log N)$ space with the following guarantee:  given $C \in \F_d[Y]^{N \times N}$ with at most $\delta$ nonzero entries and satisfying $\det(M+C) \neq 0$, one can query $\det(M+C)$ in
$\widetilde{O}(dN \delta^{\omega})$ 
operations.

Moreover, if $d=0$, then the preprocessing, space and query time are $O(N^\omega), \ O(N^2 \log N)$ and $O(\delta^\omega)$ respectively.
\end{lemma}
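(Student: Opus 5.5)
The plan is to reduce a sparse update to a small low-rank update and then apply \Cref{lemma:matrix_det}. We first treat the numeric case $d=0$ and then lift it to polynomial matrices by evaluation and interpolation.

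\emph{Reduction to a $\delta\times\delta$ determinant.} Let $R$ and $S$ be the sets of rows and columns that contain a nonzero entry of $C$, so $|R|,|S|\le\delta$. Let $E_R\in\F^{N\times|R|}$ and $E_S\in\F^{N\times|S|}$ be the corresponding column-selector matrices. Then $C=E_R\,C_{R,S}\,E_S^T$. Applying \Cref{lemma:matrix_det} with $U=E_RC_{R,S}$ and $V=E_S$ gives
\[
\det(M+C)=\det(M)\cdot\det\bigl(I_{|S|}+(M^{-1})_{S,R}\,C_{R,S}\bigr).
\]
Hence a query needs only the entries of $M^{-1}$ indexed by $S\times R$, together with the scalar $\det(M)$. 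For $d=0$ we precompute $M^{-1}$ and $\det(M)$ in $O(N^\omega)$ time and store them in $O(N^2\log N)$ bits. At query time we read off the $|S|\times|R|$ submatrix, multiply it by $C_{R,S}$, and take a $\delta\times\delta$ determinant, all in $O(\delta^\omega)$ time. This identity holds regardless of whether $M+C$ is singular, so the hypothesis $\det(M+C)\neq 0$ is only needed to make the output meaningful downstream.

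\emph{Polynomial entries.} For general $d$, both $\det(M)$ and $\det(M+C)$ are polynomials of degree at most $dN$. Since $\det(M)\not\equiv 0$, it has at most $dN$ roots. Over a sufficiently large field we can therefore pick $dN+1$ points $y_0,\dots,y_{dN}$ at which $\det(M)(y_i)\neq 0$, say by random choice with verification. For each $i$ we store $M(y_i)^{-1}$ and $\det(M)(y_i)$, which takes $O(dN\cdot N^2\log N)$ space. A query evaluates $C$ at every $y_i$, applies the identity above at each point to obtain $\det(M+C)(y_i)$ in $O(\delta^\omega)$ time, and then interpolates the degree-$\le dN$ polynomial in $\widetilde O(dN)$ time. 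The total query cost is $\widetilde O(dN\delta^\omega)$, which matches the claim.

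\emph{Main obstacle: preprocessing time.} Inverting $M$ separately at $dN+1$ points costs $O(dN^{\omega+1})$, which exceeds the claimed $\widetilde O(dN^3)$. I would try to avoid this by computing the inverse once as a truncated power series. Shift $Y\mapsto Y+a$ for a random $a$ so that $M(a)$ is invertible, then compute $M^{-1}\bmod Y^{dN+1}$ by a Storjohann-type high-order lifting rather than naive Newton iteration. From the $N^2$ truncated series we obtain all required evaluations by multipoint evaluation, in $\widetilde O(dN)$ time per entry, which is $\widetilde O(dN^3)$ overall. If this lifting step does not attain the bound, I would fall back on invoking \cite[Theorem 4.1]{BrandS19} directly for the polynomial case and only verify that the $d=0$ specialization gives the stated $O(N^\omega)$, $O(N^2\log N)$ and $O(\delta^\omega)$ bounds, which the reduction above establishes.
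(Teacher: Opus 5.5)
Your $d=0$ argument is complete, and it is the same mechanism the paper relies on. The paper imports this lemma from \cite{BrandS19} without proof, but its appendix proof of the singular extension (\Cref{lemma:BrandS-det-General-Q}) does exactly this: precompute $M^{-1}$ and $\det(M)$, then apply \Cref{lemma:matrix_det} to a selector factorization of $C$. You factor $C=E_R C_{R,S}E_S^T$ via its row and column supports, whereas the paper uses one rank-one term per nonzero entry; this difference is inessential.

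For $d\ge 1$ your query procedure takes a genuinely different route. The paper keeps a single inverse over rational functions, computed with \cite[Corollary 6.12]{Storjohann15}, and evaluates the small determinant directly on entries of degree $O(dN)$. You instead store $M(y_i)^{-1}$ at $dN+1$ points and interpolate. This replaces rational-function arithmetic by $dN+1$ scalar $\delta\times\delta$ determinants, with the same space and query bounds. To stay within $\widetilde O(dN\delta^\omega)$, you need fast multipoint evaluation when evaluating the $\delta$ entries of $C$ at the $dN+1$ points; naive evaluation costs $O(d^2N\delta)$.

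The gap is in the preprocessing step you flag as the main obstacle. You cannot multipoint-evaluate a truncated Taylor expansion of $M^{-1}$ to obtain $M(y_i)^{-1}$. The entries of $M^{-1}$ are rational functions $\mathrm{adj}(M)_{uv}/\det(M)$, and a truncation of their expansion at $a$ agrees with them only as a formal series, not at other points. Already for $N=d=1$ and $M=1-Y$, the truncation $1+Y$ of $1/(1-Y)$ gives $3$ at $Y=2$, while $M(2)^{-1}=-1$.

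The missing idea is to pass to the adjugate, which is a genuine polynomial matrix:
\begin{itemize}
\item Compute $\det(M)$ in $\widetilde O(dN^\omega)$ time (\Cref{lemma:det-adj-row}).
\item Multiply each truncated entry by $\det(M)$ and reduce modulo $(Y-a)^{dN+1}$. Since $\deg \mathrm{adj}(M)\le d(N-1)<dN+1$, this recovers $\mathrm{adj}(M)$ exactly.
\item Multipoint-evaluate $\mathrm{adj}(M)$ and $\det(M)$ at the $y_i$ and divide. This gives every $M(y_i)^{-1}$ in $\widetilde O(dN)$ time per entry, i.e.\ $\widetilde O(dN^3)$ in total.
\end{itemize}

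You should also not leave the inversion bound conditional. Inverting a nonsingular degree-$d$ polynomial matrix in $\widetilde O(dN^3)$ operations is exactly the result the paper invokes from \cite{Storjohann15}. Newton iteration to precision $dN$ would again cost $\widetilde O(dN^{\omega+1})$. With these two repairs your proof is complete and self-contained, and you no longer need the fallback to \cite[Theorem 4.1]{BrandS19}.
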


While the above lemma assumes a non-singular base matrix, we show in Appendix that this result can be naturally extended to general, possibly singular matrices with same asymptotic complexity bounds.

We next state two determinant identities that will be used 
to expand determinants of products of rectangular matrices and to 
decompose determinants of block-structured representation matrices.

\begin{lemma}[Cauchy-Binet]
\label{lemma:cauchy-binet}
Let $A$ and $B$ be two $r \times m$ matrices 
with $m \geq r$. Then
$$
\det(AB^T) \;=\; \sum_{\substack{S \subseteq [m] \\ |S| = r}} 
\det(A_S)\,\det(B_S),
$$
where the sum ranges over all $\binom{m}{r}$ subsets $S$ of 
size $r$, and $A_S$, $B_S$ denote the corresponding 
$r \times r$ submatrices.
\end{lemma}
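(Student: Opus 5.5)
The statement just above is the Cauchy--Binet formula (\Cref{lemma:cauchy-binet}). I will prove it by multilinearity of the determinant in the columns of $AB^T$, followed by an antisymmetry argument that collapses the sum onto $r$-subsets.

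Write $a_1,\dots,a_m\in\F^r$ for the columns of $A$ and $b_{ij}$ for the entries of $B$. The $j$-th column of $AB^T$ is $\sum_{k=1}^m b_{jk}\,a_k$. Expanding the determinant multilinearly in each of the $r$ columns gives
\[
\det(AB^T)=\sum_{\phi:[r]\to[m]} \Bigl(\prod_{j=1}^r b_{j\phi(j)}\Bigr)\det\bigl(a_{\phi(1)},\dots,a_{\phi(r)}\bigr).
\]

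Terms with $\phi$ non-injective vanish, because the determinant then has two equal columns. Every injective $\phi$ factors uniquely as $\phi=\iota_S\circ\pi$. Here $S=\phi([r])$ is an $r$-subset, $\iota_S$ lists $S$ in increasing order, and $\pi\in S_r$ is a permutation.

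Fix $S$. Reordering columns gives $\det(a_{\phi(1)},\dots,a_{\phi(r)})=\operatorname{sgn}(\pi)\det(A_S)$. The remaining factor is
\[
\sum_{\pi\in S_r}\operatorname{sgn}(\pi)\prod_{j} b_{j,\iota_S(\pi(j))},
\]
which is exactly the Leibniz expansion of $\det(B_S)$. Summing over $S$ yields the claim.

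The only step needing care is this sign bookkeeping: the same $\operatorname{sgn}(\pi)$ must appear in both the column reordering of $A$ and the Leibniz expansion of $B_S$. Once the factorization $\phi=\iota_S\circ\pi$ is fixed, this is routine.

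The argument uses nothing about the entries beyond commutativity, so it holds verbatim over $\F[Y]$. This matters because the paper later applies the formula to $A$ and to $DB^T$, where $D$ carries polynomial weights.
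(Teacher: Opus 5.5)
Your proof is correct and complete. The multilinear expansion over maps $\phi:[r]\to[m]$, the vanishing of non-injective terms, and the factorization $\phi=\iota_S\circ\pi$ with the same $\operatorname{sgn}(\pi)$ appearing in the column reordering of $A_S$ and in the Leibniz expansion of $B_S$ are all right.

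The paper gives no proof of this lemma; it quotes Cauchy--Binet as a classical identity. So there is no competing argument to compare, and yours is the standard textbook proof.

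Two remarks:
\begin{itemize}
\item As you note, the argument uses only commutativity of the entries. That is what the paper needs when it applies the formula over $\F[Y]$ to $U_1$ and $U_2D$ in \Cref{thm:k-fold-matroid_intersection}.
\item The vanishing step uses that the determinant is \emph{alternating}, i.e.\ zero on repeated columns, not merely antisymmetric as your opening sentence says. The alternating property holds over every commutative ring, including characteristic $2$. This matters because $\F$ here may have characteristic $2$.
\end{itemize}
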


\begin{lemma}[Generalized Laplace expansion]
\label{lemma:generalised-laplace-expansion}
Let $A$ be an $n \times n$ matrix and let its rows be partitioned 
into $k$ contiguous blocks $R_1, \dots, R_k$ of sizes 
$r_1, \dots, r_k$ with $\sum_i r_i = n$. Then
$$
\det(A) \;=\; \sum_{\substack{S_1 \uplus \dots \uplus S_k = [n] \\ 
|S_i| = r_i}} 
(-1)^{\sigma(S_1,\dots,S_k)}\, 
\prod_{i=1}^k \det(A_{R_i, S_i}),
$$
where the sum is taken over all ordered partitions of the columns 
$[n]$ into blocks $S_1 \uplus \dots \uplus S_k$ with 
$|S_i| = r_i$, and $(-1)^{\sigma(S_1,\dots,S_k)}$ is the sign 
of the permutation that reorders $[n]$ into $(S_1, \dots, S_k)$.
\end{lemma}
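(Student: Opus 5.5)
The plan is to prove the identity directly from the Leibniz formula, grouping permutations according to which columns they assign to each row block. Write $\det(A)=\sum_{\pi\in S_n}\operatorname{sgn}(\pi)\prod_{j=1}^n A_{j,\pi(j)}$. Every permutation $\pi$ determines an ordered partition $S_1\uplus\dots\uplus S_k=[n]$ via $S_i:=\pi(R_i)$, and automatically $|S_i|=r_i$. So the sum splits as an outer sum over admissible ordered partitions $(S_1,\dots,S_k)$ and an inner sum over permutations with $\pi(R_i)=S_i$ for all $i$. It then suffices to show that, for each fixed partition, the inner sum equals $(-1)^{\sigma(S_1,\dots,S_k)}\prod_i\det(A_{R_i,S_i})$.

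Fix $(S_1,\dots,S_k)$ and let $\tau$ be the permutation that reorders $[n]$ into $(S_1,\dots,S_k)$. That is, $\tau$ sends the contiguous positions of block $R_i$, in increasing order, onto the elements of $S_i$ in increasing order. Any $\pi$ with $\pi(R_i)=S_i$ factors uniquely as $\pi=\tau\circ\rho$, where $\rho$ preserves each contiguous block $R_i$. Such a $\rho$ is a tuple $(\rho_1,\dots,\rho_k)$ with $\rho_i\in\mathrm{Sym}(R_i)$, and $\operatorname{sgn}(\rho)=\prod_i\operatorname{sgn}(\rho_i)$. Multiplicativity of sign gives $\operatorname{sgn}(\pi)=\operatorname{sgn}(\tau)\prod_i\operatorname{sgn}(\rho_i)$. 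The product $\prod_j A_{j,\pi(j)}$ also splits as $\prod_i\prod_{j\in R_i}A_{j,\tau(\rho_i(j))}$.

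Summing independently over each $\rho_i$, the $i$-th factor is exactly the Leibniz expansion of $\det(A_{R_i,S_i})$. This uses that $\tau|_{R_i}$ is the order-preserving bijection $R_i\to S_i$, so it identifies the columns of the submatrix $A_{R_i,S_i}$ in their natural order. Pulling out $\operatorname{sgn}(\tau)=(-1)^{\sigma(S_1,\dots,S_k)}$ yields the claimed term, and summing over partitions finishes the proof. As a sanity check, $k=1$ is trivial, and $k=2$ recovers the classical Laplace expansion along a set of rows. Alternatively, one could induct on $k$ from the $k=2$ case, but the direct argument avoids tracking accumulated signs.

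The only delicate point is the sign bookkeeping. One must check that $\tau$ as defined is the ``reordering permutation'' in the statement, and that the decomposition $\pi=\tau\circ\rho$ is a bijection between $\{\pi:\pi(R_i)=S_i\ \forall i\}$ and $\prod_i\mathrm{Sym}(R_i)$. Both follow because the $R_i$ are contiguous and each restriction $\tau|_{R_i}$ is order-preserving. Everything else is routine rearrangement of a finite sum.
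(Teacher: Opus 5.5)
Your proof is correct and complete. The paper gives no proof of this lemma: it is quoted in the preliminaries as a standard identity and used only as a black box in the proof of \Cref{lemma:union_rep}, so there is no argument of the paper's to compare against. Your approach is the standard textbook one. You group the Leibniz expansion by the column sets $\pi(R_i)$ and factor $\pi=\tau\circ\rho$, where $\tau$ is a fixed shuffle, order-preserving on each block, and $\rho\in\prod_i \mathrm{Sym}(R_i)$ preserves each block.

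On the sign bookkeeping you flagged, three small refinements:
\begin{enumerate}
\item The bijection $\pi\leftrightarrow\rho$ holds for any partition of the rows; contiguity plays no role there.
\item Order-preservation of $\tau|_{R_i}$ is what makes each inner sum exactly $\det(A_{R_i,S_i})$ rather than $\pm\det(A_{R_i,S_i})$.
\item Contiguity, with $R_1,\dots,R_k$ appearing in this order from top to bottom, is needed only to identify $\operatorname{sgn}(\tau)$ with the sign of ``the permutation that reorders $[n]$ into $(S_1,\dots,S_k)$.'' Reading that permutation as $\tau$ or as $\tau^{-1}$ makes no difference, since both have the same sign. This top-to-bottom ordering is the intended reading and is exactly the situation in the paper's application, where block $i$ comes from $AC_i$.
\end{enumerate}
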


We finally state the Schwartz-Zippel lemma which will be useful in providing high probability success bounds for our algorithms.

\begin{lemma}[Schwartz-Zippel lemma]
Let $P(x_1, x_2, \dots, x_N)$ be a non-zero polynomial of degree $d$ over a finite field $\F$. If a point is selected uniformly at random from $\F^N$, then the probability that $P$ evaluates to a non-zero value is at least $1-d/{|\F|}$.
\label{lemma:SZ}
\end{lemma}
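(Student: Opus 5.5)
We prove the bound by induction on the number of variables $N$. Write $q=|\F|$, and pick the point $(a_1,\dots,a_N)$ uniformly from $\F^N$. It is cleaner to show the equivalent statement that the probability that $P$ vanishes at the point is at most $d/q$.

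\emph{Base case $N=1$.} A nonzero univariate polynomial of degree at most $d$ over a field has at most $d$ roots, because each root $a$ splits off a linear factor $(x-a)$ and degrees add under multiplication in $\F[x]$. Hence $\Pr[P(a_1)=0]\le d/q$.

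\emph{Inductive step.} Suppose the claim holds for fewer than $N$ variables. Expand $P$ in powers of $x_1$:
\[
P=\sum_{i=0}^{k} x_1^{\,i}\,P_i(x_2,\dots,x_N),
\]
where $k$ is the largest exponent with $P_k\not\equiv 0$. Such a $k$ exists because $P\neq 0$. The monomials of $x_1^kP_k$ have total degree at most $d$, so $\deg P_k\le d-k$. Consider two events: $E_1$, that $P_k(a_2,\dots,a_N)=0$, and $E_2$, that $P(a_1,\dots,a_N)=0$ while $P_k(a_2,\dots,a_N)\neq0$.

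\begin{itemize}
\item By the inductive hypothesis applied to $P_k$, $\Pr[E_1]\le (d-k)/q$. If $k=d$, then $P_k$ is a nonzero constant and this probability is $0$, which is consistent with the bound.
\item Condition on any fixed $(a_2,\dots,a_N)$ with $P_k(a_2,\dots,a_N)\ne0$. Then $P(x_1,a_2,\dots,a_N)$ is a nonzero univariate polynomial of degree exactly $k$ in $x_1$, and $a_1$ is independent and uniform. By the base case, the conditional probability that it vanishes is at most $k/q$. Averaging over these choices gives $\Pr[E_2]\le k/q$.
\end{itemize}

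Since $\{P(a)=0\}\subseteq E_1\cup E_2$, the union bound gives $\Pr[P(a)=0]\le (d-k)/q+k/q=d/q$. Taking complements yields $\Pr[P(a)\neq0]\ge 1-d/|\F|$.

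The only point needing care is the degree bookkeeping in the inductive step. We must choose $k$ as the top $x_1$-degree with nonzero coefficient, so that the specialization at $(a_2,\dots,a_N)$ stays nonzero exactly on the complement of $E_1$ and has degree $k$ there. We must also use total degree so that $\deg P_k\le d-k$. With these two facts in place, the argument is routine.
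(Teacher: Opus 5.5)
Your proof is correct. It is the standard induction on the number of variables: you split on whether the leading $x_1$-coefficient $P_k$ vanishes, and you handle the degree bookkeeping properly, namely $\deg P_k \le d-k$, and the specialization has degree exactly $k$ off $E_1$.

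The paper gives no proof of this lemma; it quotes it as a standard tool in the preliminaries, so there is no alternative route to compare against. Your argument implicitly proves the ``degree at most $d$'' form, which is what you need in order to apply the inductive hypothesis to $P_k$, and it suffices for every use the paper makes of the lemma.
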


In this paper, we shall work with finite fields. Thus $|\F|$ will be $p^r$, for some prime $p$ and integer $r\geqslant 1$. We will be working with fields representable in $O(\log n)$ bits. This would support addition and multiplication over $\F$ in constant time in the Word-RAM model.

\paragraph{Matroid Terminology}
We next briefly review standard matroid terminology. 

\begin{definition}[Matroid]
A \emph{matroid} $M = (E, \I)$ consists of a finite ground set $E$ and a collection $\I$ of subsets of $E$, called \emph{independent sets}, satisfying three properties:
\begin{enumerate}
\item 
$\emptyset \in \I$; 
\item if $I \in \I$ and $I' \subseteq I$, then $I' \in \I$; and 
\item if $I_1, I_2 \in \I$ with $|I_1| < |I_2|$, there exists an element $e \in I_2 \setminus I_1$ such that $I_1 \cup \{e\} \in \I$. 
\end{enumerate}
\end{definition}

\begin{definition}[Basis]
A \emph{basis} of $M$ is a maximal independent set. It is well known that all bases of $M$ have the same cardinality, known as the \emph{rank} of the matroid, denoted $r(M)$ (or simply $r$ if $M$ is clear from the context). We denote the set of all bases of $M$ by $\B(M)$.
\end{definition}

\begin{definition}[Linear Matroid]
A matroid $M = (E, \I)$  on a ground set $E$ of size $m$ is said to be \emph{linear} if there exists a $d \times m$ matrix $A$ over $\F$ such that $S \subseteq E$ is independent iff the sub-matrix $A_S$ (columns of $A$ indexed by $S$) has full column rank in $\F$. If the rank of $M$ is $r$, we can assume without loss of generality that $d = r$. 
\end{definition}

Note that for any linear matroid $M$ of size $m$ and rank $r$ it is possible to produce a linear representation matrix $A$ of size $r\times m$. Clearly, for such a representation $A$, any basis $S \in \B(M)$ satisfies $\det(A_S) \neq 0$.

\begin{definition}[Sparse Linear Matroid]
\label{definition:sparse-matroid}
A linear matroid $M$  is said to be $c$-\emph{sparse} (or $c$-\emph{column-sparse}) if there exists a representation matrix $A$ for $M$ over a field $\F$ such that every column of $A$ contains at most $c$ non-zero entries. For convenience, we refer to the matrix as just \emph{sparse} if $c = O(1)$.
\end{definition}

Matroids arising from graphs (such as graphic or partition matroids) naturally admit sparse representations, as their elements map to edges with constant-sized incidence bounds.
We next define the notion of strongly base-orderable matroids.

\begin{definition}[Strongly base-orderable]
\label{definition:sbo}
A matroid $M$ on $E$ is \emph{strongly base-orderable} (SBO) if, for any two bases $B_1$ and $B_2$ of $M$, there exists a bijection $\pi \colon B_1 \to B_2$ such that for every $X \subseteq B_1$, both
\[
(B_1 \setminus X) \cup \pi(X) \quad \text{and} \quad (B_2 \setminus \pi(X)) \cup X
\]
are bases of $M$.
\end{definition}

Important families of SBO matroids include  partition matroids, transversal matroids, gammoids, graphic matroids on $K_4$-minor-free graphs, etc. 

While packing common basis of two arbitrary matroids is an NP-Hard problem under the rank oracle model (NP-Hard even for two arbitrary linear matroids with given representation)~\cite{BercziS21}, the following theorem by Davies and McDiarmid provides a nice characterisation for packing common basis in SBO matroids. 

\begin{theorem}[Davies-McDiarmid~\cite{DaviesMcDiarmid1976}]
\label{theorem:SBO-ind}
Let $M_1$ and $M_2$ be strongly base-orderable matroids on the same ground set $E$.
Then for any $S\subseteq E$,  $S$ can be partitioned into $k$ common independent sets of $M_1$ and $M_2$ if and only if $S$ can be partitioned into $k$ independent sets of $M_1$ and into $k$ independent sets of $M_2$.
\end{theorem}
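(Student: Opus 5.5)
The direction ``$\Rightarrow$'' is immediate, since a partition of $S$ into $k$ common independent sets is simultaneously a partition into $k$ independent sets of $M_1$ and into $k$ independent sets of $M_2$. For ``$\Leftarrow$'' I would first prove the case $k=2$ and then lift it to general $k$ by an induction driven by a potential function.

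\textbf{Case $k=2$.} Suppose $X = A_1 \uplus A_2 = B_1 \uplus B_2$ with $A_i$ independent in $M_1$ and $B_i$ independent in $M_2$. I want to make $A_1, A_2$ bases of one matroid and $B_1, B_2$ bases of another, so that \Cref{definition:sbo} applies. To do this, I take direct sums with free matroids on fresh dummy elements and then truncate.
\begin{itemize}
\item Direct sums with free matroids and truncations of SBO matroids are again SBO; this is a routine check from \Cref{definition:sbo}.
\item The padding should make $|A_1| = |A_2|$ and $|B_1| = |B_2|$ on a \emph{common} enlarged ground set $X'$, with the dummy elements split so that both padded partitions still partition $X'$.
\end{itemize}
Once this holds, SBO gives bijections $\pi : A_1 \to A_2$ and $\rho : B_1 \to B_2$. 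Consider the graph on $X'$ whose edges are the $\pi$-pairs and the $\rho$-pairs. It is $2$-regular with edges alternating between the two types, so every component is an even cycle (or a doubled edge). Hence it is bipartite, and I properly $2$-colour it as $X' = C \uplus \overline{C}$.

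Now $C$ contains exactly one element of each $\pi$-pair, so $C = (A_1\setminus Y)\cup \pi(Y)$ with $Y = A_1 \setminus C$. By SBO, both $C$ and $\overline{C} = (A_2\setminus\pi(Y))\cup Y$ are bases of the padded $M_1$. The same argument with $\rho$ shows they are bases of the padded $M_2$. Deleting the dummy elements leaves a partition of $X$ into two common independent sets.

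\textbf{Main obstacle: the padding.} I expect this to be the hardest step. When $|A_1| \neq |A_2|$ and $|B_1| \neq |B_2|$, the required numbers of dummies differ for the two matroids. My first attempt is to add $|X|$ dummy elements $D$, free in both matroids, and truncate both enlarged matroids to rank $|X|$. I then assign dummies so that $A_1 \cup D_1$ and $A_2 \cup D_2$ both have size $|X|$, and similarly $B_1 \cup D_1'$ and $B_2 \cup D_2'$. This gives two \emph{different} partitions of $D$ for the two matroids, which is harmless, since the argument above only needs that each padded family partitions the same set $X'=X\cup D$. The remaining point to verify is that each padded set is a basis of the corresponding truncation, i.e. independent of size $|X|$; this needs care and is where I would look first.

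\textbf{General $k$.} Given $M_1$-partitions $A_1,\dots,A_k$ and $M_2$-partitions $B_1,\dots,B_k$ of $S$, I choose them to maximize $\Phi = \sum_i |A_i\cap B_i|$. If some $A_i \neq B_i$, there is an element $e \in A_i\cap B_j$ with $j \neq i$. I then apply the $k=2$ case to a suitable two-class union to produce a pair of classes that are common independent sets. Replacing the two classes by this pair, and observing that the untouched classes keep their structure, should strictly increase $\Phi$ or allow an induction on $k$ after removing one common class.

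The cleanest version I would aim for goes as follows. Apply the $k=2$ case to $X = A_1 \cup (S\setminus A_1)$, where $S \setminus A_1$ is independent in the $(k-1)$-fold union of $M_1$. On the $M_2$ side, use $B_1$ against the union of the other $B_j$. This extracts one common independent set $I$ such that $S\setminus I$ partitions into $k-1$ independent sets of each matroid, and then I induct on $k$. This requires the $(k-1)$-fold union of an SBO matroid to be SBO, which I would verify directly from \Cref{definition:sbo}.
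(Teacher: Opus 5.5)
The paper does not prove this statement; it quotes it from Davies and McDiarmid and uses it as a black box. Your proposal therefore has to stand on its own.

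\textbf{The $k=2$ case is sound.} Padding with free dummies and truncating to rank $|X|$ makes $A_1\cup D_1$ and $A_2\cup D_2$ (and the $B$-side analogues) bases of a single SBO matroid on the common set $X\cup D$. The alternating $\pi/\rho$ cycles then give the required $2$-colouring. Two small corrections of emphasis:
\begin{itemize}
\item The step you flag as the main obstacle is immediate. Each padded set is independent of size $|X|$ in a direct sum of rank $r_1(X)+|X|\ge|X|$, hence a basis of the truncation.
\item The step that actually needs an argument is that truncation preserves strong base orderability. This is true, but it is not a routine check from \Cref{definition:sbo}. For equal closures you restrict to the common hyperplane. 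For independent $(r-1)$-sets $I_1,I_2$ with different closures, extend them to bases $I_1+a$ and $I_2+b$ with $a\in I_2$ and $b\in I_1$. Take an SBO bijection between these bases that fixes their intersection, and redirect $b\mapsto a$.
\end{itemize}

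\textbf{The genuine gap is the passage to general $k$.} Your $k=2$ argument needs the two classes on each side to be bases of the \emph{same} SBO matroid. The fact that $(P_1\setminus Y)\cup\pi(Y)$ and its complement are bases is an exchange between two bases of one matroid, and the alternating-cycle structure needs a bijection between the two classes.
\begin{itemize}
\item In your ``cleanest version'', the pieces $A_1$ and $S\setminus A_1$ live in different matroids, $M_1$ versus $M_1^{\vee(k-1)}$, of sizes $r$ versus $(k-1)r$ after padding. Neither ingredient is available.
\item You could instead apply the $k=2$ case to the pair $M_1^{\vee(k-1)},M_2^{\vee(k-1)}$; this is legitimate, since $A_1$ is independent there too. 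But the output is $S=C\uplus\bar C$ with both parts independent only in the $(k-1)$-fold unions. Nothing forces either part to be independent in $M_1$ and $M_2$, so no common independent set $I$ is extracted and the induction does not close.
\item The $\Phi$-maximisation variant fails for a related reason. $A_i\cup A_j$ and $B_i\cup B_j$ are in general different sets, so there is no set $X$ split into two classes on both sides, which is what the $k=2$ statement requires.
\item SBO-ness of $M^{\vee(k-1)}$ is also not a direct verification. If you combine SBO bijections of the constituent independent sets, the exchanged pieces coming from different constituents can collide, and the resulting union then has too few elements to be a basis.
\end{itemize}

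What is missing is an asymmetric exchange statement: one class common independent in $M_1,M_2$, and the remainder independent in both $(k-1)$-fold unions. Alternatively, you need some other mechanism for peeling off one common class. That is where the real content of the Davies--McDiarmid theorem lies, and the $k=2$ colouring argument alone does not supply it.
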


\section{Matroid Packing and Matroid Covering}
\label{section:packing}

In this section, we present an algebraic framework for addressing {\em matroid intersection packing} and {\em matroid covering} problems in the general sensitivity setting as well as \emph{subset} sensitivity setting.

\subsection{Sensitivity Oracle for Matroid Intersection Packing}
Let $M_1 = (E, \I_1)$ and $M_2 = (E, \I_2)$ be two linear matroids of rank $r$ defined on a common ground set $E$ of size $m$, and let $\wt: E \to [-W, W]$ be a weight function.
Given an integer $k$, the \emph{decision version} of the $k$-packing problem asks whether there exists a subset $S$ of $E$ such that $S$ can be partitioned into $k$ pairwise disjoint common bases of $M_1$ as well as $M_2$ (observe that the partitioning of $S$ may differ in the two cases). The \emph{weighted version} asks for the  minimum total weight of such a $k$-packing-set $S$, provided one exists. We approach this problem by analyzing the intersection of the $k$-fold unions of $M_1$ and $M_2$.

\paragraph{$k$-fold Matroid Union}
Let $M$ be a linear matroid defined on a ground set $E$ of size $m$. Suppose $M$ has rank $r$ and is represented by a $r \times m$ matrix $A$ over a finite field $\F$. The $k$-fold union of $M$, denoted $M^{\vee k}$, is a matroid whose independent sets are those subsets $S \subseteq E$ that can be partitioned into $k$ disjoint sets $S_1 \uplus S_2 \uplus \dots \uplus S_k$ such that each $S_i$ is an independent set in $M$. Note that the rank of $M^{\vee k}$ may be much smaller than $kr$.

\begin{definition}[Randomized Union Representation]
\label{def:union_rep}
Let $M^{\vee k}$ be the $k$-fold union of a matroid $M = (E, \I)$ of rank $r$ on $m$ elements. A random matrix $U \in \F^{kr \times m}$ is a randomized union representation of $M^{\vee k}$ if it satisfies the following two properties:
\begin{enumerate}
\item
If a set $S \subseteq E$ is independent in $M^{\vee k}$, then $U_S$ has full column rank with probability at least $1 - \frac{1}{(kr)^{\Omega(1)}}$.
\item
If a set $S \subseteq E$ is dependent in $M^{\vee k}$, then $U_S$ does not have full column rank (this holds deterministically).
\end{enumerate}
\end{definition}

The following lemma shows how to compute a randomized union representation of $M^{\vee k}$ from a linear representation of $M$. For the sake of completeness, we provide a proof in the appendix.

\begin{lemma}[\cite{NarayananSV94}]
\label{lemma:union_rep}
Let $M$ be a linear matroid of rank $r$ on $m$ elements, represented by an $r \times m$ matrix $A$ over a finite field $\F$ with $|\F| \ge \poly(kr)$. For each $i \in [k]$, let $C_i$ be an $m \times m$ diagonal matrix whose diagonal entries are independent, uniformly random elements from $\F$. Then the $rk \times m$ matrix
$$U := \begin{pmatrix} A C_1 \\ A C_2 \\ \vdots \\ A C_k \end{pmatrix}$$
is a randomized union representation of $M^{\vee k}$.
\end{lemma}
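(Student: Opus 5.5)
We prove the two properties of Definition~\ref{def:union_rep} separately. The dependent direction is deterministic and structural; the independent direction uses the generalized Laplace expansion together with Schwartz--Zippel. Throughout, fix $S \subseteq E$ with $|S| = s$. If $s > kr$, then $U_S$ has more columns than rows and cannot have full column rank. Since $S$ is then also dependent in $M^{\vee k}$ (its rank is at most $kr$), this case is consistent with both properties, so we may assume $s \le kr$.

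\textbf{Dependent sets (property 2).} Suppose $U_S$ has full column rank, so some $s\times s$ minor is nonzero. Choose row sets $T_1,\dots,T_k$, with $T_i$ inside the $i$-th block of $r$ rows, such that $\det(U_{T_1\cup\dots\cup T_k,\,S}) \neq 0$. Apply Lemma~\ref{lemma:generalised-laplace-expansion} to this square matrix, with its rows partitioned into the blocks $T_i$. Some term is nonzero, so there is a partition $S = S_1 \uplus \dots \uplus S_k$ with $|S_i| = |T_i|$ and $\det\bigl((AC_i)_{T_i,S_i}\bigr) \neq 0$ for every $i$. Because $C_i$ is diagonal, $(AC_i)_{T_i,S_i} = A_{T_i,S_i}\,(C_i)_{S_i,S_i}$, so $A_{T_i,S_i}$ is nonsingular. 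Hence $A_{S_i}$ has full column rank, i.e.\ $S_i \in \I$. Thus $S$ is a union of $k$ disjoint independent sets, so it is independent in $M^{\vee k}$. The contrapositive is property 2, and it holds for every choice of the $C_i$.

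\textbf{Independent sets (property 1).} Let $S = S_1 \uplus \dots \uplus S_k$ with each $S_i$ independent in $M$. Since $A$ has rank $r$ and $A_{S_i}$ has full column rank, we can pick row sets $T_i \subseteq [r]$ with $|T_i| = |S_i|$ and $\det(A_{T_i,S_i}) \neq 0$. Let $T$ be the union of the $T_i$, placed in their respective blocks. Treat the diagonal entries $c_{i,e}$ of the $C_i$ as indeterminates and set $P = \det(U_{T,S})$. This $P$ is a polynomial of degree $s \le kr$ in the variables $\{c_{i,e}\}$.

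By Lemma~\ref{lemma:generalised-laplace-expansion}, $P$ is a signed sum over partitions $(S'_1,\dots,S'_k)$ of $S$ with $|S'_i| = |T_i|$. Each term is
$$\prod_i \det(A_{T_i,S'_i}) \prod_{e\in S'_i} c_{i,e}.$$
Different partitions give different monomials $\prod_i\prod_{e\in S'_i}c_{i,e}$, because the monomial records which block $i$ each element $e$ is assigned to. So no cancellation occurs between terms. The term for the partition $(S_1,\dots,S_k)$ has the nonzero coefficient $\pm\prod_i \det(A_{T_i,S_i})$, and therefore $P \not\equiv 0$.

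By Lemma~\ref{lemma:SZ}, a uniformly random assignment makes $P \neq 0$ with probability at least $1 - kr/|\F|$. Taking $|\F| \ge (kr)^{c}$ for a suitable constant $c$ gives probability at least $1 - (kr)^{-\Omega(1)}$. On this event $U_{T,S}$ is nonsingular, so $U_S$ has full column rank.

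\textbf{Main obstacle.} The delicate point is the non-cancellation argument: the monomials must identify the block assignment uniquely. This holds because the variables $c_{i,e}$ are distinct across blocks, which is exactly why independent scalings $C_i$ are used rather than a single shared one. Everything else is bookkeeping with the Laplace expansion.
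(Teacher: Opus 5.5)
Your proposal is correct and follows essentially the same route as the paper: the generalized Laplace expansion along the $k$ row blocks, non-cancellation because distinct block assignments give distinct monomials in the $c_{i,e}$, and Schwartz--Zippel with degree at most $kr$. The only small difference is that you prove the dependent direction directly for a fixed evaluation via $(AC_i)_{T_i,S_i} = A_{T_i,S_i}(C_i)_{S_i,S_i}$, whereas the paper argues it over the polynomial ring and then specializes; the two are equivalent.
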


\begin{remark}[Parallel Elements in $k$-Fold Union]
\label{rem:parallel-elements-k-fold-union}
Observe that it is possible that an independent set in $M^{\vee k}$ contains $k$ identical copies of an element, i.e.,  the columns of $A$ corresponding to this independent set may contain $k$ identical columns. This is consistent with the definition of $M^{\vee k}$,
since an independent set $S \subseteq E$ in $M^{\vee k}$ is partitioned into $k$ blocks $S_1 \uplus \cdots \uplus S_k$, each being an independent set in $M$; thus each such block may contain at most one copy of such an element. The representation $U$ remains valid, as the diagonal scalings $C_1, \dots, C_k$ assign independent random multipliers to each (element, copy) pair.
\end{remark}

\paragraph{Intersection of two $k$-Fold Matroid Unions}
Given two linear matroids $M_1$ and $M_2$ of rank $r$ on a common ground set $E$,
we consider the problem of identifying existence of a common basis in $M_1^{\vee k}$ and $M_2^{\vee k}$ of size $kr$.
Let $U_1$ and $U_2$ be the $kr \times m$ union representation matrices for $M_1^{\vee k}$ and $M_2^{\vee k}$, respectively, constructed using independent random diagonal matrices as given in \Cref{lemma:union_rep}.
With each element $e \in E$,
we associate  a value $x_e$ drawn uniformly from a large  enough finite field $\F$. Define a diagonal matrix $D\in \F[Y]^{m\times m}$ with an indeterminate $Y$ as follows:
$$
D_{e,e} := x_e Y^{W + \wt(e)} \quad \text{for all } e \in E,
$$
and $D_{e,e'} = 0$ for $e \neq e'$.

We next define the \emph{Mixed Laplacian} $\L_\cap$ of size $kr \times kr$ as:
$$
\L_\cap := U_1 D U_2^T.
$$

The following theorem extends the classical Cauchy-Binet approach for matroid intersection~\cite{Harvey09, CheungLL14} to the $k$-fold union setting.

\begin{theorem}[Enumerating Common Bases]
\label{thm:k-fold-matroid_intersection}
Suppose both $M_1^{\vee k}$ and $M_2^{\vee k}$ have rank $kr$, and $|\F| \ge \poly(kr)$. Then
$$
\det(\L_\cap) =
Y^{krW}
\sum_{\substack{S \in \B(M_1^{\vee k}) \cap \B(M_2^{\vee k})}}
\alpha_S  \Bigl( \prod_{e \in S} x_e \Bigr)
Y^{\sum_{e \in S} \wt(e)},
$$
where $\alpha_S := \det\bigl((U_1)_S\bigr)\det\bigl((U_2)_S\bigr)$ for any common basis $S$ is non-zero with probability $1 - \frac{1}{(kr)^{\Omega(1)}}$.
Furthermore, with the same probability bound, the minimum degree of $Y$ in $\det(\L_\cap)$ minus $krW$ equals the weight of a minimum-weight common basis.
\end{theorem}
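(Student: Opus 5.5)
We apply Cauchy--Binet (\Cref{lemma:cauchy-binet}) to the product $\L_\cap = (U_1 D)\,U_2^T$. Both $U_1D$ and $U_2$ are $kr\times m$ matrices; if $m<kr$, neither union can have rank $kr$, so we may assume $m\ge kr$. The lemma gives
\[
\det(\L_\cap)=\sum_{|S|=kr}\det\bigl((U_1D)_S\bigr)\det\bigl((U_2)_S\bigr).
\]
Since $D$ is diagonal, $(U_1D)_S=(U_1)_S D_{S,S}$, and therefore
\[
\det\bigl((U_1D)_S\bigr)=\det\bigl((U_1)_S\bigr)\prod_{e\in S}x_eY^{W+\wt(e)}
=Y^{krW}\det\bigl((U_1)_S\bigr)\Bigl(\prod_{e\in S}x_e\Bigr)Y^{\sum_{e\in S}\wt(e)}.
\]
This yields the stated formula, except that the sum still runs over all $kr$-subsets. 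To restrict it to common bases, use property~2 of \Cref{def:union_rep}, which holds deterministically. If $S$ is not a basis of $M_1^{\vee k}$, then $S$ has size $kr$ equal to the rank and so is dependent in $M_1^{\vee k}$. Hence $(U_1)_S$ is singular and the term vanishes. The same holds for $M_2^{\vee k}$.

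Next we show that each surviving coefficient $\alpha_S$ is nonzero with high probability. Fix a common basis $S$. By property~1 of \Cref{def:union_rep} (from \Cref{lemma:union_rep}), $(U_1)_S$ is nonsingular with probability at least $1-(kr)^{-\Omega(1)}$, and likewise $(U_2)_S$. The two matrices use independent scalings, so a union bound gives $\alpha_S\ne0$ with the claimed probability.

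The weight statement is the main obstacle: distinct common bases of equal weight could cancel in the coefficient of a given power of $Y$. We handle this by conditioning on $U_1,U_2$ and viewing the coefficient of $Y^{krW+w^*}$ as a polynomial in the variables $\{x_e\}$, where $w^*$ is the minimum weight of a common basis. Distinct sets $S$ give distinct multilinear monomials $\prod_{e\in S}x_e$, so no cancellation occurs at the symbolic level. Fix a minimum-weight common basis $S^*$. With the probability above, $\alpha_{S^*}\ne0$, so this coefficient is a nonzero polynomial of degree $kr$ in the $x_e$. By Schwartz--Zippel (\Cref{lemma:SZ}), it stays nonzero after random substitution with probability at least $1-kr/|\F|$, which is $1-(kr)^{-\Omega(1)}$ once $|\F|\ge\poly(kr)$.

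Every term with smaller $Y$-degree would correspond to a common basis of weight less than $w^*$, and no such basis exists. Hence the minimum $Y$-degree of $\det(\L_\cap)$ is exactly $krW+w^*$. One caveat: \Cref{lemma:union_rep} only controls each fixed $S$, so we apply it only to the single set $S^*$ and avoid any union bound over all common bases.
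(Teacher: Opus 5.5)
Your proof is correct and follows the paper's argument. Like the paper, you apply Cauchy--Binet with the diagonal $D$ factored out, use the deterministic dependence property to restrict the sum to common bases, and apply Schwartz--Zippel to the coefficient of $Y^{krW+w^*}$ as a polynomial in the $x_e$, with a single minimum-weight common basis $S^*$ ensuring it is nonzero. Your extra care (conditioning on $U_1,U_2$ first, noting that lower-degree coefficients vanish, and disposing of the case $m<kr$) only makes explicit what the paper leaves implicit.
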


\begin{proof}
We evaluate the determinant of the Matroid Laplacian $\L_\cap = U_1 D U_2^T$ using the Cauchy-Binet formula (see \Cref{lemma:cauchy-binet}) and substitute $\det(D_S) = \prod_{e \in S} x_e \, Y^{W + \wt(e)}$ and $\alpha_S$ to get:
$$
\det(\L_\cap) = \det\bigl(U_1 (D U_2^T)\bigr)
= \sum_{\substack{S \subseteq E \\ |S|=kr}} \det\bigl((U_1)_S\bigr) \det\bigl((U_2)_S\bigr) \det(D_S)
= Y^{krW} \sum_{\substack{S \subseteq E \\ |S|=kr}} \alpha_S \Bigl(\prod_{e \in S} x_e\Bigr) Y^{\wt(S)}.
$$
By \Cref{lemma:union_rep}, if $S$ is dependent in either $M_j^{\vee k}$, then $(U_j)_S$ does not have full column rank, and so $\alpha_S = 0$. This  restricts the sum to common bases. Next, if $S$ is a common basis, then by the probabilistic claim in \Cref{lemma:union_rep}, each $(U_j)_S$ has full column rank with probability at least $1 - 1/(kr)^{\Omega(1)}$, and so $\alpha_S \neq 0$ with at least this probability.

For the final claim, let $\wt_{\min}$ denote the minimum weight of a common basis of $M_1^{\vee k}$ and $M_2^{\vee k}$. Then, the coefficient of $Y^{krW + \wt_{\min}}$
is $\sum_{S^*} \alpha_{S^*} \prod_{e \in S^*} x_e$, where the sum is over minimum-weight common bases.
The above is a nonzero polynomial of degree $kr$ since distinct bases contribute distinct  monomials in $\{x_e\}$, and with high probability at least one minimum-weight common basis $S^*$ satisfies $\alpha_{S^*} \neq 0$. By the Schwartz-Zippel lemma, it evaluates to a nonzero value with probability $1 - kr/|\F|$.
Therefore, we get that with probability $1 - \frac{1}{(kr)^{\Omega(1)}}$, isolating the smallest exponent of $Y$ in $\det(\L_\cap)$ and subtracting $krW$ yields $\wt_{\min}$.
\end{proof}

\subsubsection{Sensitivity Oracle for Sparse Linear Matroids}
\label{subsubsec:gen-packing}
We now consider the sensitivity problem for the $k$-packing problem in the setting of sparse linear matroids.  To obtain $k$ disjoint common bases of two matroids $M_1$ and $M_2$, we find a common basis of full rank $kr$ in the $k$-fold union of these matroids. The main technical tool for this proceeds by analyzing the determinant of the mixed Laplacian using \Cref{thm:k-fold-matroid_intersection}. In the static setting, this determinant evaluates to a non-zero polynomial with high probability if and only if a valid $k$-packing exists, and its minimum degree extracts the minimum weight of such a packing.

We formalize this in the sensitivity setting by utilizing the data structure from \Cref{lemma:BrandS-det-General-Q} to handle sparse updates to the base representations of the matroids.

\begin{theorem}
\label{thm:sensitivity_packing}
Let $M_1 = (E, \I_1)$ and $M_2 = (E, \I_2)$ be two sparse linear matroids of rank $r$ defined on a common ground set $E$ of size $m$, with a weight function $\wt: E \to [-W, W]$. Let $A_1$ and~$A_2$ be their respective $r \times m$  linear representation matrices over a field $\F$.

Then, for any $k\geq 1$, there exist  sensitivity oracles that, given a set of $\delta$ updates to $A_1, A_2$, and $\wt$, answer $k$-packing queries for the updated matroids $M'_1$ and $M'_2$ as follows:
\begin{enumerate}
\item \textbf{Decision Oracle:}
Decides if a $k$-packing set $S$ exists. Recall that a subset $S \subseteq E, |S| = kr$ is  said to be $k$-packing if it can be partitioned into $k$ common bases of $M'_1$ and $M'_2$ respectively, where the partitioning corresponding to each of the matroids may differ. This oracle requires $O(k^2m+(kr)^\omega)$ preprocessing time, $O(km+(kr)^2 \log(kr))$ space, and answers queries in $O((k^2\delta)^\omega)$ time.
\item \textbf{Weighted Oracle:} Returns the exact total weight of a minimum-weight $k$-packing set $S$. This oracle requires $\widetilde{O}(k^2m+W (kr)^3)$ preprocessing time, $O(km+W (kr)^3 \log(kr))$ space, and answers queries in $\widetilde{O}(W \cdot kr \cdot (k^2\delta)^\omega)$ time.
\end{enumerate}
\end{theorem}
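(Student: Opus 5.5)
The plan is to reduce the $k$-packing question to a determinant computation on the mixed Laplacian $\L_\cap = U_1 D U_2^T$ and then to maintain this determinant under sparse updates using the general (possibly singular) determinant oracle, \Cref{lemma:BrandS-det-General-Q}.

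\textbf{Correctness of the reduction.} Partition matroids and the general matroids here need not be SBO, but the statement only asks for a set $S$ of size $kr$ that can be partitioned into $k$ bases of $M'_1$ and, separately, into $k$ bases of $M'_2$. This is exactly a common basis of size $kr$ of $M_1'^{\vee k}$ and $M_2'^{\vee k}$, so Davies--McDiarmid is not needed at this level.

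By \Cref{thm:k-fold-matroid_intersection}, w.h.p.\ $\det(\L'_\cap)\neq 0$ iff such an $S$ exists, and the minimum $Y$-degree minus $krW$ is the minimum weight. If either union has rank below $kr$, every $kr$-subset is dependent and $\det$ is identically $0$; the Cauchy--Binet expansion already gives this. A Schwartz--Zippel bound over a field of size $\poly(kr)$ handles the randomness in the $x_e$ and the $C_i$. These random choices are fixed at preprocessing and reused for the updated matrices: an updated column $e$ is rescaled by the same $C_i$ entries, and an inserted element receives fresh random entries.

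\textbf{Preprocessing.} The steps are as follows.
\begin{itemize}
\item Store $A_1, A_2$ sparsely, together with the $C_i$ diagonals and the $x_e$, which takes $O(km)$ space.
\item Form $U_1, U_2$. Each column has $O(k)$ nonzeros because $A_j$ is sparse.
\item Form $\L_\cap=\sum_e D_{e,e}(U_1)_e(U_2)_e^T$ by adding $m$ rank-one terms with $O(k^2)$ nonzeros each, in $O(k^2m)$ time.
\item Apply \Cref{lemma:BrandS-det-General-Q} to $\L_\cap$, with $N=kr$ and $d=0$ in the decision case. For the decision oracle, $Y$ is replaced by $1$ (or the weights are simply ignored), which gives $O((kr)^\omega)$ preprocessing and $O((kr)^2\log(kr))$ space.
\item For the weighted oracle, $d=2W$ gives entries of degree $O(W)$, and we get $\widetilde O(W(kr)^3)$ preprocessing and $O(W(kr)^3\log(kr))$ space.
\end{itemize}

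\textbf{Queries.} An update to element $e$ changes column $e$ of $A_1$ and/or $A_2$ and/or $\wt(e)$, or inserts or deletes $e$. In every case it changes only the term $D_{e,e}(U_1)_e(U_2)_e^T$. The old term is subtracted and the new one added, and both are supported on the $O(k)\times O(k)$ block given by the supports of the old and new columns. So $\delta$ updates produce a perturbation $C$ with at most $O(k^2\delta)$ nonzero entries. We assemble $C$ in $O(k^2\delta)$ time and query $\det(\L_\cap+C)$, which costs $O((k^2\delta)^\omega)$ when $d=0$ and $\widetilde O(W\cdot kr\cdot(k^2\delta)^\omega)$ otherwise. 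We then read off nonvanishing or the minimum degree.

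\textbf{Main obstacle.} The delicate step is that the base $\L_\cap$ may be singular, for example when the original instance has no packing. Invertibility of the base matrix is exactly the hypothesis of \Cref{lemma:BrandS-det} that we cannot guarantee. This is what \Cref{lemma:BrandS-det-General-Q} is for: a random numeric pad forces invertibility and is folded back into the query as a low-rank correction. I would check that this folding keeps the query cost within the same $O((k^2\delta)^\omega)$ bound, possibly after padding the perturbation dimension.

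A secondary point is that inserted elements need their random $C_i$ and $x_e$ entries drawn at query time. Since these are fresh independent values, the probability argument of \Cref{lemma:union_rep} still applies to the updated matroids.
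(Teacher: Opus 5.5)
Your proposal follows the paper's proof essentially step for step: form $\L_\cap = U_1 D U_2^T$ from the randomized union representations in $O(k^2m)$ time; note that each element update changes only its own rank-one term, supported on an $O(k)\times O(k)$ block, so $\delta$ updates give a perturbation with $O(k^2\delta)$ nonzeros; then invoke \Cref{lemma:BrandS-det-General-Q} with $N=kr$ and $d=0$ (decision) or $d=2W$ (weighted), reading off nonvanishing or the minimum $Y$-degree via \Cref{thm:k-fold-matroid_intersection}. One small slip that does not affect the argument: partition matroids are in fact strongly base-orderable, though you are right that Davies--McDiarmid is not needed here, since the packing definition lets the two partitions differ.
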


\begin{proof}
Let $U_1$ and $U_2$ be the $kr \times m$ $k$-fold randomized union representation matrices for $M_1$ and $M_2$ respectively, as obtained by \Cref{lemma:union_rep}.
Let $D$ be the $m \times m$ diagonal weight matrix with $D_{e,e} = x_e Y^{W+\wt(e)}$, and define the $kr \times kr$ mixed Laplacian matrix as $\L_{pack} := U_1 D U_2^T$.
Suppose the base representations $A_1, A_2$ and the weight function undergo at most $\delta$ changes. We first bound the number of non-zero entries in the resulting additive update to the mixed Laplacian. The matrix $\L_{pack}$ can be expressed as:
$$
\L_{pack} =
\sum_{e\in E} D_{e,e} (u_1)_e (u_2)_e^T
$$
where $(u_1)_e$ and $(u_2)_e$ are the $e$-th columns of $U_1$ and $U_2$, respectively.
Computing the matrix $\L_{pack}$ takes $O(k^2 m)$ time, assuming that representation matrices $A_1,A_2$ are provided in sparse column format (see \Cref{definition:sparse-matroid}).
Note that the updates to the $e$-th column of $A_1, A_2$, or to $\wt(e)$ modify only the $e$-th term in the sum, and since $A_1$ and $A_2$ are column-sparse, the product $(u_1)_e (u_2)_e^T$ has at most $O(k^2)$ non-zero entries.
Thus, for any $\delta$ base changes, the updated Laplacian can be represented as $\L'_{pack} = \L_{pack} + C$, where matrix $C$ has at most $O(k^2\delta)$ non-zero entries. By \Cref{thm:k-fold-matroid_intersection}, identifying the existence or computing the minimum weight of a $k$-packing set $S$ in the updated matroids reduces to evaluating the determinant of $\L'_{pack}$.

To efficiently support these queries, we construct two separate oracles based on the data structure from \Cref{lemma:BrandS-det-General-Q}:

\paragraph{Decision Oracle}
We ignore the polynomial weights and work with the Laplacian $\L_{pack}|_{Y=1}$ (where the maximum degree is $d=0$). \Cref{lemma:BrandS-det-General-Q} together with the fact that $M_1,M_2$ are column sparse implies that preprocessing this matrix requires $O(k^2m+(kr)^\omega)$ time and the corresponding data structure takes $O(km+(kr)^2 \log(kr))$ space. 
The additive $O(km)$ term arises from the fact that  storing the union representations $U_1$ and $U_2$ requires $O(km)$ space.
After at most $\delta$ updates to representation matrices $A_1,A_2$, and $\wt$, querying the determinant of the Laplacian $\L'_{pack}|_{Y=1} ~=~\L_{pack}|_{Y=1}+C|_{Y=1}$ requires $O((k^2\delta)^\omega)$ query time.

\paragraph{Weighted Oracle}
The matrix $\L_{pack} \in \F_{2W}[Y]^{kr \times kr}$ is a polynomial matrix where the maximum degree of $Y$ in any entry is bounded by $d = 2W$. \Cref{lemma:BrandS-det-General-Q} together with the fact that $M_1,M_2$ are column sparse implies that preprocessing this matrix requires $\widetilde{O}(k^2m+W (kr)^3)$ time and the oracle takes $O(km+W (kr)^3 \log(kr))$ space. To compute the weight after any $\delta$ updates to $A_1,A_2,\wt$, we evaluate the determinant of the polynomial matrix $\L'_{pack}$ and extract the coefficient of the minimum-degree term of $Y$. Querying the determinant of this updated matrix takes $\widetilde{O}(W \cdot kr \cdot (k^2\delta)^\omega)$ time, and isolating the smallest exponent of $Y$ in $\det(\L'_{pack})$ and subtracting $krW$ yields the exact weight of the minimum-weight $k$-packing set. This completes the proof.
\end{proof}

\begin{remark}[Cost of Parallel Updates]
\label{rem:parallel-edges-update-packing-laplacian}
The query complexity in \Cref{thm:sensitivity_packing} can be refined
when some of the updates correspond to changes in the
multiplicity of parallel elements. Suppose the updates affect distinct
elements $e_1, \dots, e_{\rho}$ (with $\rho \le \delta$), where the
multiplicity of $e_i$ increases or decreases by $\rho_i \le k$, so that
$\sum_{i=1}^{\rho} \rho_i = \delta$. Column-sparsity of $A_1, A_2$
ensures that each $(e_i, \rho_i)$ contributes only $O(k^2)$ non-zero
entries within an $O(k) \times O(k)$ submatrix of the additive update
$C$ to the packing Laplacian
$\L_{\mathrm{pack}} = U_1 D U_2^{T}$. Aggregating over all
$\rho$ affected elements, this yields query time
$O((k^2 \rho)^{\omega})$ rather than $O((k^2 \delta)^{\omega})$.
\end{remark}

\subsubsection{Subset Sensitivity Oracle for General Linear Matroids}
The oracle in \Cref{thm:sensitivity_packing} handles arbitrary updates
to the ground set, incurring a query cost of
$\Omega((k^2\delta)^\omega)$. We now present a complementary oracle for
the \emph{subset sensitivity model}: a \emph{susceptible set}
$E_Q$ of size $\sigma$ is identified at preprocessing time, and all
future queries modify only elements in $E_Q$. The set $E_Q$ need not be
a subset of $E$; we require only that the matroids $M_1$ and $M_2$
admit well-defined extensions to the augmented ground set
$E_0 := E \cup E_Q$.

\begin{theorem}
\label{thm:subset_sensitivity_packing}
Let $M_1 = (E, \I_1)$ and $M_2 = (E, \I_2)$ be two
linear matroids each of rank $r$  on a ground set $E$ of size $m$, with
weight function $\wt : E \to [-W, W]$. Let $E_Q$ be a susceptible set
of size $\sigma$, and suppose $M_1$ and $M_2$ extend to linear matroids of (same) rank $r$
on
$E_0 := E \cup E_Q$ with $r \times |E_0|$ representation matrices
$\bar{A}_1$ and $\bar{A}_2$ over $\F$. Then, for any $k \ge 1$, there
exist the following subset sensitivity oracles that, for any $f$ updates
confined to $E_Q$, answer $k$-packing queries:
\begin{enumerate}
\item \textbf{Decision Oracle:} Decides if a $k$-packing exists. This
oracle requires
$O(k^2(m + \sigma) + \sigma^2 (kr)^{\omega - 2} + (kr)^\omega)$
preprocessing time, $O(\sigma^2)$ space, and answers queries in
$O(f^\omega)$ time.

\item \textbf{Weighted Oracle:} Returns the exact minimum weight of a
$k$-packing. This oracle requires
$\widetilde{O}(k^2(m + \sigma) + W \sigma^2 (kr)^{\omega - 1} + W(kr)^{\omega+1})$
preprocessing time, $\widetilde{O}(W \cdot kr \cdot \sigma^2)$ space,
and answers queries in $\widetilde{O}(W \cdot kr \cdot f^\omega)$ time.
\end{enumerate}
\end{theorem}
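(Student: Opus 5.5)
We freeze the union representations over the augmented ground set and realize every query as a diagonal change in the weight matrix. Concretely, by \Cref{lemma:union_rep} we build $kr\times|E_0|$ randomized union representations $\bar U_1,\bar U_2$ of $M_1^{\vee k},M_2^{\vee k}$ from $\bar A_1,\bar A_2$. This costs $O(k(m+\sigma))$ nonzeros for sparse columns; in general it costs $O(kr)$ per column, and is absorbed in the stated bounds when $\bar A_j$ is given sparsely. We then define the diagonal $D\in\F[Y]^{E_0\times E_0}$ by $D_{e,e}=x_eY^{W+\wt(e)}$ for $e\in E$ currently present, and $D_{e,e}=0$ for $e\in E_Q\setminus E$. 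Set $L:=\bar U_1 D\bar U_2^T$. Since the elements outside the current ground set have zero diagonal entries, \Cref{thm:k-fold-matroid_intersection} applied to the restriction to the present elements shows that $\det(L)$ enumerates exactly the $k$-packings of the current instance. An update to $e\in E_Q$ (insertion, deletion, or reweighting) only changes $D_{e,e}$. Hence for any update set $S\subseteq E_Q$ with $|S|\le f$, the queried matrix is $L'=L+(\bar U_1)_S\Delta_S(\bar U_2)_S^T$ with $\Delta_S$ an $f\times f$ diagonal matrix. Applying \Cref{thm:k-fold-matroid_intersection} once more, to the updated ground set, shows that $\det(L')$ is nonzero w.h.p.\ iff a $k$-packing exists, and that its minimum $Y$-degree minus $krW$ is the optimal weight.

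To handle a possibly singular $L$, we use the padding trick behind \Cref{lemma:BrandS-det-General-Q}. We add a random low-rank pad $PR^T$ so that $N:=L+PR^T$ is invertible w.h.p., and move $-PR^T$ into the query. One clean way to do this is to add $kr$ dummy columns to the ground set whose $D$-entries are random and are set to zero at query time; a permanent block of these can be folded into a constant-size correction. Alternatively, we observe that it suffices to append dummy elements with random columns whose diagonal entries are ``switched off'' in every query. The query then reduces, via \Cref{lemma:matrix_det}, to
\[
\det(L')=\det(N)\cdot\det\bigl(I+\,[\bar U_2{}_S\ \ R]^T N^{-1}[\bar U_1{}_S\ \ P]\,\mathrm{diag}(\Delta_S,-I)\bigr).
\]
To keep the query at $O(f^\omega)$ regardless of the pad size, we precompute the Schur-complement form. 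We store $T:=(\bar U_2)_{E_Q}^T N^{-1}(\bar U_1)_{E_Q}$, which is $\sigma\times\sigma$, together with the pad-correction blocks. We then eliminate the pad block once during preprocessing, i.e.\ we replace $T$ by $T-(\bar U_2)_{E_Q}^TN^{-1}P\,(I-R^TN^{-1}P)^{-1}R^TN^{-1}(\bar U_1)_{E_Q}$, which is the $E_Q\times E_Q$ block of $(L+\text{generic perturbation})^{-1}$-type corrections. The correction is valid whenever $\det(L)$-type quantities are generic. The expected main obstacle is exactly this step: making the singular-base elimination legitimate. If $L$ itself is singular, the fully eliminated pad would require $L^{-1}$. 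I would therefore first try keeping the pad permanently switched off by realizing it as an extra $\sigma_0=kr$ fixed ``virtual'' elements with a symbolic scalar $z$, storing $T$ as above over $\F(z)$ truncated appropriately. If that gets messy, I would fall back on the route the paper takes, namely a specialized statement for singular bases in \Cref{lemma:det} (the subset determinant oracle, \Cref{lem:ov-det}), and simply invoke it here.

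Granting \Cref{lem:ov-det}, the theorem follows by bookkeeping. For the decision oracle we set $Y=1$ ($d=0$). Building $\bar U_j$ and $L$ costs $O(k^2(m+\sigma))$ (each column contributes an $O(k)\times O(k)$ outer product). Inverting costs $O((kr)^\omega)$, and forming the $\sigma\times\sigma$ block $(\bar U_2)_{E_Q}^TN^{-1}(\bar U_1)_{E_Q}$ by rectangular multiplication costs $O(\sigma^2(kr)^{\omega-2})$. We store only this block and $\det N$, using $O(\sigma^2)$ space, and a query computes an $f\times f$ determinant in $O(f^\omega)$. For the weighted oracle we instead work over $\F_{2W}[Y]$, performing the same computations with polynomial arithmetic to degree $O(Wkr)$. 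Here the entries of $N^{-1}$ are rational, so we use truncated power series in $Y$ after a random shift, or evaluate at $O(Wkr)$ points and interpolate. This multiplies the costs by $\widetilde O(Wkr)$: it gives preprocessing $\widetilde O(W\sigma^2(kr)^{\omega-1}+W(kr)^{\omega+1})$, space $\widetilde O(Wkr\sigma^2)$, and query $\widetilde O(Wkr f^\omega)$. Reading off the minimum $Y$-degree of the determinant and subtracting $krW$ then gives the answer by \Cref{thm:k-fold-matroid_intersection}. Correctness holds w.h.p.\ by Schwartz--Zippel, using a union bound over the $\binom{\sigma}{\le f}$ queries if uniform correctness is desired, with $|\F|$ polynomially large.
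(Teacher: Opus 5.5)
Your proposal is correct and, once you fall back on \Cref{lemma:det}, it is the paper's proof. Build $\bar U_1,\bar U_2$ over $E_0$, encode every update as a change to a diagonal entry of $D$ indexed by $E_Q$, invoke the subset determinant oracle with $r:=kr$ and $d=0$ or $d=2W$, and read correctness off \Cref{thm:k-fold-matroid_intersection}; your explicit choice $D_{e,e}=0$ for $e\in E_Q\setminus E$ is cleaner than the paper's treatment. The padding detour is unnecessary: \Cref{lemma:det} pads only by the rank deficiency $\delta$ of $L$, and $\delta\le f$ can be assumed since otherwise every query determinant vanishes, so neither $kr$ dummy columns nor Woodbury elimination through a singular $L$ is needed. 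Separately, your optional union bound over all queries would need $|\F|\ge\sigma^{f}\poly(kr)$, which is polynomial only for constant $f$.
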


\begin{proof}
Let $\bar{U}_1, \bar{U}_2 \in \F^{kr \times |E_0|}$ be the $k$-fold
union representation matrices for the extensions of $M_1$ and $M_2$ to
$E_0$, constructed via \Cref{lemma:union_rep} from $\bar{A}_1$ and
$\bar{A}_2$. Let $D \in \F_{2W}[Y]^{|E_0| \times |E_0|}$ be the
diagonal weight matrix with $D_{e,e} = x_e Y^{W + \wt(e)}$ for
$e \in E_0$, so that the Packing Laplacian is
$\L_{\mathrm{pack}} = \bar{U}_1 D \bar{U}_2^T$. Computing
$\L_{\mathrm{pack}}$ takes $O(k^2(m + \sigma))$ time, since
$|E_0| = m + \sigma$ and each element contributes $O(k^2)$ entries to
the Laplacian.

Since all queries modify only diagonal entries indexed by $E_Q$, we
invoke \Cref{lemma:det} with $A := \bar{U}_1$, $B := \bar{U}_2$, the
diagonal matrix $D$, and susceptible set $Q := E_Q$ of size $\sigma$.

\paragraph{Decision Oracle.}
We apply the scalar formulation of \Cref{lemma:det} with $d := 0$,
working with $\L_{\mathrm{pack}}|_{Y=1}$. Substituting
$r := kr$ and $\sigma = |E_Q|$ into \Cref{lemma:det}, the data
structure construction takes
$O(\sigma^2 (kr)^{\omega - 2} + (kr)^\omega)$ time and $O(\sigma^2)$
space. Combined with the $O(k^2(m + \sigma))$ cost of computing
$\L_{\mathrm{pack}}$, the total preprocessing time is
$O(k^2(m + \sigma) + \sigma^2 (kr)^{\omega - 2} + (kr)^\omega)$. Each
query takes $O(f^\omega)$ time. By
\Cref{thm:k-fold-matroid_intersection}, the determinant is non-zero if
and only if a valid $k$-packing exists.

\paragraph{Weighted Oracle.}
We apply the polynomial formulation of \Cref{lemma:det} with $d := 2W$.
Substituting $r := kr$ into \Cref{lemma:det}, the data structure
construction takes
$\widetilde{O}(W \sigma^2 (kr)^{\omega - 1} + W(kr)^{\omega+1})$ time
and $\widetilde{O}(W \cdot kr \cdot \sigma^2)$ space. Combined with the
Laplacian computation, the total preprocessing time is
$\widetilde{O}(k^2(m + \sigma) + W \sigma^2 (kr)^{\omega - 1} + W(kr)^{\omega+1})$.
Each query takes $\widetilde{O}(W \cdot kr \cdot f^\omega)$ time. The
minimum weight is extracted as the smallest exponent of $Y$ with a
non-zero coefficient, minus $krW$.
\end{proof}

\begin{remark}[Comparison of Oracle Models]
\label{rem:oracle-comparison}
The general and subset sensitivity oracles for matroid packing offer
complementary trade-offs:
\begin{center}
\renewcommand{\arraystretch}{1.25}
\begin{tabular}{lcc}
\toprule
& \textbf{General Sensitivity} & \textbf{Subset Sensitivity} \\
& (\Cref{thm:sensitivity_packing}) & (\Cref{thm:subset_sensitivity_packing}) \\
\midrule
Space (decision) & $O((kr)^2 \log k)$ & $O(\sigma^2)$ \\
Query (decision) & $O((k^2 \delta)^\omega)$ & $O(f^\omega)$ \\
\bottomrule
\end{tabular}
\end{center}
The subset sensitivity oracle achieves query time independent of $k$,
at the cost of space scaling with the susceptible set size $\sigma$
rather than the matroid rank. When $\sigma \ll m$ and $k$ is large
relative to $f$, the subset sensitivity model provides both smaller
space and faster queries. When $\sigma$ is comparable to $m$, the
general sensitivity oracle is more space-efficient.
\end{remark}

\subsection{Sensitivity Oracle for Matroid Covering}
\label{section:covering}

We next present an algebraic framework for addressing {\em matroid covering}.

Let $M = (E, \I)$ be a linear matroid of rank $r$ represented by an $r\times m$ matrix $A$ over a field $\F$. The \emph{$k$-covering problem} asks whether the entire ground set $E$ can be covered by (partitioned into) $k$ independent sets of $M$. This is equivalent to asking whether the entire ground set $E$ is an independent set in the $k$-fold union matroid $M^{\vee k}$.

For $E$ to be an independent set in $M^{\vee k}$, its cardinality $m$ must be at most $kr$. Let $U \in \F^{kr \times m}$ be the union representation matrix obtained via \Cref{lemma:union_rep}. The set $E$ is a $k$-covering if and only if the columns of $U$ are linearly independent. To test this, we augment $U$ into a square matrix by adding random columns.

\begin{lemma}[Augmenting Linear Representations]
\label{lemma:aug-linear-rep}
Let $U \in \F^{kr\times m}$ be the $k$-fold union representation matrix for $M^{\vee k}$. Suppose $m \le kr$. Let $\Delta=kr-m$, and $R \in \F^{kr\times \Delta}$ be a random matrix whose entries are chosen independently and uniformly from a sufficiently large finite field $\F$.

Then, with high probability, the augmented square matrix $U_{aug} := [U \mid R] \in \F^{kr \times kr}$ has full rank $kr$ if and only if the row rank of $U$ is $m$, i.e., $E$ is $k$-coverable.
\end{lemma}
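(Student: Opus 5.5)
The plan is to prove the two directions separately. The easy direction is deterministic, and the other uses a Laplace expansion plus Schwartz--Zippel over the entries of $R$.

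\emph{(If $U_{aug}$ is nonsingular then $U$ has rank $m$.)} If $\det(U_{aug})\neq 0$, all $kr$ columns of $[U\mid R]$ are linearly independent. In particular the $m$ columns of $U$ are independent, so $U$ has full column rank $m$. By property~2 of \Cref{def:union_rep}, a dependent set never yields full column rank, so $E$ is independent in $M^{\vee k}$, i.e.\ $E$ is $k$-coverable. This direction holds with certainty.

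\emph{(If $U$ has rank $m$ then $U_{aug}$ is nonsingular w.h.p.)} Treat the entries of $R$ as indeterminates and view $\det(U_{aug})$ as a polynomial $P(R)$ of total degree $\Delta\le kr$ in them. Apply the generalized Laplace expansion (\Cref{lemma:generalised-laplace-expansion}) along the column split $U\mid R$, in its transposed form (expanding along columns rather than rows):
\[
P(R)=\sum_{T\subseteq[kr],\,|T|=m}\pm\det(U_{T,\cdot})\,\det(R_{\bar T,\cdot}).
\]
Since $U$ has rank $m$, some $m\times m$ minor $\det(U_{T_0,\cdot})$ is nonzero. For distinct $T$, the polynomials $\det(R_{\bar T,\cdot})$ are supported on disjoint sets of monomials: each monomial of $\det(R_{\bar T,\cdot})$ uses exactly one entry from each row in $\bar T$, so the row set $\bar T$ can be read off the monomial. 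Hence no cancellation occurs across different $T$, and $P\not\equiv 0$. Schwartz--Zippel (\Cref{lemma:SZ}) then gives $\Pr[P(R)=0]\le kr/|\F|$, which is $1/\poly(kr)$ for $|\F|\ge\poly(kr)$.

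\emph{Combining.} If $E$ is $k$-coverable, property~1 of \Cref{def:union_rep} says $U$ has rank $m$ with probability $1-1/(kr)^{\Omega(1)}$. A union bound with the Schwartz--Zippel step (over the independent randomness of $R$) shows $U_{aug}$ is nonsingular w.h.p. The main point to get right is the non-cancellation argument in the expansion; the monomial-support observation handles it cleanly. An alternative is to pick explicit values $R=I_{\bar T_0}$, placing identity columns on the rows outside $T_0$, which makes $P$ equal to $\pm\det(U_{T_0,\cdot})\neq0$ at that point. This is an even simpler witness that $P$ is nonzero, and I would likely present that instead.

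The statement's phrase ``row rank of $U$ is $m$'' should be read as rank $m$, i.e.\ full column rank, since row and column rank coincide.
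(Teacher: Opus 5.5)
Your proposal is correct and follows essentially the same route as the paper: a Laplace expansion of $\det([U\mid R])$ along the column split, non-cancellation of the $R$-minors across distinct row sets, and Schwartz--Zippel over the entries of $R$. Your monomial-support argument (or the identity-column witness) makes the paper's ``no cancellation'' assertion precise. You also use both properties of the randomized union representation to pass carefully between rank $m$ and $k$-coverability, a step the paper only states.
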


\begin{proof}
We treat the entries of the random matrix $R$ as indeterminates and work over the polynomial ring $\F[\X]$, where
$$
\X := \{R_{i,j} : 1 \le i \le kr,\ 1 \le j \le \Delta\}.
$$
Using the block decomposition, the determinant of the $kr \times kr$ augmented matrix $U_{aug} = [U \mid R]$ can be expressed as:
$$
\det(U_{aug}) = \sum_{\substack{S \subseteq [kr] \\ |S| = m}} (-1)^{\sigma(S)} \det(U_S) \det(R_{S^c}),
$$
where $U_S$ denotes the $m \times m$ submatrix of $U$ induced by the rows in $S$, $R_{S^c}$ denotes the $\Delta \times \Delta$ submatrix of $R$ induced by the rows in $S^c$, and $\sigma(S)$ is the corresponding sign parity.

If the row rank of $U$ is less than $m$, then every $m \times m$ minor $\det(U_S)$ evaluates to~$0$, making $\det(U_{aug})$ identically the zero polynomial.

Let us suppose the row rank of $U$ is exactly $m$. Then there exists at least one subset of rows $S \subseteq [kr]$ such that the minor $\det(U_S)$ is a non-zero constant in $\F$. For each such $S$, the corresponding determinant $\det(R_{S^c})$ contributes distinct monomials in the variables $\X$. Moreover, the monomials generated by one row subset $S$ cannot be cancelled out by the monomials generated by any other subset $S' \neq S$. Thus, $\det(U_{aug})$ is a non-zero polynomial.

As entries of $R$ are chosen independently and uniformly at random from a sufficiently large finite field, by the Schwartz-Zippel lemma $\det(U_{aug})$ evaluates to a non-zero value with high probability. This proves that with high probability, the matrix $U_{aug}$ has rank $kr$ if and only if $U$ has rank $m$.
\end{proof}

\subsubsection{Sensitivity Oracle for Sparse Linear Matroids}
\label{subsubsec:gen-covering}
Building on \Cref{lemma:aug-linear-rep}, we now present the sensitivity oracle for the $k$-covering problem.

\begin{theorem}
\label{thm:sensitivity_covering}
Let $k\geq 1$ be an integer.
Let $M = (E, \I)$ be a linear matroid of rank $r$ defined on a ground set $E$ of size $m \ (\le kr)$, represented by an $r \times m$ matrix $A$ over a field $\F$. Then, there exists a sensitivity oracle that, for any changes to $M$ consisting of at most $\delta$ updates to $A$, can decide whether the entire ground set $E$ can be covered by (partitioned into) $k$ independent sets of the updated matroid $M'$. This oracle requires $O((kr)^\omega)$ preprocessing time, $O((kr)^2 \log(kr))$ space, and answers queries in $O((\delta k)^\omega)$ time.
\end{theorem}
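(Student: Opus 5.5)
The plan is to reduce $k$-coverability of the updated matroid to a single determinant evaluation and then answer queries with the general (possibly singular) determinant oracle of \Cref{lemma:BrandS-det-General-Q}. In preprocessing, build the randomized union representation $U=(AC_1;\dots;AC_k)\in\F^{kr\times m}$ from \Cref{lemma:union_rep}. Sample $R\in\F^{kr\times\Delta}$ with $\Delta=kr-m$ and form the square matrix $U_{aug}=[U\mid R]$. By \Cref{lemma:aug-linear-rep}, w.h.p.\ $\det(U_{aug})\neq 0$ iff $E$ is $k$-coverable. Build the data structure of \Cref{lemma:BrandS-det-General-Q} on $U_{aug}$ with $d=0$. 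This gives $O((kr)^\omega)$ preprocessing and $O((kr)^2\log(kr))$ space. Constructing $U$ and $R$ costs $O(k\cdot\mathrm{nnz}(A)+kr\Delta)\le O((kr)^2)$, which is dominated. I would rely on the general version, rather than \Cref{lemma:BrandS-det}, precisely because $U_{aug}$ may be singular in the base instance.

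At query time, suppose $\delta$ entries of $A$ change. These changes affect at most $\delta$ columns $e$, and in each such column at most $\delta$ entries in total. An entry change at $(i,e)$ of $A$ changes the entries $(i+(j-1)r,e)$ of $U$, scaled by $(C_j)_{e,e}$, for $j=1,\dots,k$. So the additive update $C$ to $U_{aug}$ has at most $k\delta$ nonzero entries, and we keep $R$ and the $C_j$ fixed. The query returns whether $\det(U_{aug}+C)\neq 0$ in $O((k\delta)^\omega)$ time. Correctness of the new instance follows because $U+C$ is exactly the union representation (with the same random $C_j$) of the updated matrix $A'$. Likewise $[U+C\mid R]$ is the augmented matrix for $M'$, so \Cref{lemma:aug-linear-rep} applies to $M'$ as well.

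The main obstacle is the probabilistic argument. The randomness ($C_j$, $R$) is chosen before the query, so we need the success guarantee for each fixed update set rather than a union bound over all queries. For a fixed query, $\det([U'\mid R])$ is a polynomial in the $C_j$ and $R$ entries of degree at most $kr$. It is nonzero iff $M'$ is $k$-coverable: the Laplace expansion of \Cref{lemma:aug-linear-rep} combined with the union-representation argument of \Cref{lemma:union_rep} gives this. Schwartz–Zippel over a field of size $\poly(kr)$ then gives failure probability $1/(kr)^{\Omega(1)}$ per query. One subtlety is that an update could be an insertion or deletion of an element, which changes $m$. I would treat this by padding in advance: allocate up to $f$ placeholder zero columns, whose insertion is an entry update of $A$, and adjust $R$ by swapping columns. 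Alternatively I would restrict the statement, as written, to updates of entries of $A$ with $m$ fixed, which is what I would assume.
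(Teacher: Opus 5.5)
Your proposal is correct and is essentially the paper's proof. You form $U_{aug}=[U\mid R]$, reduce coverability to nonsingularity via \Cref{lemma:aug-linear-rep}, preprocess $U_{aug}$ with the singular-tolerant oracle of \Cref{lemma:BrandS-det-General-Q} at $d=0$, and observe that each updated entry of $A$ changes $k$ entries of $U$ while $R$ and the $C_j$ stay fixed, which gives an $O(k\delta)$-sparse update and $O((k\delta)^\omega)$ query time; your joint Schwartz--Zippel argument over the $C_j$ and $R$ is a slightly more careful version of the paper's probability reasoning. Assuming $m$ fixed is exactly the setting of the paper's proof, so the padding sketch is unnecessary, and it would not work as stated anyway: swapping dense random columns of $R$ in or out is not a sparse update for the black-box oracle.
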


\begin{proof}
Let $U \in \F^{kr \times m}$ be the $k$-fold union representation matrix for $M$, constructed using \Cref{lemma:union_rep}. Let $\Delta = kr - m$, and let $R \in \F^{kr \times \Delta}$ be a matrix of independent uniformly random elements from $\F$. We define the $kr \times kr$ augmented matrix as $U_{aug} := [U \mid R]$.

By \Cref{lemma:aug-linear-rep}, the set $E$ can be covered by $k$ independent sets if and only if the row rank of $U$ is $m$. With high probability, this occurs if and only if the scalar matrix $U_{aug}$ is nonsingular.

To efficiently support queries, we apply the preprocessing algorithm from \Cref{lemma:BrandS-det-General-Q} to $U_{aug}$. This preprocessing requires $O((kr)^\omega)$ time and computes an $O((kr)^2 \log(kr))$ sized oracle.

Now any $\delta$ changes to $A$ translates to exactly $k\delta$ changes in $U$ with the random augmenting matrix $R$ completely unchanged. So, the updated augmented matrix $U'_{aug}$ can be written as $U_{aug} + C$, where the additive update matrix $C \in \F^{kr \times kr}$ contains at most $k\delta$ non-zero entries.

To answer a $k$-covering query for the updated matroid $M'$, we must determine if $\det(U'_{aug})$ is nonzero. By \Cref{lemma:BrandS-det-General-Q}, this requires exactly $O((\delta k)^\omega)$ time. This completes the proof.
\end{proof}

\subsubsection{Subset Sensitivity Oracle for General Linear Matroids}
Analogous to the matroid packing problem, we now present a subset sensitivity oracle for matroid covering. Here, a susceptible set $E_Q$ is identified during preprocessing, and updates are confined to swapping elements in and out of $E_Q$.

\begin{theorem}
\label{thm:subset_sensitivity_covering}
Let $k \ge 1$ be an integer. Let $M = (E, \I)$ be a linear matroid of rank $r$ on a ground set $E$ of initial size $m \le kr$. Let $E_Q$ be a susceptible set of size $\sigma$, and suppose $M$ extends to a matroid on $E_0 := E \cup E_Q$ with an $r \times |E_0|$ representation matrix $\bar{A}$ over $\F$.
Then there exists a subset sensitivity oracle that, for any $f$ updates confined to $E_Q$ (where elements can be added or deleted such that the updated ground set size $m_{curr}$ remains bounded by $kr$), can decide whether the current ground set can be covered by $k$ independent sets.
This oracle requires $O(k^2|E_0| + \sigma^2 (kr)^{\omega - 2} + (kr)^\omega)$ preprocessing time, $O(\sigma^2)$ space, and answers queries in $O(f^\omega)$ time.
\end{theorem}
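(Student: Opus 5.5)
The plan is to realize the covering test as a Laplacian of the form $ADB^T$ in which every update is a diagonal switch. Then \Cref{lemma:det} (the subset determinant oracle, with $d=0$ and $r:=kr$) applies directly. The fixed square matrix $[U\mid R]$ of \Cref{thm:sensitivity_covering} cannot be used as is, because the number of padding columns $kr-m_{curr}$ changes with the query.

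\textbf{Construction.} Let $\bar U\in\F^{kr\times|E_0|}$ be the union representation of $\bar A$ from \Cref{lemma:union_rep}. Append a pool $P$ of $kr$ uniformly random dense columns $R\in\F^{kr\times kr}$, and set $A:=[\bar U\mid R]$. Define a diagonal matrix $D$ on $E_0\cup P$ by:
\begin{itemize}
\item $D_{e,e}=1$ for $e\in E_{curr}$ (initially $E$);
\item $D_{e,e}=0$ for $e\in E_Q\setminus E_{curr}$;
\item $D_{p,p}=x_p$ for $p\in P$, with each $x_p$ an independent random scalar.
\end{itemize}
Let $L:=ADA^T\in\F^{kr\times kr}$. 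By Cauchy--Binet (\Cref{lemma:cauchy-binet}), only sets $S$ with $D_S$ nonsingular contribute. These are exactly $S=E_{curr}\cup T$ with $T\subseteq P$ and $|T|=kr-m_{curr}$. Hence
\[
\det(L)=\sum_{T\subseteq P,\ |T|=kr-m_{curr}}\det\bigl(A_{E_{curr}\cup T}\bigr)^2\prod_{p\in T}x_p .
\]
Distinct $T$ give distinct monomials in the $x_p$, so this is a nonzero polynomial in the $x$'s iff some $T$ has $\det(A_{E_{curr}\cup T})\neq0$.

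\textbf{Correctness.} This condition is exactly the statement of \Cref{lemma:aug-linear-rep} applied to $[\bar U_{E_{curr}}\mid R_T]$. Its Laplace-expansion argument, with the entries of $R$ treated as indeterminates, shows the following: some (indeed any) choice of $T$ gives a nonzero determinant w.h.p.\ iff $\bar U_{E_{curr}}$ has full column rank $m_{curr}$. By \Cref{lemma:union_rep}, this holds w.h.p.\ iff $E_{curr}$ is independent in $M^{\vee k}$, i.e.\ is $k$-coverable. Two applications of Schwartz--Zippel (\Cref{lemma:SZ}) then finish the argument: one over the entries of $R$, one over the $x_p$. A union bound over the at most $\binom{\sigma}{\le f}$ possible query sets $E_{curr}$ (or a per-query bound) gives the high-probability guarantee, provided $|\F|$ is a large enough polynomial.

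\textbf{Data structure and costs.} An insertion or deletion of $e\in E_Q$ flips $D_{e,e}$ between $0$ and $1$. This is a diagonal change at a position of $Q:=E_Q$, so \Cref{lemma:det} yields $O(\sigma^2)$ space and $O(f^\omega)$ query time, and it tolerates a singular initial $L$. Preprocessing has three parts:
\begin{itemize}
\item $\bar U D\bar U^T$ costs $O(k^2|E_0|)$ when $\bar A$ is column-sparse; in general it costs $O(k\cdot(\text{nnz of }\bar A))$ per union row block.
\item $RD_PR^T$ costs $O((kr)^\omega)$.
\item The oracle of \Cref{lemma:det} costs $O(\sigma^2(kr)^{\omega-2}+(kr)^\omega)$.
\end{itemize}
Together these match the claimed bound.

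\textbf{Main obstacle.} The step I expect to need the most care is the correctness claim with a variable-size padding. One must make sure that the squared minors $\det(A_{E_{curr}\cup T})^2$ cannot cancel across different $T$, which is handled by the distinct $x$-monomials. One must also check that the randomness in $R$ serves all query sets simultaneously. If the field has characteristic $2$, squaring is still harmless, since a nonzero element squared is nonzero. As a fallback, one can use an independent copy $A'$ with fresh $C_i$'s and $R'$, and form $ADA'^T$; the same monomial-separation argument then goes through.
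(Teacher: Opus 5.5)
\textbf{The gap: your Cauchy--Binet step is wrong, and the whole construction fails with it.}

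In $\det(ADA^T)=\sum_{|S|=kr}\det(A_S)^2\prod_{e\in S}D_{e,e}$, a term survives for \emph{every} $kr$-subset $S$ of the active columns $E_{curr}\cup P$. Nothing forces $S\supseteq E_{curr}$.

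In particular, $S=P$ contributes $\det(R)^2\prod_{p\in P}x_p$. Since $|P|=kr$, it is the only $S$ producing that monomial. Because $R$ is a random square matrix, $\det(L)$ is therefore a nonzero polynomial in the $x_p$, and w.h.p.\ $\det(L)\neq 0$ for every query, whether or not $E_{curr}$ is $k$-coverable. Put simply, the always-active pool $R$ already has full rank $kr$, so $L$ is essentially always nonsingular and the test is vacuous.

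The sum you wrote is only the lowest-degree homogeneous component of $\det(L)$, namely the part of degree $kr-m_{curr}$ in the $x_p$. A single scalar evaluation cannot isolate it. Your fallback $ADA'^T$ has the same defect, through the term $\det(R)\det(R')\prod_{p}x_p$.

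\textbf{The missing idea: keep exactly $kr$ columns active.} Then $L=M_{curr}M_{curr}^T$ with $M_{curr}$ square, $\det L=\det(M_{curr})^2$, and \Cref{lemma:aug-linear-rep} applies directly. This is what the paper does. It is cheap because a query changes $m_{curr}$ by at most $f$:
\begin{itemize}
\item Generate $(kr-m)+f$ random columns.
\item Keep $\max(0,kr-m-f)$ of them permanently active.
\item Put the remaining at most $2f$ columns into a switch pool, of which $\min(kr-m,f)$ are initially active.
\item A query toggles its at most $f$ elements of $E_Q$, plus $|m_{curr}-m|\le f$ switch columns to restore exactly $kr$ active columns.
\end{itemize}
Take the susceptible set $Q:=E_Q\cup R_{switch}$, of size $\sigma+O(f)=O(\sigma)$. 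Then \Cref{lemma:det} with $d=0$ gives the claimed $O(\sigma^2)$ space and $O(f^\omega)$ query time.

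Alternatively, you could tag the pool weights with a formal variable $Z$ and read off the coefficient of $Z^{kr-m_{curr}}$. That would rescue your construction, but it needs the polynomial case of \Cref{lemma:det}. It therefore costs an extra factor of about $kr$ in space and query time, and no longer meets the stated bounds.
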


\begin{proof}
Let $\bar{U} \in \F^{kr \times |E_0|}$ be the $k$-fold union representation matrix for the extension of $M$ to $E_0$.
The initial ground set has size $m$, requiring $\delta_0 = kr - m$ random columns. Since any query consists of at most $f$ insertions/deletions confined to $E_Q$, we generate $\delta_{max} = \delta_0 + f$ independent, uniformly random columns to form a matrix $R \in \F^{kr \times \delta_{\max}}$ and partition it into two sub-matrices: $R_{static}$ containing $\max(0,\delta_0 - f)$ columns, and $R_{switch}$ containing the remaining columns. We define the base augmented matrix as $U := [\bar{U} \mid R_{static} \mid R_{switch}]$.
Let $D$ be a diagonal selection matrix matching the column dimensions of $U$. For the initial ground set $S \subseteq E_0$ of size $m$, we initialize $D$ such that $D_{e,e} = 1$ for $e \in S$, $D_{i,i} = 1$ for all columns in $R_{static}$, and $D_{i,i} = 1$ for exactly $\min(\delta_0,f)$ arbitrary columns in $R_{switch}$. All other diagonal entries are set to $0$. This ensures exactly $kr$ columns are active. We define the Covering Laplacian as:
$$\L_{cov} := U D U^T.$$
Notice that $\L_{cov} = M_{curr} M_{curr}^T$, where $M_{curr}$ is the $kr \times kr$ submatrix of active columns. Because $M_{curr}$ is square, $\det(\L_{cov}) = \det(M_{curr})^2$. By \Cref{lemma:aug-linear-rep}, the active set is $k$-coverable if and only if $\det(\L_{cov}) \neq 0$.
Computing $\L_{cov}$ takes $O(k^2|E_0| + kr \delta_{max})$ time. A query modifies at most $f$ elements in $E_Q$, changing $m_{curr}$ by some $\delta m \in [-f, f]$. To compensate and maintain exactly $kr$ active columns, we toggle exactly $|\delta m|$ diagonal entries corresponding to $R_{switch}$. The columns of $R_{static}$ remain permanently active.
Therefore, the dynamically toggled entries belong to $E_Q$ (size $\sigma$) and $R_{switch}$ (size $2f$). We define the data structure's susceptible set as $Q := E_Q \cup R_{switch}$. Since $f \le \sigma$, the total size is $|Q| = \sigma + 2f = O(\sigma)$.
We invoke the scalar formulation ($d=0$) of \Cref{lemma:det} with $A := U$, $B := U$, the diagonal matrix $D$, and the susceptible set $Q$. The data structure construction requires $O(|Q|^2 (kr)^{\omega - 2} + (kr)^\omega) = O(\sigma^2 (kr)^{\omega - 2} + (kr)^\omega)$ time and $O(|Q|^2) = O(\sigma^2)$ space. An update toggles at most $f$ entries in $E_Q$ and at most $f$ entries in $R_{switch}$, modifying a total of $\le 2f$ diagonal entries. Querying the determinant thus takes $O((2f)^\omega) = O(f^\omega)$ time.
\end{proof}

\begin{remark}[Comparison of Oracle Models for Covering]
\label{rem:oracle-comparison-covering}
Similar to the packing problem, the matroid covering problem has following space and query-time trade-offs:
\begin{center}
\renewcommand{\arraystretch}{1.25}
\begin{tabular}{lccc}
\toprule
& \textbf{General Sensitivity} & \textbf{Subset Sensitivity} \\
& (\Cref{thm:sensitivity_covering}) & (\Cref{thm:subset_sensitivity_covering}) \\
\midrule
Space & $O((kr)^2 \log r)$ & $O(\sigma^2)$ \\
Query Time & $O((\delta k)^\omega)$ & $O(f^\omega)$ \\
\bottomrule
\end{tabular}
\end{center}
\end{remark}

\section{Max-Flows, Min-Cuts, $s$-Min-Cuts, Global Min-Cut}
\label{sec:flows}

\subsection{Sensitivity Oracles for All-Pairs $k$-Bounded Flow}
We first establish the matroid formulation for all-pairs bounded flows. Let $k$ be a positive integer. Let $H = (V_H, E_H)$ be a
directed multigraph on $N$ vertices with a designated source $s^*$ and a
designated sink $t^*$ such that the out-degree of $s^*$ and the in-degree of
$t^*$ are both equal to $\lambda \le k$. We add to each vertex
$v \in V_H \setminus \{s^*, t^*\}$ a set of exactly $k$ zero-cost self-loops,
and to vertices $s^*$ and $t^*$ we add $k - \lambda$ zero-cost self-loops.
Let $M$ denote the number of edges in the resulting graph $H$. We define two
partition matroids on the edge set of $H$:
\begin{itemize}
\item $M_{\mathrm{out}}$: A subset $S \subseteq E_H$ is independent if every
vertex $v \in V_H \setminus \{t^*\}$ has at most one outgoing edge in $S$.
This is represented by a matrix
$A_{\mathrm{out}} \in \{0, 1\}^{(N-1) \times M}$, where rows are indexed by
$V_H \setminus \{t^*\}$. The column for an edge $e = (u, v)$ has a $1$ at row
$u$ (if $u \neq t^*$) and $0$ elsewhere.

\item $M_{\mathrm{in}}$: A subset $S \subseteq E_H$ is independent if every
vertex $v \in V_H \setminus \{s^*\}$ has at most one incoming edge in $S$.
This is represented by a matrix
$A_{\mathrm{in}} \in \{0, 1\}^{(N-1) \times M}$, where rows are indexed by
$V_H \setminus \{s^*\}$. The column for an edge $e = (u, v)$ has a $1$ at row
$v$ (if $v \neq s^*$) and $0$ elsewhere.
\end{itemize}
Both matroids have rank $N - 1$. The following lemma establishes the
equivalence between a $k$-packing of common bases in these matroids and the
existence of edge-disjoint $(s^*, t^*)$-paths.

\begin{lemma}
\label{lemma:path_packing_static}
The directed multigraph $H$ contains $\lambda$ edge-disjoint $(s^*, t^*)$-paths if and only if there exists a set $S \subseteq E_H$ of size $k(N-1)$ that can be partitioned into $k$ bases of $M_{\mathrm{out}}$ as well as $M_{\mathrm{in}}$ (the partitionings may differ).
\end{lemma}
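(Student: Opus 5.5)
The plan is to reduce both directions to a degree-counting statement. For partition matroids, membership in the $k$-fold union is a pure capacity condition. A set $S$ can be partitioned into $k$ independent sets of $M_{\mathrm{out}}$ iff every constrained vertex has at most $k$ outgoing edges in $S$: each class of $M_{\mathrm{out}}$ is the out-star of a vertex, and we distribute the at most $k$ edges of each class over the $k$ blocks, giving each block at most one. So $S$ of size $k(N-1)$ splits into $k$ bases of $M_{\mathrm{out}}$ iff every $v \in V_H\setminus\{t^*\}$ has out-degree exactly $k$ in $S$. Symmetrically, $S$ splits into $k$ bases of $M_{\mathrm{in}}$ iff every $v \in V_H\setminus\{s^*\}$ has in-degree exactly $k$ in $S$. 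Hence the right-hand side of \Cref{lemma:path_packing_static} is equivalent to the existence of an edge set $S$ with these exact degree counts. This is the $k$-fold form of \Cref{theorem:SBO-ind}, but for partition matroids I would prove it directly.

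\textbf{($\Rightarrow$)} Take $\lambda$ edge-disjoint $(s^*,t^*)$-paths $P_1,\dots,P_\lambda$, simple in the vertex sense, and let $T$ be the union of their edges. Each vertex $v\notin\{s^*,t^*\}$ lies on $d_v\le \lambda\le k$ of the paths, so it has in- and out-degree $d_v$ in $T$. The vertex $s^*$ has out-degree $\lambda$, and $t^*$ has in-degree $\lambda$. Add $k-d_v$ of the self-loops at each internal $v$ and the $k-\lambda$ loops at $s^*$ and $t^*$. Each loop raises both the in- and out-count of its vertex by one, so the resulting $S$ meets both degree conditions, and $|S| = k(N-1)$.

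\textbf{($\Leftarrow$)} Delete the self-loops from $S$ to get $S'$. At every internal vertex, in- and out-degree in $S'$ remain equal (both are $k$ minus the number of loops used), so $S'$ satisfies flow conservation away from $s^*,t^*$. Since $s^*$ has out-degree exactly $\lambda$ in $H$, all its $\lambda$ outgoing non-loop edges must lie in $S'$. Now the standard flow-decomposition argument applies: decompose $S'$ into edge-disjoint $(s^*,t^*)$-trails plus closed trails. The number of trails leaving $s^*$ is $\lambda$, and each trail shortcuts to a path, which gives $\lambda$ edge-disjoint paths.

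The main obstacle is the bookkeeping at $s^*$ and $t^*$. As written, a loop at $t^*$ is a zero column of $A_{\mathrm{out}}$, and a loop at $s^*$ is a zero column of $A_{\mathrm{in}}$. So I must check that these loops are handled consistently with the intended counts "out-degree $k$ at $s^*$" and "in-degree $k$ at $t^*$". I will make the convention explicit: loops at $s^*$ count toward its out-capacity, loops at $t^*$ count toward its in-capacity, and $t^*$ has no out-constraint. Under this convention, the exact-count reformulation in the first paragraph still holds, and both directions go through. Verifying this edge case, together with the exactness of the counts that forces all $\lambda$ edges out of $s^*$ to be used, is where I expect the care to be needed.
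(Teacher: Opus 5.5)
The gap is in your ($\Rightarrow$) direction, at exactly the point you flagged, and the proposed convention does not close it. Count your $S$. You have $|T|=\lambda+\sum_v d_v$ (sum over $v\notin\{s^*,t^*\}$), the internal loops add $(N-2)k-\sum_v d_v$, and the loops at $s^*,t^*$ add $2(k-\lambda)$. So $|S|=kN-\lambda=k(N-1)+(k-\lambda)$, not $k(N-1)$.

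The per-vertex conditions you check do hold. But by your own exact-count reformulation, the out-degrees at the $N-1$ vertices other than $t^*$ already account for all $k(N-1)$ elements. So a valid $S$ has no edge leaving $t^*$ and no edge entering $s^*$, and in particular none of the $2(k-\lambda)$ terminal loops. This is forced by the matroids, not by a convention. A loop at $t^*$ is a zero column of $A_{\mathrm{out}}$, so it lies in no basis. If you instead declare it unconstrained, it becomes a coloop: it lies in every basis and pushes the rank above $N-1$. Either way it cannot sit in a disjoint union of $k$ bases of size $N-1$, and symmetrically for loops at $s^*$ in $M_{\mathrm{in}}$. Hence $s^*$ would need out-degree $k$ from its $\lambda$ genuine edges, which is impossible for $\lambda<k$.

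In fact the statement as written fails for $\lambda<k$. Take $N=2$, $H$ consisting of one edge $e_0=(s^*,t^*)$, $\lambda=1$, $k=2$, and loops $\ell_{s^*},\ell_{t^*}$. The only two-element set splitting into two $M_{\mathrm{out}}$-bases is $\{e_0,\ell_{s^*}\}$, while the only one splitting into two $M_{\mathrm{in}}$-bases is $\{e_0,\ell_{t^*}\}$. The paper's proof glosses over the same point. Its all-loop sets $B_i$ for the remaining $k-\lambda$ indices take a loop at each of the $N$ vertices, so they have $N$ elements and contain a zero column of both $A_{\mathrm{out}}$ and $A_{\mathrm{in}}$.

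The missing idea is to change the padding rather than the bookkeeping. Replace the $k-\lambda$ loops at $s^*$ and the $k-\lambda$ loops at $t^*$ by $k-\lambda$ parallel zero-cost $(s^*,t^*)$-edges, and likewise in $E_{s,t,\lambda}$. Each such edge contributes simultaneously to the out-count at $s^*$ and the in-count at $t^*$, so your count becomes $\lambda+\sum_v d_v+(k-\lambda)+\sum_v(k-d_v)=k(N-1)$.

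In the converse, $s^*$ then has exactly $k$ outgoing edges, all forced into $S$, and no edge of $S$ enters $s^*$ or leaves $t^*$. Discarding the internal loops and the dummy edges leaves a unit $(s^*,t^*)$-flow of value $\lambda$, and your decomposition yields $\lambda$ edge-disjoint paths.

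With this repair your argument is correct and genuinely differs from the paper's. You characterize $k$-base packings of each partition matroid separately by exact degree counts and extract paths by flow decomposition, so no exchange property is needed. The paper instead invokes Davies--McDiarmid (\Cref{theorem:SBO-ind}) to obtain $k$ common independent sets and reads one path off each.
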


\begin{proof}
For the forward direction, suppose $H$ contains $\lambda$ edge-disjoint $(s^*, t^*)$-paths $P_1, \ldots, P_\lambda$. Using the $k$ self-loops at each $v \in V_H \setminus \{s^*, t^*\}$, we extend these paths to a family of $\lambda$ edge-disjoint sets $(B_1, \ldots, B_\lambda)$ by adding self-loops so that for each $i \in [\lambda]$: (i) $E(P_i) \subseteq B_i$, and (ii) every $v \in V_H \setminus \{s^*, t^*\}$ has exactly one incoming and one outgoing edge from $B_i$, while $s^*$ has exactly one outgoing edge in $B_i$ and $t^*$ has exactly one incoming edge in $B_i$. For the remaining $k - \lambda$ indices, we use only self-loops to obtain edge sets satisfying the same in-/out-degree constraints. Let ($B_1, \dots, B_k)$ denote the resulting family. Each $B_i$ has size $N - 1$ and is independent in both $M_{\mathrm{out}}$ and $M_{\mathrm{in}}$, hence a basis of both. Thus $S := \biguplus_{i=1}^k B_i$ has size $k(N-1)$ and partitions into $k$ bases of each matroid.

For the converse, suppose $S \subseteq E_H$ has size $k(N-1)$ and partitions into $k$ bases of $M_{\mathrm{out}}$ as well as $M_{\mathrm{in}}$. Since both matroids are strongly base-orderable, \Cref{theorem:SBO-ind} implies $S$ can be partitioned into $k$ common independent sets $S = C_1 \uplus \dots \uplus C_k$. Fix $i \in [k]$. Independence in $M_{\mathrm{out}}$ forces every $v \in V_H \setminus \{t^*\}$ to have out-degree at most one in $(V_H, C_i)$, and independence in $M_{\mathrm{in}}$ forces every $v \in V_H \setminus \{s^*\}$ to have in-degree at most one. Hence $(V_H, C_i)$ is a disjoint union of directed paths and cycles, and any nontrivial path must start at $s^*$ and end at $t^*$. Since $|S| = k(N-1)$ and each $|C_i| \le N - 1$, equality holds throughout, so each $C_i$ contains at most one $(s^*, t^*)$-path together with vertex-disjoint cycles and self-loops. Because $s^*$ has out-degree $\lambda$ and $k - \lambda$ self-loops (similarly for $t^*$), exactly $\lambda$ of these $C_i$'s contain an $(s^*, t^*)$-path. The disjointness of the $C_i$'s then yields $\lambda$ edge-disjoint $(s^*, t^*)$-paths in $H$.
\end{proof}

We now return to the sensitivity setting. Let $G = (V, E, \cost)$ be a directed graph with $n$ vertices, $m$ edges, and integer edge costs in $[-W, W]$. Let $k \in [1,n]$ be a fixed integer. For any query $(s, t, U)$ with $s, t \in V$ and $U$ a set of $f$ edge insertions or deletions, our goal is to determine $\lambda^* := \min(k, \lambda_{G + U}(s, t))$ and the minimum total cost of an $(s, t)$-flow of value $\lambda^*$.

\paragraph{Preprocessing}
We construct an auxiliary graph $G^*$ on $N = n + 2$ vertices by introducing a super-source $s^*$ and a super-sink $t^*$, and augmenting the edge set with $k$ zero-cost self-loops at each vertex $v \in V$, yielding $M = m + kn$ edges in total. We then invoke the preprocessing of \Cref{thm:sensitivity_packing} on the partition matroids $M_{\mathrm{out}}, M_{\mathrm{in}}$ associated with $G^*$, instantiating both the decision oracle (for unweighted queries) and the weighted oracle (for min-cost queries).

As linear representations $A_{\mathrm{out}}, A_{\mathrm{in}} \in \{0, 1\}^{(N - 1) \times M}$ are column-sparse, the decision oracle requires $O((kn)^\omega )$ preprocessing time and $O((kn)^2 \log n)$ space, while the weighted oracle requires $\widetilde{O}(W(kn)^3)$ preprocessing time and $O(W(kn)^3 \log n)$ space.

\paragraph{Query Reduction}
For any $s, t \in V$ and $\lambda \le k$, let $E_{s, t, \lambda}$ consist of:
(i) $\lambda$ zero-cost parallel edges from $s^*$ to $s$, (ii) $\lambda$
zero-cost parallel edges from $t$ to $t^*$, and (iii) $k - \lambda$ zero-cost
self-loops at each of $s^*$ and $t^*$. By
\Cref{lemma:path_packing_static}, $G + U$ admits $\lambda$ edge-disjoint
$(s, t)$-paths if and only if the matroids $M_{\mathrm{out}}, M_{\mathrm{in}}$
on the edge set of $G^* + U + E_{s, t, \lambda}$ admit a $k$-packing of
common bases.

The matroid representations of $G^* + U + E_{s, t, \lambda}$ differ from
those of $G^*$ on $O(f)$ edges from $U$ together with the parallel-edge and
self-loops in $E_{s, t, \lambda}$. By
\Cref{rem:parallel-edges-update-packing-laplacian}, the parallel
edges/self-loops in $E_{s, t, \lambda}$ contribute only $O(k^2)$ entries to the additive
update of $\L_{\mathrm{pack}}$. Hence the total update size is $\delta = O(f k^2)$. We thus obtain the following result.

\begin{theorem}
\label{theorem:FT-all-pairs-flow}
Let $G=(V,E,\cost)$ be an $n$-vertex directed graph with integer costs in $[-W,W]$, and $k \in [1,n]$. Then there exist the following sensitivity oracles that, for any query $(s,t,U) \in V^2 \times E^f$, answer queries on $G+U$ with high probability:

\begingroup
\setlength{\aboverulesep}{0pt}
\setlength{\belowrulesep}{0pt}
\begin{center}
\resizebox{0.98\linewidth}{!}{%
\renewcommand{\arraystretch}{1.25}
\begin{tabularx}{\linewidth}{@{}l l l l X@{}}
\rowcolor{yellow!15}
\textsc{Oracle} & \textsc{Prep.} & \textsc{Space} & \textsc{Query} & \textsc{Output} \\
\midrule
Bounded Flow &
$O((kn)^\omega)$ &
$O((kn)^2 \log n)$ &
$O((fk^2)^\omega \log k)$ &
$\lambda^*=\min(k,\lambda_{G+U}(s,t))$ \\
Nearest Min-Cut &
$O((kn)^\omega)$ &
$O((kn)^2 \log n)$ &
$O((fk^2)^\omega n)$ &
$\NMC_{G+U}(s,t)$ (if $\lambda^* < k$) \\
Min-Cost Flow &
$\widetilde{O}(W(kn)^3)$ &
$O(W(kn)^3 \log n)$ &
$\widetilde{O}((fk^2)^\omega kWn)$ &
Min-cost for flow of value $\lambda^*$ \\
\end{tabularx}
}
\end{center}
\endgroup
\end{theorem}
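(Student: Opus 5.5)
The plan is to build all three oracles on the preprocessing already set up: the auxiliary graph $G^*$ on $N=n+2$ vertices, with $k$ zero-cost self-loops at every $v\in V$, and the two partition matroids $M_{\mathrm{out}},M_{\mathrm{in}}$. We instantiate the decision and weighted oracles of \Cref{thm:sensitivity_packing} on $A_{\mathrm{out}},A_{\mathrm{in}}$. Because both representations have one non-zero per column and $M=m+kn$, the stated preprocessing and space bounds follow from \Cref{thm:sensitivity_packing} with $r=N-1=O(n)$. The $O(km)$ term is dominated by $(kn)^2$, and $s^*,t^*$ start with no incident non-loop edges, so the base instance is well defined. The singular base case is covered by \Cref{lemma:BrandS-det-General-Q}.

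\textbf{Bounded flow.} For a query $(s,t,U)$ and a candidate $\lambda\le k$, we form the update $U\cup E_{s,t,\lambda}$. By \Cref{lemma:path_packing_static}, $G+U$ has $\lambda$ edge-disjoint $(s,t)$-paths if and only if a $k$-packing of common bases exists. Any $(s^*,t^*)$-path passes through exactly one edge $(s^*,s)$ and one edge $(t,t^*)$, so these correspond to $(s,t)$-paths. By \Cref{thm:k-fold-matroid_intersection}, this packing exists exactly when the updated Laplacian determinant is non-zero, with high probability.

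By \Cref{rem:parallel-edges-update-packing-laplacian}, the bundles at $s^*$, $s$, $t$ and $t^*$ cost only $O(k^2)$ Laplacian entries, so each test takes $O((fk^2)^\omega)$ time. Existence of $\lambda$ paths is monotone in $\lambda$, so a binary search over $\lambda\in[0,k]$ finds $\lambda^*$ with $O(\log k)$ determinant queries, giving $O((fk^2)^\omega\log k)$. A union bound over the $O(\log k)$ queries keeps the failure probability polynomially small, provided $|\F|=\poly(n)$.

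\textbf{Nearest min-cut.} When $\lambda^*<k$, we have $\lambda^*=\lambda_{G+U}(s,t)$, and we apply \Cref{lemma:nearest-mincut}. For each $w\in V\setminus\{t\}$, we run one decision test on the update $U\cup\{(w,t)\}\cup E_{s,t,\lambda^*+1}$; this is legal since $\lambda^*+1\le k$. The vertex $w$ belongs to $A$ exactly when this test succeeds. Each test has $f+1$ genuine updates, so the $n$ tests cost $O((fk^2)^\omega n)$ in total, and a union bound over the $n$ tests gives high probability.

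\textbf{Min-cost flow.} Having found $\lambda^*$, we use the weighted oracle on $U\cup E_{s,t,\lambda^*}$ and read off the minimum $Y$-degree minus $k(N-1)W$. The new edges, self-loops and parallel bundles all have cost zero. We must check that a minimum-weight common basis corresponds to a minimum-cost set of $\lambda^*$ paths. The cycles that appear in the converse of \Cref{lemma:path_packing_static} are a concern, since negative-cost cycles could lower the weight. I will therefore interpret the output as the minimum cost of a flow of value $\lambda^*$ in which circulations are allowed, which is the standard min-cost flow value; with non-negative costs the cycles can simply be dropped. The query cost is $\widetilde O(W\cdot kn\cdot(fk^2)^\omega)$ by \Cref{thm:sensitivity_packing}.

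The main obstacle I expect is the cost accounting for the parallel bundles in \Cref{rem:parallel-edges-update-packing-laplacian}. We need all $\lambda$ copies of $(s^*,s)$ to touch only the $O(k)$ rows belonging to $s^*$ and $s$ in the union representation. This means the random scalings $C_i$ and $x_e$ of the new copies have to be generated at query time in a way that preserves the sparsity pattern. I would handle this by noting that the combined contribution of the bundle is a sum of rank-one terms supported on the same $O(k)\times O(k)$ block, and then checking that the randomness argument of \Cref{lemma:union_rep} still applies to freshly sampled multipliers.
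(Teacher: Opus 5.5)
Your outline follows the paper's proof step for step. You use the packing oracle of \Cref{thm:sensitivity_packing} on $M_{\mathrm{out}},M_{\mathrm{in}}$ over $G^*$, price the bundle $E_{s,t,\lambda}$ by \Cref{rem:parallel-edges-update-packing-laplacian}, run a binary search, add one test per $w$ via \Cref{lemma:nearest-mincut}, and call the weighted oracle at $\lambda^*$. However, the step you import from \Cref{lemma:path_packing_static} fails for every $\lambda<k$, and the paper's proof has the same defect.

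With the stated representations, a self-loop at $s^*$ has a zero column in $A_{\mathrm{in}}$, since row $s^*$ is deleted. Likewise a self-loop at $t^*$ has a zero column in $A_{\mathrm{out}}$. These elements are therefore loops of $M_{\mathrm{in}}$, resp.\ $M_{\mathrm{out}}$, and lie in no basis. Algebraically, the $k$ rows of $\L_{\mathrm{pack}}$ indexed by the copies of $s^*$ receive contributions only from the $\lambda$ copies of $(s^*,s)$, because the self-loops at $s^*$ have vanishing $U_{\mathrm{in}}$-columns. So these rows have rank at most $\lambda$, and $\det\L_{\mathrm{pack}}\equiv 0$ whenever $\lambda<k$, including $\lambda=0$.

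Consequently your binary search only decides whether $\lambda_{G+U}(s,t)\ge k$. The nearest-min-cut and min-cost queries, which operate at $\lambda^*<k$, return nothing.

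The repair is to replace the $k-\lambda$ self-loops at $s^*$ and at $t^*$ by $k-\lambda$ zero-cost parallel $(s^*,t^*)$ edges. Then every common basis uses exactly one out-edge of $s^*$. The $k-\lambda$ dummy bases are one $(s^*,t^*)$ edge plus one self-loop at each $v\in V$. The other $\lambda$ bases each carry an $(s^*,t^*)$-path through $(s^*,s)$ and $(t,t^*)$. This is just a third parallel bundle supported on a single $k\times k$ block, so $\delta=O(fk^2)$ and all stated bounds survive.

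Two smaller points. First, your fallback reading of the min-cost output is also not what the packing computes. Each common basis has exactly one out-edge per vertex, so the packing optimizes only over $0/1$ flows of value $\lambda^*$ (circulations included) in which every vertex carries at most $k$ units. That can differ from unconstrained min-cost flow, e.g.\ when $k+1$ edge-disjoint negative cycles share a vertex. The clean statement assumes $G+U$ has no negative cycles, as the paper tacitly does. Then any cycle inside a common basis can be swapped for self-loops, since each non-loop out-edge at $v$ in $S$ leaves one of $v$'s $k$ self-loops unused. The optimum is then the min cost of $\lambda^*$ edge-disjoint paths.

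Second, the obstacle you anticipate is not one. The support of column $e$ in $U_{\mathrm{out}}$ (resp.\ $U_{\mathrm{in}}$) is the $k$ copies of the tail (resp.\ head) row, fixed by $A_{\mathrm{out}},A_{\mathrm{in}}$. The random scalings only change values, and fresh independent multipliers per copy are exactly the setting of \Cref{rem:parallel-elements-k-fold-union}.
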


\begin{proof}
We instantiate the matroid intersection packing oracles of
\Cref{thm:sensitivity_packing} on the partition matroids
$M_{\mathrm{out}}, M_{\mathrm{in}}$ associated with the auxiliary graph
$G^*$. Recall from the query reduction above that finding $\lambda$
edge-disjoint $(s, t)$-paths in $G + U$ reduces to deciding whether
$M_{\mathrm{out}}, M_{\mathrm{in}}$ on the edge set
$G^* + U + E_{s, t, \lambda}$ admit a $k$-packing of common bases. By
\Cref{rem:parallel-edges-update-packing-laplacian}, the parallel-edge and
self-loops in $E_{s, t, \lambda}$ together with the $f$ updates in
$U$ yield an additive update of size $\delta = O(f k^2)$ to the packing
Laplacian.

\paragraph{Bounded Flow Oracle.}
We invoke the decision oracle of \Cref{thm:sensitivity_packing}, which has preprocessing time $O((kn)^\omega)$ and space
$O((kn)^2 \log n)$. For a query $(s, t, U)$, we binary search over
$\lambda \in [0, k]$ and, for each $\lambda$, test whether the updated
Laplacian on $G^* + U + E_{s, t, \lambda}$ is non-singular. Each test
takes $O(\delta^\omega) = O((fk^2)^\omega)$ time, and the $O(\log k)$
tests yield $\lambda^*$ in $O((fk^2)^\omega \log k)$ time.

\paragraph{Nearest Min-Cut Oracle.}
The preprocessing is identical to the bounded flow oracle. Given a query
$(s, t, U)$, we first compute $\lambda := \lambda_{G + U}(s, t)$ in
$O((fk^2)^\omega \log k)$ time. For each $w \in V \setminus \{s, t\}$, we
invoke the bounded flow oracle on $G + U + (w, t)$ to test whether
$\lambda_{G + U + (w, t)}(s, t) = \lambda + 1$. By
\Cref{lemma:nearest-mincut}, the set
\[
\bigl\{w \in V : \lambda_{G + U + (w, t)}(s, t) = \lambda + 1\bigr\}
\]
is the source side of the nearest min-cut, and
its complement is the sink side. Each of the $n$ auxiliary
queries adds a single edge to $U$, preserving the update size
$\delta = O(f k^2)$, and takes $O((fk^2)^\omega)$ time. The total
query time is $O((fk^2)^\omega \cdot n)$.

\paragraph{Min-Cost Flow Oracle.}
We invoke the weighted oracle of \Cref{thm:sensitivity_packing}, which has preprocessing time $\widetilde{O}(W(kn)^3)$ and space $O(W(kn)^3 \log n)$. For a query $(s, t, U)$, we first compute $\lambda^* := \min(k, \lambda_{G + U}(s, t))$ using the bounded flow oracle in $O((fk^2)^\omega \log k)$ time. By \Cref{lemma:path_packing_static}, the minimum cost of an $(s, t)$-flow of value $\lambda^*$ in $G + U$ equals the minimum weight of a $k$-packing-set in $M_{\mathrm{out}}, M_{\mathrm{in}}$ on $G^* + U + E_{s, t, \lambda^*}$, which the weighted oracle returns in $\widetilde{O}(W \cdot kn \cdot \delta^\omega) = \widetilde{O}((fk^2)^\omega \cdot kWn)$ time.
\end{proof}

\subsection{Sensitivity Oracles for $(s,t)$-Max-Flow and $(s,t)$-Min-Cut}
\label{subsec:gen-st-flow}

We now turn to the special case of the general max-flow problem for a fixed source-sink pair $(s, t)$. The oracle of \Cref{theorem:FT-all-pairs-flow} with $k = n$ already solves this problem, but its query time is quite high. We show that much better bounds are achievable for the case of a fixed source-sink pair through a flow-rerouting reduction that lets us use \Cref{theorem:FT-all-pairs-flow} as a black box on an auxiliary graph with capacity bound $k = O(f)$.

Let $G = (V, E)$ be a directed graph with unit edge capacities and costs in $[-W, W]$, with fixed source $s$ and sink $t$. We precompute a min-cost $(s, t)$-max-flow $h$ once, during preprocessing. For any query set $U$ of updates, we aim to answer the following: how can $h$ be \emph{repaired} into a valid flow on $G + U$, and by how much does its cost change.

To achieve this, we augment $G$ to a graph $\G$ by adding $2f$ zero-cost parallel $(s, t)$-edges, and saturating the first $f$ of them in the preprocessed flow $h$ (thus, value of flow $h$ (denoted by $\val(h)$) in $\G$ is equal to $f+\lambda_{G}(s,t)$). These parallel edges act as a buffer: when updates arrive, they absorb or release flow as needed, and the number of saturated parallel $(s,t)$ edges in the repaired flow tells us the new max-flow
value.

\begin{lemma}
\label{lemma:reroute-1}
For any update set $U$ of size at most $f$, if every $(s,t)$-flow of value $\val(h)$ in $\G + U$ saturates a minimum of $i$ parallel $(s,t)$ edges, then
$\lambda_{G + U}(s, t) = \lambda_G(s, t) + f - i$.
\end{lemma}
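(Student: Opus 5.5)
The plan is to decompose any maximum flow of $\G+U$ into its buffer part and its non-buffer part. Write $\G+U = (G+U) + P$, where $P$ is the bundle of $2f$ parallel $(s,t)$ buffer edges. Each buffer edge is a single-edge $(s,t)$-path that meets no other edge. Hence any $(s,t)$-flow $g$ in $\G+U$ splits uniquely into two pieces: the flow on $P$, consisting of $j$ saturated buffer edges, and the restriction $g|_{G+U}$. The restriction is a valid $(s,t)$-flow in $G+U$ of value $\val(g)-j$. Conversely, any flow in $G+U$ of value $v$ together with any $j\le 2f$ buffer edges gives a flow in $\G+U$ of value $v+j$.

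First I will check that a flow of value $\val(h)=\lambda_G(s,t)+f$ exists in $\G+U$, so that the hypothesis is meaningful. Since $|U|\le f$, the max-flow changes by at most $f$: each deletion lowers the min-cut by at most one and each insertion raises it by at most one. Write $\lambda' := \lambda_{G+U}(s,t)$. Then $\lambda'\ge \lambda_G(s,t)-f$, so
\[
\lambda' + 2f \;\ge\; \lambda_G(s,t)+f .
\]
The max-flow of $\G+U$ is $\lambda'+2f$, so it is at least $\val(h)$. By integrality, a flow of value exactly $\val(h)$ exists, obtained by taking a sub-flow.

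The main step is to identify $i$. Let $g$ be any flow in $\G+U$ of value $\val(h)$, and let $j$ be the number of buffer edges it saturates. Then $g|_{G+U}$ has value $\val(h)-j \le \lambda'$, so $j\ge \lambda_G(s,t)+f-\lambda'$. This lower bound is attained: since $\lambda'\le \lambda_G(s,t)+f$, the quantity $j^*:=\lambda_G(s,t)+f-\lambda'$ lies in $[0,2f]$. Taking a maximum flow of $G+U$ together with $j^*$ buffer edges gives a flow of value exactly $\val(h)$. So the minimum number of saturated buffer edges is $i=j^*$, and rearranging gives $\lambda_{G+U}(s,t)=\lambda_G(s,t)+f-i$.

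The only delicate point is the range check $0\le j^*\le 2f$. This is precisely why the buffer has $2f$ edges, with $f$ of them saturated initially. It rests on the standard fact that a single edge insertion or deletion changes the unit-capacity $(s,t)$-max-flow by at most one, and I will state that fact explicitly via the min-cut characterization.
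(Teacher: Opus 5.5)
Your proof is correct and follows the same idea as the paper's. The $2f$ buffer edges are independent single-edge $(s,t)$-paths, so any flow of value $\val(h)$ in $\G+U$ must route exactly $\val(h)-\lambda_{G+U}(s,t)$ units over them. The paper argues this informally, by cases on the sign of $\beta=\lambda_{G+U}(s,t)-\lambda_G(s,t)$, ``moving'' flow onto or off the buffer. Your decomposition instead gives one unified argument, and it makes explicit three things the paper leaves implicit: the lower bound on the number of saturated buffer edges, the existence of a flow of value $\val(h)$ in $\G+U$, and the range check $0\le j^*\le 2f$.
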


\begin{proof}
Let $\beta := \lambda_{G + U}(s, t) - \lambda_G(s, t)$. If $\beta \ge 0$, $\beta$ units of flow can be moved from the parallel edges back into $G + U$, so the new flow saturates $i = f - \beta$ parallel edges. If $\beta < 0$, the updates cancel $|\beta|$ units of flow in $G$, which must be redirected onto the parallel edges, giving $i = f + |\beta|$. In
both cases, $\beta = f - i$.
\end{proof}

\paragraph{Reducing to an All-Pairs Query}
It remains to test, given a candidate value of $i$, whether $h$ can in fact be repaired in $\G + U$ using exactly $i$
parallel $(s,t)$ edges. We construct an auxiliary multigraph $\G^*_h$ from the residual graph $\G_h$ by adjoining a dummy source $s^*$ and a dummy sink $t^*$, both initially isolated.

Given a query $U = (I, F)$, let $F_h := \{e \in F : h(e) = 1\}$ denote the set of deleted edges that carried flow, and for each such $e = (x, y)$ let $e^{\rev} = (y, x)$ denote its residual counterpart in $\G_h$. We define the auxiliary update set $\U^* = (I^*, F^*)$ on $\G^*_h$ as
\[
\U^*
\;:=\;
\Bigl(\,
\underbrace{I \;\cup\; \{(s^*, x), (y, t^*) : (x, y) \in F_h\}}_{I^*}
\;,\;
\underbrace{(F \setminus F_h) \;\cup\; \{e^{\rev} : e \in F_h\}}_{F^*}
\,\Bigr).
\]
That is, $I^*$ inserts the edges of $I$ together with zero-cost edges incident to the dummy vertices, while $F^*$ deletes the edges of $F$ that did not carry flow and the residual counterparts of those that did. Let $\G^* := \G^*_h + \U^*$; note that $|\U^*| = O(|U|)$.

\begin{lemma}
\label{lemma:reroute-2}
Let $U = (I, F)$ be a set of updates, and write $d := |F_h|$. The flow $h$ in $\G$ can be repaired into a valid flow $h'$ in $\G + U$ if and only if the $(s^*, t^*)$-max-flow in $\G^*$ equals $d$. Moreover, the minimum cost of any such repaired flow satisfies
\[
\cost(h') = \cost_{\G}(h) + \cost_{\G^*}(s^*, t^*)
- \sum_{e \in F_{h}} \cost(e).
\]
\end{lemma}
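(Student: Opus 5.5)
The plan is the standard residual-graph correspondence, applied to the pseudo-flow obtained by deleting the flow-carrying edges. Define $h_0$ on $\G+U$ by $h_0(e)=h(e)$ for every surviving edge of $\G$, and $h_0(e)=0$ on the inserted edges $I$. Thus $h_0$ is $h$ with the $d$ units on $F_h$ removed. Flow conservation for $h_0$ (with $s$ and $t$ keeping their prescribed net supply $\val(h)$) then fails in a controlled way:
\begin{itemize}
\item each $(x,y)\in F_h$ leaves one unit of excess at $x$ and one unit of deficit at $y$, with multiplicities counted;
\item if $x=s$ or $y=t$, this is exactly the loss of one unit of $s$-outflow or $t$-inflow, so the bookkeeping is uniform.
\end{itemize}
Call $h'$ a \emph{repair} if it is a valid flow of value $\val(h)$ in $\G+U$. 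Then $h'$ is a repair iff $g:=h'-h_0$ satisfies conservation at every vertex except that it has net out-flow $+1$ at each excess tail $x$ and net in-flow $+1$ at each deficit head $y$, again counting multiplicities.

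Next I identify where $g$ lives. Since capacities are unit and $h_0\in\{0,1\}$, $g$ decomposes as a $0/1$ flow on the residual graph $(\G+U)_{h_0}$. Its arcs are:
\begin{itemize}
\item forward arcs for edges with $h_0=0$: the unused edges of $\G$ not in $F\setminus F_h$, plus the edges of $I$;
\item reversed arcs $e^{\rev}$ for edges with $h_0=1$: the flow edges of $\G$ not in $F_h$.
\end{itemize}
The edges of $F_h$ are gone entirely, so neither they nor their reverses appear. Comparing with the definition of $\U^*$, this residual graph is precisely $\G_h+\U^*$ with the dummy vertices removed. Here $F^*$ deletes $F\setminus F_h$ and $\{e^{\rev}:e\in F_h\}$ from $\G_h$, and $I^*\supseteq I$ adds the insertions.

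Attaching the arcs $(s^*,x)$ and $(y,t^*)$ for $(x,y)\in F_h$ converts the supply/demand pattern of $g$ into an $(s^*,t^*)$-flow of value exactly $d$ in $\G^*$. Conversely, any integral $(s^*,t^*)$-flow of value $d$ saturates all $d$ arcs at $s^*$ and $t^*$, so restricting it to $\G+U$ yields such a $g$, and $h_0+g$ is a repair. Since $s^*$ has out-degree $d$, the max-flow is always at most $d$. This gives the equivalence: a repair exists iff the $(s^*,t^*)$-max-flow equals $d$.

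For the cost, the map $g\mapsto h_0+g$ is a bijection between $(s^*,t^*)$-flows of value $d$ in $\G^*$ and repairs $h'$. Because reversed arcs carry cost $-\cost(e)$ and the dummy arcs are zero-cost, we have $\cost(h')=\cost(h_0)+\cost_{\G^*}(g)$ with $\cost(h_0)=\cost_{\G}(h)-\sum_{e\in F_h}\cost(e)$. Minimizing over $g$ gives the stated formula, where $\cost_{\G^*}(s^*,t^*)$ denotes the minimum cost of a flow of value $d$.

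The step I expect to need the most care is verifying that the residual graph of $h_0$ on $\G+U$ coincides arc-for-arc with $\G_h+\U^*$. This includes parallel edges, the $2f$ buffer edges, and deleted edges that carried no flow. I also need to note that the correspondence must range over all integral flows, including ones containing residual cycles, since inserted edges may create negative cycles. The min-cost quantity on $\G^*$, as computed by \Cref{theorem:FT-all-pairs-flow}, already ranges over such flows, so no optimality property of $h$ is actually needed.
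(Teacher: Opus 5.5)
Your proof is correct and uses the same residual-graph correspondence as the paper. The paper routes the $d$ units from excess to deficit vertices in $\G_h+\U^*$ and then reads off the cost from a path--cycle decomposition of a min-cost $(s^*,t^*)$-flow. Your bijection $g\mapsto h_0+g$, with linear cost accounting, instead gives both inequalities of the min-cost identity at once, and you also check explicitly that $(\G+U)_{h_0}$ is exactly $\G_h+\U^*$ with the dummy vertices removed.

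You are right to count multiplicities, with one arc $(s^*,x)$ per edge of $F_h$ rather than one per vertex of $X$. Several deleted flow edges can share an endpoint; for example, the deleted saturated buffer edges in the max-flow oracle all have tail $s$ and head $t$. Read literally as sets, the paper's definitions would cap the $(s^*,t^*)$-flow below $d$ in that case.
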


\begin{proof}
Each edge $(x, y) \in F_h$ leaves $h$ with one unit of excess at $x$ and one of deficit at $y$; all other vertices
remain balanced. Repairing $h$ is therefore equivalent to routing $|F_h|$ edge-disjoint paths in $\G_h + \U^*$ from the excess set $X := \{x : (x, y) \in F_h\}$ to the deficit set $Y := \{y : (x, y) \in F_h\}$. Since $\G^*$ adds exactly one zero-cost edge from $s^*$ into each $x \in X$ and one from each $y \in Y$ to $t^*$, such a routing exists if and only if all $d$ edges leaving $s^*$ in $\G^*$ are saturable, i.e., $\lambda_{\G^*}(s^*, t^*) = d$.

We now derive the cost relation. Any min-cost $(s^*, t^*)$-flow $h^*$ in $\G^*$ is decomposable into $d$ paths $P^*_j = s^* \to x_j \rightsquigarrow y_{\sigma(j)} \to t^*$, for some permutation $\sigma$, and a (possibly empty) collection of negative-cost cycles $C_1, \ldots, C_\ell$ (which can appear due to insertions).  The edges incident to $s^*$ and $t^*$ have zero cost and they cannot lie on any cycle, so $\cost_{\G^*}(h^*)$ equals the total cost of the subpaths $P_j := P^*_j[x_j, y_{\sigma(j)}]$ $(1\leq j\leq d)$ and cycles $C_1, \ldots, C_\ell$ in $\G_h$. That is,
\[
\cost_{\G^*}(s^*, t^*)
\;=\;
\sum_{j=1}^{d} \cost_{\G_h}(P_j) + \sum_{i=1}^{\ell} \cost_{\G_h}(C_i).
\]

The repaired flow $h'$ is obtained from $h$ by canceling flow along edges in $F_h$ and rerouting it through the
$P_j$'s and $C_i$'s. This removes the $|F_h|$ deleted-edge
contributions to $\cost(h)$ and adds the right-hand side above:
\[
\cost(h')
\;=\;
\cost_\G(h)
- \sum_{e \in F_h} \cost(e)
+ \sum_j \cost_{\G_h}(P_j) + \sum_i \cost_{\G_h}(C_i),
\]
which yields the stated cost identity.
\end{proof}

\paragraph{Oracle Construction}
We next provide the oracle construction that applies \Cref{theorem:FT-all-pairs-flow} to $\G^*_h$.

\begin{theorem}
\label{theorem:st-flow}
Let $G=(V,E,\cost)$ be an $n$-vertex directed graph with integer edge costs in $[-W,W]$, and a designated source $s$ and sink $t$. Then there exist the following sensitivity oracles that, for any set $U\in E^f$, answer queries on $G+U$ with high probability:

\begingroup
\setlength{\aboverulesep}{0pt}
\setlength{\belowrulesep}{0pt}
\begin{center}
\resizebox{0.98\linewidth}{!}{%
\renewcommand{\arraystretch}{1.25}
\begin{tabularx}{\linewidth}{@{}l l l l X@{}}
\rowcolor{yellow!15}
\textsc{Oracle} & \textsc{Prep.} & \textsc{Space} & \textsc{Query} & \textsc{Output} \\
\midrule
Max-Flow &
$O((fn)^\omega)$ &
$O((fn)^2 \log n)$ &
$\widetilde{O}(f^{3\omega})$ &
$\lambda_{G+U}(s,t)$ \\
Nearest Min-Cut &
$O((fn)^\omega)$ &
$O((fn)^2 \log n)$ &
$\widetilde{O}(f^{3\omega} n)$ &
$\NMC_{G+U}(s,t)$ \\
Min-Cost Flow &
$\widetilde{O}(W(fn)^3)$ &
$O(W(fn)^3 \log n)$ &
$\widetilde{O}(f^{3\omega+1} Wn)$ &
Min-cost max-flow in $G+U$ \\
\end{tabularx}
}
\end{center}
\endgroup
\end{theorem}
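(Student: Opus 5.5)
We build the oracles of \Cref{theorem:FT-all-pairs-flow} on the auxiliary residual multigraph $\G^*_h$ with flow bound $k := 2f$ (any $k=O(f)$ suffices). During preprocessing we first form $\G$ by adding $2f$ zero-cost parallel $(s,t)$-edges to $G$. We then compute a min-cost $(s,t)$-max-flow of $G$ and extend it by saturating $f$ of the buffer edges, giving $h$. Next we build the residual graph $\G_h$ and adjoin the isolated dummies $s^*,t^*$. Since $\G^*_h$ has $n+2$ vertices and the flow bound is $O(f)$, \Cref{theorem:FT-all-pairs-flow} yields preprocessing $O((fn)^\omega)$ and space $O((fn)^2\log n)$ for the decision oracle, and $\widetilde O(W(fn)^3)$ time and space for the weighted one. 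Computing $h$ itself takes polynomial time, which we assume is dominated or absorbed. In this application the dummy terminals are fixed, so the query pair is always $(s^*,t^*)$.

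To answer a query $U=(I,F)$, we form $\U^*$ as defined above. It has $O(f)$ edges, and the edges at $s^*$ and $t^*$ are ordinary updates, so the update size in the packing Laplacian is $O(f\cdot k^2)=O(f^3)$. We then need to encode the requirement that the repaired flow saturates exactly $i$ buffer edges.

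\textbf{Saturation step (expected main obstacle).} We handle this by changing how many buffer edges are available. In $\G_h$, saturated buffer edges appear as reversed edges $(t,s)$ and unsaturated ones as forward edges $(s,t)$. Saturating fewer than $f$ buffer edges means pushing flow back along reversed copies. Saturating more than $f$ means using forward copies. Deleting buffer copies (parallel elements, so cheap by \Cref{rem:parallel-edges-update-packing-laplacian}) restricts how many can be used, and $i$ is then found by binary search over $[0,2f]$ via \Cref{lemma:reroute-1}. Repairability with a given buffer restriction is decided by testing whether the $(s^*,t^*)$ bounded-flow value equals $d=|F_h|\le f$ (\Cref{lemma:reroute-2}).

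One subtlety needs care. Insertions can increase the flow without any deletion, so that $d=0$, yet buffer flow must still be shifted back into $G+U$. To capture this, I would also connect $s^*\to t$ and $s\to t^*$ with $f$ parallel edges each. Routing such a unit through a reversed buffer edge then corresponds to moving one unit off the buffer. The alternative is to phrase the test as a circulation through the dummies. Checking that this correspondence is exact, including the sign conventions for $\beta$ in \Cref{lemma:reroute-1}, is the delicate part. Each test costs $O((f\cdot f^2)^\omega)=O(f^{3\omega})$, and $O(\log f)$ tests give $\widetilde O(f^{3\omega})$ per query.

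For the nearest min-cut, we apply \Cref{lemma:nearest-mincut} on $G+U$: for each $w$ we add the edge $(w,t)$ to $U$ and rerun the max-flow query. This costs $\widetilde O(f^{3\omega}n)$ in total. For min-cost max-flow, once the value is known we fix the buffer configuration and invoke the weighted oracle of \Cref{theorem:FT-all-pairs-flow} on $(s^*,t^*)$. We then recover the cost through the identity in \Cref{lemma:reroute-2}, adjusting for the zero-cost buffer and dummy edges. This costs $\widetilde O((f\cdot f^2)^\omega\cdot fWn)=\widetilde O(f^{3\omega+1}Wn)$. Two further points need verification. Negative cycles created by insertions are already accounted for in \Cref{lemma:reroute-2}. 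Minimality of the repaired flow among all max-flows of $G+U$ follows because $h$ is min-cost, so every max-flow of $G+U$ differs from $h$ by residual paths and cycles captured in $\G^*$.
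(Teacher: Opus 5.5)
Your architecture matches the paper's. You build the all-pairs oracle on $\G^*_h$ with $k=O(f)$, binary search over $i\in[0,2f]$ via \Cref{lemma:reroute-1}, test repairability via \Cref{lemma:reroute-2}, and finish with \Cref{lemma:nearest-mincut} and the cost identity. However, the step you flag as the crux is not resolved, and the gadget you propose for it fails. In $\G_h$ each of the $f$ saturated buffer edges is present as a zero-cost reversed edge $(t,s)$. With your extra edges, the path $s^*\to t\to s\to t^*$ therefore exists $f$ times without touching $G+U$. Pushing a unit along it un-saturates a buffer edge and \emph{lowers} the flow value, rather than rerouting that unit through $G+U$. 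So the $(s^*,t^*)$-value no longer certifies repairability, and you never say what target replaces $d$. The orientation is also backwards. Un-saturating a buffer edge $(s,t)$ leaves spare supply at $s$ and unmet demand at $t$. It must therefore be fed as $s^*\to s$ and drained as $t\to t^*$, with its residual copy $(t,s)$ removed.

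The missing observation is that no gadget is needed. For candidate $i$, build $\U^*$ from the enlarged update $U'=(I,\,F\cup\{e_{i+1},\dots,e_{2f}\})$ rather than from $U$. For $i<f$, the deleted buffer edges $e_{i+1},\dots,e_f$ carry flow in $h$, so they belong to $F_h$. By the definition of $\U^*$, their reversals $(t,s)$ are deleted and $f-i$ parallel copies each of $(s^*,s)$ and $(t,t^*)$ are inserted. Consequences:
\begin{itemize}
\item $d=|F_h|$ counts these units, so $d\le 2f$ rather than $d\le f$ as you wrote; this is why $k=2f$ is needed.
\item The pure-insertion case has $d=f-i>0$, so your concern about $d=0$ disappears.
\item An $(s^*,t^*)$-flow of value $d$ is exactly a residual routing in $(\G+U')_h$ from the excess multiset (now containing $s$ with multiplicity $f-i$) to the deficit multiset (containing $t$ with multiplicity $f-i$), so \Cref{lemma:reroute-2} applies verbatim.
\item The parallel copies contribute only $O(k^2)$ Laplacian entries by \Cref{rem:parallel-edges-update-packing-laplacian}, so each test stays $O(f^{3\omega})$.
\item Feasibility is monotone in $i$, since keeping more buffer edges only helps, which justifies the binary search.
\end{itemize}
With this replacement the rest of your proposal goes through as in the paper. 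In particular, at the minimal $i^*$ all $i^*$ kept buffer edges are forced and exactly $\lambda_{G+U}(s,t)$ units pass through $G+U$. Hence the zero-cost buffer cannot distort the min-cost answer.
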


\begin{proof}
During preprocessing, we compute a min-cost $(s, t)$-max-flow $h$ in $G$. We augment $G$ with $2f$ zero-cost parallel $(s, t)$-edges, and saturate the first $f$ of them in the preprocessed flow $h$. Computing $h$ takes $m^{1+o(1)} \log W$ time using the algorithm of~\cite{ChenKLPGS23}. We then construct $\G^*_h$ and instantiate the all-pairs oracles of \Cref{theorem:FT-all-pairs-flow} on $\G^*_h$ with capacity bound $k = O(f)$, from which the preprocessing
time and space follow directly.

For any query $U$, the auxiliary update set $\U^*$ satisfies $|\U^*| = O(f)$, since each edge in $F$ contributes $O(1)$ edges to $\U^*$. Plugging $|\U^*| = O(f)$ and $k = O(f)$ into the query bounds of \Cref{theorem:FT-all-pairs-flow} gives  $(|\U^*| \cdot k^2)^\omega = f^{3\omega}$ cost.

\paragraph{Max-Flow Oracle.}
By \Cref{lemma:reroute-1}, $\lambda_{G + U}(s, t) = \lambda_G(s, t) + f - i$, where $i$ is the smallest number of parallel edges required to repair $h$. We binary search over $i \in [0, 2f]$; for each candidate, we set $U' = (I, F \cup \{e_{i+1}, \ldots, e_{2f}\})$ and apply \Cref{lemma:reroute-2} to translate the repair test into an $(s^*, t^*)$-max-flow query on $\G^*_h + \U^*$. Thus a query costs $\widetilde{O}(f^{3\omega})$ time.

\paragraph{Nearest Min-Cut Oracle.}
We first compute $\lambda := \lambda_{G + U}(s, t)$ via the max-flow oracle. For each $w \in V \setminus \{s, t\}$, we invoke the max-flow oracle on $G + U + (w, t)$ to test whether $\lambda_{G + U + (w, t)}(s, t) = \lambda + 1$. By
\Cref{lemma:nearest-mincut}, the set
\[
\bigl\{w \in V : \lambda_{G + U + (w, t)}(s, t) = \lambda + 1\bigr\}
\]
is the source side of the nearest min-cut, and its complement is the sink side. Each of the $n$ auxiliary queries preserves the update size $O(f)$ and costs $\widetilde{O}(f^{3\omega})$, resulting in total query time
$\widetilde{O}(f^{3\omega} \cdot n)$.

\paragraph{Min-Cost Flow Oracle.}
By \Cref{lemma:reroute-2}, the minimum cost of an $(s, t)$-flow of value $\lambda_{G + U}(s, t)$ in $G + U$ is determined by $\cost_{\G^*}(s^*, t^*)$, the minimum cost of an $(s^*, t^*)$-flow of value $|F_{h}|$ in $\G^*$. We obtain this using the min-cost flow oracle of \Cref{theorem:FT-all-pairs-flow} in $\widetilde{O}((|\U^*| \cdot k^2)^\omega \cdot kWn) = \widetilde{O}(f^{3\omega + 1} \cdot Wn)$ time.
\end{proof}



\subsection{Sensitivity Oracles for $s$-Min-Cut and Global Min-Cut}
\label{sec:s-mincut}

We next present sensitivity oracles for two related cut problems. For a fixed source $s \in V$, the \emph{$s$-min-cut} value in $G$ is $\displaystyle \mu_G(s) := \min_{v \in V \setminus \{s\}} \lambda_G(s, v)$. The \emph{global min-cut} value is $\displaystyle \mu_G := \min_{s \in V} \mu_G(s)$. In this section, we show that the $(s, t)$-max-flow sensitivity oracles of \Cref{theorem:st-flow} can be lifted, with only a small overhead in preprocessing, into sensitivity oracles for both problems.

Our reduction relies on first computing an optimal augmentation of the graph that maximizes the $s$-min-cut.

\begin{theorem}[\cite{Frank90}, Theorem 3.1]
\label{theorem:s-connectivity-augmentation}
There exists a polynomial time algorithm to solve the following problem: Given an unweighted directed graph $G = (V, E)$ with a fixed source $s$ and a number $f$, output a set $I^* \subseteq V \times V$ of $f$ edges such that adding the edges of $I^*$ to $G$ results in the maximum increase in the $s$-min-cut value of $G$, i.e., $\displaystyle I^* = \arg \max_{\substack{I \subseteq V \times V \\ |I| = f}} \mu_{G+I}(s)$.\footnote{Although the theorem gives a characterization for the existence of such augmentation, the polynomial time algorithm can be inferred straightforwardly from the proof. The algorithm finds the minimum number of edges needed to increase the cut by $\gamma$; we can binary search for finding $I^*$.}
\end{theorem}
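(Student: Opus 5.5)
The plan is to first reduce the augmentation problem to choosing only edges leaving $s$, and then to solve the resulting covering problem greedily with max-flow computations as the feasibility test. \textbf{Reduction.} For $X \subseteq V\setminus\{s\}$, $X \neq \emptyset$, write $d^-_G(X)$ for the number of edges entering $X$, so that $\mu_G(s) = \min_X d^-_G(X)$. An inserted edge $(u,v)$ increases $d^-(X)$ only if $v \in X$ and $u \notin X$. The edge $(s,v)$ increases $d^-(X)$ for \emph{every} such $X$ with $v\in X$. Hence replacing each inserted edge $(u,v)$ by $(s,v)$ never decreases $\mu_{G+I}(s)$, and some optimal $I^*$ consists only of edges $(s,v)$, possibly parallel. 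If a genuinely simple set is required, I would simulate parallel edges by subdividing them; here I will treat $I^*$ as a multiset, which is consistent with the multigraph usage elsewhere in the paper.

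So the task becomes the following. Find $z\in\mathbb{Z}_{\ge 0}^{V\setminus\{s\}}$ with $z(V\setminus\{s\}) = f$ maximizing $\gamma$ subject to
\[
d^-_G(X) + z(X) \;\ge\; \gamma \qquad \text{for all nonempty } X\subseteq V\setminus\{s\}.
\]
Since $\gamma \le \mu_G(s)+f \le n+f$, I would binary search on $\gamma$. For each fixed $\gamma$ I would compute $\tau(\gamma)$, the minimum of $z(V\setminus\{s\})$ over feasible $z$, and accept $\gamma$ iff $\tau(\gamma)\le f$. Any leftover budget is padded with arbitrary extra edges $(s,v)$, which cannot hurt.

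\textbf{Computing $\tau(\gamma)$.} The deficiency function $p(X) := \gamma - d^-_G(X)$ is intersecting supermodular, because $d^-$ is submodular. The set of feasible $z$ is therefore a (contra-)polymatroid-like polyhedron on which greedy is optimal. This is Frank's theorem: $\tau(\gamma)$ equals the maximum of $\sum_i p(X_i)$ over subpartitions $\{X_i\}$ of $V\setminus\{s\}$, and the optimum is integral. The algorithm will do the following.
\begin{enumerate}
\item Start from $z_v := \gamma$ for all $v$, which is trivially feasible.
\item Scan the vertices in any order. For each $v$, lower $z_v$ as far as feasibility allows.
\item Test feasibility of a vector $z$ with a single $(s,\cdot)$ max-flow computation per vertex. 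Add $z_v$ parallel edges $(s,v)$ and check $\lambda(s,w)\ge\gamma$ for every $w\neq s$. This is exactly the condition that $\min_X (d^-(X)+z(X)) \ge \gamma$.
\item Find the largest feasible decrement of $z_v$ by binary search, so the total cost is polynomial.
\end{enumerate}

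The main obstacle is showing that this greedy decreasing procedure attains the minimum $z(V\setminus\{s\})$, rather than just some minimal feasible vector. I would prove it with the standard uncrossing argument. After the scan, every $v$ with $z_v>0$ lies in a tight set $X$, meaning $d^-(X)+z(X)=\gamma$. By intersecting supermodularity of $p$, two intersecting tight sets have a tight union and a tight intersection. So the maximal tight sets form a subpartition $\{X_i\}$ covering the support of $z$. Hence $z(V\setminus\{s\}) = \sum_i z(X_i) = \sum_i p(X_i)$. Combined with the weak-duality inequality $z(V\setminus\{s\})\ge \sum_i p(X_i)$, which holds for every feasible $z$ and every subpartition, this certifies optimality. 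Binary searching on $\gamma$ then yields $I^*$ in polynomial time.
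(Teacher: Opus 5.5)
Your proposal is correct and follows the route the paper intends. The paper only cites Frank's rooted augmentation theorem and notes in a footnote that its proof gives an algorithm for the minimum number of new edges needed to reach $s$-min-cut value $\gamma$, followed by a binary search on $\gamma$; your reduction to root edges $(s,v)$, the greedy decrease with max-flow feasibility tests, and the uncrossing argument matched against the subpartition lower bound are exactly that proof made algorithmic.

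One aside should be dropped: subdividing parallel edges is not a valid way to obtain a simple $I^*$. It creates vertices outside $V$, violating $I^*\subseteq V\times V$, and each new vertex has in-degree $1$, which collapses $\mu$ to $1$. The multiset reading you adopt is the right one, and it suffices for how $I^*$ is used in \Cref{lemma:insertion-deletion-cut}.
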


Let $I^*$ be the optimal set of $f$ edges output by \Cref{theorem:s-connectivity-augmentation}. We compute this augmentation in polynomial time $T_{\text{aug}}$ during preprocessing, and we define $\lambda^* := \mu_{G+I^*}(s)$ as the maximum possible $s$-min-cut achievable by adding $f$ edges to $G$.

For any general query update set $U$ consisting of at most $f$ insertions and deletions, we cannot guarantee that the insertions in $U$ are a subset of $I^*$. However, we can use $I^*$ to bound the candidate target vertices for the new $s$-min-cut. Let $Y$ be the set of head vertices of the edges in $I^* \cup U$. Since $|I^*| = f$ and $|U| \le f$, we have $|Y| \le 2f$.

\begin{lemma}
\label{lemma:insertion-deletion-cut}
For any update set $U$ of size at most $f$, let $Y$ be the set of head vertices of edges in $I^* \cup U$. Then,
$$
\mu_{G + U}(s)
\;=\;
\min\Bigl(\lambda^*,\; \min_{\substack{y \in Y \\ y \neq s}} \lambda_{G + U}(s, y)\Bigr).
$$
\end{lemma}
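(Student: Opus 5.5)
The plan is to prove the two inequalities $\mu_{G+U}(s) \le \text{RHS}$ and $\mu_{G+U}(s) \ge \text{RHS}$ separately. Throughout, $U=(I,F)$ with $|I|+|F|\le f$, and $Y$ contains the heads of all edges in $I^*$, $I$ and $F$.

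For the upper bound there are two terms to handle. First, $\mu_{G+U}(s) \le \lambda_{G+U}(s,y)$ holds for every $y \neq s$ directly from the definition of $\mu$. Second, we need $\mu_{G+U}(s)\le \lambda^*$, and here we use monotonicity. The graph $G+U$ is a subgraph of $G+I$, since deletions only remove edges, so $\mu_{G+U}(s)\le \mu_{G+I}(s)$. Next, pad $I$ with arbitrary extra edges to a set $I'\supseteq I$ with $|I'|=f$; adding edges never decreases cut values, so $\mu_{G+I}(s)\le\mu_{G+I'}(s)$. By the optimality of $I^*$ in \Cref{theorem:s-connectivity-augmentation}, $\mu_{G+I'}(s)\le \mu_{G+I^*}(s)=\lambda^*$. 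The only subtlety is whether insertions may duplicate existing pairs; since $I^*$ ranges over $V\times V$ in a multigraph sense, this is harmless.

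For the lower bound, fix a minimum $s$-cut $(A,B)$ of $G+U$, with $s\in A$, $B\neq\emptyset$, and capacity $d^+_{G+U}(A)=\mu_{G+U}(s)$. We split into two cases according to whether some edge of $I^*\cup U$ has its head in $B$.

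\emph{Case 1: some edge of $I^*\cup U$ has head $y\in B$.} Then $y\neq s$, and $(A,B)$ is an $(s,y)$-cut in $G+U$. Hence $\lambda_{G+U}(s,y)\le \mu_{G+U}(s)$, so the inner minimum in the statement is at most $\mu_{G+U}(s)$. Note that only the head needs to lie in $B$; the tail may be anywhere.

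\emph{Case 2: every edge of $I^*\cup U$ has its head in $A$.} Then no edge of $I$, of $F$, or of $I^*$ goes from $A$ to $B$. Consequently the set of edges leaving $A$ is the same in $G+U$, in $G$, and in $G+I^*$. Deleted edges of $F$ did not cross in $G$, inserted edges of $I$ do not cross in $G+U$, and edges of $I^*$ do not cross in $G+I^*$. Thus
\[
\mu_{G+U}(s)=d^+_{G+U}(A)=d^+_G(A)=d^+_{G+I^*}(A)\ge \mu_{G+I^*}(s)=\lambda^*.
\]
Combining the two cases gives $\mu_{G+U}(s)\ge\text{RHS}$, and together with the upper bound this proves the lemma.

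The main point requiring care is Case 2. It is essential that $Y$ includes the heads of the \emph{deleted} edges and the heads of $I^*$, not only the heads of inserted edges: this is exactly what makes the cut capacity invariant across $G+U$, $G$ and $G+I^*$. I do not expect any step beyond this bookkeeping to be an obstacle.
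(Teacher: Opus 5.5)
Your proof is correct and follows the paper's argument. The upper bound comes from the definition of $\mu$ together with the optimality of $I^*$; you spell out the monotonicity and padding step that the paper only asserts. The lower bound comes from a case split on a minimum $s$-cut $(A,B)$ of $G+U$: either some edge of $I^*\cup U$ has its head $y\in B$, making $(A,B)$ an $(s,y)$-cut, or the edges leaving $A$ are the same in $G+U$ and $G+I^*$. Splitting on ``head in $B$'' rather than the paper's ``edge crossing from $A$ to $B$'' is only a cosmetic difference.
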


\begin{proof}
For the $\le$ direction, observe that $\mu_{G+U}(s) \le \lambda_{G+U}(s, y)$ for every $y \neq s$ by the definition of the $s$-min-cut. Furthermore, because $I^*$ maximizes the $s$-min-cut value over all possible insertions of $f$ edges, and $U$ adds at most $f$ edges (and potentially deletes others), we have $\mu_{G + U}(s) \le \mu_{G + I^*}(s) = \lambda^*$.

For the $\ge$ direction, let $(A, B)$ be a minimum $s$-cut in $G + U$. We consider two cases based on whether any edge in $I^* \cup U$ crosses from $A$ to $B$. If no edge of $I^* \cup U$ crosses from $A$ to $B$, then no edge added by $I^*$ crosses the cut, and no edge added or deleted by $U$ crosses the cut. Consequently, the edges crossing from $A$ to $B$ are identical in $G+U$ and $G+I^*$. Thus, the capacity of $(A,B)$ in $G+U$ equals its capacity in $G+I^*$, which is at least $\mu_{G+I^*}(s) = \lambda^*$.

Otherwise, some edge $e \in I^* \cup U$ crosses from $A$ to $B$. Let $e = (x, y)$. Since $s \in A$ and $y \in B$, we have $y \neq s$. Because $e \in I^* \cup U$, we have $y\in Y$. The partition $(A, B)$ is thus an $(s, y)$-cut in $G + U$, of capacity at least $\lambda_{G + U}(s, y) \ge \min_{y' \in Y, y' \neq s} \lambda_{G + U}(s, y')$. 

Hence $\mu_{G+U}(s) \ge \min\bigl(\lambda^*, \min_{y \in Y, y \neq s} \lambda_{G + U}(s, y)\bigr)$. This completes the proof.
\end{proof}

\begin{theorem}
\label{theorem:s-min-cut}
Let $G = (V, E, \cost)$ be an $n$-vertex directed graph, and let $s \in V$. Then there exist the following sensitivity oracles that, for any set of edge updates $U$ where $|U| \le f$, answer queries on $G + U$ with high probability:
\begingroup
\setlength{\aboverulesep}{0pt}
\setlength{\belowrulesep}{0pt}
\begin{center}
\resizebox{0.98\linewidth}{!}{%
\renewcommand{\arraystretch}{1.25}
\begin{tabularx}{\linewidth}{@{}l l l l X@{}}
\rowcolor{yellow!15}
\textsc{Oracle} & \textsc{Prep.} & \textsc{Space} & \textsc{Query} & \textsc{Output} \\
\midrule
$s$-Min-Cut Value &
$O(f^\omega n^{\omega+1} + T_{\text{aug}})$ &
$O(f^2 n^3 \log n)$ &
$\widetilde{O}(f^{3\omega+1})$ &
$\mu_{G+U}(s)$ \\
$s$-Min-Cut Partition~~ &
$O(f^\omega n^{\omega+1} + T_{\text{aug}})$~~ &
$O(f^2 n^3 \log n)$~~ &
$\widetilde{O}(f^{3\omega} n)$~~ &
$s$-min-cut in $G+U$ \\
\end{tabularx}
}
\end{center}
\endgroup
\end{theorem}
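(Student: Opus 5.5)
The plan is to reduce everything to \Cref{lemma:insertion-deletion-cut} and the fixed-pair oracles of \Cref{theorem:st-flow}. In preprocessing we first run the algorithm of \Cref{theorem:s-connectivity-augmentation} to obtain $I^*$ in time $T_{\text{aug}}$. We then compute $\lambda^* = \mu_{G+I^*}(s)$ by any static max-flow computation, which is dominated by the other costs. Next, for every $v \in V\setminus\{s\}$ we build the Max-Flow and Nearest Min-Cut oracles of \Cref{theorem:st-flow} for the pair $(s,v)$. Each such oracle costs $O((fn)^\omega)$ preprocessing and $O((fn)^2\log n)$ space. Summed over the $n-1$ sinks this gives $O(f^\omega n^{\omega+1})$ preprocessing and $O(f^2n^3\log n)$ space, matching the table. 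We also store $I^*$ and $\lambda^*$.

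\textbf{Value query.} Given $U$ with $|U|\le f$, we form $Y$, the set of heads of the edges in $I^*\cup U$, so $|Y|\le 2f$. For each $y\in Y\setminus\{s\}$ we query the $(s,y)$ max-flow oracle with update set $U$ to obtain $\lambda_{G+U}(s,y)$, at cost $\widetilde{O}(f^{3\omega})$ per query. By \Cref{lemma:insertion-deletion-cut}, the answer is the minimum of $\lambda^*$ and these $O(f)$ values. The total query time is $\widetilde{O}(f^{3\omega+1})$.

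\textbf{Partition query.} We first compute $\mu_{G+U}(s)$ as above and keep track of which term attains the minimum. If some $y\in Y$ attains it, we query the nearest min-cut oracle for $(s,y)$ on $G+U$ in $\widetilde{O}(f^{3\omega}n)$ time. That cut is an $(s,y)$-cut of capacity $\mu_{G+U}(s)$, hence an $s$-min-cut. The delicate case, and the main obstacle, is when only $\lambda^*$ attains the minimum. Then every $s$-min-cut $(A,B)$ of $G+U$ has no edge of $I^*\cup U$ crossing it, by the proof of \Cref{lemma:insertion-deletion-cut}: otherwise its capacity would be at least some $\lambda_{G+U}(s,y)>\lambda^*$.

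To handle this case, the first approach I would try is to precompute during preprocessing, for each $v\neq s$, the value $\lambda_{G+I^*}(s,v)$, and fix some $v^*$ with $\lambda_{G+I^*}(s,v^*)=\lambda^*$. Since the target cut is crossed by no edge of $I^*\cup U$, its capacity is the same in $G+U$ and in $G+I^*$. A natural choice would then be to return a min $(s,v)$-cut of $G+U$ for a sink $v$ on the $B$-side, via the nearest min-cut oracle. Fixing $v^*$ in advance does not quite work, however, because $U$ may change which sinks lie on the $B$-side. A safer fallback is to scan every $v\in V\setminus\{s\}$: issue one max-flow query per $v$ until one returns the value $\mu_{G+U}(s)$, then return the nearest min-cut for that $v$. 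That costs $\widetilde{O}(f^{3\omega}n)$ for the scan plus one nearest-min-cut call, which still fits the stated $\widetilde{O}(f^{3\omega}n)$ bound. I expect the partition query to use this scan, so the stated bounds hold in all cases. Correctness of the whole construction follows from the high-probability guarantees of \Cref{theorem:st-flow} and a union bound over the polynomially many queries made.
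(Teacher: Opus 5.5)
Your proposal is correct and follows essentially the same route as the paper. Both build the fixed-pair max-flow and nearest-min-cut oracles of \Cref{theorem:st-flow} for every sink $v\neq s$, answer value queries via \Cref{lemma:insertion-deletion-cut} over the at most $2f$ heads in $Y$, and answer partition queries either with the nearest min-cut for the minimizing $y$ or, when $\lambda^*$ is the minimum, by scanning all sinks (your split on whether some $y\in Y$ attains the minimum is a harmless refinement of the paper's split on $\mu_{G+U}(s)<\lambda^*$ versus $\mu_{G+U}(s)=\lambda^*$).
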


\begin{proof}
During preprocessing, we compute the optimal augmenting set $I^*$ of size $f$ in polynomial time $T_{\text{aug}}$ and compute $\lambda^* = \mu_{G+I^*}(s)$
For every vertex $v \in V \setminus \{s\}$, we instantiate the $(s, t)$-max-flow and nearest-min-cut sensitivity oracles of \Cref{theorem:st-flow} on $G$ with source $s$ and sink $v$, with a budget of $f$. 

Given a query update set $U$, we construct the candidate sink set $Y = \{y \mid (x, y) \in I^* \cup U\}$. For each $y \in Y$ such that $y \neq s$, we call the $(s, y)$ max-flow oracle on $G + U$ to compute $\lambda_{G+U}(s, y)$. Each of the at most $2f$ valid queries takes $\widetilde{O}(f^{3\omega})$ time. By \Cref{lemma:insertion-deletion-cut}, $\mu_{G+U}(s) = \min\bigl(\lambda^*,\ \min_{y \in Y, y \neq s} \lambda_{G+U}(s, y)\bigr)$. Evaluating these bounds takes total time $\widetilde{O}(f^{3\omega+1})$.


To report a corresponding vertex partition, we first compute $\mu_{G+U}(s)$. We form two cases, which are handled separately:

\begin{itemize}
    \item If $\mu_{G+U}(s) < \lambda^*$, then let $y^* = \arg\min_{y \in Y, y \neq s} \lambda_{G+U}(s, y)$. We invoke the $(s, y^*)$ nearest-min-cut oracle on $G + U$, which returns the desired partition in $\widetilde{O}(f^{3\omega} n)$ time. 

    \item If $\mu_{G+U}(s) = \lambda^*$, then scan all vertices to find a vertex $v$ such that $\lambda_{G+U}(s,v) = \lambda*$. Then, invoke the $(s, v)$ nearest-min-cut oracle on $G + U$. This procedure returns the desired partition in $\widetilde{O}(f^{3\omega} n)$ time. 
\end{itemize}

So, the total query time for reporting the vertex partition is bounded by $\widetilde{O}(f^{3\omega} n)$.

\end{proof}
\paragraph{Global Min-Cut}
For the global min-cut, we use the standard symmetry property: for any fixed vertex $s$,
\[
\mu_G \;=\; \min\bigl(\mu_G(s),\; \mu_{G^{\mathrm{rev}}}(s)\bigr),
\]
where $G^{\mathrm{rev}}$ is $G$ with all edges reversed. For any global min-cut $(A, B)$, either $s \in A$ (so $\mu_G(s) \le |\delta_G(A, B)|$) or $s \in B$ (so $\mu_{G^{\mathrm{rev}}}(s) \le |\delta_{G^{\mathrm{rev}}}(A, B)|$). Hence, a single arbitrary source $s$, chosen at preprocessing, suffices.

\begin{theorem}
\label{theorem:global-min-cut}
Let $G = (V, E, \cost)$ be an $n$-vertex directed graph with unit edge capacities and integer edge costs in $[-W, W]$. There exist the following sensitivity oracles that, for any  update set $U$ of size $\le f$, answer queries on $G + U$ with high probability:
\begingroup
\setlength{\aboverulesep}{0pt}
\setlength{\belowrulesep}{0pt}
\begin{center}
\renewcommand{\arraystretch}{1.25}
\resizebox{0.98\linewidth}{!}{%
\begin{tabular}{@{}lllll@{}}
\rowcolor{yellow!15}
\textsc{Oracle} & \textsc{Prep.} & \textsc{Space} & \textsc{Query} & \textsc{Output} \\
\midrule
Global Min-Cut Value &
$O(f^\omega n^{\omega+1} + T_{\text{aug}})$ &
$O(f^2 n^3 \log n)$ &
$\widetilde{O}(f^{3\omega + 1})$ &
$\mu_{G+U}$ \\
Global Min-Cut Partition &
$O(f^\omega n^{\omega+1} + T_{\text{aug}})$ &
$O(f^2 n^3 \log n)$ &
$\widetilde{O}(f^{3\omega} n)$ &
Global-min-cut in $G+U$ \\
\end{tabular}%
}
\end{center}
\endgroup
\end{theorem}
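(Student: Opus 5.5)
The plan is to reduce \Cref{theorem:global-min-cut} to two instances of \Cref{theorem:s-min-cut}, using the symmetry identity $\mu_H = \min(\mu_H(s), \mu_{H^{\mathrm{rev}}}(s))$ stated just before the theorem. During preprocessing we fix an arbitrary vertex $s \in V$. We then build the $s$-min-cut oracle of \Cref{theorem:s-min-cut} (value and partition versions) on $G$ with source $s$. We build a second copy on $G^{\mathrm{rev}}$ with the same source $s$. Each copy costs $O(f^\omega n^{\omega+1} + T_{\text{aug}})$ preprocessing time and $O(f^2 n^3\log n)$ space, so both bounds hold up to a factor of two.

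For a query $U$, we form $U^{\mathrm{rev}}$ by reversing every inserted and deleted edge. This takes $O(f)$ time, and $(G+U)^{\mathrm{rev}} = G^{\mathrm{rev}} + U^{\mathrm{rev}}$ with $|U^{\mathrm{rev}}| = |U| \le f$. We query the first oracle for $\mu_{G+U}(s)$ and the second for $\mu_{G^{\mathrm{rev}}+U^{\mathrm{rev}}}(s) = \mu_{(G+U)^{\mathrm{rev}}}(s)$, each in $\widetilde{O}(f^{3\omega+1})$ time, and return their minimum.

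Correctness requires the symmetry identity for $H := G+U$. For the upper bound, $\mu_H(s)$ is the capacity of some cut $(A,B)$ in $H$ with $s\in A$, so $\mu_H \le \mu_H(s)$. Likewise $\mu_{H^{\mathrm{rev}}}(s)$ is the capacity of some cut $(A,B)$ in $H^{\mathrm{rev}}$ with $s \in A$, and that equals the capacity of $(B,A)$ in $H$. For the lower bound, take a global min-cut $(A,B)$ of $H$. If $s\in A$, then $\mu_H(s)$ is at most its capacity. If $s \in B$, then $(B,A)$ is an $s$-cut in $H^{\mathrm{rev}}$ with the same capacity. The success probability is the intersection of two high-probability events, and a union bound handles it.

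For the partition oracle, we first compute both values. We call the partition oracle of whichever instance attains the minimum, in $\widetilde{O}(f^{3\omega} n)$ time. If the reversed instance wins and returns $(A,B)$ in $(G+U)^{\mathrm{rev}}$, we output $(B,A)$, whose $B\to A$ capacity in $G+U$ equals the returned value.

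I do not expect a real obstacle. The only point that needs care is that the failure/insertion budget transfers unchanged under reversal. In particular, the Frank augmentation set $I^*$ is computed separately for $G^{\mathrm{rev}}$ during the second preprocessing, not obtained by reversing the one computed for $G$.
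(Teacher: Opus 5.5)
Your proposal is correct and follows the paper's proof: fix an arbitrary $s$, build the $s$-min-cut value and partition oracles of \Cref{theorem:s-min-cut} on both $G$ and $G^{\mathrm{rev}}$ (with the augmentation $I^*$ computed independently for each), return the smaller of the two values, and send a partition query to whichever instance attains the minimum. The steps you spell out — reversing $U$, verifying the identity $\mu_H=\min(\mu_H(s),\mu_{H^{\mathrm{rev}}}(s))$, and flipping a partition returned by the reversed instance — are exactly what the paper leaves implicit.
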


\begin{proof}
We choose an arbitrary source vertex $s \in V$ during preprocessing. Note that we compute independent optimal augmentations (and thus distinct $I^*$ sets) for $G$ and $G^{\mathrm{rev}}$ with respect to $s$. We then instantiate the $s$-min-cut value and partition oracles from \Cref{theorem:s-min-cut} for both augmented graphs. The preprocessing time and space match the bounds stated previously. To determine the value of the global min-cut, we query the value oracles on both $G+U$ and $(G+U)^{\mathrm{rev}}$ and return the smaller value, which equals $\mu_{G+U}$. For a partition query, we first determine whether the minimum is attained in $G+U$ or $(G+U)^{\mathrm{rev}}$, and then invoke the corresponding partition oracle. The partition-query time remains identical to \Cref{theorem:s-min-cut}.
\end{proof}

\section{Arborescences, Spanning Trees, Colorful Spanning Trees, Arboricity}
\label{sec:trees}

\subsection{Sensitivity Oracles for $k$-Disjoint Arborescences}
To design sensitivity oracles for $k$-disjoint arborescences, we first formulate the problem using matroid $k$-packing. Consider a directed graph $H=(V_H, E_H)$ with $N$ vertices and $M$ edges, along with a designated root node $z$. A valid $z$-rooted spanning arborescence in $H$ is a set of $N-1$ edges forming a spanning tree in the underlying undirected graph, where every vertex $v \in V_H \setminus \{z\}$ has exactly one incoming edge, and $z$ has no incoming edges. We define two linear matroids on the ground set $E_H$.

\begin{definition}
\label{def:arborescence-matroids-direct}
The matroids $M_{\mathrm{graph}}$ and $M_{\mathrm{part}}$ on the ground set $E_H$, both of rank $r=N-1$, are defined with respect to the root $z$ as follows:
\begin{itemize}
\item $M_{\mathrm{graph}}$: A subset $S \subseteq E_H$ is independent if its edges do not form any cycles in the underlying undirected graph. This is represented over a field $\F$ by its incidence matrix $A_{\mathrm{graph}} \in \{-1, 0, 1\}^{(N-1) \times M}$, where rows are indexed by $V_H \setminus \{z\}$, and for an edge $e=(u,v)$, the $e$-th column has $-1$ at row $u$ (if $u \neq z$) and $1$ at row $v$ (if $v \neq z$).

\item $M_{\mathrm{part}}$: A subset $S \subseteq E_H$ is independent if every vertex $v \in V_H \setminus \{z\}$ has at most one incoming edge in $S$. This is represented by a matrix $A_{\mathrm{part}}\in \{0, 1\}^{(N-1) \times M}$, where rows are indexed by $V_H \setminus \{z\}$. A column for edge $e=(u,v)$ has a $1$ at row $v$ (if $v \neq z$) and $0$ elsewhere.
\end{itemize}
\end{definition}

The following theorem by Edmonds~\cite{Edmonds2003} establishes an equivalence between $k$-disjoint arborescences and packing $k$-common bases of the specified matroids.

\begin{theorem} [\cite{Edmonds2003}]
\label{theorem:edmonds_packing}
A directed graph $H$ contains $k$-edge-disjoint $z$-rooted arborescences if and only if there exists a set $S \subseteq E_H$ of size $k(N-1)$ that can be partitioned into $k$ bases of $M_{\mathrm{graph}}$ as well as $M_{\mathrm{part}}$ (the partitioning may differ).
\end{theorem}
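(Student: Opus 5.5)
The plan is to prove the two directions separately. The forward direction is a direct construction. The converse uses the strong base-orderability machinery of \Cref{theorem:SBO-ind} where it applies, and otherwise falls back on Edmonds' branching theorem.

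\textbf{Forward direction.} Suppose $T_1,\dots,T_k$ are edge-disjoint $z$-rooted spanning arborescences. Each $T_i$ has exactly $N-1$ edges and its underlying undirected graph is a spanning tree. It is therefore acyclic and of full rank, so $T_i$ is a basis of $M_{\mathrm{graph}}$. Every $v\neq z$ has exactly one incoming edge in $T_i$, so $T_i$ is also a basis of $M_{\mathrm{part}}$. Taking $S:=\biguplus_i T_i$ gives a set of size $k(N-1)$ that is partitioned (by the same partition) into $k$ common bases.

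\textbf{Converse.} Let $S$ be given, with $|S|=k(N-1)$, and suppose $S$ partitions into $k$ bases of $M_{\mathrm{graph}}$ and also into $k$ bases of $M_{\mathrm{part}}$.
\begin{itemize}
\item From the $M_{\mathrm{part}}$ partition: every $v\neq z$ has in-degree exactly $k$ in $S$, and $z$ has in-degree $0$ (edges into $z$ have a zero column, so they lie in no basis).
\item From the $M_{\mathrm{graph}}$ partition: $S$ is a union of $k$ edge-disjoint spanning trees of the underlying undirected graph.
\end{itemize}
Since the graphic matroid need not be SBO, \Cref{theorem:SBO-ind} cannot be invoked directly. Instead I would go through Edmonds' branching theorem in its cut form: $(V_H,S)$ contains $k$ edge-disjoint $z$-arborescences iff every nonempty $X\subseteq V_H\setminus\{z\}$ has $\rho_S(X)\ge k$, where $\rho_S(X)$ is the in-degree of $X$ in $(V_H,S)$.

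To verify the cut condition, fix such an $X$. Count the edges of $S$ with both endpoints in $X$ in two ways:
\begin{itemize}
\item Each of the $k$ spanning trees has at most $|X|-1$ edges inside $X$, so at most $k(|X|-1)$ edges of $S$ lie inside $X$.
\item Each vertex of $X$ has in-degree exactly $k$, so the total in-degree into vertices of $X$ is $k|X|$.
\end{itemize}
Every edge entering a vertex of $X$ either lies inside $X$ or enters $X$ from outside. Hence
\[\rho_S(X)\ \ge\ k|X|-k(|X|-1)\ =\ k.\]
Edmonds' theorem then yields $k$ edge-disjoint $z$-rooted arborescences within $S\subseteq E_H$, completing the converse.

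The main obstacle is that the statement attributes the result to Edmonds, so I will treat the branching theorem (equivalently, the cut characterization) as the cited black box and reduce to it via the counting above. The one detail to check carefully is the handling of edges entering $z$ and of self-loops. Self-loops and edges into $z$ both have zero columns in $A_{\mathrm{part}}$, so neither can lie in a basis of $M_{\mathrm{part}}$, and hence neither can lie in $S$. This is what justifies the in-degree accounting used in the count.
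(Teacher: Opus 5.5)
The paper gives no proof of this statement: it is cited from Edmonds and used as a black box. Your argument is correct, and it takes a more explicit route. You take the cut form of Edmonds' branching theorem as the black box ($k$ arc-disjoint $z$-arborescences exist iff $\rho(X)\ge k$ for every nonempty $X\subseteq V_H\setminus\{z\}$) and derive the matroid formulation from it by counting.

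The derivation runs as follows:
\begin{itemize}
\item The size condition $|S|=k(N-1)$ forces both matroids to have rank $N-1$. It would be worth saying this explicitly, since it is what makes each $M_{\mathrm{part}}$-basis pick exactly one in-arc at every $v\neq z$, and each $M_{\mathrm{graph}}$-basis a spanning tree.
\item Hence every $v\neq z$ has in-degree exactly $k$ in $S$.
\item The spanning-tree partition bounds the arcs of $S$ inside $X$ by $k(|X|-1)$.
\item Subtracting gives $\rho_S(X)\ge k$.
\end{itemize}
This is the standard way the matroid form is obtained from the branching theorem.

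Compared with the bare citation, your route does three things:
\begin{itemize}
\item It isolates exactly which classical fact is being used.
\item It correctly explains why the route of \Cref{lemma:path_packing_static} via \Cref{theorem:SBO-ind} is unavailable here: graphic matroids are not strongly base-orderable in general.
\item It proves slightly more than stated. The $k$ arborescences you find inside $S$ have total size $k(N-1)=|S|$, so they partition $S$. The two partitions can therefore be taken to coincide.
\end{itemize}

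One small inaccuracy in your last paragraph concerns self-loops. A self-loop $(v,v)$ with $v\neq z$ does \emph{not} have a zero column in $A_{\mathrm{part}}$; it has a $1$ in row $v$, so it is not a loop of $M_{\mathrm{part}}$. It is excluded from $S$ for a different reason: it is a loop of $M_{\mathrm{graph}}$ (a one-edge cycle), so it lies in no basis of $M_{\mathrm{graph}}$. Only arcs entering $z$ are excluded through $A_{\mathrm{part}}$.

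The conclusion you need still holds: $S$ contains no self-loops and no arcs into $z$. The exclusion of self-loops does matter for your count, because a self-loop at a vertex of $X$ would count toward the in-degree of $X$ without being bounded by the forest bound. It is simply justified by $M_{\mathrm{graph}}$ rather than $M_{\mathrm{part}}$.
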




We now return to our sensitivity setting. Let $G = (V, E, \cost)$ be a directed graph with $n$ vertices and integer edge costs in $[-W, W]$. For any query pair $(s,U)$, where $s \in V$ and $U$ is a set of $f$ edge updates (insertions or deletions), our goal is to determine if $G + U$ contains $k$ edge-disjoint $s$-rooted arborescences, and compute the minimum possible total cost of such a family.

\paragraph{Query Reduction}
We construct an auxiliary graph $G^*$ by adding a new isolated dummy root vertex $z$, setting $V(G^*) = V \cup \{z\}$ and $E(G^*) = E$. For any target root $s \in V$, let $E_{s,z}$ denote a set of $k$ zero-cost parallel edges directed from $z$ to $s$. Finding $k$-disjoint $s$-rooted arborescences in $G + U$ is equivalent to finding $k$-disjoint $z$-rooted arborescences in $G^* + U + E_{s,z}$, which by \Cref{theorem:edmonds_packing} maps directly to a $k$-packing on $M_{\mathrm{graph}}$ and $M_{\mathrm{part}}$.

\begin{theorem} \label{thm:sensitivity_arborescences}
Let $G = (V, E, \cost)$ be an $n$-vertex directed graph with integer edge costs in $[-W, W]$, and let $k \ge 1$. There exist the following sensitivity oracles that, for any $s\in V$ and any $U\in E^f$, return the
corresponding outputs on $G + U$ with high probability:
\begingroup
\setlength{\aboverulesep}{0pt}
\setlength{\belowrulesep}{0pt}
\begin{center}
\renewcommand{\arraystretch}{1.25}
\resizebox{0.98\linewidth}{!}{%
\begin{tabular}{@{}lllll@{}}
\rowcolor{yellow!15}
\textsc{Oracle} & \textsc{Prep.} & \textsc{Space} & \textsc{Query} & \textsc{Output} \\
\midrule
Decision &
$O((kn)^\omega)$ &
$O((kn)^2 \log n)$ &
$O((fk^2)^\omega)$ &
Whether $G + U$ has $k$-disjoint $s$-rooted arborescences \\
Weighted &
$\widetilde{O}(W(kn)^3)$ &
$O(W(kn)^3 \log n)$ &
$\widetilde{O}(W kn \cdot (fk^2)^\omega)$ &
Min total cost of such $k$-arborescences  \\
\end{tabular}%
}
\end{center}
\endgroup
\end{theorem}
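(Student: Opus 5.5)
The plan is to instantiate \Cref{thm:sensitivity_packing} on the two matroids $M_{\mathrm{graph}}$ and $M_{\mathrm{part}}$ of \Cref{def:arborescence-matroids-direct}, defined on the auxiliary graph $G^*$ with dummy root $z$. This graph has $N=n+1$ vertices, so both matroids have rank $r=n$. Both representations $A_{\mathrm{graph}}$ and $A_{\mathrm{part}}$ are column-sparse, with at most $2$ and $1$ nonzeros per column respectively. Hence building the union representations $U_{\mathrm{graph}},U_{\mathrm{part}}$ and the packing Laplacian $\L_{\mathrm{pack}}=U_{\mathrm{graph}}DU_{\mathrm{part}}^T$ costs $O(k^2 m)$. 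The preprocessing and space bounds then follow directly with $kr=kn$:
\begin{itemize}
\item the decision oracle needs $O(k^2m+(kn)^\omega)$ time and $O((kn)^2\log n)$ space;
\item the weighted oracle needs $\widetilde O(W(kn)^3)$ time and $O(W(kn)^3\log n)$ space.
\end{itemize}
The $k^2m\le k^2n^2$ term is absorbed into these bounds. In the stated space we can also drop the $O(km)$ term for storing $U$, since $km\le kn^2$. The $k(N-1)$ matroids ground set must contain the slots for the query edges $E_{s,z}$. I will treat these like insertions: their columns are created at query time, each supported on one or two rows.

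For correctness of the query reduction, I argue two things. First, $G+U$ has $k$ edge-disjoint $s$-rooted arborescences if and only if $G^*+U+E_{s,z}$ has $k$ edge-disjoint $z$-rooted arborescences. Forward: prepend one copy of $(z,s)$ to each arborescence. Backward: $z$ has out-edges only to $s$, so each $z$-arborescence uses exactly one edge $(z,s)$, and deleting it leaves an $s$-rooted arborescence of $G+U$. Second, by \Cref{theorem:edmonds_packing} the latter is equivalent to a $k$-packing of common bases of $M_{\mathrm{graph}}$ and $M_{\mathrm{part}}$. By \Cref{thm:k-fold-matroid_intersection}, this holds w.h.p.\ exactly when the updated Laplacian has nonzero determinant. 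For the weighted version, the minimum $Y$-degree minus $knW$ gives the minimum total weight of the packing. The $E_{s,z}$ edges cost zero, so this equals the minimum cost of $k$ arborescences in $G+U$.

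The query-time bound is where care is needed. A naive count treats the $k$ parallel $(z,s)$ edges as $k$ separate updates, which would give $\delta=f+k$ and query time $O(((f+k)k^2)^\omega)$. Instead I use \Cref{rem:parallel-edges-update-packing-laplacian}. All $k$ copies of $(z,s)$ have the same sparse column support, namely row $s$ in both $A_{\mathrm{graph}}$ and $A_{\mathrm{part}}$, because $z$'s row is deleted. So across all $k$ copies in the union representation they touch only $O(k)$ rows of each of $U_{\mathrm{graph}}$ and $U_{\mathrm{part}}$, and the whole bundle contributes $O(k^2)$ entries inside a single $O(k)\times O(k)$ block. Each of the $f$ genuine updates in $U$ contributes another $O(k^2)$ entries. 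The additive update $C$ therefore has $O(fk^2+k^2)=O(fk^2)$ nonzeros.

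Applying the query bounds of \Cref{lemma:BrandS-det-General-Q}, as used in \Cref{thm:sensitivity_packing}, gives $O((fk^2)^\omega)$ time for the decision query and $\widetilde O(W\cdot kn\cdot (fk^2)^\omega)$ for the weighted query. The part of the argument that needs the most checking is this block-structure claim for the parallel bundle: it requires that the fresh random scalings of the new copies still fall in the same row support, which they do because scaling by $C_i$ is column-wise.
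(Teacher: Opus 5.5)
Your proposal is correct and follows essentially the same route as the paper: instantiate \Cref{thm:sensitivity_packing} on $M_{\mathrm{graph}}$ and $M_{\mathrm{part}}$ over $G^*$ with rank $r=n$, and use \Cref{rem:parallel-edges-update-packing-laplacian} to charge the $k$ parallel $(z,s)$ edges only $O(k^2)$ Laplacian entries, so the update has $O(fk^2)$ nonzeros and the stated query bounds follow. Your additional steps only make explicit what the paper leaves implicit: the $s$-rooted versus $z$-rooted reduction, absorbing the $k^2m$ and $km$ terms, and counting nonzero entries directly rather than substituting into $(k^2\delta)^\omega$.
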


\begin{proof}
We invoke the matroid packing oracles from \Cref{thm:sensitivity_packing} on $M_{\mathrm{graph}}$ and $M_{\mathrm{part}}$. For the auxiliary graph $G^*$ with $N=n+1$ vertices, both matroids have rank $r = n$.

When a query $(s, U)$ arrives, we incorporate the $f$ structural updates in $U$ and the $k$ parallel query edges $E_{s,z}$. By \Cref{rem:parallel-edges-update-packing-laplacian}, inserting the $k$ parallel edges in $E_{s,z}$ only contributes $O(k^2)$ entries to the additive update of the packing Laplacian $\mathcal{L}_{\mathrm{pack}}$. The $f$ standard edge updates contribute $O(fk^2)$ entries. Thus, the total additive update matrix has $\delta = O(fk^2)$ non-zero entries. Substituting $r=n$ and the effective update bound $\delta = O(fk^2)$ into the query bounds of \Cref{thm:sensitivity_packing} yields the stated bounds.
\end{proof}

\Cref{thm:sensitivity_arborescences}, together with the fact that a directed graph is $k$-strongly connected if and only if there exists a $k$-packing of arborescences rooted at an arbitrarily chosen vertex $s$ in both $G$ and its reverse, yields the following result.

\begin{corollary}[$k$-Strong-Connectivity Oracle]
\label{cor:scc}
For any $n$-vertex directed graph $G$, there exists an $f$-sensitivity oracle that, given any update set $U \subseteq E^f$, reports (with high probability) whether or not $G+U$ is $k$-strongly-connected.
The oracle can be constructed in $O((kn)^\omega)$ preprocessing time, uses $O((kn)^2 \log n)$ space, and answers each query in $O((fk^2)^\omega)$ time.
\end{corollary}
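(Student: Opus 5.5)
The plan is to reduce $k$-strong connectivity to two instances of the arborescence oracle of \Cref{thm:sensitivity_arborescences}, one built on $G$ and one on its reverse $G^{\mathrm{rev}}$, both queried with a root $s\in V$ fixed once at preprocessing. The first step is the structural fact, which follows from Edmonds' theorem. By \Cref{theorem:edmonds_packing} in its cut form, a digraph $H$ has $k$ edge-disjoint $s$-rooted arborescences if and only if every $X\subseteq V$ with $s\in X\neq V$ has at least $k$ edges leaving $X$. Applied to $H^{\mathrm{rev}}$, the same condition says every set $X\ni s$, $X\neq V$, has at least $k$ edges entering it.

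Now $H$ is $k$-strongly connected (that is, $\lambda_H(u,v)\ge k$ for all ordered pairs) exactly when every nonempty proper subset $X$ has out-degree at least $k$. If $s\in X$, this is the first condition. If $s\notin X$, then $V\setminus X$ contains $s$ and the edges leaving $X$ are exactly the edges entering $V\setminus X$, so this is the second condition. Hence $H$ is $k$-strongly connected iff both $H$ and $H^{\mathrm{rev}}$ admit $k$ edge-disjoint $s$-rooted arborescences. I would state this via Menger's theorem and the cut characterization and not re-derive it.

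The second step is the oracle construction. During preprocessing, build the decision oracle of \Cref{thm:sensitivity_arborescences} for $G$ and for $G^{\mathrm{rev}}$, which costs $O((kn)^\omega)$ time and $O((kn)^2\log n)$ space each. Given $U$, form $U^{\mathrm{rev}}$ by reversing every inserted and deleted edge. This gives $(G+U)^{\mathrm{rev}}=G^{\mathrm{rev}}+U^{\mathrm{rev}}$ with $|U^{\mathrm{rev}}|=|U|\le f$. Query both oracles with root $s$ and answer yes iff both return yes. Each query costs $O((fk^2)^\omega)$.

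The third step is correctness. Each oracle errs with polynomially small probability, so a union bound over the two calls keeps the answer correct with high probability.

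The only mildly delicate point is the equivalence in the first step, in particular making sure the cut form of Edmonds' theorem is available when the paper only states the matroid form. If needed, I would cite Edmonds' branching theorem directly, since the paper already attributes it to \cite{Edmonds2003}.
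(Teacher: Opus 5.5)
Your proposal is correct and matches the paper's argument: the paper obtains the corollary by running the decision oracle of \Cref{thm:sensitivity_arborescences} on both $G$ and $G^{\mathrm{rev}}$ with a root $s$ fixed at preprocessing. It relies on the same fact you use, that $k$-strong connectivity is equivalent to a $k$-packing of $s$-rooted arborescences in both directions. Your cut-based derivation of that equivalence via Edmonds' branching theorem, the reversal $U^{\mathrm{rev}}$, and the union bound over the two calls spell out what the paper states in a single sentence.
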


\cite{BaswanaCR19scc} presented an $f$-sensitivity oracle of $O(2^f n^2)$ space that reports the strongly connected components of a graph after $f$ failures in $\widetilde O(2^f n)$ query time. In contrast, for the simpler decision problem of testing whether the updated graph remains strongly connected ($k=1$ case), our oracle uses only $O(n^2\log n)$ space and answers queries in $O(f^\omega)$ time.

\subsection{Sensitivity Oracles for $k$-Disjoint Spanning Trees}

We now apply our algebraic framework to the problem of finding $k$ edge-disjoint spanning trees in an undirected graph. Let $G = (V, E, \cost)$ be an undirected graph with $n$ vertices, $m$ edges, and integer edge costs in $[-W, W]$. A spanning tree is a maximal acyclic subgraph of $G$, containing exactly $n-1$ edges. We can formulate the problem of finding $k$ edge-disjoint spanning trees as a matroid packing problem by utilizing the standard graphic matroid $M_{\mathrm{graph}}$ of rank $r = n-1$.

\begin{lemma} \label{lemma:k-spanning-trees-packing}
An undirected graph $G$ contains $k$ edge-disjoint spanning trees if and only if there exists a set $S \subseteq E$ of size $k(n-1)$ that can be partitioned into $k$ common bases of $M_1$ and $M_2$, where we set both $M_1$ and $M_2$ to $M_{\mathrm{graph}}$.
\end{lemma}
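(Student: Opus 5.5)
The plan is to verify both directions directly from the definitions. Because $M_1 = M_2 = M_{\mathrm{graph}}$, the phrase ``the partitionings may differ'' adds nothing here: one partition serves both matroids. No strong base-orderability argument (\Cref{theorem:SBO-ind}) is needed, unlike in \Cref{lemma:path_packing_static}. The one fact we need is that the bases of $M_{\mathrm{graph}}$ are exactly the edge sets of spanning forests with $n-1$ edges. We assume $G$ is connected (otherwise neither side can hold for $k \ge 1$), so these bases are exactly the spanning trees of $G$.

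For the forward direction, let $T_1, \dots, T_k$ be edge-disjoint spanning trees of $G$. Each $T_i$ has $n-1$ edges and is acyclic, so it is independent of size equal to the rank $r = n-1$, hence a basis of $M_{\mathrm{graph}}$. Set $S := \biguplus_i T_i$. Then $|S| = k(n-1)$, and the single partition $S = T_1 \uplus \dots \uplus T_k$ witnesses that $S$ splits into $k$ bases of $M_1$ and, using the same partition, into $k$ bases of $M_2$.

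For the converse, suppose $S \subseteq E$ with $|S| = k(n-1)$ splits as $S = B_1 \uplus \dots \uplus B_k$, where each $B_i$ is a basis of $M_1 = M_{\mathrm{graph}}$. We only need the partition for $M_1$; the one for $M_2$ is redundant. Each $B_i$ is an acyclic edge set of size $n-1$ on $n$ vertices, so it is a spanning tree. The $B_i$ are pairwise disjoint because they partition $S$. Hence $G$ contains $k$ edge-disjoint spanning trees.

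No step is a real obstacle. The only point needing care is that a basis of $M_{\mathrm{graph}}$ is a spanning tree only when the rank is $n-1$, that is, when $G$ is connected. When $G$ is disconnected, the rank is below $n-1$, so no set of $n-1$ edges is independent and both sides of the lemma are false. This matches the convention, used throughout the paper, that the matroids have rank $r$ and that packings consist of sets of size $kr$.
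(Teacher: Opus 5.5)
Your proof is correct and follows the paper's argument essentially verbatim. The forward direction takes $S$ to be the disjoint union of the trees' edge sets, and the converse reads each block $B_i$ as a basis of $M_{\mathrm{graph}}$, hence a spanning tree. Your extra remarks, that the $M_2$ partition is redundant and that bases have size $n-1$ only when $G$ is connected, are sound details the paper leaves implicit.
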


\begin{proof}
For the forward direction, suppose $G$ contains $k$ edge-disjoint spanning trees $T_1, \dots, T_k$. Let $S = \bigcup_{i=1}^k E(T_i)$. Because each $T_i$ is a spanning tree, its edge set $E(T_i)$ forms a basis in $M_{\mathrm{graph}}$. Since the trees are edge-disjoint, the sets $E(T_1), \dots, E(T_k)$ are pairwise disjoint and perfectly partition $S$. Because $M_1$ and $M_2$ are identical to $M_{\mathrm{graph}}$, each $E(T_i)$ is trivially a common basis of both matroids. Furthermore, the total size of $S$ is exactly $\sum_{i=1}^k (n-1) = k(n-1)$.

For the converse, suppose there exists a set $S \subseteq E$ of size $k(n-1)$ that can be partitioned into $k$ disjoint sets $B_1, \dots, B_k$, where each $B_i$ is a common basis of $M_1$ and $M_2$. Since $M_1 = M_{\mathrm{graph}}$, each set $B_i$ must be a basis of the graphic matroid, meaning the edges in $B_i$ form a spanning tree of $G$. Because the sets $B_1, \dots, B_k$ partition $S$, they share no edges and are pairwise disjoint. Therefore, the subgraphs induced by $B_1, \dots, B_k$ constitute exactly $k$ edge-disjoint spanning trees in $G$.
\end{proof}

\paragraph{Query Reduction}
An update set $U$ consists of at most $f$ operations (edge insertions, edge deletions, or cost changes). Because the columns of $A_{\mathrm{graph}}$ correspond directly to the edges of $G$, inserting, deleting, or re-weighting an edge modifies exactly one column in the representation matrix. Therefore, $f$ updates in $G$ translate directly to at most $\delta = f$ column updates. Unlike the arborescence problem, there is no dynamic root vertex, so no auxiliary query edges need to be processed.

\begin{theorem} \label{thm:sensitivity_k_spanning_trees}
Let $G = (V, E, \cost)$ be an $n$-vertex undirected graph with integer edge costs in $[-W, W]$, and let $k \ge 1$. There exist the following sensitivity oracles that, for any $U\in E^f$, answer queries on $G + U$ with high probability:
\begingroup
\setlength{\aboverulesep}{0pt}
\setlength{\belowrulesep}{0pt}
\begin{center}
\renewcommand{\arraystretch}{1.25}
\resizebox{0.98\linewidth}{!}{%
\begin{tabular}{@{}lllll@{}}
\rowcolor{yellow!15}
\textsc{Oracle} & \textsc{Prep.} & \textsc{Space} & \textsc{Query} & \textsc{Output} \\
\midrule
Decision &
$O((kn)^\omega)$ &
$O((kn)^2 \log n)$ &
$O((fk^2)^\omega)$ &
Whether $G + U$ has $k$ edge-disjoint spanning trees \\
Weighted &
$\widetilde{O}(W(kn)^3)$ &
$O(W(kn)^3 \log n)$ &
$\widetilde{O}(W kn (fk^2)^\omega)$ &
Min total cost of such $k$-spanning trees \\
\end{tabular}%
}
\end{center}
\endgroup
\end{theorem}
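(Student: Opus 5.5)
The plan is to instantiate the matroid packing oracles of \Cref{thm:sensitivity_packing} with $M_1 = M_2 = M_{\mathrm{graph}}$, and to use \Cref{lemma:k-spanning-trees-packing} to translate the packing answer into the spanning-tree answer.

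\textbf{Setup.} Since $G$ is undirected, I fix an arbitrary reference vertex $z \in V$ and take the reduced incidence matrix $A_{\mathrm{graph}} \in \{-1,0,1\}^{(n-1)\times m}$. Its rows are indexed by $V\setminus\{z\}$, and the column of an edge $\{u,v\}$ has entries $\pm1$ at rows $u$ and $v$ (orienting the edge arbitrarily). This is the standard linear representation of the graphic matroid. It has rank $r=n-1$ when $G$ is connected; if $G$ is disconnected, no spanning tree exists and the rank is smaller. To avoid issues with rank-deficient matroids I work throughout with this fixed $(n-1)$-row representation. Each column has at most $2$ non-zeros, so the matroid is sparse in the sense of \Cref{definition:sparse-matroid}.

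\textbf{Reduction to a determinant.} I set $A_1=A_2=A_{\mathrm{graph}}$, but draw the union representations $U_1,U_2$ with independent random diagonals $C_i$ as in \Cref{lemma:union_rep}. By \Cref{thm:k-fold-matroid_intersection}, $\det(U_1DU_2^T)$ is non-zero w.h.p.\ exactly when there is a set of size $k(n-1)$ independent in $M_{\mathrm{graph}}^{\vee k}$, and its minimum $Y$-degree minus $k(n-1)W$ gives the minimum weight of such a set. Such a set splits into $k$ disjoint spanning trees, and by \Cref{lemma:k-spanning-trees-packing} this is exactly the object we want. If the updated graph is disconnected, or lacks $k$ disjoint trees, then every $k(n-1)$-subset is dependent in the union. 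The determinant is then identically zero, and the expansion in \Cref{thm:k-fold-matroid_intersection} still shows this even though the rank hypothesis fails there. No SBO argument (\Cref{theorem:SBO-ind}) is needed, because both matroids coincide and the partition into bases is the same for both.

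\textbf{Updates and bounds.} An insertion, deletion, or cost change of an edge alters one column of $A_{\mathrm{graph}}$ (a deletion zeroes it) or one diagonal entry of $D$. Each such change touches an $O(k)\times O(k)$ block of the Laplacian, i.e.\ $O(k^2)$ entries. So $f$ updates give $\delta=f$ column updates and an additive perturbation $C$ with $O(fk^2)$ non-zeros. Plugging $r=n-1$, $\delta=f$, and $m\le$ the number of edges into \Cref{thm:sensitivity_packing} (through \Cref{lemma:BrandS-det-General-Q}, which permits a singular base Laplacian) gives:
\begin{itemize}
\item decision oracle: preprocessing $O(k^2m+(kn)^\omega)=O((kn)^\omega)$ and space $O((kn)^2\log n)$, using $m\le n^2$ and absorbing the $O(km)$ storage term;
\item weighted oracle: $\widetilde O(W(kn)^3)$ preprocessing and $O(W(kn)^3\log n)$ space;
\item query time $O((fk^2)^\omega)$ for decision and $\widetilde O(Wkn(fk^2)^\omega)$ for weighted.
\end{itemize}

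\textbf{Main obstacle.} The only delicate point is inserted edges: they are not columns of the preprocessed matrix. I handle them as column updates from a zero column, which adds a rank-one term $D_{e,e}(u_1)_e(u_2)_e^T$ to the Laplacian with $O(k^2)$ support, exactly as in the flow and arborescence reductions. Fresh random scalars for the new column (its $C_i$ entries and $x_e$) keep the Schwartz--Zippel guarantees of \Cref{lemma:union_rep} and \Cref{thm:k-fold-matroid_intersection} valid for the updated instance with high probability.
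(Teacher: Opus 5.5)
Your proposal is correct and follows the paper's proof: you instantiate the packing oracles of \Cref{thm:sensitivity_packing} with $M_1 = M_2 = M_{\mathrm{graph}}$, $r = n-1$ and $\delta \le f$ column updates, and use \Cref{lemma:k-spanning-trees-packing} to read off the spanning-tree answer. Your additional remarks fill in points the paper leaves implicit, and they are valid: rank deficiency when the graph is disconnected is harmless because of the Cauchy--Binet expansion, insertions are rank-one additions with $O(k^2)$ support drawn with fresh randomness, and no SBO argument is needed since $M_1 = M_2$.
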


\begin{proof}
We invoke the matroid packing oracles from \Cref{thm:sensitivity_packing} on $M_1 = M_{\mathrm{graph}}$ and $M_2 = M_{\mathrm{graph}}$. Both matroids have rank $r = n-1$ and their representation matrices are structurally column-sparse.

When a query $U$ arrives, we incorporate the $f$ structural updates. Because each edge update modifies at most one column in the base representations, the total number of base updates is $\delta \le f$. Substituting $r = n-1$ and the update bound $\delta = f$ directly into the complexity bounds of \Cref{thm:sensitivity_packing} yields the stated preprocessing, space, and query bounds.
\end{proof}

\subsection{Sensitivity Oracles for Colorful Spanning Trees}

We now apply our algebraic matroid intersection framework ($1$-packing) to the colorful spanning tree problem. Let $G=(V, E, \cost)$ be an undirected graph with $n=|V|$ vertices and integer costs in $[-W, W]$. The edge set $E$ is partitioned into exactly $n-1$ disjoint color classes, $E = C_1 \uplus C_2 \uplus \dots \uplus C_{n-1}$. A \emph{colorful spanning tree} is a spanning tree of $G$ containing exactly one edge from each color class.

\begin{definition}[Graphic and Color Partition Matroids] \label{def:colorful-matroids}
The matroids $M_{\mathrm{graph}}$ and $M_{\mathrm{color}}$ on the common ground set $E$, both of rank $r=n-1$, are defined as follows:
\begin{itemize}
\item $M_{\mathrm{graph}}$: A subset $S \subseteq E$ is independent if its edges do not form any cycles in $G$. We fix an arbitrary orientation of $G$ and represent $M_{\mathrm{graph}}$ over $\F$ by its truncated incidence matrix $A_{\mathrm{graph}} \in \{-1,0,1\}^{(n-1)\times |E|}$.
\item $M_{\mathrm{color}}$: A subset $S \subseteq E$ is independent if it contains at most one edge from each color class $C_i$. Its linear representation over $\F$ is an $(n-1) \times |E|$ matrix $A_{\mathrm{color}}$, where rows are indexed by the $n-1$ color classes. For any edge $e \in E$, its corresponding column has a $1$ at the row corresponding to its color class, and $0$ elsewhere.
\end{itemize}
\end{definition}

An undirected graph $G$ contains a colorful spanning tree if and only if there exists a set $S \subseteq E$ of size $n-1$ that forms a $1$-packing of common bases in $M_{\mathrm{graph}}$ and $M_{\mathrm{color}}$.

\paragraph{Query Reduction}
An update set $U$ consists of at most $f$ operations (edge insertions, deletions, or cost/color changes). Because the columns of the representation matrices correspond directly to edges, modifying an edge or its color translates to updating exactly one column in the matrices (specifically, swapping the incident vertices in $A_{\mathrm{graph}}$ and moving the $1$ between color rows in $A_{\mathrm{color}}$). Thus, $f$ updates in $G$ translate directly to $\delta \le f$ column updates.

\begin{theorem} \label{thm:sensitivity_colorful}
Let $G=(V, E, \cost)$ be an undirected colored graph. Then there exist the following sensitivity oracles that, given any set of updates $U$ consisting of at most $f$ edge insertions, deletions, or color changes, answer queries on $G + U$ with high probability:
\begingroup
\setlength{\aboverulesep}{0pt}
\setlength{\belowrulesep}{0pt}
\begin{center}
\renewcommand{\arraystretch}{1.25}
\resizebox{0.98\linewidth}{!}{%
\begin{tabular}{@{}lllll@{}}
\rowcolor{yellow!15}
\textsc{Oracle} & \textsc{Prep.} & \textsc{Space} & \textsc{Query} & \textsc{Output} \\
\midrule
Decision &
$O(n^\omega)$ &
$O(n^2 \log n)$ &
$O(f^\omega)$ &
Whether $G + U$ has a colorful spanning tree \\
Weighted &
$\widetilde{O}(Wn^3)$ &
$O(Wn^3 \log n)$ &
$\widetilde{O}(W n f^\omega)$ &
Min cost of a colorful spanning tree \\
\end{tabular}%
}
\end{center}
\endgroup
\end{theorem}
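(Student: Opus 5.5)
The plan is to instantiate \Cref{thm:sensitivity_packing} with $k=1$, $M_1 = M_{\mathrm{graph}}$ and $M_2 = M_{\mathrm{color}}$ from \Cref{def:colorful-matroids}, both of rank $r=n-1$ on the ground set $E$.

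First I will check the correctness of the reduction. For $k=1$ a $1$-packing set is just a common basis of the two matroids. A common basis $S$ has $n-1$ acyclic edges, so it is a spanning tree. It also contains at most one edge from each of the $n-1$ color classes, and since $|S|=n-1$ it has exactly one from each. Conversely, every colorful spanning tree is a common basis. Hence, by \Cref{thm:k-fold-matroid_intersection} with $k=1$, the determinant of $\L_\cap = A_{\mathrm{graph}} D A_{\mathrm{color}}^T$ is nonzero w.h.p.\ iff a colorful spanning tree exists, and its minimum $Y$-degree minus $(n-1)W$ is the minimum cost. Strong base-orderability is not needed here, because with $k=1$ no union is taken.

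Second, I will bound the update size. Both representations are column-sparse: a column of $A_{\mathrm{graph}}$ has at most two nonzeros and a column of $A_{\mathrm{color}}$ has one. An insertion, deletion, cost change or color change of edge $e$ alters only the $e$-th rank-one term $D_{e,e}(a_{\mathrm{graph}})_e (a_{\mathrm{color}})_e^T$, which has at most $2$ nonzeros. Removing the old term and adding the new one touches $O(1)$ entries, so $f$ updates give an additive perturbation $C$ with $O(f)$ nonzeros, i.e.\ $\delta=O(f)$. Insertions of edges not present initially need no preallocated columns, since the general oracle \Cref{lemma:BrandS-det-General-Q} acts directly on the $(n-1)\times(n-1)$ Laplacian. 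This matrix may be singular at preprocessing, which that lemma handles.

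Third, I will substitute into \Cref{thm:sensitivity_packing} with $k=1$, $r=n-1$, $m\le n^2$ and $\delta=O(f)$. The decision oracle then takes $O(m+n^\omega)=O(n^\omega)$ preprocessing, $O(m+n^2\log n)=O(n^2\log n)$ space and $O(f^\omega)$ query time. The weighted oracle takes $\widetilde O(Wn^3)$ preprocessing, $O(Wn^3\log n)$ space and $\widetilde O(Wnf^\omega)$ query time. The $O(km)$ term for storing the representations is absorbed, since $m\le n^2$ for simple graphs; for multigraphs I would store only the Laplacian and recompute a column from the edge's endpoints and color when it is needed.

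The only delicate point is that color changes are legitimate updates. They require that the number of color classes stays $n-1$, so that $M_{\mathrm{color}}$ keeps rank $r$ and its rows stay fixed. If a class becomes empty, the determinant correctly becomes zero.
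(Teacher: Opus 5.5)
Your proposal is correct and is essentially the paper's argument: instantiate \Cref{thm:sensitivity_packing} with $k=1$, $M_1=M_{\mathrm{graph}}$, $M_2=M_{\mathrm{color}}$, $r=n-1$ and $\delta=O(f)$, then read off the bounds. Your extra checks are details the paper leaves implicit: that a common basis is exactly a colorful spanning tree, that the $O(km)$ terms are absorbed because $m\le n^2$ for simple graphs, and that the $n-1$ color rows stay fixed under color changes. The multigraph aside is not needed for the statement, and as written it would also require the random weights $x_e$ to be reproducible when an edge is later deleted.
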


\begin{proof}
We apply \Cref{thm:sensitivity_packing} with $k=1$. Both matroids have rank $r = n-1$ and their representation matrices are column-sparse. Substituting $k=1$, $r = n-1$, and the base update bound $\delta = f$ into \Cref{thm:sensitivity_packing} yields the stated preprocessing, space, and query bounds.
\end{proof}

\subsection{Sensitivity Oracles for Packing Spanning Forests}
We now apply our algebraic matroid covering framework to the problem of packing spanning forests. Let $G = (V, E)$ be an undirected graph with $n$ vertices and $m$ edges. The \emph{forest packing problem} (also known as the arboricity problem) asks whether the entire active edge set $E$ can be partitioned into $k$ disjoint forests.

To utilize our matroid covering framework, we establish the direct equivalence between partitioning the edge set into $k$ forests and covering the ground set with $k$ independent sets of $M_{\mathrm{graph}}$.

\begin{lemma} \label{lemma:forest-covering-equivalence}
The edges of an undirected graph $G=(V, E)$ can be partitioned into $k$ forests if and only if the edge set $E$ is independent in the $k$-fold union matroid $M_{\mathrm{graph}}^{\vee k}$.
\end{lemma}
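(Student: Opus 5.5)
The plan is to unfold the definitions of the graphic matroid and of the $k$-fold union, and to prove the two implications separately. Recall that a subset $S\subseteq E$ is independent in $M_{\mathrm{graph}}$ exactly when its edges contain no cycle, i.e.\ when $(V,S)$ is a forest. Recall also that $M_{\mathrm{graph}}^{\vee k}$ is defined as the family of sets that admit a partition $S_1\uplus\cdots\uplus S_k$ with each $S_i\in\I(M_{\mathrm{graph}})$. Taking $S:=E$ in this definition, the two conditions in the lemma are literally the same statement. The proof therefore only needs to check that the notions match, including the handling of degenerate parts and of parallel edges.

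\textbf{Forward direction.} Suppose $E=F_1\uplus\cdots\uplus F_k$ with each $(V,F_i)$ a forest; some $F_i$ may be empty, which we allow. Each $F_i$ is acyclic in the underlying graph, so $F_i\in\I(M_{\mathrm{graph}})$. The sets are pairwise disjoint and cover $E$, so this is exactly a witness partition showing $E\in\I(M_{\mathrm{graph}}^{\vee k})$.

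\textbf{Converse direction.} Suppose $E$ is independent in $M_{\mathrm{graph}}^{\vee k}$, and fix a witness partition $E=S_1\uplus\cdots\uplus S_k$ with every $S_i$ independent in $M_{\mathrm{graph}}$. Each $S_i$ then contains no cycle, so $(V,S_i)$ is a forest, and the $S_i$ give the desired partition of $E$ into $k$ forests.

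\textbf{Main obstacle.} The only point needing care is multigraphs. Two parallel copies of an edge form a $2$-cycle, so they are dependent in $M_{\mathrm{graph}}$ and cannot lie in the same forest; a self-loop is dependent by itself and lies in no forest. This is consistent with \Cref{rem:parallel-elements-k-fold-union}: parallel copies must be placed in distinct blocks $S_i$, just as they must be placed in distinct forests. Hence the correspondence between forests and independent sets holds verbatim for multigraphs as well. No randomness enters this lemma. The representation statement that $U$ has full column rank with high probability when $E$ is $k$-coverable belongs to \Cref{lemma:union_rep} and \Cref{lemma:aug-linear-rep}, which are used afterwards when the lemma is combined with \Cref{thm:sensitivity_covering}.
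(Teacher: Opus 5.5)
Your proposal is correct and matches the paper's own proof: both simply unfold the definition of $M_{\mathrm{graph}}^{\vee k}$ with $S=E$ and use that independence in $M_{\mathrm{graph}}$ means acyclicity. Your remarks on parallel edges and self-loops are accurate; the paper does not spell them out.
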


\begin{proof}
By definition, the $k$-fold union matroid $M_{\mathrm{graph}}^{\vee k}$ defines a set $S \subseteq E$ as independent if and only if $S$ can be partitioned into $k$ disjoint sets $S_1 \uplus \dots \uplus S_k$ such that each $S_i$ is independent in the base matroid $M_{\mathrm{graph}}$. Since independence in $M_{\mathrm{graph}}$ means the edge set is a forest, setting $S = E$ implies that the entire graph can be partitioned into $k$ disjoint forests if and only if $E$ is an independent set in $M_{\mathrm{graph}}^{\vee k}$.
\end{proof}

Note that a necessary condition for a graph to be partitionable into $k$ forests is that the total number of edges satisfies $m \le k(n-1)$. If the number of active edges ever exceeds $k(n-1)$, the graph trivially cannot be partitioned into $k$ forests.

\paragraph{Query Reduction}
An update set $U$ consists of at most $f$ operations (edge insertions or deletions). Because the columns of $A_{\mathrm{graph}}$ correspond directly to the edges of $G$, inserting or deleting an edge modifies exactly one column in the representation matrix. Therefore, $f$ updates in $G$ translate to exactly $\delta \le f$ column updates.

\begin{theorem} \label{thm:sensitivity_k_forests}
Let $G = (V, E)$ be an undirected graph with $n$ vertices, and $k \ge 1$. There exists a sensitivity oracle that, given any set of updates $U$ consisting of at most $f$ edge insertions or deletions (such that the updated active edge count $m_{curr}$ remains bounded by $k(n-1)$), answers queries on $G + U$ with high probability:
\begingroup
\setlength{\aboverulesep}{0pt}
\setlength{\belowrulesep}{0pt}
\begin{center}
\renewcommand{\arraystretch}{1.25}
\resizebox{0.92\linewidth}{!}{%
\begin{tabular}{@{}lllll@{}}
\rowcolor{yellow!15}
\textsc{Oracle} & \textsc{Prep.} & \textsc{Space} & \textsc{Query} & \textsc{Output} \\
\midrule
Decision &
$O((kn)^\omega)$ &
$O((kn)^2 \log n)$ &
$O((kf)^\omega)$ &
Whether $G + U$ partitions into $k$ forests \\
\end{tabular}%
}
\end{center}
\endgroup
\end{theorem}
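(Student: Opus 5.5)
The plan is to apply \Cref{thm:sensitivity_covering} directly to the graphic matroid $M = M_{\mathrm{graph}}$ of $G$, using \Cref{lemma:forest-covering-equivalence} to translate the forest question into a covering question. Concretely, I take $A_{\mathrm{graph}} \in \{-1,0,1\}^{(n-1)\times m}$ to be the truncated incidence matrix of an arbitrary orientation of $G$, with one vertex row removed. This matrix represents $M_{\mathrm{graph}}$ over any field, so $r = n-1$, and it is $2$-column-sparse. By \Cref{lemma:forest-covering-equivalence}, $G+U$ partitions into $k$ forests if and only if its active edge set is independent in $M_{\mathrm{graph}}^{\vee k}$, and this is exactly the $k$-covering question answered by \Cref{thm:sensitivity_covering}. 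If $m_{curr} > k(n-1)$, the answer is trivially \emph{no}, which is why the statement assumes $m_{curr} \le k(n-1)$.

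Next I build the oracle. I form the union representation $U$ via \Cref{lemma:union_rep}, pad it with $kr - m$ random columns $R$, and preprocess $U_{aug} = [U \mid R]$ with \Cref{lemma:BrandS-det-General-Q}. This gives $O((kn)^\omega)$ preprocessing time and $O((kn)^2\log n)$ space after substituting $r = n-1$.

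The key step, and the one I expect to need care, is the number of updates. Each edge insertion or deletion changes exactly one column of $A_{\mathrm{graph}}$, hence $k$ column blocks of $U$, and because of $2$-sparsity only $O(k)$ entries of $U_{aug}$. So $f$ edge updates produce an additive perturbation $C$ with $O(kf)$ nonzeros. The determinant query then costs $O((kf)^\omega)$, and by \Cref{lemma:aug-linear-rep} a nonzero result certifies coverability with high probability.

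The subtle point is that insertions and deletions change the number of columns $m$, so the matrix stays square only if the number of random padding columns changes by the same amount. There are two ways to handle this. First, I can check that \Cref{thm:sensitivity_covering} already allows this: it treats a deletion as replacing the edge's column with a fresh random column, and an insertion as overwriting one padding column with the new edge's column. Either way, each operation changes one column, with $O(k)$ nonzeros from the union part and at most $O(kr)$ for a dense random column.

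That dense random column would break the $O(kf)$ count. So, second, I would instead reserve $f$ padding columns at preprocessing whose entries are themselves column-sparse, for example generated from a random sparse column that is scaled in the same union fashion. Alternatively I can use the toggling device from the proof of \Cref{thm:subset_sensitivity_covering}, keeping $2f$ switchable padding columns. In either version each toggle touches only $O(k)$ entries, so the total update size stays $O(kf)$.

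With the update size bounded by $O(kf)$, the query bound $O((kf)^\omega)$ follows from \Cref{thm:sensitivity_covering}, and correctness follows from \Cref{lemma:aug-linear-rep} and \Cref{lemma:forest-covering-equivalence}.
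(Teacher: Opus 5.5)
Your overall route is the paper's: apply \Cref{thm:sensitivity_covering} to $M_{\mathrm{graph}}$ with $r=n-1$, invoke \Cref{lemma:forest-covering-equivalence}, and charge one column change of $A_{\mathrm{graph}}$ per edge update. The paper's proof stops exactly there. It substitutes $\delta\le f$ and never discusses how the padding $R$ is resized when the number of active edges changes. Note also that \Cref{thm:sensitivity_covering} keeps $R$ fixed; it does not, as you say, replace a deleted column by a fresh random one. You are right that this resizing needs an argument, but both of your repairs fail, so your key step ``the total update size stays $O(kf)$'' is not established.

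\textbf{Sparse padding.} Sparse padding columns destroy the genericity of $R$ on which \Cref{lemma:aug-linear-rep} rests. Whenever the active columns already span every coordinate in the support of a sparse padding column, the square matrix is singular even though the active set is independent. For instance, take $k=1$ on vertices $z,u,v,w$, with root row $z$ removed (so $kr=3$), and current forest $\{(z,u),(z,v)\}$. If the single padding column is supported on rows $u,v$, it lies in $\mathrm{span}(e_u,e_v)$, so the oracle answers ``no'' for a forest.

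\textbf{Toggling.} The toggling device does not touch $O(k)$ entries either. It operates on the Laplacian $UDU^T$, where toggling a padding column $\rho$ adds the dense matrix $\pm\rho\rho^T$. Even toggling an edge adds $u_eu_e^T$ with $\Theta(k^2)$ nonzeros, where $u_e$ is its union column. So \Cref{lemma:BrandS-det-General-Q} would give at best $O((k^2f)^\omega)$. The $O(f^\omega)$ bound of \Cref{thm:subset_sensitivity_covering} comes from \Cref{lemma:det}, whose stored $|Q|\times|Q|$ matrix needs the susceptible set in advance. In the general model that set is all $\Theta(n^2)$ potential edges, costing $\Theta(n^4)$ space rather than $O((kn)^2\log n)$.

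\textbf{A repair that works.} Charge by rank rather than by nonzeros. Keep all padding columns dense and uniformly random. If the columns indexed by $J$ of an invertible $M$ are replaced by the columns of $A'$, then \Cref{lemma:matrix_det} applied to $M'=M+(A'-ME_J)E_J^T$ gives $\det(M')=\det(M)\det\bigl((M^{-1})_{J,\cdot}A'\bigr)$, where $E_J$ selects the columns in $J$.
\begin{itemize}
\item Each deletion replaces the edge's column by one of $f$ reserved random columns $\rho_i$, whose images $M^{-1}\rho_i$ are precomputed ($O(knf)$ extra space).
\item Each insertion overwrites a current padding column by the new edge's union column, which has only $2k$ nonzeros.
\end{itemize}
Then $|J|\le f$, and forming and evaluating the core matrix from the stored inverse costs $O(kf^2+f^\omega)\subseteq O((kf)^\omega)$. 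The padding columns of the final matrix are still independent and uniform, so \Cref{lemma:aug-linear-rep} still applies. A singular base matrix is handled by the random low-rank pad of \Cref{lemma:BrandS-det-General-Q}, together with the corresponding extra precomputed products.
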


\begin{proof}
We invoke the matroid covering oracle from \Cref{thm:sensitivity_covering} on the graphic matroid $M_{\mathrm{graph}}$. The rank of the matroid is $r = n-1$.

When a query $U$ arrives, we incorporate the $f$ structural edge updates. Because each edge update inserts or deletes exactly one column in the base representation matrix $A_{\mathrm{graph}}$, the total number of column updates is $\delta \le f$. We substitute the rank $r = n-1$ and the update bound $\delta = f$ directly into the complexity bounds of \Cref{thm:sensitivity_covering}. The preprocessing time is $O((k(n-1))^\omega) = O((kn)^\omega)$, the space is $O((kn)^2 \log n)$, and the query time required to re-evaluate the determinant of the augmented matrix $U'_{aug}$ is $O((\delta k)^\omega) = O((fk)^\omega)$. By \Cref{lemma:forest-covering-equivalence}, this determinant is non-zero if and only if the updated graph can be partitioned into $k$ forests.
\end{proof}

\section{Sensitivity Oracles for Matroid Parity}
\label{sec:matroid_parity}

In this section, we study the \emph{Matroid Parity} (also known as \emph{Matroid Matching}) problem, for linear matroids. The matroid parity problem was introduced by Lawler~\cite{Lawler76} as a common generalization of two fundamental combinatorial optimization problems: Matroid Intersection and Graph Matching. Although the matroid parity problem is intractable for general matroids~\cite{JensenK82, Lovasz80}, it admits a polynomial time algorithm if the underlying matroid is linear~\cite{Lovasz80}. 

\subsection{Algebraic Formulation and Enumeration}

Let $M$ be a linear matroid of (even) rank $r$ over a field $\F$, defined on a ground set of $m$ disjoint pairs of elements $E = \{(a_1, b_1), \dots, (a_m, b_m)\}$, and let $\wt : E \to [-W, W]$ be a weight function (the weights are associated with the pairs in $E$). The Linear Matroid Parity (LMP) problem asks whether there exists a selection of exactly $r/2$ pairs such that the union of all $r$ selected elements forms a basis of $M$. Such a valid selection is called a \emph{parity basis}. The \emph{weighted} version asks for the minimum total weight of a parity basis, if one exists.   

Let $A, B \in \F^{r \times m}$ be the representation matrices where the $i$-th column of $A$ corresponds to $a_i$ and the $i$-th column of $B$ corresponds to $b_i$. Let $D \in \F_{2W}[Y]^{m \times m}$ be a diagonal matrix with $D_{i,i} = x_i Y^{W + \wt(i)}$, where $x_i$ are independent uniformly random variables. 

We define the $r \times r$ skew-symmetric LMP matrix as:
$$
\mathcal{M}_{LMP} := A D B^T - B D A^T
$$

\begin{theorem}[Enumerating Parity Bases~\cite{Lovasz80, CameriniGM92}]
\label{thm:parity_enumeration}
If the linear matroid $M$ contains at least one parity basis, then the determinant of $\mathcal{M}_{LMP}$ enumerates the parity bases of $M$. Specifically,
$$
\det(\mathcal{M}_{LMP}) = Y^{rW} \left( \sum_{\substack{S \subseteq [m] \\ |S| = r/2}} \det(M_S) \Bigl( \prod_{i \in S} x_i \Bigr) Y^{\wt(S)} \right)^2
$$
where $M_S$ is the $r \times r$ submatrix formed by interleaving the columns $a_i$ and $b_i$ for all $i \in S$. Furthermore, with high probability, the minimum degree of $Y$ in $\det(\mathcal{M}_{LMP})$ minus $rW$ equals exactly twice the weight of the minimum-weight parity basis.
\end{theorem}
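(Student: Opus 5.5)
The determinant identity comes from the Pfaffian, and the weight claim from isolating the lowest-degree coefficient with Schwartz--Zippel, as in the proof of \Cref{thm:k-fold-matroid_intersection}.

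\textbf{Step 1: rank-two decomposition.} Write $\mathcal{M}_{LMP}=\sum_{i=1}^m D_{i,i}\,(a_i b_i^T - b_i a_i^T)$, where $a_i,b_i$ are the columns of $A,B$. Each summand is a skew-symmetric matrix of rank at most two. Equivalently, let $\Lambda_i=\begin{pmatrix}0 & D_{i,i}\\ -D_{i,i} & 0\end{pmatrix}$, and let $P=[a_1\,b_1\,\cdots\,a_m\,b_m]$ be the $r\times 2m$ interleaved matrix. Then
\[
\mathcal{M}_{LMP}=P\,\mathrm{diag}(\Lambda_1,\dots,\Lambda_m)\,P^T .
\]

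\textbf{Step 2: Pfaffian expansion.} Since $\mathcal{M}_{LMP}$ is skew-symmetric of even order $r$, we have $\det(\mathcal{M}_{LMP})=\mathrm{Pf}(\mathcal{M}_{LMP})^2$.

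I would then invoke the Pfaffian analogue of Cauchy--Binet: for $r\times 2m$ matrix $P$ and skew-symmetric $\Lambda$ of order $2m$, $\mathrm{Pf}(P\Lambda P^T)=\sum_{|T|=r}\det(P_T)\,\mathrm{Pf}(\Lambda_{T,T})$. Because $\Lambda$ is block diagonal with $2\times 2$ blocks, $\mathrm{Pf}(\Lambda_{T,T})$ vanishes unless $T$ is a union of whole pairs $\{a_i,b_i\}$. When $T$ is such a union indexed by $S$ with $|S|=r/2$, it equals $\prod_{i\in S}D_{i,i}$.

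Substituting $D_{i,i}=x_iY^{W+\wt(i)}$ gives $\mathrm{Pf}(\mathcal{M}_{LMP})=Y^{rW/2}\sum_{|S|=r/2}\det(M_S)\prod_{i\in S}x_i\,Y^{\wt(S)}$. Squaring yields the displayed formula, including the factor $Y^{rW}$.

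The main obstacle is justifying the Pfaffian Cauchy--Binet identity, which the paper has not stated. My first attempt would be the exterior-algebra route. Write $\mathrm{Pf}$ of a skew form $\sum_i x_i\,a_i\wedge b_i$ as the coefficient of $e_1\wedge\cdots\wedge e_r$ in $\frac{1}{(r/2)!}\bigl(\sum_i x_i\,a_i\wedge b_i\bigr)^{\wedge r/2}$. Expanding the power multinomially, repeated indices vanish because $(a_i\wedge b_i)^{\wedge 2}=0$. The $(r/2)!$ orderings of each $S$ cancel the prefactor, and $\bigwedge_{i\in S}(a_i\wedge b_i)=\det(M_S)\,e_1\wedge\cdots\wedge e_r$ with the interleaved column order. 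This argument also settles the signs.

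\textbf{Step 3: weight.} Set $P(x,Y)=\sum_S \det(M_S)\prod_{i\in S}x_iY^{\wt(S)}$. Let $w^*$ be the minimum weight of a parity basis; a parity basis exists by hypothesis. The coefficient of $Y^{w^*}$ in $P$ is $\sum_{S:\wt(S)=w^*}\det(M_S)\prod_{i\in S}x_i$. Parity bases are exactly the $S$ with $\det(M_S)\neq 0$, and distinct $S$ give distinct multilinear monomials, so this coefficient is a nonzero polynomial of degree $r/2$. By \Cref{lemma:SZ} it evaluates to a nonzero value with probability at least $1-r/(2|\F|)$.

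No exponent below $w^*$ survives, since every such coefficient is identically zero. Hence the minimum $Y$-degree of $P^2$ is exactly $2w^*$: the lowest coefficient of $P^2$ is the square of the lowest coefficient of $P$, which is nonzero over a field. Adding the prefactor shows that the minimum $Y$-degree of $\det(\mathcal{M}_{LMP})$, minus $rW$, equals $2w^*$ with high probability.
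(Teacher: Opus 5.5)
Your proposal is correct and follows the same route as the paper. It uses $\det(\mathcal{M}_{LMP})=\mathrm{Pf}(\mathcal{M}_{LMP})^2$, expands the Pfaffian of $\sum_i D_{i,i}(a_ib_i^T-b_ia_i^T)$ as $\sum_{|S|=r/2}\det(M_S)\prod_{i\in S}D_{i,i}$, and applies Schwartz--Zippel to the lowest $Y$-coefficient, whose exponent doubles on squaring. You in fact justify two things the paper only asserts: the Pfaffian expansion itself, and why the lowest coefficient of the square is nonzero.

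One small caveat: the $\frac{1}{(r/2)!}$ normalization in your exterior-algebra argument needs $(r/2)!$ to be invertible in $\F$, which fails in small characteristic. Since the expansion is a polynomial identity with integer coefficients in the entries, proving it over $\mathbb{Q}$ settles it for every field; alternatively, cite the Ishikawa--Wakayama minor summation formula, which holds over any commutative ring.
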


\begin{proof}
Because $\mathcal{M}_{LMP}$ is a skew-symmetric matrix of even dimension $r$, its determinant is the square of a polynomial known as its Pfaffian, i.e., $\det(\mathcal{M}_{LMP}) = \text{Pf}(\mathcal{M}_{LMP})^2$.

The Pfaffian of the matrix $\sum_{i=1}^m x_i Y^{W + \wt(i)} (a_i b_i^T - b_i a_i^T)$ can be explicitly expanded as the sum of the determinants of all $r \times r$ minors generated by subsets of size $r/2$. Specifically, 
$$
\text{Pf}(\mathcal{M}_{LMP}) = \sum_{\substack{S \subseteq [m] \\ |S| = r/2}} \det(M_S) \prod_{i \in S} \left( x_i Y^{W + \wt(a_i,b_i)} \right)
$$
where $M_S$ is the $r \times r$ matrix whose columns are exactly the pairs $\{a_i, b_i\}$ for $i \in S$. 

Observe that $\det(M_S) \neq 0$ if and only if the columns of $M_S$ are linearly independent over $\F$. Because $M_S$ contains exactly $r$ columns corresponding to $r/2$ pairs, this condition holds if and only if $S$ is a valid parity basis. Factoring out the base weight $W$ from the exponent yields:
$$
\text{Pf}(\mathcal{M}_{LMP}) = Y^{\frac{r}{2}W} \sum_{\substack{S \text{ is a} \\ \text{parity basis}}} \pm\det(M_S) \Bigl( \prod_{i \in S} x_i \Bigr) Y^{\wt(S)}
$$
Squaring the Pfaffian to obtain the determinant directly produces the formula stated in the theorem. 

Let $S^*$ be an optimal basis. 
Then the coefficient of $Y^{\frac{r}{2} W + \wt(S^*)}$ in $\mathrm{Pf}(\mathcal{M}_{LMP})$
is a polynomial of degree $r/2$ in the variables $x_i$; it is not identically
zero since the monomials $\prod_{i\in S} x_i$ of distinct minimum-weight parity
bases $S$ are distinct. By the Schwartz--Zippel lemma, its evaluation at independent uniform $x_i$ from a field of size $\Omega(\mathrm{poly}(m))$
is nonzero with high probability.
Since the Pfaffian is squared, the term corresponding to  optimal basis contributes a $Y$-exponent of $2(\frac{r}{2}W + \wt(S^*)) = rW + 2\wt(S^*)$. Therefore, isolating the smallest exponent of $Y$ in $\det(\mathcal{M}_{LMP})$ and subtracting $rW$ yields exactly $2\wt(S^*)$.
\end{proof}

\subsection{Sensitivity Oracles for Matroid Parity}

\begin{theorem}[Matroid Parity General Sensitivity Oracle]
\label{thm:parity_general_sensitivity}
Let $E$ be a set of $m$ pairs spanning a $c$-sparse linear matroid of rank $r$. There exist randomized sensitivity oracles that, given $f$ pair updates, answer parity basis queries with high probability:
\begin{enumerate}
\item \textbf{Decision Oracle:} Decides if a parity basis exists. This oracle requires $O(c^2m + r^\omega)$ preprocessing time, $O(r^2 \log (r))$ space, and answers queries in $O((c^2f)^\omega)$ time.
\item \textbf{Weighted Oracle:} Returns the exact minimum weight of a parity basis. This oracle requires $\widetilde{O}(c^2m + W r^3)$ preprocessing time, $O(W r^3 \log r)$ space, and answers queries in $\widetilde{O}(W \cdot r \cdot (c^2f)^\omega)$ time.
\end{enumerate}
\end{theorem}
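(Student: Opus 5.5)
The plan mirrors the proof of \Cref{thm:sensitivity_packing}, with the mixed Laplacian replaced by the skew-symmetric matrix $\mathcal{M}_{LMP}$ and \Cref{thm:k-fold-matroid_intersection} replaced by \Cref{thm:parity_enumeration}. First, write
\[
\mathcal{M}_{LMP} \;=\; \sum_{i=1}^{m} x_i Y^{W+\wt(i)}\bigl(a_i b_i^T - b_i a_i^T\bigr),
\]
where $a_i, b_i$ are the $i$-th columns of $A$ and $B$. Since the matroid is $c$-sparse, each $a_i$ and $b_i$ has at most $c$ non-zeros. Hence the $i$-th rank-two term is supported on an $O(c)\times O(c)$ block and has $O(c^2)$ non-zero entries. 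Building $\mathcal{M}_{LMP}$ therefore takes $O(c^2 m)$ time, given $A$ and $B$ in sparse column format.

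Next, consider a single pair update: an insertion, a deletion, a change of the vectors $a_i, b_i$, or a weight change. Such an update replaces exactly one summand by another. The additive change $C_i$ to $\mathcal{M}_{LMP}$ is therefore the difference of two such terms, which has $O(c^2)$ non-zeros. After $f$ updates, the updated matrix is $\mathcal{M}'_{LMP} = \mathcal{M}_{LMP} + C$, where $C$ has $\delta = O(c^2 f)$ non-zero entries. Insertions can be handled in either of two ways. One option is to reserve placeholder columns. The simpler option is to add the new term directly, since the representation lives in $\F^r$ and new columns only contribute additive terms. The updated matrix still has the form $A'D'B'^T - B'D'A'^T$. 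So \Cref{thm:parity_enumeration} applies to it, with fresh random $x_i$ for the inserted pairs, which are drawn at query time or pre-drawn.

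Then apply the generalized Brand–Saranurak structure, \Cref{lemma:BrandS-det-General-Q}, which tolerates a singular base matrix. For the decision oracle, set $Y=1$ ($d=0$). This gives $O(r^\omega)$ preprocessing on top of the $O(c^2 m)$ needed to build the matrix, $O(r^2\log r)$ space, and $O(\delta^\omega) = O((c^2 f)^\omega)$ query time. By \Cref{thm:parity_enumeration} and Schwartz–Zippel, the determinant is non-zero w.h.p.\ iff a parity basis exists. Concretely, the Pfaffian is a non-zero polynomial in the $x_i$ exactly when some $\det(M_S)\neq 0$. For the weighted oracle, keep $Y$ with degree bound $d=2W$. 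This gives $\widetilde O(c^2 m + Wr^3)$ preprocessing, $O(Wr^3\log r)$ space, and $\widetilde O(W r (c^2 f)^\omega)$ query time. The answer is read off as (minimum $Y$-degree $-\, rW$)$/2$.

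The main subtle point is the probabilistic correctness after updates. The randomness in the $x_i$ and in the padding of \Cref{lemma:BrandS-det-General-Q} must not depend on the query. This holds because queries are oblivious, and each query's determinant is a fixed non-zero polynomial of degree $O(r)$ in the random values. Schwartz–Zippel over a field of size $\poly(m)$ then gives high probability per query.
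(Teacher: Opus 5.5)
Your proposal is correct and follows the paper's proof essentially step for step. You write $\mathcal{M}_{LMP}$ as a sum of rank-two terms with $O(c^2)$ non-zeros each, so $f$ pair updates give an $O(c^2 f)$-sparse additive update; you then invoke \Cref{lemma:BrandS-det-General-Q} with $d=0$ for the decision oracle and $d=O(W)$ for the weighted one, reading off the weight via \Cref{thm:parity_enumeration}, and your remarks on insertions and on query-oblivious randomness only make explicit details the paper leaves implicit.
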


\begin{proof}
The proof is similar to that of~\Cref{thm:sensitivity_packing} and hence, we discuss it in brief. We first apply the preprocessing algorithm from \Cref{lemma:BrandS-det-General-Q} to the $r \times r$ Parity Laplacian. 

For the unweighted decision oracle, we evaluate the matrix at $Y=1$ (maximum degree $d=0$). The preprocessing requires $O(c^2m + r^\omega)$ time and $O(r^2 \log r)$ space. For the weighted oracle, the matrix entries are polynomials of maximum degree $d = O(W)$, requiring $\widetilde{O}(c^2m + W r^3)$ preprocessing time and $O(W r^3 \log r)$ space. 

When a query containing $f$ pair updates arrives, since the matroid is $c$-sparse, the modifications translate to an additive update matrix containing at most $2c^2f$ non-zero entries. By \Cref{lemma:BrandS-det-General-Q}, re-evaluating the determinant takes $O((2c^2f)^\omega) = O((c^2f)^\omega)$ time for the scalar case, and $\widetilde{O}(W r (2c^2f)^\omega) = \widetilde{O}(W r \cdot (c^2f)^\omega)$ time for the polynomial case.
\end{proof}


\section{$\alpha$-Factors and Matchings}
\label{sec:alpha-factors}

Let $G = (V, E, \cost)$ be an undirected graph with $n$ vertices and $m$ edges, where edge costs are integers in the range $[-W, W]$. Let $\alpha: V \to \mathbb{Z}^+$ be a function assigning a degree bound to each vertex in $V$. Following the terminology of~\cite{GabowS21_II}, we define a (perfect) {\bf $\alpha$-factor} as a subgraph $H$ of $G$ such that every vertex $v \in V$ has degree exactly $\alpha(v)$ in $H$.
\footnote{Note that our results carry over to (maximum) $\alpha$-factor as well. The (maximum) $\alpha$-factor considers $\alpha(v)$ as an upper bound on the desired degree of $v$ and the objective is to maximise the number of edges picked. We provide its details in the appendix.}

For convenience, we only consider the (perfect) $\alpha$-factor and refer to it throughout as just $\alpha$-factor. The \emph{minimum-cost} $\alpha$-factor is the $\alpha$-factor with minimum total edge cost, if it exists. 
Throughout this section, we refer to the set of edges incident to a vertex $v$ in graph $G$ as $\delta_G(v)$. We omit the subscript $G$ if the graph $G$ is clear from the context.

\subsection{$\alpha$-Factors as Linear Matroid Parity}

In this subsection, we study the \textit{$k$-bounded $\alpha$-factor problem}, where we assume that the maximum degree constraint is bounded by a parameter $k$, i.e., $\max_{v \in V} \alpha(v) \le k$. 
The \textit{$k$-bounded $\alpha$-factor problem} can be formulated as a linear matroid parity problem over a $k$-sparse linear matroid $M$ having $m$ disjoint pairs of elements. Let $N = \sum_{v \in V} \alpha(v)$ denote the sum of all degree constraints. We assume that $N \leq 2m$ and is even, otherwise an $\alpha$-factor does not exist. 

Furthermore, we assume that the vertex set $V$ has some fixed total ordering, i.e., for any two vertices $a,b \in V$, either $a < b$ or $a > b$.
Since the graph is undirected, we assume that every edge is written as an ordered pair according to the total ordering over $V$, i.e., an edge $e$ in $G$ is written as $(a,b)$, where $a < b$, and $(b, a)$ otherwise.   

Define $\V$ as a set where each vertex $v \in V$ has $\alpha(v)$ distinct copies, and $v_j \ (j \in [\alpha(v)])$ refers to the $j^{th}$ copy of vertex $v$.

Now, consider the following matrix representation matrix $A$ for a matroid $M$: $A$ is a $N \times 2m$ polynomial matrix over the variables $\{x_{u_k, e}\}_{u_k \in \V, e \in \delta(u) }$   whose rows are indexed by $\V$ and the column pairs by $E$. For $e = (u, v)$, the two columns corresponding to $e$ are $(e_{u}, e_{v})$ respectively. For every edge $e = (u, v) \in E$, set the following entries in $A$:

$$A[u_k, e_{u}] = x_{u_k, e}\ ; \  A[v_l, e_{v}] = x_{v_l, e} \ \forall k \in [\alpha(u)], l \in [\alpha(v)].$$
Rest of the entries are set to 0. It can be seen that there are no repeated non-zero entries in $A$.
Now, we show the following:

\begin{lemma}
    \label{lemma:factor-to-parity}
    A graph $G$ contains an $\alpha$-factor if and only if the linear matroid $M$ represented by $A$ admits a parity base. Furthermore, there is a bijection between the $\alpha$-factors of $G$ and the parity bases of $M$.
\end{lemma}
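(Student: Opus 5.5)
The plan is to identify each candidate parity base with an edge subset $F \subseteq E$ and show that the corresponding $N \times N$ submatrix of $A$ is, up to permuting rows and columns, block diagonal with one block per vertex. Choosing a set of pairs is the same as choosing a set $F$ of edges. For the selected columns to form a basis of the rank-$N$ matroid $M$ we need exactly $N$ columns, hence $|F| = N/2$. Let $A_F$ be the $N\times N$ submatrix with columns $\{e_u, e_v : e=(u,v)\in F\}$. The map $F \mapsto \{(e_u,e_v) : e \in F\}$ is trivially injective. So the whole lemma, including the bijection, reduces to one claim: for $|F| = N/2$, $\det(A_F) \neq 0$ if and only if $\deg_F(v) = \alpha(v)$ for every $v \in V$. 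The edge sets satisfying this degree condition are exactly the $\alpha$-factors of $G$.

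The key structural observation is about column supports. The column $e_u$ has non-zeros only in the rows $u_1,\dots,u_{\alpha(u)}$, that is, in the row block $R_u$ of copies of $u$. The same holds for $e_v$ and $R_v$. Since edges are written as $(a,b)$ with $a<b$, the graph has no self-loops, so every column of $A$ is supported inside a single vertex block. Group the columns of $A_F$ by the vertex whose block they touch. The columns touching $R_v$ are exactly $\{e_v : e \in F \cap \delta(v)\}$, and there are $\deg_F(v)$ of them. After permuting rows and columns, $A_F$ is therefore block diagonal, with the block for $v$ of size $\alpha(v)\times \deg_F(v)$ and entries $x_{v_j,e}$.

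Next I handle the two directions. Suppose $\deg_F(v)=\alpha(v)$ for all $v$. Then every block is square, and $\det(A_F) = \pm\prod_v \det(A_{R_v,\,F\cap\delta(v)})$. Each block is an $\alpha(v)\times\alpha(v)$ matrix whose entries are distinct indeterminates (the paper noted there are no repeated non-zero entries). Its determinant is a sum of distinct monomials, hence a non-zero polynomial, so $\det(A_F)\neq 0$. Conversely, suppose some $v$ has $\deg_F(v)\ne\alpha(v)$. Because $\sum_v \deg_F(v) = 2|F| = N = \sum_v \alpha(v)$, some vertex $w$ must have $\deg_F(w) > \alpha(w)$. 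Those $\deg_F(w)$ columns all live in only $\alpha(w)$ rows, so they are linearly dependent and $\det(A_F)=0$.

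Combining the two directions, the parity bases of $M$ are exactly the images of the $\alpha$-factors under the injective map above, which gives both the equivalence and the bijection. The only delicate point is the field over which $M$ is defined. I will state the lemma over the rational function field $\F(\{x_{u_k,e}\})$, where the argument is exact. The later algorithmic use substitutes random field values; there, Schwartz--Zippel (\Cref{lemma:SZ}) shows that each relevant non-zero block determinant stays non-zero with high probability, while the zero determinants are zero identically. I expect this to be the main obstacle only in bookkeeping, not in substance: the clean statement should be the indeterminate version, with the high-probability version deferred to where the matrix is evaluated.
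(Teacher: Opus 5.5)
Your proposal is correct and matches the paper's proof. The reverse direction is the same counting argument, which forces a vertex $w$ with $\deg_F(w) > \alpha(w)$ whose columns are dependent because they are supported on only the rows $w_1,\dots,w_{\alpha(w)}$. For the forward direction the paper argues that, since each indeterminate occurs once in $A$, the monomial of a copy-to-column matching cannot cancel in the Leibniz expansion; your block-diagonal factorization into dense $\alpha(v)\times\alpha(v)$ blocks of distinct indeterminates is a cleaner packaging of the same observation.
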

\begin{proof}
We first prove the forward direction. Let $H \subseteq E$ be an $\alpha$-factor of $G$. It can be seen that $|H| = N/2$ and hence, $H$ induces an $N \times N$ submatrix $A[H]$, formed by selecting the column pairs of $A$ corresponding to $H$. Since $|\delta_H(v)| = \alpha(v)$ for all $v \in V$, we can define a perfect matching $\pi$ between the set $\V$ and the columns of $H$ by mapping each copy $v_j \in \V$ to a unique column $e_v$ for some $e \in H \cap \delta_H(v)$.
Since each indeterminate appears exactly once in the entire matrix $A$, the monomial $\prod_{v_j \in \V}A_{v_j, \pi(v_j)}$ in the Leibniz expansion of $\det(A[H])$ does not get cancelled. Thus, $\det(A[H]) \neq 0$, meaning the pairs corresponding to $H$ form a parity base.
    
For the other direction, assume there exists a parity base of $M$ corresponding to the edge set $H \subseteq E$. Note that since $\text{rank}(M) = N$, we have that $|H| = N / 2$. We now show by contradiction that for all $v \in V$, $|\delta_H(v)| = \alpha(v)$. Suppose this is not true. Then, there exists a vertex $w \in V$ such that $|\delta_H(w)| \neq \alpha(w)$. Since $|H| = N/2$ and $\sum_{v \in V} \alpha(v) = N$, we can assume without loss of generality that $|\delta_H(w)| > \alpha(w)$. Define the set $C$ as the collection of columns of $A$ indexed by the set of edges $\{e_w \mid e \in \delta_H(w)\}$. Since $|\delta_H(w)| > \alpha(w)$, we have that $|C| > \alpha(w)$. By the definition of $A$, there exist at most $\alpha(w)$ rows (corresponding to the copies $\{w_1, w_2, \dots, w_{\alpha(w)}\}$) that have non-zero entries for the column vectors in $C$. Hence, the set of columns in $C$ can span a subspace of rank at most $\alpha(w)$. Since $|C| > \alpha(w)$, the columns in $C$ are linearly dependent. This implies $A[H]$ is singular, and therefore the pairs corresponding to $H$ do not form a parity basis, which is a contradiction.
\end{proof}

Since the edges of $G$ correspond to the pairs in $M$, the following corollary of~\Cref{lemma:factor-to-parity} is immediate in the weighted setting:

\begin{corollary}
    \label{corollary:factor-to-parity}
    The cost of the minimum cost $\alpha$-factor of $G$ is equal to the weight of minimum-weight parity basis of $M$, if they exist.
\end{corollary}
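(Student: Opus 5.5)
The corollary should follow directly from the bijection in \Cref{lemma:factor-to-parity}, once we check that this bijection preserves weight. The weighted parity instance uses pair weights $\wt(e) := \cost(e)$ for each edge $e \in E$, since each pair of $M$ is the column pair $(e_u, e_v)$ of a unique edge $e = (u,v)$. The steps are as follows.

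First, recall that \Cref{lemma:factor-to-parity} identifies $\alpha$-factors $H \subseteq E$ with parity bases of $M$ via $H \mapsto \{(e_u,e_v) : e \in H\}$. The forward direction of that proof shows the image is a parity base. The converse shows every parity base arises from an edge set $H$ with $|\delta_H(v)| = \alpha(v)$ for all $v$, so this map is onto. It is injective because distinct edge sets give distinct pair sets.

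Second, compare weights. The weight of the parity basis associated to $H$ is $\sum_{e \in H} \wt(e) = \sum_{e\in H} \cost(e) = \cost(H)$. A weight-preserving bijection between the two feasible families sends a minimizer to a minimizer. Therefore the two families are nonempty simultaneously, and their minimum values agree.

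The only delicate point is the converse direction of the lemma. It is stated for the symbolic matrix $A$, but applied with random field values substituted for the indeterminates $x_{u_k,e}$. I would handle this by arguing over the polynomial matrix $A$, which is exactly what the lemma proves, so that the corollary is a purely combinatorial statement about $M$. The passage to random evaluations is deferred to the Schwartz–Zippel argument of \Cref{thm:parity_enumeration}, which holds with high probability. No further obstacle is expected.
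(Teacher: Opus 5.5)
Your proposal is correct and follows the paper, which likewise treats the corollary as immediate from the bijection of \Cref{lemma:factor-to-parity}: each edge $e$ corresponds to the pair $(e_u,e_v)$ with weight $\wt(e)=\cost(e)$, so the bijection preserves weight and the minima coincide. One small correction to your side remark: it is the forward direction, $\det(A[H])\neq 0$, that needs generic entries, while the converse (the column-support argument) holds for any substituted values; after random substitution it then suffices that one fixed min-cost $\alpha$-factor $H^*$ survives, which is a direct Schwartz--Zippel bound on $\det(A[H^*])$ and not the bound in \Cref{thm:parity_enumeration} (that one concerns the variables $x_i$ in $D$).
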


\subsection{Sensitivity Oracles for $\alpha$-Factors and Matchings}
\label{sec:sensitivity-alpha-factors}

In this subsection we extend the algebraic formulation of $\alpha$-factors to support edge updates. For any set of $f$ edge insertions, deletions, or cost changes, the oracle reports whether an $\alpha$-factor exists in the updated graph and, in the weighted setting, the weight of a minimum-cost one. We consider two cases. First, in \Cref{sec:bounded-alpha}, we assume the degree bounds are $k$-bounded, i.e.\ $\max_v \alpha(v) \le k$, and work directly with the sensitivity oracle for $c$-sparse linear matroid parity. Then, in \Cref{sec:arbitrary-alpha}, we remove this assumption and handle arbitrary $\alpha$ by reducing an update on $G$ to an update on an \emph{auxiliary graph} whose size is governed by $f$ rather than by $\max_v \alpha(v)$.
Throughout, we let $N = \sum_v \alpha(v)$. Furthermore, we instantiate the indeterminates in linear representation of $\alpha$-factor as uniformly random samples from a sufficiently large finite field, and our results would hold with high probability due to the Schwartz-Zippel lemma.

\subsubsection{Warm Up: The $k$-Bounded Case}
\label{sec:bounded-alpha}

From~\Cref{corollary:factor-to-parity} and \Cref{thm:parity_general_sensitivity}, the following result is immediate:

\begin{theorem}
\label{thm:sensitivity-bounded-alpha}
Let $G=(V,E,\cost)$ be an undirected graph with integer edge costs in $[-W,W]$ and degree bounds $\alpha \le k$. Then there exist the following sensitivity oracles that, for any update set $U$ with $f = |U|$, answer queries on $G+U$ with high probability:

\begingroup
\setlength{\aboverulesep}{0pt}
\setlength{\belowrulesep}{0pt}
\begin{center}
\renewcommand{\arraystretch}{1.25}
\resizebox{0.98\linewidth}{!}{%
\begin{tabular}{@{}l l l l l@{}}
\rowcolor{yellow!15}
\textsc{Oracle} & \textsc{Prep.} & \textsc{Space} & \textsc{Query} & \textsc{Output} \\
\midrule
Existence &
$O((kn)^\omega)$ &
$O((kn)^2 \log n)$ &
$O(f^\omega k^{2\omega})$ &
Whether $G+U$ has an $\alpha$-factor \\
Min-Cost &
$\widetilde{O}(W(kn)^3)$ &
$O(W(kn)^3 \log n)$ &
$\widetilde{O}(W n f^\omega k^{2\omega+1})$ &
Weight of a min-cost $\alpha$-factor of $G+U$ \\
\end{tabular}%
}
\end{center}
\endgroup
\end{theorem}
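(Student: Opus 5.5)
The plan is to instantiate \Cref{thm:parity_general_sensitivity} on the parity instance built from $G$ in \Cref{lemma:factor-to-parity}, and then read off the parameters. Recall that the representation matrix $A$ has $N=\sum_v\alpha(v)\le kn$ rows, so the matroid has rank $r=N\le kn$. For an edge $e=(u,v)$, the column $e_u$ has nonzeros only in the $\alpha(u)\le k$ rows $u_1,\dots,u_{\alpha(u)}$, and similarly for $e_v$. So $M$ is a $c$-sparse linear matroid with $c=k$.

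\Cref{thm:parity_enumeration} requires a skew-symmetric $r\times r$ matrix. We take $A_{\mathrm{a}}$ and $A_{\mathrm{b}}$ to be the $N\times m$ matrices collecting the columns $e_u$ and $e_v$ respectively, with the pairs ordered by the fixed total order on $V$, and we set
\[
\mathcal{M}_{LMP}=A_{\mathrm{a}}DA_{\mathrm{b}}^T-A_{\mathrm{b}}DA_{\mathrm{a}}^T,
\qquad D_{e,e}=x_eY^{W+\cost(e)}.
\]
The indeterminates $x_{u_j,e}$ are instantiated by uniform random field elements. By Schwartz--Zippel, with high probability the instantiated matrix still has the property from \Cref{lemma:factor-to-parity}: the parity bases are exactly the $\alpha$-factors. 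Combining \Cref{thm:parity_enumeration} with \Cref{corollary:factor-to-parity}, we get two conclusions w.h.p.: $\det(\mathcal{M}_{LMP})\ne 0$ if and only if an $\alpha$-factor exists, and the minimum $Y$-degree of $\det(\mathcal{M}_{LMP})$, minus $NW$ and then halved, equals the minimum cost.

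Next we account for updates. An insertion, deletion, or cost change of an edge $e=(u,v)$ changes only the rank-two term $D_{e,e}(a_eb_e^T-b_ea_e^T)$. Here $a_e$ and $b_e$ have at most $k$ nonzeros each, so this term touches at most $2k^2$ entries, all inside the $(\{u_j\}\cup\{v_l\})^2$ block. An inserted edge that is absent from the preprocessed instance is no problem: its columns are fresh random vectors supported on the copies of $u$ and $v$, and the rows are unchanged. Thus $f$ updates give an additive perturbation with $O(k^2f)$ nonzeros, which matches the $O(c^2f)$ bound in \Cref{thm:parity_general_sensitivity}.

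Substituting $c=k$ and $r\le kn$ into \Cref{thm:parity_general_sensitivity} gives the following bounds.
\begin{itemize}
\item Decision version: preprocessing $O(k^2m+(kn)^\omega)=O((kn)^\omega)$, using $m\le n^2$ and $k\ge1$; space $O((kn)^2\log n)$; query time $O((k^2f)^\omega)=O(f^\omega k^{2\omega})$.
\item Weighted version: $d=O(W)$, preprocessing $\widetilde{O}(W(kn)^3)$, space $O(W(kn)^3\log n)$, and query time $\widetilde{O}(W\cdot kn\cdot(k^2f)^\omega)=\widetilde O(Wnf^\omega k^{2\omega+1})$.
\end{itemize}

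The main obstacle is the case where the base matrix is singular, or where an update makes the determinant vanish. The statement of \Cref{lemma:BrandS-det} requires nonsingularity. This is handled by the padded variant \Cref{lemma:BrandS-det-General-Q}, which \Cref{thm:parity_general_sensitivity} already invokes, so we rely on it as a black box.

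A secondary point is the failure probability. It must be small enough to survive a union bound over all queries, which we get by taking $|\F|=\poly(n)$ large enough.
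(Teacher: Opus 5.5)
Your proposal is correct and follows the paper's proof essentially verbatim. Both view the $k$-bounded $\alpha$-factor instance of \Cref{lemma:factor-to-parity} as a $k$-sparse parity instance of rank $N\le kn$, and both invoke \Cref{lemma:BrandS-det-General-Q} (equivalently \Cref{thm:parity_general_sensitivity} with $c=k$) on an additive update with $O(fk^2)$ nonzeros, which yields exactly the stated bounds.

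Two of your side remarks are overstated, though neither is needed.
\begin{itemize}
\item Schwartz--Zippel does not make \emph{all} $\alpha$-factors parity bases simultaneously. What it does give is that the enumerating Pfaffian, or its minimum-$Y$-degree coefficient, is nonzero at the random point. For this, treat it as one polynomial in the representation entries and the $x_e$ jointly.
\item A union bound over all possible update sets would require $\log|\F|=\Omega(f\log n)$, not $|\F|=\poly(n)$. The theorem only claims per-query high probability, so this is not required.
\end{itemize}
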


\begin{proof}
\emph{Existence.} We preprocess $M|_{Y=1}$ with \Cref{lemma:BrandS-det-General-Q} in the scalar case ($d=0$). Since $N \le nk$, this requires $O((kn)^\omega)$ processing time. For a query set $U$, $C$ has $\delta = O(fk^2)$ non-zero entries. By applying \Cref{lemma:BrandS-det-General-Q} to compute the determinant of $M' = M+C$, the query time is $O(\delta^\omega) = O(f^\omega k^{2\omega})$.

\emph{Min-cost.} We apply \Cref{lemma:BrandS-det-General-Q} to the polynomial matrix $M$ with $d = O(W)$. The query $M'=M+C$ involves computing the polynomial determinant via \Cref{lemma:BrandS-det-General-Q}, which takes $\widetilde{O}(d N \delta^\omega)$ operations. Substituting $d = O(W)$, $N \le nk$, and $\delta = O(fk^2)$, we obtain:
$$
\widetilde{O}\bigl(W \cdot (kn) \cdot (fk^2)^\omega\bigr) = \widetilde{O}(W n f^\omega k^{2\omega+1}).
$$
Extracting the minimal exponent as in~\Cref{thm:parity_general_sensitivity} yields the weight.
\end{proof}

\subsubsection{Arbitrary Degree Bounds via Auxiliary Graphs}
\label{sec:arbitrary-alpha}

We now drop the assumption $\alpha \le k$. A direct expansion into $\alpha(v)$ copies would yield a Tutte matrix of dimension $O(n \max_v \alpha(v))$, which may be prohibitively large. Instead, we will fix a precomputed min-cost $\alpha$-factor $M$ of $G$ (we assume such an $\alpha$-factor exists in $G$, this assumption is justified as ideas of~\Cref{subsec:max-alpha-factor} can be used to augment $G$ such that it always admits an $\alpha$-factor) and build an \emph{auxiliary graph} $H$ that captures the symmetric difference $M \triangle M'$ between $M$ and an updated $\alpha$-factor $M'$ of $G+U$.

For this, we first introduce some definitions. A trail $\cal T$ is called a closed trail if $\cal T$ starts and ends at the same vertex. 
A sub-trail $\cal T'$ of $\cal T$ is a trail which uses only the edges of $\cal T$, with no repetitions. Note that $\cal T'$ need not have the same order of walk as $\cal T$, just that all the edges used by $\cal T'$ should lie in $\cal T$.   
We now define the notion of closed alternating trails. A closed alternating trail ${\cal T} = (e_1, e_2, \dots , e_{2k}), k \in \mathbb{N}$ is defined as a closed trail of even length such that if $e_i \in M$, then $e_{i+1} \in M'$ (and vice versa). It can also be seen that $e_1 \in M \iff e_{2k} \in M'$. Since every vertex has both $M$-degree and $M'$-degree equal to $\alpha(v)$, the following observation is immediate.

\begin{observation}
    \label{obs:trail-decomposition}
    The edge set $M \triangle M'$ decomposes into edge-disjoint closed alternating trails.
\end{observation}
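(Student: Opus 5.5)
The plan is to view $M \triangle M'$ as a multigraph $H_\triangle$ on $V$ whose edges are coloured by membership: \emph{red} if in $M\setminus M'$, \emph{blue} if in $M'\setminus M$. Edges of $M\cap M'$ play no role. The first step is a local degree computation at each vertex $v$. Since $|\delta_M(v)| = |\delta_{M'}(v)| = \alpha(v)$, removing the common edges $\delta_{M\cap M'}(v)$ from both sides shows that $v$ has equally many red and blue edges in $H_\triangle$; call this number $d_v$.

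The second step is to pair colours locally. At every vertex $v$, fix an arbitrary bijection between its $d_v$ red edge-ends and its $d_v$ blue edge-ends. A self-loop contributes two ends of the same colour, so it must be handled as two ends. These local pairings, together with the identification of the two ends of each edge, define a $2$-regular structure on the set of edge-ends: each end is matched to the other end of its edge and to one opposite-coloured end at the same vertex. Such a structure decomposes uniquely into cycles.

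The third step reads each cycle as a closed trail. Walk along an edge, then switch at its endpoint to the paired edge of the other colour, and so on. No edge is repeated, because each edge occurs exactly once as a pair of ends. The colours alternate at every transition by construction. The walk returns to its starting edge-end, so it is closed. Since the colours alternate around a cycle, its length is even, and the wrap-around condition $e_1\in M \iff e_{2k}\in M'$ holds. The cycles partition all edge-ends, so the resulting trails are edge-disjoint and cover $M\triangle M'$.

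The only delicate point is bookkeeping for degenerate cases, namely parallel edges and loops if the multigraph allows them. I would handle both by working with edge-ends (half-edges) throughout, as above, rather than with edges. Equivalently, one could argue greedily: as long as edges remain, start at any red edge and keep extending by an unused edge of the opposite colour at the current endpoint. Colour balance at every vertex guarantees that the walk can only get stuck at its start vertex, where it closes alternately. Removing that trail preserves colour balance at every vertex, and induction on $|M\triangle M'|$ finishes the argument.
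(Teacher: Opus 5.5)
Your proof is correct and follows the same idea as the paper, which calls the observation immediate from $|\delta_M(v)| = |\delta_{M'}(v)| = \alpha(v)$. Your colour-balance computation, followed by the half-edge pairing (or the greedy peeling of closed alternating trails), is the standard way to make that one-line justification rigorous, and it also covers parallel edges and loops.
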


The following lemma proves that for any update set $U$, there exists an $\alpha$-factor
 $M'$ in $G+U$ such that the number of closed alternating trails in $M \triangle M'$ is bounded by at most $|U|$.

\begin{lemma}\label{lem:cycle-free-witness}
If $G+U$ admits an $\alpha$-factor, then it admits a min-cost $\alpha$-factor $M'$ such that $D := M \triangle M'$ contains no closed alternating trail $\cal T$ all of whose edges belong to both $G$ and $G+U$.
\end{lemma}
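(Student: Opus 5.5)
Among all min-cost $\alpha$-factors of $G+U$, choose $M'$ minimizing $|M \triangle M'|$. The claim is that such an $M'$ has the required property. Suppose instead that $D = M \triangle M'$ contains a closed alternating trail $\mathcal{T}$ with every edge lying in $E(G) \cap E(G+U)$. Call such edges \emph{stable}: they are neither deleted nor inserted by $U$, and their costs are unchanged (if $U$ contains cost changes, treat a reweighted edge as deleted and reinserted). Then swap along $\mathcal{T}$ in both factors and compare costs.

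Write $\mathcal{T}_M = \mathcal{T} \cap M$ and $\mathcal{T}_{M'} = \mathcal{T} \cap M'$. Since $\mathcal{T} \subseteq D$, these two sets partition $\mathcal{T}$, with $\mathcal{T}_M \subseteq M \setminus M'$ and $\mathcal{T}_{M'} \subseteq M' \setminus M$. Define
\[
M'' := M' \triangle \mathcal{T} = (M' \setminus \mathcal{T}_{M'}) \cup \mathcal{T}_M,
\qquad
\widehat{M} := M \triangle \mathcal{T} = (M \setminus \mathcal{T}_M) \cup \mathcal{T}_{M'}.
\]
The trail is closed and alternates between $M$- and $M'$-edges. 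So at each visit to a vertex $v$, one incident $M$-edge and one incident $M'$-edge are used consecutively; the closing pair $(e_{2k}, e_1)$ is also alternating. Hence $\mathcal{T}_M$ and $\mathcal{T}_{M'}$ have equal degree at every vertex, and both swaps preserve all degrees. Because every edge of $\mathcal{T}$ is stable, $M'' \subseteq E(G+U)$ is an $\alpha$-factor of $G+U$, and $\widehat{M} \subseteq E(G)$ is an $\alpha$-factor of $G$.

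Now compare costs, using that stable edges have the same cost in $G$ and in $G+U$. We have
\[
\cost(\widehat{M}) = \cost(M) - \cost(\mathcal{T}_M) + \cost(\mathcal{T}_{M'}) \ge \cost(M)
\]
by minimality of $M$ in $G$, so $\cost(\mathcal{T}_{M'}) \ge \cost(\mathcal{T}_M)$. Therefore
\[
\cost(M'') = \cost(M') - \cost(\mathcal{T}_{M'}) + \cost(\mathcal{T}_M) \le \cost(M'),
\]
so $M''$ is also a min-cost $\alpha$-factor of $G+U$. Finally, $M \triangle M'' = D \setminus \mathcal{T}$, because on $\mathcal{T}$ the factor $M''$ agrees with $M$. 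This set is strictly smaller than $D$ since $\mathcal{T} \neq \emptyset$, contradicting the choice of $M'$.

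The step that needs the most care is verifying that $\mathcal{T}_M$ and $\mathcal{T}_{M'}$ have equal degree at every vertex. This is where the alternation condition is used. The trail may revisit a vertex several times (trails repeat vertices but not edges), so the degree balance has to be counted per visit rather than per vertex. The alternation pairs one $M$-edge with one $M'$-edge at every transition, including the wrap-around from $e_{2k}$ back to $e_1$, which gives the balance.

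A second subtlety is the interpretation of ``belongs to both $G$ and $G+U$'' when $U$ contains cost changes. I will state explicitly that a reweighted edge is treated as deleted and reinserted, so it is not stable. This is what guarantees that every cost used in the comparison is the same in both graphs.
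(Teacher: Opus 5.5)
Your proposal is correct and follows the paper's proof: pick a min-cost $\alpha$-factor $M'$ of $G+U$ minimizing $|M \triangle M'|$, swap along the stable closed alternating trail $\mathcal{T}$, use optimality of $M$ in $G$ to show $M'' = M' \triangle \mathcal{T}$ is still min-cost, and contradict minimality via $M \triangle M'' = D \setminus \mathcal{T}$. The differences are minor: the paper also invokes optimality of $M'$ to get cost equality, which your inequality $\cost(M'') \le \cost(M')$ makes unnecessary, and your per-visit degree-balance argument and your treatment of reweighted edges as non-stable spell out details the paper leaves implicit.
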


\begin{proof}

Among all min-cost $\alpha$-factors of $G+U$, choose $M'$ minimizing $|M \triangle M'|$. By \Cref{obs:trail-decomposition}, $D$ decomposes into edge-disjoint closed alternating trails.
Suppose $D$ contains an alternating trail $\cal T$ whose edges lie in both $G$ and $G+U$. Toggling it yields $M'' := M' \triangle {\cal T}$, which preserves every vertex degree and is therefore again an $\alpha$-factor of $G+U$. Moreover, ${\cal T} \subseteq G+U$ ensures that $M''$ is feasible in $G+U$. Since $M'$ is min-cost $\alpha$-factor of
$G+U$, $\cost(M'') \ge \cost(M')$, meaning $\cost({\cal T} \cap M') \le \cost({\cal T} \cap M)$. Similarly, ${\cal T} \subseteq G$ implies that $M \triangle {\cal T}$ is an $\alpha$-factor of $G$. Since $M$ is min-cost $\alpha$-factor of $G$, we get that $\cost({\cal T} \cap M) \le \cost({\cal T} \cap M')$.
The two inequalities together imply $\cost({\cal T} \cap M') = \cost({\cal T} \cap M)$ and hence, $\cost(M'') = \cost(M')$. Thus, $M''$ is a min-cost $\alpha$-factor of $G+U$ such that $|M \triangle M''| < |M \triangle M'|$, contradicting the minimality of $M'$. This proves the claim.
\end{proof}

The following claim gives the last structural observation over $M \triangle M'$ needed for our oracle.

\begin{lemma}
\label{lemma:trail-incidence}
Consider any minimal closed alternating trail $\cal T$ in $M \triangle M'$, i.e., no sub-trail of $\cal T$ is a closed alternating trail. Then, $\cal T$ is incident to any vertex of $G$ through at most two edges of $M$ and two edges of $M'$.
\end{lemma}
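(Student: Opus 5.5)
We argue by contradiction: if a minimal closed alternating trail $\mathcal T=(e_1,\dots,e_{2\ell})$ meets some vertex $v$ through at least three edges of $M$, or at least three edges of $M'$, we extract a proper closed alternating sub-trail. By the $M\leftrightarrow M'$ symmetry of the definition, it suffices to treat three or more $M$-edges at $v$.

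\textbf{Step 1: pairing at $v$.} Traverse $\mathcal T$ cyclically and look at its visits to $v$. Each visit uses two consecutive edges $(e_i,e_{i+1})$ with $\head(e_i)=v=\tail(e_{i+1})$. Since the trail alternates, one of these two edges lies in $M$ and the other in $M'$; here we use that $e_{2\ell}$ and $e_1$ also alternate. So every visit to $v$ consumes exactly one $M$-edge and one $M'$-edge at $v$. Having at least three $M$-edges at $v$ therefore means at least three visits, at positions $i_1<i_2<i_3$ (cyclically). A self-loop at $v$ is handled by viewing it as a visit that contributes both endpoints; I will check this case separately.

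\textbf{Step 2: splitting.} Cutting $\mathcal T$ at two visits splits it into two closed sub-trails, each starting and ending at $v$. Each piece alternates internally. The only question is the junction at $v$: a piece closes up alternately exactly when its first and last edges lie in different matroids. Let each visit have a \emph{type}: $(M,M')$ if it enters by an $M$-edge and leaves by an $M'$-edge, and $(M',M)$ otherwise. Take the segment between the exits of visits $a$ and $b$. Its first edge is the outgoing edge of $a$ and its last edge is the incoming edge of $b$. This segment is alternating-closed exactly when the type of $a$ equals the type of $b$: if both are $(M,M')$, the first edge is in $M'$, the last is in $M$, and they differ.

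\textbf{Step 3: pigeonhole.} With three visits and only two types, two visits share a type. The segment between them is a proper, nonempty closed alternating sub-trail. It is proper because it omits at least the edges of the third visit. This contradicts minimality.

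\textbf{Main obstacle.} The delicate part is the bookkeeping in Step 2, in particular:
\begin{itemize}
\item the wrap-around case, where the segment crosses the position $e_{2\ell}\to e_1$;
\item self-loops, where a single edge contributes two endpoints at $v$;
\item showing the cut segment has even length.
\end{itemize}
Even length follows automatically because the segment alternates and its ends lie in different matroids. I will also confirm that the complementary part is nonempty, so the sub-trail is strictly smaller.
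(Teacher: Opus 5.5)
Your proof is correct and follows the same idea as the paper's: split $\mathcal T$ at a vertex it visits three or more times and extract a smaller closed alternating sub-trail. Your visit-type pigeonhole supplies the justification, which the paper only asserts, for why the extracted piece alternates at its junction at $v$. One small fix: the segment between the two same-type visits need not omit the third visit, but properness holds anyway, since any two distinct visits leave a nonempty complementary arc (and that arc is itself closed alternating by the same type argument).
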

\begin{proof}
We proceed by contradiction. Suppose there exists a minimal closed alternating trail $\cal T$ that is incident on a vertex $v$ through at least 6 edges (at least 3 from $M$ and 3 from $M'$). Because the subgraph induced by $\cal T$ is Eulerian, any vertex with a degree greater than 2 in this subgraph can be split to decompose the trail into smaller, edge-disjoint closed trails. Consequently, we can extract a smaller sub-trail $\cal T'$ of $\cal T$ that passes through exactly 2 edges of $M$ and 2 edges of $M'$ incident to $v$. This sub-trail $\cal T'$ forms a valid closed alternating trail on its own, which directly contradicts the minimality of $\cal T$. Hence, the claim holds. 
\end{proof}

Now, we bound the number of edges of $M \triangle M'$ incident on any vertex by analysing the closed alternating trails induced by the edges $M \triangle M'$. It follows that if $M'$ is the optimal witness of \Cref{lem:cycle-free-witness}, then $M \triangle M'$ can be decomposed into at most $f$ closed alternating trails. Consider the decomposition $S^* = \{ {\cal T}_1, {\cal T}_2, \dots, {\cal T}_j \}, j \leq f$ such that the number of closed alternating trails is maximised. Then, for each ${\cal T}_i$, no closed alternating sub-trail exists. From \Cref{lemma:trail-incidence}, the number of $M$-edges dropped at any vertex $v$ due to each ${\cal T}_i$ is at most 2. Combining this over all trails in $S^*$, we get that $|\delta_G(v) \cap M \cap (M \triangle M')| = |\delta_{G+U}(v) \cap M' \cap (M \triangle M')| \leq 2f$, i.e., the total number of $M$-edges dropped at any vertex $v$ is bounded by $2f$. We now leverage this local bound to construct the auxiliary graph $H$.

\paragraph{Construction of the auxiliary graph.}
Let $M$ be the fixed initial $\alpha$-factor of the graph $G = (V, E)$ and define $f_v := \min(2f, \alpha(v))$, for $v\in V$. We construct an auxiliary graph $H = (V_1 \cup V_2 \cup R \cup S, E_H)$, and define its degree bounds $\beta$ as follows:
\begin{itemize}
    \item \textbf{Layer 1 (drop layer):}
    Make a copy $V_1$ of the vertex set $V$, such that for any vertex $v \in V$, $v_1$ denotes its copy in the set $V_1$. Now, for each edge $e = (a,b) \in M$, add the edge $(a_1, b_1)$ with cost $-\cost(e)$, and set $\beta(v_1) = f_v$. (Selecting this edge in a $\beta$-factor of $H$ implies $e$ is \emph{dropped} from $M'$).
    
    \item \textbf{Layer 2 (add layer):} 
    Make another copy $V_2$ of the vertex set $V$, such that for any vertex $v \in V$, $v_2$ denotes its copy in the set $V_2$.
    Now, for each edge $e = (a,b)$ of $G$ such that $e \notin M$, add the edge $(a_2, b_2)$ with cost $+\cost(e)$, and set $\beta(v_2) = f_v$. (Selecting this edge implies $e$ is \emph{added} to $M'$).
    
    \item \textbf{Parallel edges:} For each $v \in V$, add $f_v$ parallel edges from $v_1$ to $v_2$, each with cost $0$.

    \item \textbf{Budget sinks $(S, R)$:} For each $v \in V$, add two vertices $s_{v}, r_{v}$ with degree bounds $\beta(s_{v}) = \beta(r_{v}) = f_v$. 
    Furthermore, add $f_v$ parallel edges between $s_{v}$ and $r_{v}$, all of cost $0$. 

\end{itemize}

Because $\beta \le 2f$ everywhere, the corresponding sensitivity oracle matrix for $H$ has a bounded dimension $N_H = O(nf)$, independent of the maximum capacity $\alpha$.

\paragraph{Encoding the update $U_H$.}
Given an update set $U \subseteq E$ with $|U| = f$, we explicitly translate it into $U_H$ as below (see \Cref{fig:factor-reduction}).
\begin{itemize}
    \item \textbf{For $(a,b) \notin M$ (added/removed):} Apply the identical operation to $(a_2, b_2)$ in Layer 2.
    
    \item \textbf{For $(a,b) \in M$ (deleted):} Remove $(a_1, b_1)$ from Layer 1; remove one edge each of the following forms - $(a_1, a_2), (b_1, b_2), (r_{a}, s_{a}), (r_{b}, s_{b})$; and finally add the following edges with zero cost - $(r_{a}, r_{b}), (s_{a}, a_1), (s_{b}, b_1)$. 
\end{itemize}

\begin{figure}
\centering
\includegraphics[width=0.3\linewidth]{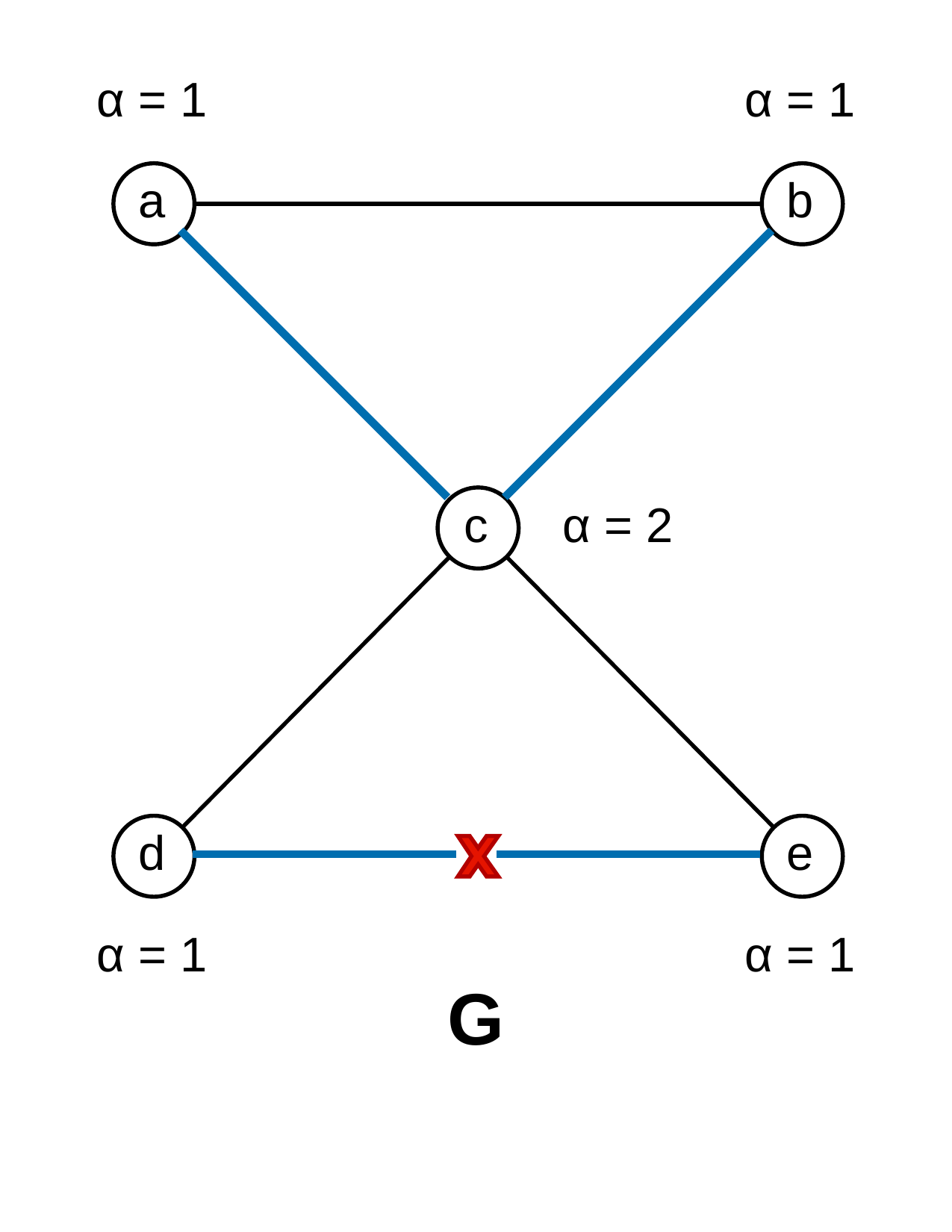}
\includegraphics[width=0.68\linewidth]{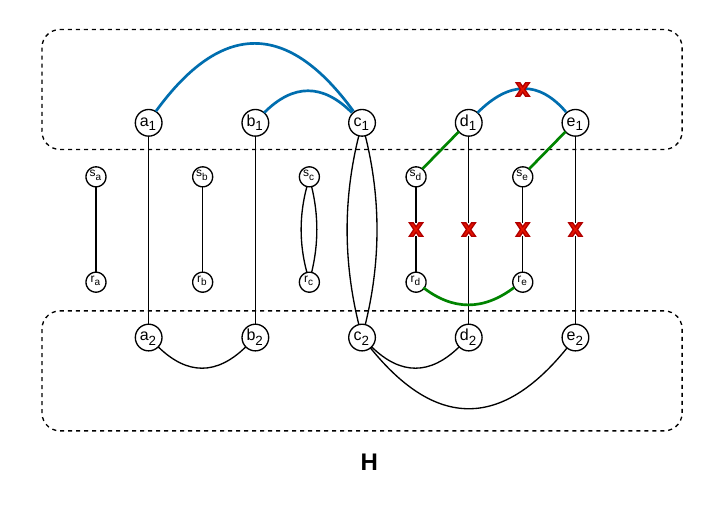}
\caption{The reduction for $f=1$: deleting $(d,e)$ in $G$ (with $\alpha(a)=\alpha(b)=\alpha(e)=\alpha(d)=1$ and $\alpha(c)=2$) induces five deletions and three additions (shown in green) in $H$; the layer-1 edges are toggled out and layer-2 edges are toggled into the $\beta$-factor in $H+U_H$, resulting in the $\alpha$-factor of $G + U$ consisting of the edges $\{(a,b), (c,d), (c,e)\}$.
}
\label{fig:factor-reduction}
\end{figure}

\paragraph{The projection.}
Given a valid $\beta$-factor $F$ of $H + U_H$, we project it back to $G$ by defining $M' := (M \setminus \mathrm{D}) \cup \mathrm{A}$, where $\mathrm{A}$ is the set of non-$M$ edges whose Layer 2 copy is selected, and $\mathrm{D}$ is the set of base $M$-edges that are \emph{either} selected in Layer 1 or deleted by $U$. 

\begin{theorem}\label{thm:bijection}
$G+U$ contains an $\alpha$-factor if and only if $H+U_H$ contains a $\beta$-factor. Moreover, under the costs assigned above,
\begin{equation}
    \cost(M') \;=\; \cost(M) + \cost_H(F) \;\; - \!\!\sum_{e \in U \cap M,\ \text{deleted}}\!\! \cost(e).
\end{equation}
Consequently, a min-cost $\alpha$-factor of $G+U$ corresponds exactly to a min-cost $\beta$-factor of $H+U_H$, and (ignoring costs) existence transfers bi-directionally.
\end{theorem}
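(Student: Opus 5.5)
The plan is to prove both directions through the projection map $F \mapsto M' := (M\setminus \mathrm{D})\cup \mathrm{A}$ and an explicit inverse. Throughout, $D := M\triangle M'$, and we write $d_v$ for the number of $M$-edges at $v$ that are dropped, i.e.\ lie in $\mathrm{D}$.

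\textbf{($\Leftarrow$) From $\beta$-factors to $\alpha$-factors.} Take a $\beta$-factor $F$ of $H+U_H$ and verify degree by degree that $M'$ is an $\alpha$-factor of $G+U$.
\begin{enumerate}
\item Let $p_v$ be the number of selected $(v_1,v_2)$ parallel edges. Let $\ell_1(v)$ and $\ell_2(v)$ be the selected Layer~1 and Layer~2 edges at $v$.
\item If no $M$-edge at $v$ is deleted, then $\beta(v_1)=\beta(v_2)=f_v$ forces $\ell_1(v)+p_v=f_v=\ell_2(v)+p_v$. Hence $\ell_1(v)=\ell_2(v)$, so the number of $M$-edges dropped at $v$ equals the number of non-$M$ edges added, and $\deg_{M'}(v)=\alpha(v)$.
\item Each deleted $M$-edge $(a,b)$ at $v$ removes one $(v_1,v_2)$ edge and one $(r_v,s_v)$ edge, and adds $(s_v,v_1)$ and $(r_a,r_b)$. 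The $s_v/r_v$ gadget then has exactly enough slack that $s_v$ must use the new edge $(s_v,v_1)$ once per deleted edge. This forces $v_1$ to have one fewer Layer~1 edge relative to $v_2$'s count, which exactly compensates for the deleted $M$-edge being counted in $\mathrm{D}$.
\item Finish with a short counting identity per vertex: $\deg_{M'}(v)=\alpha(v)-\ell_1(v)-(\#\text{deleted}) + \ell_2(v)=\alpha(v)$.
\end{enumerate}

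\textbf{($\Rightarrow$) From $\alpha$-factors to $\beta$-factors.} Take an $\alpha$-factor of $G+U$ and use \Cref{lem:cycle-free-witness} to obtain a min-cost one, $M'$, whose symmetric difference $D$ has no closed alternating trail avoiding $U$.
\begin{enumerate}
\item Decompose $D$ via \Cref{obs:trail-decomposition} into a maximum number of closed alternating trails. Each trail must contain an element of $U$, so there are at most $f$ trails, and each is minimal.
\item By \Cref{lemma:trail-incidence}, $d_v\le 2f$. Since also $d_v\le\alpha(v)$, we get $d_v\le f_v$.
\item Build $F$ as follows. Select the Layer~1 copies of $M$-edges in $\mathrm{D}$ that were not deleted by $U$. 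Select the Layer~2 copies of $M'\setminus M$. Select $f_v-d_v$ of the $(v_1,v_2)$ parallel edges (note that some were removed by $U_H$). For each deleted $M$-edge, use the gadget edges $(s_a,a_1),(s_b,b_1),(r_a,r_b)$. Complete with the remaining $(s_v,r_v)$ parallel edges.
\item Check $\beta$-degrees at $v_1,v_2,s_v,r_v$; these are the same counts as in the $\Leftarrow$ direction run backwards. The bound $d_v\le f_v$ is exactly what guarantees that enough parallel edges remain.
\end{enumerate}

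\textbf{Cost identity.} Layer~1 edges carry $-\cost(e)$ for dropped non-deleted $M$-edges. Layer~2 edges carry $+\cost(e)$ for added edges. All other edges cost zero. Therefore
\[
\cost(M')=\cost(M)-\cost(\mathrm{D})+\cost(\mathrm{A})=\cost(M)+\cost_H(F)-\sum_{e\in U\cap M,\ \text{deleted}}\cost(e),
\]
since deleted $M$-edges lie in $\mathrm{D}$ but are not charged in $H$. Because the additive constant is independent of $F$, minimizing over $\beta$-factors minimizes over the $\alpha$-factors reachable by projection. By the ($\Rightarrow$) direction these include a min-cost one, so the minima coincide.

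\textbf{Main obstacle.} The hard part is the gadget accounting at deleted $M$-edges, in both directions. One must confirm that the budget edges $(r_a,r_b)$, $(s_v,v_1)$ and the removed parallel copies force exactly the right offset. One must also rule out spurious $\beta$-factors: for instance, ones using $(s_v,v_1)$ fewer times than the number of deletions, or ones routing $(r_a,r_b)$ inconsistently. I would handle this by a local degree count at $s_v$ and $r_v$. There, $f_v$ minus the number of deletions at $v$ of the $(s_v,r_v)$ copies survive, so achieving degree $f_v$ forces every added gadget edge to be selected. I would check this first on the $f=1$ example of \Cref{fig:factor-reduction}.
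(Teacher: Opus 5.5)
Your proposal is correct and follows the paper's proof essentially step for step: a forced local count at $r_v$ and $s_v$ makes every gadget edge selected, and subtracting the degree equations at $v_1,v_2$ shows the edges added at $v$ equal the Layer-1 drops plus the deletions at $v$. For the converse you use the witness of \Cref{lem:cycle-free-witness} with the trail-based bound $d_v\le f_v$, and the cost bookkeeping is identical to the paper's. Your closing remark — the offset is independent of $F$, and the ($\Rightarrow$) construction projects back exactly onto the min-cost $M'$, so the minima coincide — spells out the final step more explicitly than the paper does.
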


\begin{proof}
For any vertex $v$, let $\mathrm{del}(v)$ denote the number of $M$-edges deleted at $v$. The $U_H$ encoding activates exactly $\mathrm{del}(v)$ sink pairs at $v_1$ by altering the edges incident to $r_v$ and $s_v$.

$(\Leftarrow)$ Let $F$ be a valid $\beta$-factor in $H+U_H$. 
Observe that the update $U_H$ deletes $\mathrm{del}(v)$ edges of the form $(s_v, r_v)$ and adds $\mathrm{del}(v)$ zero-cost edges connecting $r_v$ to other $r$-vertices. Thus, $r_v$ maintains an exact available degree of $f_v$. Because $\beta(r_v) = f_v$, the $\beta$-factor $F$ must select all incident edges of $r_v$, including the $f_v - \mathrm{del}(v)$ surviving parallel edges to $s_v$. Since $s_v$ also has a budget of $f_v$, it is forced to satisfy the remainder of its budget by selecting all $\mathrm{del}(v)$ newly added $(s_v, v_1)$ edges. 

Let $d(v)$ and $p(v)$ be the number of Layer 1 and parallel edges $F$ selects at $v_1$, respectively. The budget at $v_1$ mandates that $d(v) + \mathrm{del}(v) + p(v) = f_v$. At $v_2$, selecting $a(v)$ Layer 2 edges alongside the shared $p(v)$ parallel edges yields $a(v) + p(v) = f_v$. Subtracting these two equations gives $a(v) = d(v) + \mathrm{del}(v)$. 

In our projection to $M'$, the total number of $M$-edges dropped at $v$ sums to $d(v) + \mathrm{del}(v)$ (Layer 1 selections plus explicitly deleted edges). This perfectly balances the $a(v)$ edges added from Layer 2. Therefore, $\deg_{M'}(v) = \deg_M(v) = \alpha(v)$. Furthermore, $M'$ is a simple graph because base edges  partition across the layers, preventing simultaneous addition and dropping of the same edge. Finally, $M' \subseteq G+U$ since removed edges are explicitly absent from Layer 2, and deleted $M$-edges are forcibly dropped. Thus, $M'$ is a valid $\alpha$-factor.

$(\Rightarrow)$ Let $M'$ be the optimal witness of \Cref{lem:cycle-free-witness}, and let $d(v)$ denote the number of \emph{non-deleted} $M$-edges dropped at $v$. By the construction of $M'$ and the bounds on the symmetric difference, we know that for all $v \in V$, $d(v) + \mathrm{del}(v) \le f_v$. 

We construct a $\beta$-factor $F$ in $H+U_H$ by selecting: the surviving Layer 1 edges of $M \setminus M'$; the Layer 2 edges of $M' \setminus M$; all incident edges to $r_v$ and $s_v$; and exactly $f_v - d(v) - \mathrm{del}(v)$ parallel edges at each $v$ (which is always valid since $d(v) + \mathrm{del}(v) \le f_v$). Every vertex thus meets its exact budget constraint $\beta$. Since all explicitly deleted $M$-edges correctly reside in $M \setminus M'$ and are accounted for by the sinks, $F$ is a valid $\beta$-factor.

\emph{Cost Analysis.} Only Layer 1 and Layer 2 edges carry non-zero costs. A Layer 2 selection adds $+\cost(e)$, while a Layer 1 selection contributes $-\cost(e)$  for \emph{non-deleted} dropped edges. Hence, the cost of $F$ is:
$$
\cost_H(F) = \sum_{e \in M' \setminus M} \cost(e) \;-\sum_{\substack{e \in M \setminus M' \\ \text{not deleted}}} \cost(e)
\;=\; \cost(M') - \cost(M) + \sum_{e \in U \cap M,\,\text{deleted}} \cost(e).
$$
Rearranging this equation directly yields the cost identity in the theorem. Therefore, minimizing $\cost_H(F)$ minimizes $\cost(M')$. In the unit-cost (existence) setting, all cost labels are discarded, and the exact same structural argument provides the existence equivalence.
\end{proof}

Because $H$ restricts degree bounds to $\beta \leq 2f$, the corresponding sensitivity oracle matrix dimension is $N_H = O(nf)$. Furthermore, the update sparsity is tightly bounded at $|U_H| = O(f)$. Now, we invoke the $k$-bounded oracle of \Cref{sec:bounded-alpha} on $H$, where $k = O(f)$. This gives the following.

\begin{theorem}\label{thm:sensitivity-arbitrary-alpha}
Let $G = (V,E,\cost)$ be an undirected graph with integer edge costs in $[-W,W]$ and arbitrary degree bounds $\alpha$. Then there exist the following sensitivity oracles that, for any update set $U$ with $f = |U|$, answer queries on $G+U$ with high probability:
\begingroup
\setlength{\aboverulesep}{0pt}
\setlength{\belowrulesep}{0pt}
\begin{center}
\renewcommand{\arraystretch}{1.25}
\resizebox{0.98\linewidth}{!}{%
\begin{tabular}{@{}l l l l l@{}}
\rowcolor{yellow!15}
\textsc{Oracle} & \textsc{Prep.} & \textsc{Space} & \textsc{Query} & \textsc{Output} \\
\midrule
Existence &
$O((fn)^\omega + T_{\alpha})$ &
$O((fn)^2 \log n)$ &
$O(f^{3\omega})$ &
Whether $G+U$ has an $\alpha$-factor \\
Min-Cost &
$\widetilde{O}(W(fn)^3)$ &
$O(W(fn)^3 \log n)$ &
$\widetilde{O}(W n f^{3\omega+1})$ &
Weight of a min-cost $\alpha$-factor of $G+U$ \\
\end{tabular}%
}
\end{center}
\endgroup
\end{theorem}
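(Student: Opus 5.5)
The plan is to derive \Cref{thm:sensitivity-arbitrary-alpha} as a black-box composition of \Cref{thm:bijection} with the $k$-bounded oracle of \Cref{thm:sensitivity-bounded-alpha}, run on the auxiliary graph $H$ with $k := 2f$.

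In preprocessing, we first compute a min-cost $\alpha$-factor $M$ of $G$ in time $T_\alpha$. (If $G$ has none, we first apply the augmentation of \Cref{subsec:max-alpha-factor} so that one exists.) From $M$ we build $H$ and its degree bounds $\beta$. Every vertex of $H$ satisfies $\beta \le f_v \le 2f$, and $H$ has $4n$ vertices. Hence the parity dimension is $N_H = \sum_{x\in V(H)}\beta(x) \le 8fn = O(fn)$. Instantiating the existence and min-cost oracles of \Cref{thm:sensitivity-bounded-alpha} on $H$ with $k=2f$ and $n_H = 4n$ gives $O((fn)^\omega)$ preprocessing and $O((fn)^2\log n)$ space for existence. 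It gives $\widetilde O(W(fn)^3)$ preprocessing and $O(W(fn)^3\log n)$ space for min-cost. The edge costs of $H$ are $\pm\cost(e)$ or $0$, so the cost range is still $[-W,W]$.

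At query time we translate $U$ into $U_H$ as prescribed. Each update to a non-$M$ edge gives one Layer~2 update. Each deleted $M$-edge gives five deletions and three insertions, so $|U_H| \le 8f = O(f)$. One subtlety is that $U_H$ inserts the edges $(r_a,r_b)$, $(s_a,a_1)$ and $(s_b,b_1)$. These are not among the preprocessed edges, but the bounded oracle supports insertions, so they count simply as updates. Plugging $|U_H|=O(f)$ and $k=O(f)$ into \Cref{thm:sensitivity-bounded-alpha} gives query time $O(f^\omega k^{2\omega}) = O(f^{3\omega})$ for existence. For min-cost it gives $\widetilde O(W n_H f^\omega k^{2\omega+1}) = \widetilde O(Wnf^{3\omega+1})$.

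Correctness comes from \Cref{thm:bijection}. $G+U$ has an $\alpha$-factor if and only if $H+U_H$ has a $\beta$-factor. The min-cost value is recovered as $\cost(M) + \cost_H(F^*) - \sum_{e\in U\cap M,\ \text{deleted}}\cost(e)$. Here $\cost(M)$ is stored during preprocessing, and the correction sum is computed in $O(f)$ time.

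The step I expect to need the most care is checking that the cost guarantee survives this translation. The bounded oracle reports the minimum over all $\beta$-factors of $H+U_H$, not only those coming from the witness $M'$ of \Cref{lem:cycle-free-witness}. This is fine: the $(\Leftarrow)$ direction of \Cref{thm:bijection} sends every $\beta$-factor to a valid $\alpha$-factor of $G+U$ of the stated cost, and the $(\Rightarrow)$ direction realizes an optimal $M'$. Hence the two minima agree. The failure probability is handled by a union bound over the Schwartz--Zippel events inherited from the bounded oracle.
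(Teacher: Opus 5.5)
Your proposal is correct and follows the paper's proof essentially step for step: compute a min-cost $\alpha$-factor $M$, build $H$ on $4n$ vertices with $\beta\le 2f$, translate $U$ into $U_H$ with $|U_H|=O(f)$, and invoke \Cref{thm:sensitivity-bounded-alpha} on $H$ with $k=O(f)$, with correctness supplied by \Cref{thm:bijection}. Your extra bookkeeping fills in details the paper leaves implicit without changing the route: $|U_H|\le 8f$, $N_H\le 8fn$, costs staying in $[-W,W]$, and the observation that the minimum over all $\beta$-factors equals the optimum because every $\beta$-factor projects to a valid $\alpha$-factor with the stated cost.
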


\begin{proof}
First, we compute a fixed initial min-cost $\alpha$-factor $M$ in the graph $G$. Using this, we construct the auxiliary graph $H$, the function $\beta$, and instantiate the bounded $\beta$-factor oracle of~\Cref{thm:sensitivity-bounded-alpha}.

By \Cref{thm:bijection}, a query on $G+U$ is equivalent to the corresponding query on a $\beta$-factor of $H + U_H$, with matching cost. The graph $H$ has $4n$ vertices, degree bounds $\beta \le 2f$, and $|U_H| = O(f)$, so its sensitivity oracle matrix dimension is $N_H = O(nf)$. We instantiate \Cref{thm:sensitivity-bounded-alpha} on $H$ with update size $O(f)$.

\emph{Existence.} During preprocessing, finding the $\alpha$-factor of $G$ takes $O(T_{\alpha})$ time~\cite{Anstee85, DuanHZ20}, then constructing $H$ and bounded $\beta$-factor oracle on $H$ costs $O((2n \cdot f)^\omega) = O((nf)^\omega)$ and takes $O((fn)^2 \log n)$ space; a query with $O(f)$ updates costs $O(f^\omega \cdot f^{2\omega})$ time.

\emph{Min-cost.} During preprocessing, finding min-cost $\alpha$-factor $M$ and then constructing sensitivity oracle on $H$ costs $\widetilde{O}(W(fn)^3)$ time and takes $O(W(fn)^3 \log n)$ space; a query with $O(f)$ updates costs $\widetilde{O}(W n f^{3\omega+1})$ time.
\end{proof}

\section{Subset Sensitivity Oracles}
\label{sec:subset}

\subsection{Subset Sensitivity for Flows, Cuts, $s$-Min-Cut, and Global Min-Cut}
\label{subsec:subset-flow}

We now lift the flow and cut oracles of \Cref{sec:flows} to the subset model. All reductions, namely the path-packing equivalence (\Cref{lemma:path_packing_static}), the flow-rerouting lemmas (\Cref{lemma:reroute-1,lemma:reroute-2}), and the cut characterization (\Cref{lemma:insertion-deletion-cut}), are reused unchanged; we replace only the underlying packing oracle of \Cref{thm:sensitivity_packing} with its subset counterpart, \Cref{thm:subset_sensitivity_packing}.

\subsubsection{All-Pairs $k$-Bounded Flow}
\label{subsubsec:subset-allpairs}

Let $G = (V, E, \cost)$ be an $n$-vertex directed graph with integer edge costs in $[-W, W]$ and flow bound $k \in [1, n]$. We are given a set of \emph{query vertices} $V_Q \subseteq V$ from which source-sink pairs are drawn, and a set of \emph{susceptible edges} $E_Q$ (which may include edges not currently in $E$) subject to insertions or deletions. Any query $(s, t, U)$ satisfies $s, t \in V_Q$ and $U \subseteq E_Q$ with $|U| \le f$.

\paragraph{Ground Set and Susceptible Set}
We construct the auxiliary graph $G^*$ on $N = n + 2$ vertices with super-source $s^*$ and super-sink $t^*$, and the partition matroids $M_{\mathrm{out}}, M_{\mathrm{in}}$ as in \Cref{sec:flows}. To support all-pairs queries from $V_Q$, we include in the ground set $k$ parallel edges from $s^*$ to $s$ for each $s \in V_Q$, and $k$ parallel edges from $t$ to $t^*$ for each $t \in V_Q$.

\paragraph{Query Reduction}
For a query $(s, t, U)$ with $s, t \in V_Q$ and $U \subseteq E_Q$, we modify the diagonal entries of $D$ corresponding to: the $k$ parallel edges from $s^*$ to $s$, the $k$ parallel edges from $t$ to $t^*$, and the $f$ updates in $U$. The total number of modified diagonal entries is at most $f + 2k$. By \Cref{lemma:path_packing_static}, the resulting $k$-packing exists if and only if $G + U$ admits $\lambda$ edge-disjoint $(s, t)$-paths.

\begin{theorem}
\label{theorem:subset-allpairs-flow}
Let $G=(V,E,\cost)$ be an $n$-vertex directed graph with integer edge costs in $[-W,W]$ with designated sets $V_Q\subseteq V$ and $E_Q\subseteq E$, and let $k \in [1,n]$. Let $\sigma := |E_Q| + k|V_Q|$. Then there exist the following subset sensitivity oracles that, for any query $(s,t,U)\in V_Q^2\times E_Q^f$, answer queries on $G+U$ with high probability:
\begingroup
\setlength{\aboverulesep}{0pt}
\setlength{\belowrulesep}{0pt}
\begin{center}
\renewcommand{\arraystretch}{1.25}
\resizebox{0.98\linewidth}{!}{%
\begin{tabular}{@{}l l l l l@{}}
\rowcolor{yellow!15}
\textsc{Oracle} & \textsc{Prep.} & \textsc{Space} & \textsc{Query} & \textsc{Output} \\
\midrule
Bounded Flow &
$O((kn)^{\omega-2}\sigma^2 )$ &
$O(\sigma^2)$ &
$O((f+k)^\omega \log k)$ &
$\lambda^*=\min(k,\lambda_{G+U}(s,t))$ \\
Nearest Min-Cut &
$O((kn)^{\omega-2} (\sigma+n|V_Q|)^2)$ &
$O((\sigma+n|V_Q|)^2)$ &
$O((f+k)^\omega n)$ &
$\NMC_{G+U}(s,t)$ (if $\lambda^* < k$) \\
Min-Cost Flow &
$\widetilde{O}(W (kn)^{\omega-1}\sigma^2 )$ &
$\widetilde{O}(Wkn \sigma^2)$ &
$\widetilde{O}(Wkn(f+k)^\omega)$ &
Min-cost for flow of value $\lambda^*$ \\
\end{tabular}%
}
\end{center}
\endgroup
\end{theorem}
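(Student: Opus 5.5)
The proof instantiates the subset packing oracle of \Cref{thm:subset_sensitivity_packing} on the partition matroids $M_{\mathrm{out}},M_{\mathrm{in}}$ of the auxiliary graph $G^*$, with a ground set enlarged by all query-dependent elements. Concretely, the ground set $E_0$ consists of the edges of $G$, the $k$ self-loops at each $v\in V$, the edges of $E_Q$, $k$ parallel $(s^*,s)$ edges and $k$ parallel $(t,t^*)$ edges for each $s,t\in V_Q$, and $k$ self-loops at each of $s^*$ and $t^*$. The susceptible set $Q$ consists of $E_Q$, all parallel query edges, and the $2k$ self-loops at $s^*,t^*$, so $|Q|=O(|E_Q|+k|V_Q|+k)=O(\sigma)$. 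Initially all of $Q\setminus E$ is switched off ($D_{e,e}=0$), and every self-loop at $s^*,t^*$ is switched on.

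For a query $(s,t,U)$ and a candidate $\lambda\le k$, we switch on $\lambda$ of the $(s^*,s)$ edges and $\lambda$ of the $(t,t^*)$ edges. We switch off $\lambda$ self-loops at each of $s^*,t^*$, leaving $k-\lambda$. We also toggle the $f$ entries of $U$. This is at most $f+4k$ diagonal changes inside $Q$. By \Cref{lemma:path_packing_static}, the resulting Laplacian is nonsingular (w.h.p., by \Cref{thm:k-fold-matroid_intersection}) iff $G+U$ has $\lambda$ edge-disjoint $(s,t)$-paths. \Cref{lemma:det} with $r:=kN$ therefore answers each test in $O((f+k)^\omega)$ time, using $O(\sigma^2)$ space and $O((kn)^{\omega-2}\sigma^2+(kn)^\omega)$ preprocessing. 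The $(kn)^\omega$ term is absorbed when $\sigma\ge kn$; otherwise we state it as in the table. A binary search over $\lambda\in[0,k]$ yields $\lambda^*$ in $O((f+k)^\omega\log k)$ time.

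For the nearest min-cut, \Cref{lemma:nearest-mincut} needs the extra edge $(w,t)$ for every $w\in V$ and $t\in V_Q$. We add these $n|V_Q|$ candidate edges to $Q$, which gives $|Q|=O(\sigma+n|V_Q|)$ and the stated space and preprocessing. We then make $n$ tests, each with $f+4k+1$ toggles, for $O((f+k)^\omega n)$ query time.

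For min-cost flow we use the polynomial version ($d=2W$) of \Cref{lemma:det}. This gives $\widetilde O(W(kn)^{\omega-1}\sigma^2)$ preprocessing, $\widetilde O(Wkn\sigma^2)$ space and $\widetilde O(Wkn(f+k)^\omega)$ query time, and the cost is read off the minimum $Y$-degree minus $krW$ at $\lambda=\lambda^*$.

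The main obstacle is ensuring the frozen union representations stay valid when elements are switched on or off. We must argue that the randomized union representation of \Cref{lemma:union_rep}, built once over the full $E_0$ (parallel copies getting independent scalings, as in \Cref{rem:parallel-elements-k-fold-union}), restricts correctly to every active subset. This holds because Cauchy--Binet only sums over active columns and the randomness is independent of the query. The per-query failure probability is then bounded by Schwartz--Zippel with a union bound over the polynomially many tests.

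A second point to check is that the degree bookkeeping at $s^*,t^*$ in \Cref{lemma:path_packing_static} holds exactly after the toggles. Switching self-loops at $s^*,t^*$ on or off, rather than keeping them static, is what makes the $s^*$ out-degree and $t^*$ in-degree each exactly $k$.
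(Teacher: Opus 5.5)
Your proposal follows the paper's proof. You instantiate \Cref{thm:subset_sensitivity_packing} (i.e.\ \Cref{lemma:det}) on $M_{\mathrm{out}},M_{\mathrm{in}}$ of $G^*$, with $E_Q$ and the $k$ parallel $(s^*,s)$ and $(t,t^*)$ edges for every $s,t\in V_Q$ in the susceptible set. You binary search over $\lambda$, add $\{(w,t): w\in V,\ t\in V_Q\}$ to the susceptible set for the nearest min-cut, and use the $d=2W$ version for costs. Your one deviation is to also put the absorbers at $s^*,t^*$ into $Q$ and toggle them with $\lambda$. This is a real improvement: \Cref{lemma:path_packing_static} needs exactly $k-\lambda$ absorbers and $\lambda$ changes during the binary search, whereas the paper's proof lists only the query edges and $U$ among the modified entries.

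However, the bookkeeping check you flag at the end fails if the absorbers are literally self-loops. This defect is inherited from \Cref{lemma:path_packing_static}. In the representation of \Cref{sec:flows}, $A_{\mathrm{in}}$ has no row for $s^*$ and $A_{\mathrm{out}}$ has no row for $t^*$. So a self-loop at $s^*$ is a zero column of $A_{\mathrm{in}}$, a self-loop at $t^*$ is a zero column of $A_{\mathrm{out}}$, and neither lies in any common basis. Consistently, an all-self-loop $B_i$ in the forward direction of that lemma has $N$ elements, not $N-1$. Under your scheme, for every $\lambda<k$ the vertex $s^*$ then has only $\lambda$ usable out-edges, while $k$ common bases each need one. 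The determinant therefore vanishes identically and the binary search breaks.

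The repair is easy and leaves your argument intact. Use $k$ zero-cost parallel $(s^*,t^*)$ bypass edges as absorbers; a bypass edge plus the $N-2$ middle self-loops is a common basis of size $N-1$. Switch them all on initially. For candidate $\lambda$, switch $\lambda$ of them off and switch on $\lambda$ copies each of $(s^*,s)$ and $(t,t^*)$, for $f+3\lambda$ toggles in total. The converse direction still goes through: every common basis uses exactly one out-edge of $s^*$, and the only active in-edges of $t^*$ are the bypass and $(t,t^*)$ edges.

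Finally, your caveat about the $(kn)^\omega$ preprocessing term is correct, but ``state it as in the table'' does not remove it. \Cref{thm:subset_sensitivity_packing} carries the additive $(kn)^\omega$ term, and $W(kn)^{\omega+1}$ in the weighted case, and you should keep both. The paper's table drops them as well.
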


\begin{proof}
We instantiate the subset sensitivity oracles of \Cref{thm:subset_sensitivity_packing} on the partition matroids $M_{\mathrm{out}}, M_{\mathrm{in}}$ associated with $G^*$, with rank $r = N - 1 = n + 1$ and packing parameter $k$.\\

\noindent
\emph{Bounded Flow Oracle.}
The susceptible set $E_0$ is the union of the susceptible graph edges and the potential query edges:
\[
E_0 \;=\; E_Q \;\cup\; \bigcup_{s \in V_Q} E_{s^*, s} \;\cup\; \bigcup_{t \in V_Q} E_{t, t^*},
\]
of size $\sigma_{flow} = |E_Q| + 2k|V_Q| = O(|E_Q| + k|V_Q|) = O(\sigma)$. For a query $(s, t, U)$, we binary search over $\lambda \in [0, k]$. Each candidate modifies at most $f + 2k$ diagonal entries within $E_0$. By \Cref{thm:subset_sensitivity_packing}, each determinant evaluation takes $O((f + k)^\omega)$ time. The binary search yields $\lambda^*$ in $O((f + k)^\omega \log k)$ time. The stated bounds follow by substituting $\sigma_{flow}$ into the theorem.\\

\noindent
\emph{Nearest Min-Cut Oracle.}
We first compute $\lambda := \lambda_{G+U}(s,t)$. For each $w \in V \setminus \{s, t\}$, we query $\lambda_{G+U+(w,t)}(s,t)$ by adding one further diagonal change. To ensure the edge $(w, t)$ lies in the susceptible set for any query pair, we must absorb the edge set $\{(w, t) : w \in V,\, t \in V_Q\}$ into $E_0$ at preprocessing time. This yields a larger susceptible set of size $\sigma_{cut} = O(|E_Q| + k|V_Q| + n|V_Q|) = O(\sigma + n|V_Q|)$. Each of the $n$ queries modifies $O(f + k)$ entries, yielding a total query time of $O((f + k)^\omega n)$.\\

\noindent
\emph{Min-Cost Flow Oracle.}
After computing $\lambda^*$ via the bounded flow oracle, we invoke the weighted subset sensitivity oracle using the smaller susceptible set $\sigma_{flow}$. By \Cref{thm:subset_sensitivity_packing}, substituting $\sigma_{flow}$ yields the required preprocessing and space bounds, with query time scaling by the weight degree $W \cdot kn$.
\end{proof}

\paragraph{Implications for Reachability and Distance Sensitivity.~} 
As a direct application of our bounded-flow subset-sensitivity oracles with $k=1$ and integral edge costs (assuming no negative-weight cycles so that distances are well-defined), we obtain subset reachability and distance sensitivity oracles. Setting $k=1$ in \Cref{theorem:subset-allpairs-flow}, the all-pairs bounded-flow decision oracle reduces to the problem of computing reachability and the min-cost oracle reduces to distances after $f$ edge updates.

\begin{corollary}[Reachability and Distance Subset Sensitivity Oracle]
\label{cor:subset}
Let $G=(V,E,\cost)$ be an $n$-vertex directed graph with integer edge costs in $[-W,W]$ with designated sets $V_Q\subseteq V$ and $E_Q\subseteq E$, having no negative weight cycles. Let $\sigma = |V_Q|+|E_Q|$. Then there exist the following subset sensitivity oracles that, for any query $(s,t,U)\in V_Q^2\times E_Q^f$, answer queries on $G+U$ with high probability:

\begin{center}
\renewcommand{\arraystretch}{1.25}
\begin{tabular}{@{}l l l l l@{}}
\textsc{Oracle} & \textsc{Prep.} & \textsc{Space} & \textsc{Query} \\
\midrule
Reachability &
~$O(\sigma^2 n^{\omega-2})$~ &
~$O(\sigma^2)$~ &
~$O(f^\omega)$~  \\
Distances &
~$\widetilde{O}(W \sigma^2 n^{\omega-1})$~ &
~$\widetilde{O}(Wn \sigma^2)$~ &
~$\widetilde{O}(Wnf^\omega)$~  \\
\end{tabular}
\end{center}
\end{corollary}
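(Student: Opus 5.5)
The plan is to obtain \Cref{cor:subset} by specializing \Cref{theorem:subset-allpairs-flow} to $k=1$. The work is in checking that the $k=1$ outputs really are reachability and distances, and then tracking the parameter substitutions.

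\emph{Reachability.} With $k=1$, the bounded-flow value $\lambda^*=\min(1,\lambda_{G+U}(s,t))$ is $1$ exactly when $t$ is reachable from $s$ in $G+U$. The binary search over $\lambda\in[0,1]$ is therefore a single determinant test, namely whether $\lambda=1$ is feasible. This removes the $\log k$ factor.

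Setting $k=1$ gives $\sigma_{flow}=|E_Q|+2|V_Q|=O(\sigma)$ and $kn=n$. The bounds then follow directly: preprocessing $O(n^{\omega-2}\sigma^2)$, space $O(\sigma^2)$, and query $O((f+1)^\omega)=O(f^\omega)$. If one also counts the additive $O(k^2(m+\sigma)+(kr)^\omega)$ preprocessing terms from \Cref{thm:subset_sensitivity_packing}, note that they are subsumed under the natural assumption $\sigma\ge n$ (or can be stated separately). I would remark on this rather than belabor it.

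\emph{Distances.} The main point to verify is that the min-cost flow of value $1$ computed through \Cref{lemma:path_packing_static} equals the $s$-$t$ distance. A common basis of $M_{\mathrm{out}},M_{\mathrm{in}}$ on $G^*+U+E_{s,t,1}$ consists of one $(s^*,t^*)$-path, vertex-disjoint directed cycles, and zero-cost self-loops. Its weight is the path cost plus the cycle costs.

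\begin{itemize}
\item Because $G+U$ has no negative cycles, every cycle has cost $\ge 0$. Replacing a cycle by self-loops at its vertices does not increase the weight and keeps the set a basis, since each vertex has a self-loop available when $k=1$.
\item The path has cost at least the walk-free distance, because a shortest walk is a path when there are no negative cycles.
\item Conversely, a shortest $s$-$t$ path, completed with self-loops, is a common basis.
\end{itemize}

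So the minimum weight equals $\mathrm{dist}_{G+U}(s,t)$. Here I am assuming the no-negative-cycle hypothesis applies to the updated graph, which is how the statement reads. Reachability is decided first, and the weighted oracle is queried only when $t$ is reachable.

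Substituting $k=1$ into the min-cost row of \Cref{theorem:subset-allpairs-flow} gives preprocessing $\widetilde O(Wn^{\omega-1}\sigma^2)$, space $\widetilde O(Wn\sigma^2)$, and query $\widetilde O(Wn(f+1)^\omega)=\widetilde O(Wnf^\omega)$.

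The only genuine obstacle I expect is the cycle-cancellation argument in the distance case. The rest is direct substitution.
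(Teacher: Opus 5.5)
Your proposal is correct and follows the paper's route: the paper obtains \Cref{cor:subset} simply by setting $k=1$ in \Cref{theorem:subset-allpairs-flow}, so that the bounded-flow decision oracle yields reachability and the min-cost oracle yields distances. Your added details are sound and fill in what the paper leaves implicit: only the single test $\lambda=1$ is needed; nontrivial cycles in a min-weight common basis can be exchanged for the unused zero-cost self-loops, since $G+U$ has no negative cycles; and the additive $(kr)^\omega$ and $k^2 m$ preprocessing terms dropped in the theorem's table are absorbed when $\sigma\ge n$.
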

 
We note that for the general model, the celebrated distance sensitivity oracle of Brand and Saranurak~\cite{BrandS19} achieves $\widetilde{O}(Wn^{3})$ preprocessing and $\widetilde{O}(Wnf^{\omega})$ query time for $f$ failures using the adjoint oracle on the generic matrix of the graph. 

\subsubsection{$(s,t)$-Max-Flow and $(s,t)$-Min-Cut}
\label{subsubsec:subset-st-flow}

Let $G = (V, E, \cost)$ be an $n$-vertex directed graph with fixed source $s$, sink $t$, and a susceptible edge set $E_Q$ (which may include edges not currently in $E$). We set $k = 2f$.
As in \Cref{subsec:gen-st-flow}, we precompute a min-cost $(s, t)$-max-flow $h$ in $G$, extend it to saturate the first $f$ edges of a set $E_f$ of $2f$ zero-cost parallel $(s, t)$-edges, and construct the auxiliary graph $\G^*_h$ on $N = n + 2$ vertices and $m+2f$ edges by adding a dummy source $s^*$ and a dummy sink $t^*$.

For any query $U = (I, F) \subseteq E_Q$, the auxiliary update set $\U^*$ on $\G^*_h$ involves edges in $E_Q$, their counterparts, and auxiliary edges $(s^*, x)$ and $(y, t^*)$ for endpoints of edges in $F_h := \{e \in F : h(e) = 1\}$. The susceptible set for the subset sensitivity oracle on $\G^*_h$ therefore consists of: (i) the edges in $E_Q$ and their counterparts in $\G_h$, (ii) for each endpoint $x$ or $y$ of an edge in $E_Q$, the auxiliary edges $(s^*, x)$ and $(y, t^*)$, and (iii) the $k$ parallel edges for the fixed query pair $(s^*, t^*)$. The susceptible set size is
\[
\sigma' \;=\; O(|E_Q| + k) \;=\; O(|E_Q| + f).
\]
For a query $U$ with $|U| \le f$, the auxiliary update $\U^*$ modifies $O(f)$ diagonal entries within this susceptible set, and the fixed $(s^*, t^*)$ query edges contribute $O(k) = O(f)$ additional modifications. The total number of modified diagonal entries is $O(f)$.

\begin{theorem}
\label{theorem:subset-st-flow}
Let $G = (V, E, \cost)$ be an $n$-vertex directed graph with integer edge costs in $[-W, W]$, with source $s$,  sink $t$, and susceptible set $E_Q\subseteq E$ of size $\sigma$. Then there exist the following subset sensitivity oracles that, for any $U\in E_Q^f$, answer queries on $G + U$ with high probability:
\begingroup
\setlength{\aboverulesep}{0pt}
\setlength{\belowrulesep}{0pt}
\begin{center}
\resizebox{0.98\linewidth}{!}{
\renewcommand{\arraystretch}{1.25}
\begin{tabularx}{\linewidth}{@{}l l l l X@{}}
\rowcolor{yellow!15}
\textsc{Oracle} & \textsc{Prep.} & \textsc{Space} & \textsc{Query} & \textsc{Output} \\
\midrule
Max-Flow & $O(m^{1+o(1)} + \sigma^2 (fn)^{\omega - 2})$ & $O(\sigma^2)$ & $\widetilde{O}(f^\omega)$ & $\lambda_{G+U}(s,t)$ \\
Nearest Min-Cut & $O( (\sigma^2+n^2) (fn)^{\omega - 2})$ & $O(\sigma^2+n^2)$ & $\widetilde{O}(f^\omega n)$ & $\NMC_{G+U}(s,t)$ \\
Min-Cost Flow & $\widetilde{O}(m^{1+o(1)} + W \sigma^2 (fn)^{\omega - 1})$ & $\widetilde{O}(W fn  \sigma^2)$ & $\widetilde{O}(Wn f^{\omega + 1})$ & Min-cost max-flow in $G+U$\\
\end{tabularx}
}
\end{center}
\endgroup
\end{theorem}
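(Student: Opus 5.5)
The plan is to mirror the proof of \Cref{theorem:st-flow} step by step, replacing the general-model all-pairs oracle with the subset packing oracle of \Cref{thm:subset_sensitivity_packing}. The structural lemmas \Cref{lemma:reroute-1} and \Cref{lemma:reroute-2} are reused unchanged.

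\textbf{Preprocessing.} We first compute a min-cost max-flow $h$ in $m^{1+o(1)}$ time via~\cite{ChenKLPGS23}. We add the buffer set $E_f$ of $2f$ parallel $(s,t)$-edges, saturating $f$ of them, and then build the residual auxiliary graph $\G^*_h$ with the $k=2f$ self-loops per vertex. This gives partition matroids $M_{\mathrm{out}},M_{\mathrm{in}}$ of rank $r=n+1$, so the packing dimension is $kr=O(fn)$.

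The susceptible set $Q$ is the union of the following parts:
\begin{itemize}
\item the edges of $E_Q$ together with their reverses $e^{\rev}$;
\item the auxiliary edges $(s^*,x)$ and $(y,t^*)$ for endpoints of edges of $E_Q$;
\item the $2f$ buffer edges of $E_f$;
\item the $O(k)$ self-loops at $s^*,t^*$ used to fix the demand $\lambda=d\le f$.
\end{itemize}
Hence $|Q|=O(\sigma+f)=O(\sigma)$. Since $f\le\sigma$ may be assumed, this is the case throughout.

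One point needs a check. \Cref{thm:subset_sensitivity_packing} requires the matroids to extend to $E\cup E_Q$. Partition matroids extend trivially, because every potential edge has a well-defined column. Invoking that theorem with $\sigma'=O(\sigma)$ and $kr=O(fn)$ gives $O(\sigma^2(fn)^{\omega-2}+(fn)^\omega)$ preprocessing and $O(\sigma^2)$ space. The $(fn)^\omega$ term can be absorbed, or else stated separately.

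\textbf{Query.} Given $U=(I,F)\subseteq E_Q$, we form $\U^*$ exactly as in \Cref{lemma:reroute-2}. Every element of $\U^*$ lies in $Q$ and is realized as a diagonal toggle. We then binary search over $i\in[0,2f]$ as in the proof of \Cref{theorem:st-flow}; each candidate toggles $O(f)$ buffer edges. The packing-existence test for demand $d=|F_h|$ with $k=2f$ touches only $O(f)$ diagonal entries. Each determinant evaluation therefore costs $O(f^\omega)$, and the $O(\log f)$ binary-search steps give $\widetilde O(f^\omega)$ in total.

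The key difference from the general model is that parallel copies are now \emph{separate} susceptible elements, each costing one diagonal entry. So the $O(k)=O(f)$ demand loops add only $O(f)$ entries, not $O(fk^2)$. This is exactly why the query drops from $f^{3\omega}$ to $f^\omega$. The min-cost oracle uses the weighted branch of \Cref{thm:subset_sensitivity_packing} with $d=2W$ and $kr=O(fn)$. That gives $\widetilde O(W\sigma^2(fn)^{\omega-1})$ preprocessing, $\widetilde O(Wfn\sigma^2)$ space, and $\widetilde O(Wfn\cdot f^\omega)$ query. The cost is then read off through the identity in \Cref{lemma:reroute-2}.

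\textbf{Nearest min-cut.} Following \Cref{lemma:nearest-mincut}, for every $w$ we must test the insertion of $(w,t)$. We therefore add all $n$ edges $(w,t)$ to $Q$, giving $|Q|=O(\sigma+n)$. Space is then $O((\sigma+n)^2)=O(\sigma^2+n^2)$, preprocessing scales accordingly, and the $n$ max-flow queries cost $\widetilde O(f^\omega n)$.

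\textbf{Main obstacle.} The hardest step is making sure that everything a query may toggle lies in a set of size $O(\sigma)$ fixed at preprocessing. In particular, the reversed residual edges for flow-carrying edges of $E_Q$ and the auxiliary dummy edges must all be included. It also has to be confirmed that the binary search over buffer edges stays within $O(f)$ toggles per test. I would handle this by enumerating, for each $e\in E_Q$, the constant-size set of residual and auxiliary elements it can induce.
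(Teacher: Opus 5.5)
Your proposal is correct and follows essentially the same route as the paper: you instantiate the subset packing oracle of \Cref{thm:subset_sensitivity_packing} on the partition matroids of $\G^*_h$ with $k=2f$ and $kr=O(fn)$, reuse \Cref{lemma:reroute-1,lemma:reroute-2} with the binary search over buffer edges so each test toggles only $O(f)$ diagonal entries in a susceptible set of size $O(\sigma+f)$ (enlarged by the $n$ edges $(w,t)$ for nearest min-cut), and use the weighted branch for min-cost flow. Fixing the demand via $k-d$ self-loops at $s^*,t^*$ instead of the paper's $k$ query edges is an inessential variation.

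Two bookkeeping remarks, neither of which changes the stated query or space bounds. First, the $(fn)^\omega$ preprocessing term cannot in general be absorbed into $\sigma^2(fn)^{\omega-2}$; that only works when $\sigma\ge fn$, so it should be stated separately. Second, your susceptible set should also contain the $O(f)$ auxiliary edges $(s^*,s)$ and $(t,t^*)$ that arise when saturated buffer edges are deleted during the binary search.
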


\begin{proof}
We set $k := 2f$ and instantiate the subset sensitivity oracle of \Cref{thm:subset_sensitivity_packing} on the partition matroids $M_{\mathrm{out}}, M_{\mathrm{in}}$ associated with the auxiliary graph $\G^*_h$, with rank $r = n + 1$, packing parameter $k = O(f)$, and susceptible set of size $\sigma' = O(|E_Q| + f)$.

The preprocessing bounds follow by first calculating min-cost $(s,t)$ max-flow in $G$ using the algorithm of~\cite{ChenKLPGS23} and then, substituting $kr = O(fn)$ and $\sigma = \sigma'$ into \Cref{thm:subset_sensitivity_packing}. The space is $O({\sigma'}^2)$ for the decision oracle.

\paragraph{Max-Flow Oracle.}
By \Cref{lemma:reroute-1},
$\lambda_{G+U}(s,t) = \lambda_G(s,t) + f - i$, where $i$ is determined by binary search. For each candidate $i$, the auxiliary update set $\U^*$ together with the fixed $(s^*, t^*)$ query edges modifies $O(f)$ diagonal entries within the susceptible set. By \Cref{thm:subset_sensitivity_packing}, each determinant evaluation takes $O(f^\omega)$ time; the binary search yields the answer in $\widetilde{O}(f^\omega)$ time.

\paragraph{Nearest Min-Cut Oracle.}
We first compute $\lambda := \lambda_{G+U}(s,t)$ via the max-flow oracle. For each $w \in V \setminus \{s, t\}$, we invoke the max-flow oracle on $G + U + (w, t)$, adding one additional element change. To accommodate the edges $(w,t)$ for all $w \in V$, the susceptible set in this case grows to size $O(|E_Q|+f+n)$, whence the space term $O(\sigma^2 + n^2)$. Each of the $n$ queries takes $\widetilde{O}(f^\omega)$ time by \Cref{lemma:nearest-mincut}, giving $\widetilde{O}(f^\omega \cdot n)$ total.

\paragraph{Min-Cost Flow Oracle.}
By \Cref{lemma:reroute-2}, the minimum cost is determined by the min-cost $(s^*, t^*)$-flow in $\G^*$. We invoke the weighted oracle of \Cref{thm:subset_sensitivity_packing} with $kr = O(fn)$ and $O(f)$ modified diagonal entries, giving query time $\widetilde{O}(W \cdot fn \cdot f^\omega) = \widetilde{O}(Wnf^{\omega + 1})$.
\end{proof}

\subsubsection{$s$-Min-Cut and Global Min-Cut}
\label{subsubsec:subset-mincut}
 
We now leverage the subset framework to obtain space-efficient oracles for $s$-min-cut and global min-cut, provided the updates are confined to $E_Q \subseteq E$.
 
Recall the two ingredients from \Cref{sec:s-mincut}. First, using \Cref{theorem:s-connectivity-augmentation}, we precompute in time $T_{\text{aug}}$ an optimal $f$-edge augmenting set $I^*$ that maximizes the $s$-min-cut, and set $\lambda^* := \mu_{G+I^*}(s)$. Second, by \Cref{lemma:insertion-deletion-cut}, for any update set $U$ with $|U| \le f$,
\[
\mu_{G+U}(s) \;=\; \min\Bigl(\lambda^*,\; \min_{\substack{y \in Y \\ y \ne s}} \lambda_{G+U}(s,y)\Bigr),
\qquad
Y := \{\, y : (x,y) \in I^* \cup U \,\},\quad |Y| \le 2f.
\]
The candidate targets therefore lie in $V(E_Q) \cup V(I^*)$, where $V(E_Q)$ is the set of endpoints of the edges in $E_Q$ and $V(I^*)$ is the set of (at most $f$) endpoints of the augmenting edges. Both $V(E_Q)$ and $V(I^*)$ are known before any query, so we pre-initialize one subset $(s,y)$-max-flow oracle for every $y \in V(E_Q) \cup V(I^*)$. Since $|V(E_Q)| \le 2\sigma$ and $|V(I^*)| \le f \le \sigma$, this amounts to $O(\sigma)$ oracles, each of which is exactly the subset $(s,y)$ oracle of \Cref{theorem:subset-st-flow}.

\begin{theorem}
\label{theorem:subset-s-min-cut}
Let $G = (V, E, \cost)$ be an $n$-vertex directed graph with a fixed source $s$ and a susceptible set $E_Q\subseteq E$ of size $\sigma$. Then there exist the following subset sensitivity oracles that, for any update set $U\in E_Q^f$ consisting of insertions and deletions, answer queries on $G + U$ with high probability:
\begingroup
\setlength{\aboverulesep}{0pt}
\setlength{\belowrulesep}{0pt}
\begin{center}
\renewcommand{\arraystretch}{1.25}
\resizebox{0.98\linewidth}{!}{
\begin{tabular}{@{}lllll@{}}
\rowcolor{yellow!15}
\textsc{Oracle} & \textsc{Prep.} & \textsc{Space} & \textsc{Query} & \textsc{Output} \\
\midrule
$s$-Min-Cut Value &
$O(\sigma m^{1+o(1)} + \sigma^3(fn)^{\omega - 2} + T_{\text{aug}})$ &
$O(\sigma^3)$ &
$\widetilde{O}(f^{\omega + 1})$ &
$\mu_{G+U}(s)$ \\
$s$-Min-Cut Partition &
$O\bigl((\sigma^3+\sigma n^2)(fn)^{\omega - 2} + T_{\text{aug}}\bigr)$ &
$O(\sigma^3+\sigma n^2)$ &
$\widetilde{O}(f^\omega n)$ &
$s$-min-cut in $G+U$  \\
\end{tabular}%
}
\end{center}
\endgroup
\end{theorem}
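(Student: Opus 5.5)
The plan mirrors the proof of \Cref{theorem:s-min-cut}, replacing each general $(s,v)$-oracle by the subset oracle of \Cref{theorem:subset-st-flow} and instantiating only the $O(\sigma)$ oracles identified above.

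\textbf{Preprocessing.} First we compute $I^*$ by \Cref{theorem:s-connectivity-augmentation} in time $T_{\text{aug}}$, and we compute $\lambda^* = \mu_{G+I^*}(s)$. Next we form the candidate sink set $Y_0 := V(E_Q)\cup V(I^*)\setminus\{s\}$; since $f\le\sigma$, we have $|Y_0| = O(\sigma)$. For each $y\in Y_0$ we build the subset $(s,y)$ max-flow oracle of \Cref{theorem:subset-st-flow} with susceptible set $E_Q$. Each such oracle costs $O(m^{1+o(1)} + \sigma^2(fn)^{\omega-2})$ preprocessing and $O(\sigma^2)$ space. Summing over $O(\sigma)$ sinks gives the claimed $O(\sigma m^{1+o(1)} + \sigma^3 (fn)^{\omega-2} + T_{\text{aug}})$ time and $O(\sigma^3)$ space. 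For the partition oracle we instead build the nearest-min-cut variant of \Cref{theorem:subset-st-flow} for each $y\in Y_0$, with space $O(\sigma^2+n^2)$ each. This gives $O(\sigma^3+\sigma n^2)$ total, and the preprocessing bound follows the same way.

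\textbf{Query.} Given $U\subseteq E_Q$, we form $Y=\{y:(x,y)\in I^*\cup U\}$. Every head of an edge in $U$ lies in $V(E_Q)$, and every head of an edge in $I^*$ lies in $V(I^*)$, so $Y\setminus\{s\}\subseteq Y_0$ and every needed oracle exists. We query each of the at most $2f$ oracles in $\widetilde O(f^\omega)$ time, which is $\widetilde O(f^{\omega+1})$ in total. We then return $\min(\lambda^*,\min_y \lambda_{G+U}(s,y))$, which is correct by \Cref{lemma:insertion-deletion-cut}.

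\textbf{Partition query.} If the minimum is attained at some $y^*\in Y$ with value below $\lambda^*$, we invoke the $(s,y^*)$ nearest-min-cut oracle in $\widetilde O(f^\omega n)$ time.

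\textbf{Main obstacle.} The difficulty is the case $\mu_{G+U}(s)=\lambda^*$. There the general proof scans all vertices $v$, but only sinks in $Y_0$ have subset oracles. To handle this, we plan to fix during preprocessing a $\lambda^*$-valued $s$-min-cut $(A^*,B^*)$ of $G+I^*$ that no edge of $I^*$ crosses. Its capacity is then the same in $G+U$ as in $G+I^*$ whenever no edge of $U$ crosses it either. We then output $(A^*,B^*)$ when valid. Otherwise some edge of $I^*\cup U$ crosses the cut, its head lies in $Y$, and the minimum over $Y$ already equals $\lambda^*$, so the corresponding nearest-min-cut oracle query answers it. The precomputed cut needs only $O(n)$ extra space. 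If such an $I^*$-avoiding cut fails to exist in general, the fallback is to add a single additional sink $v^*\in B^*$, chosen at preprocessing, to $Y_0$, at an extra cost of only one more oracle.
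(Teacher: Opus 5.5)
Your value oracle, your resource accounting, and your partition query for $\mu_{G+U}(s)<\lambda^*$ all match the paper. The gap is in the case $\mu_{G+U}(s)=\lambda^*$ when your precomputed cut $(A^*,B^*)$ is invalidated. That can only happen because an \emph{inserted} edge of $U$ crosses it and raises its capacity, and this tells you nothing about where the minimum cuts of $G+U$ lie. \Cref{lemma:insertion-deletion-cut} gives only $\min_{y\in Y}\lambda_{G+U}(s,y)\ge\lambda^*$, and its proof reasons about a minimum cut of $G+U$, not about $(A^*,B^*)$. A minimum cut of $G+U$ can be a \emph{different} $I^*$-avoiding cut that no edge of $I^*\cup U$ crosses and whose sink side misses $Y$.

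Concretely, take $V=\{s,u,w,z\}$, $E=\{(s,u),(s,w),(s,z),(u,z)\}$ and $f=1$. One edge cannot enter both $\{u\}$ and $\{w\}$, so $\lambda^*=1$ and $I^*=\{(w,z)\}$ is a legitimate optimal augmentation. Both $B=\{u\}$ and $B=\{w\}$ are $I^*$-avoiding $\lambda^*$-cuts of $G+I^*$, and you may fix $B^*=\{u\}$. Let $U$ insert $(w,u)\in E_Q$. Then $\mu_{G+U}(s)=1$ is attained only by $B=\{w\}$, and your cut has capacity $2$. Moreover $Y=\{z,u\}$ with $\lambda_{G+U}(s,z)=\lambda_{G+U}(s,u)=2$. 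So neither $(A^*,B^*)$, nor any nearest-min-cut query at a sink in $Y$, nor the fallback sink $v^*\in B^*=\{u\}$ produces a minimum cut.

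Your fallback case is actually harmless. If no $I^*$-avoiding $\lambda^*$-cut exists, applying the lemma's argument to a minimum cut of $G+U$ already forces the minimum over $Y$ to equal $\lambda^*$. The trouble arises exactly when several such cuts exist.

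The missing idea is to precompute a cut that \emph{no} admissible update can affect, and to scan all endpoints of susceptible edges rather than only $Y$. The paper precomputes a minimum cut $(A_0,V\setminus A_0)$ of $G$ subject to $\{s\}\cup V(E_Q)\subseteq A_0$. In the case $\mu_{G+U}(s)=\lambda^*$, it queries the already-built $(s,v)$ max-flow oracles for every $v\in V(E_Q)$; this costs $\widetilde O(f^\omega n)$ since $|V(E_Q)|\le n$.

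If some $v$ has $\lambda_{G+U}(s,v)=\lambda^*$, the $(s,v)$ nearest-min-cut oracle answers. Otherwise every minimum cut of $G+U$ has all of $V(E_Q)$ on its source side, so its capacity is the same in $G$ and in $G+U$. Therefore $(A_0,V\setminus A_0)$, whose capacity $U$ cannot change, is a minimum cut of $G+U$. In the example above, the scan finds $w$, the tail of the inserted edge.
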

 
\begin{proof}
During preprocessing we compute in $T_{\text{aug}}$ time the optimal augmenting set $I^*$ of size $f$ using \Cref{theorem:s-connectivity-augmentation}, together with $\lambda^* = \mu_{G+I^*}(s)$. 
Also, find a minimum-cut partition $(A_0, V\setminus A_0)$ of $G$ subject to the constraint that $(\{s\} \cup V(E_Q)) \subseteq A_0$.

For every target $y \in (V(E_Q) \cup V(I^*)) \setminus \{s\}$ we instantiate the subset sensitivity max-flow and nearest-min-cut oracles of \Cref{theorem:subset-st-flow} on $G$, with source $s$, sink $y$, and susceptible set $E_Q$. There are $O(\sigma)$ such targets. For each target, the max-flow oracle uses a susceptible set of size $O(|E_Q| + f) = O(\sigma)$ and hence $O(\sigma^2)$ space, while the nearest-min-cut oracle uses a susceptible set of size $O(|E_Q| + f + n) = O(\sigma + n)$ (to accommodate the auxiliary edges $(w,y)$) and hence $O(\sigma^2 + n^2)$ space. Summing over the $O(\sigma)$ targets gives space $O(\sigma^3)$ for the value oracle and $O(\sigma^3 + \sigma n^2)$ for the partition oracle; the preprocessing times are the corresponding $(fn)^{\omega-2}$-scaled bounds of \Cref{theorem:subset-st-flow}, plus an additive  $T_{\text{aug}}$ term.
 
Given a query $U \in E_Q^f$, we form the candidate sink set $Y = \{\, y : (x,y) \in I^* \cup U \,\}$, with $|Y| \le 2f$ and $Y \subseteq V(E_Q) \cup V(I^*)$. For each $y \in Y$ with $y \ne s$ we call the $(s,y)$ subset max-flow oracle on $G+U$, computing $\lambda_{G+U}(s,y)$ in $\widetilde{O}(f^\omega)$ time; over the $\le 2f$ targets this takes $\widetilde{O}(f^{\omega+1})$ time. By \Cref{lemma:insertion-deletion-cut},
\[
\mu_{G+U}(s) \;=\; \min\Bigl(\lambda^*,\; \min_{y \in Y,\, y\ne s}\lambda_{G+U}(s,y)\Bigr).
\]


For the partition query, we consider two cases:
\begin{itemize}
    \item If $\mu_{G+U}(s) < \lambda^*$, we set $y^* = \arg\min_{y\in Y,\, y\ne s}\lambda_{G+U}(s,y)$ and invoke the $(s,y^*)$ subset nearest-min-cut oracle on $G+U$, which returns the partition in $\widetilde{O}(f^\omega n)$ time. 

    \item If $\mu_{G+U}(s) = \lambda^*$, we query for all vertices $v \in V(E_Q)$ if $\lambda_{G+U}(s,v) = \lambda^*$. 
    
    If there exists such a vertex $v$, then invoking the $(s, v)$ subset nearest-min-cut oracle on $G + U$ returns the desired partition in $\widetilde{O}(f^\omega n)$ time.
    
    If no such $v \in V(E_Q)$ exists, then all vertices of $V(E_Q)$ lie on the source side of the $s$-min-cut in $G+U$. Since updates to edges in $E_Q$ do not affect the value of such cuts, we obtain that the cut $(A_0, V \setminus A_0)$ is the desired min-cut vertex partition. 
\end{itemize}

So, the total query time to output the vertex partition is $\widetilde{O}(f^\omega n)$.

\end{proof}
 
\begin{theorem}
\label{theorem:subset-global-min-cut}
Let $G = (V, E, \cost)$ be an $n$-vertex directed graph with a susceptible set $E_Q \subseteq E$ of size $\sigma$. Then there exist the following subset sensitivity oracles that, for any update set $U\in E_Q^f$ consisting of insertions and deletions, answer queries on $G + U$ with high probability:
\begingroup
\setlength{\aboverulesep}{0pt}
\setlength{\belowrulesep}{0pt}
\begin{center}
\renewcommand{\arraystretch}{1.25}
\resizebox{0.98\linewidth}{!}{
\begin{tabular}{@{}lllll@{}}
\toprule
\rowcolor{yellow!15}
\textsc{Oracle} & \textsc{Prep.} & \textsc{Space} & \textsc{Query} & \textsc{Output} \\
\midrule
Global Min-Cut Value &
$O(\sigma m^{1+o(1)} + \sigma^3(fn)^{\omega - 2} + T_{\text{aug}})$ &
$O(\sigma^3)$ &
$\widetilde{O}(f^{\omega + 1})$ &
$\mu_{G+U}$ \\
Global Min-Cut Partition &
$O\bigl((\sigma^3+\sigma n^2)(fn)^{\omega - 2} + T_{\text{aug}}\bigr)$ &
$O(\sigma^3+\sigma n^2)$ &
$\widetilde{O}(f^\omega n)$ &
Global min-cut in $G+U$ \\
\bottomrule
\end{tabular}
}
\end{center}
\endgroup
\end{theorem}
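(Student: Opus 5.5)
The plan mirrors the general-model proof of \Cref{theorem:global-min-cut}, now built on \Cref{theorem:subset-s-min-cut}. During preprocessing we fix an arbitrary source $s \in V$ and use the symmetry identity
\[
\mu_{G'} \;=\; \min\bigl(\mu_{G'}(s),\; \mu_{G'^{\mathrm{rev}}}(s)\bigr),
\]
which holds for every graph $G'$, in particular for $G' = G+U$. We build two copies of the subset oracles of \Cref{theorem:subset-s-min-cut}. The first is for $G$ with source $s$ and susceptible set $E_Q$. The second is for $G^{\mathrm{rev}}$ with source $s$ and susceptible set $E_Q^{\mathrm{rev}} := \{(y,x) : (x,y)\in E_Q\}$, which again has size $\sigma$.

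Each copy gets its own optimal augmenting set, $I^*$ for $G$ and $I^*_{\mathrm{rev}}$ for $G^{\mathrm{rev}}$, computed via \Cref{theorem:s-connectivity-augmentation}. Each copy also stores its own precomputed constrained partition $(A_0, V\setminus A_0)$. The two copies together cost at most twice the preprocessing time and space of \Cref{theorem:subset-s-min-cut}, namely $O(\sigma m^{1+o(1)} + \sigma^3(fn)^{\omega-2} + T_{\text{aug}})$ time and $O(\sigma^3)$ space for the value oracle. For the partition oracle the bounds become $O((\sigma^3+\sigma n^2)(fn)^{\omega-2}+T_{\text{aug}})$ time and $O(\sigma^3+\sigma n^2)$ space.

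Queries work as follows.
\begin{itemize}
\item Reversal commutes with updates: $(G+U)^{\mathrm{rev}} = G^{\mathrm{rev}} + U^{\mathrm{rev}}$. Since $U\in E_Q^f$, we have $U^{\mathrm{rev}} \in (E_Q^{\mathrm{rev}})^f$, so it is a legal query to the second oracle.
\item For a value query, we query both value oracles, each in $\widetilde{O}(f^{\omega+1})$ time, and return the minimum.
\item For a partition query, we compare the two values and call the partition oracle of whichever copy attains the minimum. If it is the reversed copy, we output the returned partition $(A,B)$ read in the original orientation. A cut with $s$ on the source side in $(G+U)^{\mathrm{rev}}$ is exactly a cut with $s$ on the sink side in $G+U$, with the same capacity. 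This takes $\widetilde{O}(f^\omega n)$ time.
\end{itemize}

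The correctness of the symmetry step is the standard argument given before \Cref{theorem:global-min-cut}. Take a global min-cut $(A,B)$ of $G+U$. Either $s\in A$, in which case it is an $s$-cut of $G+U$, or $s\in B$, in which case $(B,A)$ is an $s$-cut of $(G+U)^{\mathrm{rev}}$ with the same capacity. Conversely, every $s$-cut in either graph is a cut of $G+U$, so the minimum of the two values is exactly $\mu_{G+U}$.

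The only delicate point I expect is checking that the reversed instance still meets the hypotheses of \Cref{theorem:subset-s-min-cut}. Its susceptible set must be a subset of the edge set of $G^{\mathrm{rev}}$, and $E_Q^{\mathrm{rev}}\subseteq E^{\mathrm{rev}}$ holds. The augmentation and the constrained partition $A_0$ must also be recomputed for the reversed graph, not reused from $G$. Once this is confirmed, the bounds follow from \Cref{theorem:subset-s-min-cut} up to a factor of $2$.
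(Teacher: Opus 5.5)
Your proposal is correct and follows the paper's proof essentially step for step. Like the paper, you fix an arbitrary $s$, build the subset $s$-min-cut oracles of \Cref{theorem:subset-s-min-cut} for both $G$ (with $E_Q$) and $G^{\mathrm{rev}}$ (with $E_Q^{\mathrm{rev}}$) using independently computed augmentations, and answer queries via $\mu_{G+U}=\min\bigl(\mu_{G+U}(s),\mu_{(G+U)^{\mathrm{rev}}}(s)\bigr)$. Your extra checks, that $U^{\mathrm{rev}}$ is a legal query to the reversed copy and that a partition from the reversed copy must be flipped back to the original orientation, are details the paper leaves implicit.
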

 
\begin{proof}
We fix an arbitrary vertex $s\in V$ at preprocessing. As in the general construction (\Cref{theorem:global-min-cut}), we compute \emph{independent} optimal augmentations, and hence distinct augmenting sets $I^*$, for $G$ and for $G^{\mathrm{rev}}$ with respect to $s$. We then instantiate the subset $s$-min-cut value and partition oracles of \Cref{theorem:subset-s-min-cut} for both $G$ (with susceptible set $E_Q$) and $G^{\mathrm{rev}}$ (with the reversed susceptible set $E_Q^{\mathrm{rev}}$, of the same size $\sigma$). The preprocessing and space bounds are same as in \Cref{theorem:subset-s-min-cut}.
 
Using the fact $\mu_{G+U} = \min\bigl(\mu_{G+U}(s),\ \mu_{(G+U)^{\mathrm{rev}}}(s)\bigr)$, a value query evaluates the $s$-min-cut value on both $G+U$ and $(G+U)^{\mathrm{rev}}$ in $\widetilde{O}(f^{\omega+1})$ time and returns the smaller value. For a partition query, we first determine which of the two attains the minimum and then invoke the corresponding subset partition oracle in $\widetilde{O}(f^\omega n)$ time.
\end{proof}
\subsection{Subset Sensitivity for Spanning Structures}
\label{subsec:subset-trees}

Finally, we lift the spanning-structure oracles of \Cref{sec:trees} to the subset model. The reductions of \Cref{theorem:edmonds_packing,lemma:k-spanning-trees-packing,lemma:forest-covering-equivalence} are reused unchanged; only the underlying packing/covering oracle is replaced by its subset counterpart.

\subsubsection{$k$-Disjoint Arborescences}
\label{subsubsec:subset-arb}

Let $G = (V, E, \cost)$ be an $n$-vertex directed graph. We are given a set of \emph{query roots} $V_Q \subseteq V$ from which the arborescence root can be chosen, and a set of \emph{susceptible edges} $E_Q$ (which may include edges not currently in $E$) subject to insertions or deletions.

\paragraph{Ground Set and Susceptible Set}
We construct $G^*$ with the dummy root $z$ and matroids $M_{\mathrm{graph}}$ and $M_{\mathrm{part}}$ as in \Cref{sec:trees}. To support dynamic root queries from $V_Q$, we include in the full ground set $k$ parallel edges from $z$ to $s$ for each $s \in V_Q$. The susceptible set is the union of the susceptible graph edges and the potential root query edges:
\[
E_0 \;=\; E_Q \;\cup\; \bigcup_{s \in V_Q} E_{z, s}, \qquad \sigma \;:=\; |E_0| \;=\; |E_Q| + k|V_Q|.
\]
A query $(s, U)$ with $s \in V_Q$ and $U \subseteq E_Q$ modifies exactly $f$ diagonal entries for the edge updates and exactly $k$ diagonal entries to activate the $(z,s)$ query edges. Thus, all queries toggle at most $f + k$ entries, strictly within $E_0$.

\begin{theorem}
\label{theorem:subset-arborescences}
Let $G = (V, E, \cost)$ be an $n$-vertex directed graph with integer edge costs in $[-W, W]$, query roots $V_Q\subseteq V$, and a susceptible set $E_Q$, and let $k \ge 1$. Let $\sigma = |E_Q| + k|V_Q|$. Then there exist the following subset sensitivity oracles that, for any $s\in V_Q$ and any $U\in E_Q^f$, answer queries on $G + U$ with high probability:
\begingroup
\setlength{\aboverulesep}{0pt}
\setlength{\belowrulesep}{0pt}
\begin{center}
\renewcommand{\arraystretch}{1.25}
\resizebox{0.98\linewidth}{!}{%
\begin{tabular}{@{}lllll@{}}
\rowcolor{yellow!15}
\textsc{Oracle} & \textsc{Prep.} & \textsc{Space} & \textsc{Query} & \textsc{Output} \\
\midrule
Decision &
$O(\sigma^2 (kn)^{\omega - 2} + (kn)^\omega)$ &
$O(\sigma^2)$ &
$O((f + k)^\omega)$ &
Whether $G + U$ has $k$-disjoint $s$-rooted arborescences \\
Weighted &
$\widetilde{O}(W \sigma^2 (kn)^{\omega - 1} + W(kn)^{\omega+1})$ &
$\widetilde{O}(W  kn  \sigma^2)$ &
$\widetilde{O}(W  kn  (f + k)^\omega)$ &
Min total cost of such  $k$-arborescences \\
\end{tabular}%
}
\end{center}
\endgroup
\end{theorem}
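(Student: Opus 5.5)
The plan is to instantiate the subset packing oracle of \Cref{thm:subset_sensitivity_packing} on the matroids $M_{\mathrm{graph}}$ and $M_{\mathrm{part}}$ of \Cref{def:arborescence-matroids-direct}, built for the auxiliary graph $G^*$ with dummy root $z$. The argument mirrors \Cref{thm:sensitivity_arborescences}, with the general packing oracle replaced by its subset counterpart.

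\emph{Ground set and extension.} The extended ground set is $E \cup E_0$, where $E_0 = E_Q \cup \bigcup_{s\in V_Q} E_{z,s}$ is the set above. Both matroids extend to $E\cup E_0$ with the same rank $r = n$: the representing matrices are just the truncated incidence matrix and the in-degree indicator matrix over the rows $V \setminus\{\}$ of $G^*$ minus $z$, and their columns are defined for every potential edge. Initially, $D_{e,e}$ is set to $x_eY^{W+\cost(e)}$ for the edges present in $G$, and to $0$ for the inactive susceptible edges, including all query edges $E_{z,s}$. This requires \Cref{lemma:det} to allow zero diagonal entries in $D$. That lemma explicitly permits a singular $L$, so this is fine.

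\emph{Query.} Given $(s,U)$, I switch the diagonal entries of the $f$ edges in $U$ (deletion: set to $0$; insertion: set to $x_eY^{W+\cost(e)}$). I also switch on the $k$ zero-cost parallel edges $E_{z,s}$. This gives at most $f+k$ modified diagonal positions, all inside $E_0$. The resulting active edge set is exactly $G^*+U+E_{z,s}$.

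\emph{Correctness.} By \Cref{theorem:edmonds_packing}, $G^*+U+E_{z,s}$ has $k$ disjoint $z$-rooted arborescences if and only if a $k$-packing of common bases exists. In any $z$-rooted arborescence the edges leaving $z$ all go to $s$, and each arborescence uses exactly one of them, since $z$ has out-edges only to $s$ and rank considerations force each base to contain an edge from $z$. Deleting $z$ therefore gives a bijection with $s$-rooted arborescences of $G+U$ of the same cost, because the $(z,s)$ edges cost zero. By \Cref{thm:k-fold-matroid_intersection}, the determinant is nonzero with high probability if and only if such a packing exists. Its minimum $Y$-degree minus $knW$ equals the minimum total cost.

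\emph{Bounds.} Substituting $r=n$, the susceptible size $\sigma$, and $f+k$ modified entries into \Cref{thm:subset_sensitivity_packing} gives the stated bounds: $O(\sigma^2(kn)^{\omega-2}+(kn)^\omega)$ preprocessing, $O(\sigma^2)$ space, and $O((f+k)^\omega)$ query time for the decision oracle. The weighted oracle gets the analogous $W$-scaled bounds, with query time $\widetilde O(Wkn(f+k)^\omega)$. The additive $O(k^2(m+\sigma))$ Laplacian cost is absorbed into $(kn)^\omega$ because $m\le n^2$.

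\emph{Main obstacle.} The one delicate point is that a query changes $k$ parallel elements. In the subset model this costs $k$ diagonal toggles rather than the $O(k^2)$ entry bundle of \Cref{rem:parallel-edges-update-packing-laplacian}, which is why the query time is $(f+k)^\omega$. I must also check that equal columns for parallel edges cause no problem in the Cauchy--Binet argument. \Cref{rem:parallel-elements-k-fold-union} handles this, since each parallel copy carries independent randomness $x_e$ and independent union scalings.
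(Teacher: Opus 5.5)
Your proposal is correct and follows essentially the same route as the paper. Both invoke \Cref{thm:subset_sensitivity_packing} on $M_{\mathrm{graph}}$ and $M_{\mathrm{part}}$ for $G^*$, with rank $r=n$, susceptible set $E_Q \cup \bigcup_{s\in V_Q} E_{z,s}$ of size $\sigma$, and $f+k$ toggled diagonal entries per query. Your extra remarks only spell out details the paper's one-line proof leaves implicit: zero diagonal entries for inactive elements, the cost-preserving bijection with $s$-rooted arborescences, independent randomness for the parallel copies, and absorbing the $O(k^2(m+\sigma))$ Laplacian cost.
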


\begin{proof}
We invoke \Cref{thm:subset_sensitivity_packing} on $M_{\mathrm{graph}}$ and $M_{\mathrm{part}}$ with rank $r = n$, packing parameter $k$, and susceptible set size $\sigma$. Since any query modifies exactly $f + k$ diagonal entries in the weight matrix, we substitute these parameters directly into the bounds of \Cref{thm:subset_sensitivity_packing} to complete the proof.
\end{proof}

\paragraph{Improved $s$-Min-Cut and Global Min-Cut via Fixed-Root Arborescences.}
Recall that the $(f+k)^\omega$ query time in \Cref{theorem:subset-arborescences} is due to the fact that a query toggles $f+k$ diagonal entries: $f$ for the edge updates and $k$ for activating the parallel $(z,s)$ query edges. If the root
$s$ is \emph{fixed} at preprocessing, the $k$ parallel edges $E_{z,s}$ can be activated once during preprocessing, so a query toggles only the $f$ entries of $U$. This yields a fixed-root oracle with query time $O(f^\omega)$ and
space $O(\sigma^2)$, where now $\sigma = |E_Q|$, independent of $k$.

This observation gives an alternate route to $s$-min-cut and global min-cut oracles. By Edmonds' theorem, $\mu_G(s)$ equals the maximum $k$ for which $G$ admits $k$ edge-disjoint $s$-rooted arborescences, and $f$ updates change this value by at most $f$. Hence, letting $\rho := \mu_G(s)$, it suffices to instantiate the fixed-root decision oracle for each $k \in [\max(0,\rho-f),\, \rho+f]$, that is, $O(f)$ instances. A query performs a binary search over these instances to locate the largest feasible $k$. Global min-cut follows by maintaining these structures for both $G$ and $G^{\mathrm{rev}}$ with respect to an arbitrary fixed $s$, since
$\mu_{G+U} = \min\bigl(\mu_{G+U}(s),\ \mu_{(G+U)^{\mathrm{rev}}}(s)\bigr)$.
We thus obtain the following.

\begin{corollary}[$s$-Min-Cut and Global Min-Cut Value via Arborescence Packing]
\label{cor:s-global-mincut-via-arb}
Let $G = (V, E)$ be an $n$-vertex directed graph with a susceptible set $E_Q \subseteq E$ of size $\sigma$. Then there exist subset sensitivity oracles that, for any update set $U \in E_Q^f$, report the values $\mu_{G+U}(s)$ (for a source $s$ fixed at preprocessing) and $\mu_{G+U}$, each with $O(f\sigma^2)$ space and $\widetilde{O}(f^\omega)$ query time, with high probability.

Moreover, in the general sensitivity model, taking $E_Q = E$ yields $s$-min-cut and global min-cut value oracles with $O(fm^2)$ space and $\widetilde{O}(f^\omega)$ query time.
\end{corollary}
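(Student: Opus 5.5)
The plan is to follow the sketch preceding the statement: reduce $\mu_{G+U}(s)$ to a bounded number of fixed-root arborescence-packing decision queries, and answer each with a fixed-root variant of \Cref{theorem:subset-arborescences}. The starting point is Edmonds' theorem in its min-cut form: $G$ has $k$ edge-disjoint $s$-rooted arborescences if and only if every $s$-cut has capacity at least $k$, i.e.\ iff $\mu_G(s)\ge k$. This holds for multigraphs and follows from \Cref{theorem:edmonds_packing} together with the standard cut characterization. Hence $\mu_{G+U}(s)$ is the largest $k$ for which the packing exists in $G+U$.

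\emph{Step 1: bounding the range of $k$.} Each single edge insertion or deletion changes the capacity of any cut by at most one. Therefore $|\mu_{G+U}(s)-\mu_G(s)|\le f$. With $\rho:=\mu_G(s)$ computed during preprocessing, it suffices to test the values $k\in[\max(0,\rho-f),\rho+f]$. These are $2f+1$ values.

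\emph{Step 2: the fixed-root oracle.} For each such $k$, build the instance of \Cref{thm:subset_sensitivity_packing} on $M_{\mathrm{graph}},M_{\mathrm{part}}$ over $G^*$ with the dummy root $z$. The $k$ parallel edges $E_{z,s}$ are placed in the ground set with their diagonal entries permanently active. The susceptible set is then just $E_Q$, of size $\sigma$.
\begin{itemize}
\item A query $U\subseteq E_Q$ toggles only $|U|\le f$ diagonal entries.
\item By \Cref{theorem:edmonds_packing} and \Cref{thm:k-fold-matroid_intersection}, the determinant is nonzero w.h.p.\ exactly when $G+U$ has $k$ disjoint $s$-rooted arborescences.
\item The decision variant of \Cref{thm:subset_sensitivity_packing} gives $O(\sigma^2)$ space and $O(f^\omega)$ query time per instance.
\end{itemize}
Feasibility is monotone in $k$, since dropping one arborescence from a $k$-packing gives a $(k-1)$-packing. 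So binary search over the $O(f)$ instances uses $O(\log f)$ determinant queries, for total query time $\widetilde O(f^\omega)$ and total space $O(f\sigma^2)$.

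\emph{Step 3: the endpoints of the range.} If $k=0$ is reached, the answer is $0$ trivially. If $\rho+f>n$, only a finite $k$ range matters, so no special handling is needed.

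\emph{Step 4: global min-cut.} Repeat the construction on $G^{\mathrm{rev}}$ with susceptible set $E_Q^{\mathrm{rev}}$ and the same $s$. Then use $\mu_{G+U}=\min(\mu_{G+U}(s),\mu_{(G+U)^{\mathrm{rev}}}(s))$ as justified before \Cref{theorem:global-min-cut}. This doubles the space and the query time. For the general model, take $E_Q=E$, so $\sigma=m$; this gives $O(fm^2)$ space.

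\emph{Main obstacle: success probability.} A single query consults $O(\log f)$ of the $O(f)$ randomized instances, and the algorithm is correct only if all of them answer correctly. The plan is to take the field of size $\mathrm{poly}(n)$ large enough that each instance's Schwartz--Zippel failure probability and its union-representation failure probability are at most $1/\mathrm{poly}(n)$, and then apply a union bound over the $O(f)\le O(n^2)$ instances. A secondary subtlety is that inserted edges in $E_Q$ not initially in $E$ must be pre-included in the ground set with diagonal entry $0$. This is allowed by \Cref{thm:subset_sensitivity_packing}, which already handles possibly singular base Laplacians.
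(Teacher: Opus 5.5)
Your proposal is correct and follows the paper's argument. Edmonds' theorem, together with the fact that $f$ updates move $\mu(s)$ by at most $f$, gives $O(f)$ fixed-root instances; each keeps the bundle $E_{z,s}$ permanently active, so a query toggles only the $\le f$ entries of $U$. You then binary search over these instances and use the $G^{\mathrm{rev}}$ trick for the global min-cut, exactly as the paper does.

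Your explicit monotonicity argument and the union bound over the $O(f)$ randomized instances are details the paper leaves implicit. One caveat applies equally to your proof and the paper's: taking $E_Q=E$ in the ``general'' model only covers updates to existing edges, which are effectively deletions. Supporting arbitrary insertions would require enlarging $E_Q$, e.g.\ to $V\times V$.
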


Compared to \Cref{theorem:subset-s-min-cut,theorem:subset-global-min-cut}, \Cref{cor:s-global-mincut-via-arb} improves the query time from $\widetilde{O}(f^{\omega+1})$ to $\widetilde{O}(f^\omega)$ and the space from $O(\sigma^3)$ to $O(f\sigma^2)$. Similarly, in the general model it provides an alternative to \Cref{theorem:s-min-cut,theorem:global-min-cut}, improving the query time from $\widetilde{O}(f^{3\omega+1})$ to $\widetilde{O}(f^\omega)$ at the cost of increased space. We note that \Cref{cor:s-global-mincut-via-arb} reports only the cut \emph{value}; reporting the cut partition still goes through the nearest-min-cut approach of \Cref{theorem:subset-s-min-cut}.

\subsubsection{$k$-Disjoint Spanning Trees}
\label{subsubsec:subset-spanning}
Recall from \Cref{thm:sensitivity_k_spanning_trees} that finding $k$ edge-disjoint spanning trees was cast as a $k$-packing of common bases of two copies of the
graphic matroid. We now restrict all updates to
a susceptible set fixed at preprocessing time, replacing the general packing oracle of \Cref{thm:sensitivity_packing} with its subset counterpart,
\Cref{thm:subset_sensitivity_packing}.

Let $E_Q$ be a susceptible set of size $\sigma$ (which may include edges not currently in $E$), representing edges that can be added, deleted, or re-weighted. Because this problem requires no auxiliary root vertices, the susceptible set for the data structure consists entirely of the edges in $E_Q$, and modifying $f$ elements toggles or re-weights exactly $f$ diagonal entries in the packing Laplacian.

\begin{theorem} \label{thm:subset_sensitivity_k_spanning_trees}
Let $k \ge 1$ be an integer, and $G = (V, E, \cost)$ be an $n$-vertex undirected graph with integer edge costs in $[-W, W]$ and a susceptible edge set $E_Q$ of size $\sigma$. Then there exist the following subset sensitivity oracles that, for any set of updates $U\subseteq E_Q$ of size at most $f$, answer queries on $G + U$ with high probability:
\begingroup
\setlength{\aboverulesep}{0pt}
\setlength{\belowrulesep}{0pt}
\begin{center}
\renewcommand{\arraystretch}{1.25}
\resizebox{0.98\linewidth}{!}{%
\begin{tabular}{@{}lllll@{}}
\rowcolor{yellow!15}
\textsc{Oracle} & \textsc{Prep.} & \textsc{Space} & \textsc{Query} & \textsc{Output} \\
\midrule
Decision &
$O(\sigma^2 (kn)^{\omega - 2} + (kn)^\omega)$ &
$O(\sigma^2)$ &
$O(f^\omega)$ &
Whether $G + U$ has $k$ edge-disjoint spanning trees \\
Weighted &
$\widetilde{O}(W\sigma^2 (kn)^{\omega - 1} + W(kn)^{\omega+1})$ &
$\widetilde{O}(W kn \cdot \sigma^2)$ &
$\widetilde{O}(W kn \cdot f^\omega)$ &
Min total cost of such $k$-spanning trees \\
\end{tabular}%
}
\end{center}
\endgroup
\end{theorem}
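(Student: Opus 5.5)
The plan is to instantiate \Cref{thm:subset_sensitivity_packing} with $M_1 = M_2 = M_{\mathrm{graph}}$, taken on the extended ground set $E_0 := E \cup E_Q$. First, the extension. Every edge of $E_Q$, including those not currently present in $G$, gets a column in the truncated incidence matrix $\bar A_{\mathrm{graph}} \in \{-1,0,1\}^{(n-1)\times |E_0|}$ with respect to an arbitrary orientation. This is a valid linear representation of the graphic matroid on $E_0$, and it has the same rank $r = n-1$ whenever $G$ itself is connected. If $G$ is disconnected, we note that the truncated incidence matrix still has $n-1$ rows and that the Cauchy–Binet argument does not need the base instance to have full rank. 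The reason is that \Cref{lemma:det} explicitly allows a singular $L$; all that matters is that, after each update, nonvanishing of the determinant certifies a $k$-packing.

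Second, we encode the current graph through the diagonal matrix $D$. For $e \in E$ we set $D_{e,e} = x_e Y^{W+\cost(e)}$, and for $e \in E_Q \setminus E$ we set $D_{e,e}=0$. An update $U \subseteq E_Q$ with $|U|\le f$ then becomes a change to at most $f$ diagonal entries indexed by $E_Q$. An insertion switches an entry on to $x_e Y^{W+\cost(e)}$, a deletion sets it to $0$, and a re-weighting changes the exponent. There are no root or query edges here, so the susceptible set passed to \Cref{lemma:det} is exactly $Q = E_Q$, with $|Q| = \sigma$.

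Correctness comes from \Cref{lemma:k-spanning-trees-packing} combined with \Cref{thm:k-fold-matroid_intersection}. Applying Cauchy–Binet to $\bar U_1 D' \bar U_2^T$ gives a sum over $S$ with $|S|=k(n-1)$. Entries switched off kill every term that uses them, so only $S \subseteq E(G+U)$ survive. Among these, the surviving terms are exactly the common bases of the $k$-fold unions, i.e.\ by \Cref{lemma:k-spanning-trees-packing} the unions of $k$ edge-disjoint spanning trees of $G+U$. Hence w.h.p.\ the decision oracle is correct, and the minimum $Y$-degree minus $k(n-1)W$ gives the minimum total cost.

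The main point to double-check is the probability argument. The random scalars $x_e$ and the $C_i$ are drawn at preprocessing, but the query-time set of active elements is chosen adversarially afterward. I would argue this as in \Cref{thm:subset_sensitivity_packing}: for a fixed query, Schwartz–Zippel applies to the determinant polynomial restricted to the active variables. Where needed, a union bound over queries is handled by taking $|\F| = \poly(n, \sigma^f)$, which still fits in $O(f\log n)$-bit words or can be absorbed into the $\widetilde O$.

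The bounds then follow by substituting $kr = k(n-1)$, the susceptible-set size $\sigma$, and $f$ modified entries into \Cref{thm:subset_sensitivity_packing}. The decision oracle gets $O(\sigma^2(kn)^{\omega-2} + (kn)^\omega)$ preprocessing, $O(\sigma^2)$ space, and $O(f^\omega)$ query time. The weighted oracle gets $\widetilde O(W\sigma^2(kn)^{\omega-1} + W(kn)^{\omega+1})$ preprocessing, $\widetilde O(Wkn\sigma^2)$ space, and $\widetilde O(Wkn f^\omega)$ query time. The $O(k^2(m+\sigma))$ cost of forming the Laplacian is dominated by $(kn)^\omega$.
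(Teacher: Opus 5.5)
Your proposal is correct and follows the paper's proof, which is exactly this instantiation of \Cref{thm:subset_sensitivity_packing} with $M_1=M_2=M_{\mathrm{graph}}$, $r=n-1$, susceptible set $Q=E_Q$, and at most $f$ modified diagonal entries. Your remarks on the extension to $E_0$, the disconnected case, and absorbing the $O(k^2(m+\sigma))$ term for simple graphs are sound refinements of what the paper leaves implicit. You should drop the optional union-bound aside: the guarantee is per query, and enlarging $\F$ to size $\poly(\sigma^f)$ would make each field element occupy $\Theta(f)$ words, which cannot be absorbed into the stated $O(f^\omega)$ query time or $O(\sigma^2)$ space.
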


\begin{proof}
We invoke the subset sensitivity oracle from \Cref{thm:subset_sensitivity_packing} on $M_1 = M_{\mathrm{graph}}$ and $M_2 = M_{\mathrm{graph}}$. The rank of both matroids is $r = n-1$, the packing parameter is $k$, and the susceptible set size is $\sigma = |E_Q|$. A query consists of $f$ edge updates, modifying at most $f$ diagonal entries in the representation. Substituting these parameters directly into the bounds of \Cref{thm:subset_sensitivity_packing} yields the result.
\end{proof}

\subsubsection{Colorful Spanning Trees}
\label{subsubsec:subset-colorful}

A colorful spanning tree is a
$1$-packing of common bases of the graphic matroid $M_{\mathrm{graph}}$ and the
color-partition matroid $M_{\mathrm{color}}$. We keep the reduction presented in \Cref{thm:sensitivity_colorful} unchanged and
confine updates to a fixed susceptible set $E_Q$ of size $\sigma$ (which may include edges not currently in $E$, each assigned to one of the $n-1$ color classes), representing edges whose presence, cost, or color can change. Since queries introduce no auxiliary structures, the susceptible set for the data structure is exactly $E_Q$, and modifying $f$ elements toggles or re-weights exactly $f$ diagonal entries.

\begin{theorem} \label{thm:subset_sensitivity_colorful}
Let $G=(V, E, \cost)$ be an undirected colored graph with a susceptible edge set $E_Q$ of size $\sigma$. Then there exist the following subset sensitivity oracles for any set of updates $U \subseteq E_Q$ of size at most $f$:
\begingroup
\setlength{\aboverulesep}{0pt}
\setlength{\belowrulesep}{0pt}
\begin{center}
\renewcommand{\arraystretch}{1.25}
\resizebox{0.98\linewidth}{!}{%
\begin{tabular}{@{}lllll@{}}
\rowcolor{yellow!15}
\textsc{Oracle} & \textsc{Prep.} & \textsc{Space} & \textsc{Query} & \textsc{Output} \\
\midrule
Decision &
$O(\sigma^2 n^{\omega - 2} + n^\omega)$ &
$O(\sigma^2)$ &
$O(f^\omega)$ &
Whether $G + U$ has a colorful spanning tree \\
Weighted &
$\widetilde{O}(W\sigma^2 n^{\omega - 1} + Wn^{\omega+1})$ &
$\widetilde{O}(W n \sigma^2)$ &
$\widetilde{O}(W n f^\omega)$ &
Min cost of a colorful spanning tree \\
\end{tabular}%
}
\end{center}
\endgroup
\end{theorem}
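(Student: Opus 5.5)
The plan is to invoke \Cref{thm:subset_sensitivity_packing} with $k=1$, taking $M_1 = M_{\mathrm{graph}}$ and $M_2 = M_{\mathrm{color}}$ as in \Cref{def:colorful-matroids}. Both matroids have rank $r=n-1$. The only real work is to fix the extensions of the two matroids to the augmented ground set $E_0 := E \cup E_Q$ so that every update in $E_Q$ becomes a pure diagonal change of $D$.

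\textbf{Constructing the extensions.} For every $e\in E_Q\setminus E$, add a column to $\bar A_{\mathrm{graph}}$ given by the truncated incidence vector of $e$ under the fixed orientation. Add a column to $\bar A_{\mathrm{color}}$ with a single $1$ in the row of the color assigned to $e$.

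Color changes are the one subtlety, since they alter a column of $A_{\mathrm{color}}$ rather than a diagonal entry. I would handle them by treating each (edge, color) pair as a separate ground element. Concretely, $E_Q$ is interpreted as a set of edge--color pairs of size $\sigma$: an edge whose color may change appears once for each admissible color, with identical graphic columns and distinct color columns. A color change of $e$ from $c$ to $c'$ then switches off the element $(e,c)$ and switches on $(e,c')$. That is two diagonal toggles per update, so $f$ updates touch at most $2f$ diagonal entries. Insertions and deletions set $D_{e,e}$ to $x_eY^{W+\wt(e)}$ or $0$, and cost changes alter the exponent. Inactive elements of $E_Q\setminus E$ are initialized with $D_{e,e}=0$, which keeps $L$ possibly singular; this is allowed by \Cref{lemma:det}.

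\textbf{Correctness.} By \Cref{thm:k-fold-matroid_intersection} with $k=1$, $\det(\bar A_{\mathrm{graph}} D' \bar A_{\mathrm{color}}^T)$ enumerates common bases among the active elements. The graph and color columns of two copies of the same edge are identical up to the color row, so a common basis uses at most one copy of each edge (its color row forces distinct colors). Such common bases are exactly colorful spanning trees of $G+U$. The determinant is nonzero w.h.p.\ iff one exists, and its minimum $Y$-degree minus $(n-1)W$ is the minimum cost.

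\textbf{Bounds.} Substituting $k=1$, $r=n-1$, and susceptible size $O(\sigma)$ into \Cref{thm:subset_sensitivity_packing} gives the stated bounds. The decision oracle has $O(\sigma^2 n^{\omega-2}+n^\omega)$ preprocessing, $O(\sigma^2)$ space and $O(f^\omega)$ query time. The weighted oracle has $\widetilde O(W\sigma^2 n^{\omega-1}+Wn^{\omega+1})$ preprocessing, $\widetilde O(Wn\sigma^2)$ space and $\widetilde O(Wnf^\omega)$ query time. The additive $O(m+\sigma)$ Laplacian cost is absorbed, since $m\le n^2$.

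The main obstacle is the color-change encoding: I must argue that the parallel copies do not create spurious bases. The graphic matroid already forbids two copies of one edge (they are parallel columns, hence dependent), so $\det((U_1)_S)=0$ for any $S$ containing two copies. This settles the issue cleanly.
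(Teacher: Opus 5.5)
Your proposal is correct and follows the paper's proof, which likewise just invokes \Cref{thm:subset_sensitivity_packing} on $M_{\mathrm{graph}}$ and $M_{\mathrm{color}}$ with $k=1$ and $r=n-1$, then substitutes the parameters. Your edge--color-pair encoding is a useful refinement. The paper asserts that recolorings are diagonal changes of $D$, but a recoloring actually alters a column of $A_{\mathrm{color}}$; the assertion holds only under your reading, where each element of $E_Q$ carries a fixed color, $\sigma$ counts edge--color pairs, and a recoloring costs two toggles.
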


\begin{proof}
We invoke \Cref{thm:subset_sensitivity_packing} on $M_{\mathrm{graph}}$ and $M_{\mathrm{color}}$ with $k=1$, rank $r = n-1$, and susceptible set size $\sigma = |E_Q|$. A query modifies at most $f$ diagonal entries. Substituting these parameters directly into the bounds of \Cref{thm:subset_sensitivity_packing} yields the result.
\end{proof}

\subsubsection{Packing Spanning Forests}
\label{subsubsec:subset-forests}

Let $E_Q$ be a susceptible set of size $\sigma$ (which may include edges not currently in $E$), representing edges that can be dynamically inserted or deleted. Modifying $f$ elements toggles $f$ diagonal entries in the covering Laplacian.

\begin{theorem} \label{thm:subset_sensitivity_k_forests}
Let $k \ge 1$ be an integer, and $G=(V, E)$ be an undirected graph with $n$ vertices and a susceptible edge set $E_Q$ of size $\sigma$. Let $E_0 = E \cup E_Q$. There exists a subset sensitivity oracle that, for any set of edge insertions or deletions $U \subseteq E_Q$ of size at most $f$ (where the updated active edge count $m_{curr} \le k(n-1)$), answers queries on $G + U$ with high probability:
\begingroup
\setlength{\aboverulesep}{0pt}
\setlength{\belowrulesep}{0pt}
\begin{center}
\renewcommand{\arraystretch}{1.25}
\resizebox{0.98\linewidth}{!}{%
\begin{tabular}{@{}lllll@{}}
\rowcolor{yellow!15}
\textsc{Oracle} & \textsc{Prep.} & \textsc{Space} & \textsc{Query} & \textsc{Output} \\
\midrule
Decision &
$O(k^2|E_0| + \sigma^2 (kn)^{\omega - 2} + (kn)^\omega)$ &
$O(\sigma^2)$ &
$O(f^\omega)$ &
Whether $G + U$ partitions into $k$ forests \\
\end{tabular}%
}
\end{center}
\endgroup
\end{theorem}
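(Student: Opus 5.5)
The plan is to obtain \Cref{thm:subset_sensitivity_k_forests} as a direct instantiation of the subset covering oracle of \Cref{thm:subset_sensitivity_covering}, with the graphic matroid $M_{\mathrm{graph}}$ as the underlying matroid. The steps are: first, fix the representation and the ground set; second, check that the hypotheses of \Cref{thm:subset_sensitivity_covering} hold; third, substitute parameters to read off the bounds; and finally, transfer correctness through \Cref{lemma:forest-covering-equivalence}.

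\emph{Representation and extension.} Fix an arbitrary orientation of every edge of $E_0 = E\cup E_Q$. Take the truncated incidence matrix $\bar A\in\{-1,0,1\}^{(n-1)\times|E_0|}$, with one column per edge of $E_0$ and the row of an arbitrary vertex deleted. This represents the graphic matroid on $E_0$, and its restriction to $E$ is exactly $M_{\mathrm{graph}}$ on $E$. The rank is $n-1$ when the graph on $E_0$ is connected. If it is not, I would still use $r=n-1$ rows: the $k$-fold union argument of \Cref{lemma:union_rep} and \Cref{lemma:aug-linear-rep} only compares the column rank of $U$ against $m$. Thus ``$E$ is independent in $M^{\vee k}$'' remains equivalent to ``$U$ has full column rank'' for a representation with $n-1$ rows. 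This is the one point where the hypothesis ``rank $r$'' of \Cref{thm:subset_sensitivity_covering} has to be read with care, and it is the step I expect to need the most attention.

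\emph{Hypotheses.} The updated active count satisfies $m_{curr}\le k(n-1)=kr$, as \Cref{thm:subset_sensitivity_covering} requires. In particular, the initial size $m\le kr$, which is needed for the random padding $R_{static},R_{switch}$ to be well defined. If the initial $m$ exceeds $kr$ but queries bring it below, the construction can be padded with $\delta_0$ computed as $\max(0,kr-m)$. Alternatively, one can restrict to instances where $m\le kr$ initially, as the statement implicitly does.

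\emph{Parameters and correctness.} The updates are insertions or deletions of edges in $E_Q$, so each one toggles a single diagonal entry of the covering Laplacian $\bar U D\bar U^T$. Substituting $r=n-1$ and $|Q|=\sigma+2f=O(\sigma)$ into \Cref{thm:subset_sensitivity_covering} gives preprocessing $O(k^2|E_0|+\sigma^2(kn)^{\omega-2}+(kn)^\omega)$, space $O(\sigma^2)$, and query time $O(f^\omega)$. By \Cref{lemma:forest-covering-equivalence}, the active edge set of $G+U$ partitions into $k$ forests iff it is independent in $M_{\mathrm{graph}}^{\vee k}$. By \Cref{lemma:aug-linear-rep}, this holds w.h.p.\ iff the determinant is non-zero, which establishes the theorem.
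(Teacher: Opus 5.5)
Your proposal is correct and follows the paper's proof. Both instantiate \Cref{thm:subset_sensitivity_covering} on $M_{\mathrm{graph}}$ with $r=n-1$ and a susceptible set consisting of $E_Q$ plus the $2f$ switch columns, note that a query toggles at most $2f$ diagonal entries of $D$ (of $D$, not of the Laplacian itself), and obtain correctness from \Cref{lemma:forest-covering-equivalence}. Your extra remarks are valid refinements that the paper leaves implicit: using $n-1$ rows is harmless even if the graph on $E_0$ is disconnected, because \Cref{lemma:union_rep} and \Cref{lemma:aug-linear-rep} only need the columns to represent the matroid, and an initial $m>kr$ can be handled by padding.
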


\begin{proof}
We invoke the subset sensitivity covering oracle from \Cref{thm:subset_sensitivity_covering} on the graphic matroid $M_{\mathrm{graph}}$. The rank of the matroid is $r = n-1$, the covering parameter is $k$, and the susceptible set size is bounded by $\sigma = |E_Q|$.

A query consists of up to $f$ edge insertions or deletions confined to the set $E_Q$. In the algebraic framework of \Cref{thm:subset_sensitivity_covering}, inserting or deleting an edge corresponds to toggling its respective diagonal entry in the selection matrix $D$. Furthermore, the data structure dynamically toggles at most $f$ auxiliary random columns in $R_{switch}$ to maintain a perfectly square active submatrix.

Thus, the total number of toggled diagonal entries is at most $2f$. Substituting the matroid rank $r = n-1$ and the susceptible set size $\sigma$ into the bounds of \Cref{thm:subset_sensitivity_covering} yields the result. The space complexity relies strictly on the susceptible set, $O(\sigma^2)$, and the query time is fully decoupled from the graph size and the parameter $k$, taking exactly $O(f^\omega)$ operations.
\end{proof}

\subsection{Subset Sensitivity for Matroid Parity}
Recall from \Cref{sec:matroid_parity} that enumerating or finding the minimum-weight parity basis of a linear matroid of rank $r$ with $m$ pairs reduces to maintaining the determinant (or Pfaffian) of the $r \times r$ skew-symmetric Linear Matroid Parity (LMP) matrix:
$$
\mathcal{M}_{LMP} := A D B^T - B D A^T,
$$
where $A, B \in \F^{r \times m}$ represent the static coordinate vectors of the pairs, and $D \in \F(x)[Y]^{m \times m}$ is the diagonal matrix encoding pair weights and formal variables. 

To maintain $\det(\mathcal{M}_{LMP})$ in the sensitivity setting using our framework (\Cref{lemma:det}), we factorize it into the standard form $\hat{U} D_{ext} \hat{V}^T$. 
We define two augmented representation matrices: 
$$
\hat{U} := [A \mid B] \in \F^{r \times 2m} \quad
\text{and} \quad \hat{V} := [B \mid -A] \in \F^{r \times 2m}.
$$
Let $D_{ext}$ be the $2m \times 2m$ block-diagonal matrix constructed by duplicating $D$:
$$
D_{ext} := \begin{pmatrix} D & 0 \\ 0 & D \end{pmatrix}.
$$
The \emph{Parity Laplacian} is defined as follows and is equivalent to the LMP matrix:
$$
\mathcal{L}_{parity} := \hat{U} D_{ext} \hat{V}^T = A D B^T - B D A^T = \mathcal{M}_{LMP}.
$$
This block-matrix factorization allows us to directly apply our subset sensitivity determinant data structures.

Suppose the representations of the pairs or their weights undergo an update set $U$ of size $f$. An update to pair $i$ modifies exactly two diagonal entries in $D_{ext}$ (at indices $i$ and $i+m$). Therefore, applying $f$ pair updates translates to modifying at most $2f$ diagonal entries in $D_{ext}$.

\begin{theorem}[Matroid Parity Subset Sensitivity Oracle]
\label{thm:parity_subset_sensitivity}
Let $E_Q \subseteq E$ be a susceptible set of $\sigma$ pairs. There exist randomized subset sensitivity oracles that, given $f$ updates confined to pairs in $E_Q$, answer parity basis queries with high probability:
\begin{enumerate}
\item \textbf{Decision Oracle:} Requires $O(\sigma^2 r^{\omega - 2} + r^\omega)$ preprocessing time, $O(\sigma^2)$ space, and answers queries in $O(f^\omega)$ time.
\item \textbf{Weighted Oracle:} Requires $\widetilde{O}(W \sigma^2 r^{\omega - 1} + W r^{\omega+1})$ preprocessing time, $\widetilde{O}(W r \sigma^2)$ space, and answers queries in $\widetilde{O}(W r f^\omega)$ time.
\end{enumerate}
\end{theorem}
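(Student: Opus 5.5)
The plan is to apply the subset determinant oracle of \Cref{lemma:det} directly to the factorization $\mathcal{L}_{parity} = \hat{U} D_{ext} \hat{V}^T$. We take $A := \hat{U}$, $B := \hat{V}$ (both $r \times 2m$), the diagonal matrix $D_{ext}$, and the susceptible index set
\[
Q := \{\, i,\ i+m : i \in E_Q \,\} \subseteq [2m],
\]
of size $2\sigma$. The first step is to check that any admissible query only touches the diagonal of $D_{ext}$ at indices of $Q$. A pair can be inserted, deleted, or reweighted. Insertion and deletion toggle $D_{i,i}$ between $0$ and $x_iY^{W+\wt(i)}$, and reweighting changes its exponent. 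Pairs in $E_Q$ that are not initially present are placed in the ground set with $D_{i,i}=0$. Each such update changes exactly the two entries at $i$ and $i+m$, so $f$ pair updates change at most $2f$ diagonal entries, all inside $Q$. Updates that alter the coordinate vectors $a_i,b_i$ themselves can be handled the same way. For each such variant we add a fresh column pair to $\hat U,\hat V$ in preprocessing, include it in $Q$, and realize the change as switching off the old pair and switching on the new one. This at most doubles $\sigma$ and $f$.

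Correctness then comes from \Cref{thm:parity_enumeration} applied to the updated matrix $\hat U D'_{ext}\hat V^T = AD'B^T - BD'A^T$. For the decision oracle we set $Y=1$ ($d=0$). Then $\det \neq 0$ w.h.p.\ if and only if a parity basis exists, because the Pfaffian is a nonzero polynomial in the $x_i$ exactly when some $\det(M_S)\neq 0$, and Schwartz--Zippel applies. For the weighted oracle we keep $Y$ with $d=2W$. We read off the minimum $Y$-degree $\deg_{\min}$ of the determinant; the minimum weight is $(\deg_{\min} - rW)/2$.

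The complexity bounds follow by substitution into \Cref{lemma:det} with $|Q|=O(\sigma)$, matrix dimension $r$, and $O(f)$ modified entries. For $d=0$ this gives $O(\sigma^2 r^{\omega-2}+r^\omega)$ preprocessing, $O(\sigma^2)$ space, and $O(f^\omega)$ query time. For $d=O(W)$ we use the polynomial version of \Cref{lemma:det}, exactly as in the weighted part of \Cref{thm:subset_sensitivity_packing} with $kr$ replaced by $r$. This gives $\widetilde O(W\sigma^2 r^{\omega-1}+Wr^{\omega+1})$ preprocessing, $\widetilde O(Wr\sigma^2)$ space, and $\widetilde O(Wrf^\omega)$ query time.

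The main subtlety is the randomness in \Cref{lemma:det}. The identity $\hat U D_{ext}\hat V^T=\mathcal{M}_{LMP}$ uses the same diagonal $D$ twice, so the stored $Q\times Q$ block $\hat V_Q^T L^{-1}\hat U_Q$ contains correlated entries. We must make sure the lemma needs no independence among diagonal entries. It only uses the matrix determinant lemma with a (possibly singular, padded) base, which is an exact algebraic identity, so correctness holds for any $D'$. The only probabilistic ingredient is then the Schwartz--Zippel argument of \Cref{thm:parity_enumeration}, which is over the $x_i$ and the random pad, and that goes through unchanged. Finally, the $Y$-degree bound for the polynomial case should be checked: each entry of $\mathcal{L}_{parity}$ has degree at most $2W$, so $d=2W$ as required.
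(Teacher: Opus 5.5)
Your proposal is correct and follows the paper's proof. Both invoke \Cref{lemma:det} on $\hat U D_{ext}\hat V^T$ with the $2\sigma$ susceptible indices $\{i,\,i+m : i\in E_Q\}$ and at most $2f$ modified diagonal entries, obtain every bound by direct substitution, and take correctness from \Cref{thm:parity_enumeration}. Your aside on updates that change $a_i,b_i$ holds only when the replacement vectors are fixed at preprocessing, since \Cref{lemma:det} freezes the columns of $\hat U,\hat V$, but the statement does not need it.
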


\begin{proof}
We invoke \Cref{lemma:det} on the factorization $\hat{U} D_{ext} \hat{V}^T$. The matrix $D_{ext}$ has dimension $2m \times 2m$. The designated susceptible set corresponds to the $2\sigma$ diagonal entries representing the pairs in $E_Q$. Modifying $f$ pairs toggles at most $2f$ diagonal entries within this susceptible set. Substituting the susceptible size $2\sigma$ and update size $2f$ directly into the bounds of \Cref{lemma:det} completes the proof, reducing the space complexity to $O(\sigma^2)$.
\end{proof}

\subsection{Subset Sensitivity for $\alpha$-Factors and Matching}

We now lift the $\alpha$-factor oracles of \Cref{sec:sensitivity-alpha-factors} to the subset model. Recall that in \Cref{lemma:factor-to-parity}, we reduced the $\alpha$-factor problem to matroid parity, which allowed us to handle two regimes in the general setting: the $k$-bounded case (\Cref{sec:bounded-alpha}), and the arbitrary-degree case (\Cref{sec:arbitrary-alpha}), where an update on $G$ is reduced to an update on an auxiliary graph $H$ of dimension $O(nf)$ using \Cref{thm:bijection}. To obtain subset sensitivity, we replace the general sensitivity oracle for linear matroid parity (\Cref{thm:parity_general_sensitivity}) with its subset sensitivity counterpart (\Cref{thm:parity_subset_sensitivity}), which fixes a susceptible set $E_Q$ of size $\sigma$ during preprocessing. The following results are obtained by a straightforward combination of \Cref{lemma:factor-to-parity} and \Cref{thm:parity_subset_sensitivity}.

\paragraph{The $k$-Bounded Case} We obtain the following oracle for the case when $\alpha(v) \leq k$ for all $v \in V$.

\begin{theorem}
\label{thm:subset-bounded-alpha}
Let $G=(V,E,\cost)$ be an undirected graph with integer edge costs in $[-W,W]$, degree bounds $\alpha \le k$, and a susceptible edge set $E_Q$ of size $\sigma$. Then there exist the following subset sensitivity oracles that, for any update set $U \subseteq E_Q$ with $f = |U|$, answer queries on $G+U$ with high probability:

\begingroup
\setlength{\aboverulesep}{0pt}
\setlength{\belowrulesep}{0pt}
\begin{center}
\renewcommand{\arraystretch}{1.25}
\resizebox{0.98\linewidth}{!}{%
\begin{tabular}{@{}lllll@{}}
\rowcolor{yellow!15}
\textsc{Oracle} & \textsc{Prep.} & \textsc{Space} & \textsc{Query} & \textsc{Output} \\
\midrule
Existence &
$O(\sigma^2 (kn)^{\omega-2} + (kn)^\omega)$ &
$O(\sigma^2)$ &
$O(f^\omega)$ &
Whether $G+U$ has an $\alpha$-factor \\
Min-Cost &
$\widetilde{O}(W \sigma^2 (kn)^{\omega-1} + W(kn)^{\omega+1})$ &
$\widetilde{O}(W kn \sigma^2)$ &
$\widetilde{O}(W kn f^\omega )$ &
Weight of a min-cost $\alpha$-factor of $G+U$ \\
\end{tabular}%
}
\end{center}
\endgroup
\end{theorem}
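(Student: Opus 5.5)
The plan is to compose \Cref{lemma:factor-to-parity} and \Cref{corollary:factor-to-parity} with the subset parity oracle of \Cref{thm:parity_subset_sensitivity}, mirroring the way \Cref{thm:sensitivity-bounded-alpha} was derived from \Cref{thm:parity_general_sensitivity}.

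\emph{Construction.} Let $E_0 := E \cup E_Q$. For the graph $(V,E_0)$ we build the $N\times 2|E_0|$ representation $A$ of \Cref{lemma:factor-to-parity}, with $N=\sum_v\alpha(v)\le kn$. The column pair of $e=(u,v)$ consists of $e_u$, supported on the rows $u_1,\dots,u_{\alpha(u)}$, and $e_v$, supported on the rows of $v$. Every indeterminate is instantiated by an independent uniform element of a sufficiently large field. This fixes a linear matroid of rank $r=N$ on the pairs indexed by $E_0$. Setting $A_{\mathrm{par}},B_{\mathrm{par}}$ to be the two halves, we form the parity Laplacian $\hat U D_{ext}\hat V^T$ with
\[
D_{e,e}=x_e Y^{W+\cost(e)}\ \text{ for } e\in G, \qquad D_{e,e}=0\ \text{ for } e\in E_Q\setminus E.
\]
Switching a pair off through its $D$ entry is equivalent to deleting the pair, because in the Pfaffian expansion of \Cref{thm:parity_enumeration} every term containing it vanishes. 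The Pfaffian over the active pairs therefore enumerates exactly the parity bases of the matroid restricted to the edges of $G+U$. By the bijection in \Cref{lemma:factor-to-parity}, applied to $G+U$ (whose representation is just the column restriction of $A$), these are exactly the $\alpha$-factors of $G+U$. \Cref{corollary:factor-to-parity} transfers the weights.

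\emph{Updates and bounds.} An update in $U\subseteq E_Q$ is an insertion, a deletion, or a cost change of one edge. Each such update changes only the diagonal entries of $D_{ext}$ belonging to that pair: at most two entries per edge, so at most $2f$ in total, all inside the susceptible set of $2\sigma$ indices. We invoke \Cref{thm:parity_subset_sensitivity} with $r=N\le kn$, susceptible size $2\sigma$, and update size $2f$. For existence this gives $O(\sigma^2(kn)^{\omega-2}+(kn)^\omega)$ preprocessing, $O(\sigma^2)$ space, and $O(f^\omega)$ query time. For min-cost it gives $\widetilde O(W\sigma^2(kn)^{\omega-1}+W(kn)^{\omega+1})$ preprocessing, $\widetilde O(Wkn\sigma^2)$ space, and $\widetilde O(Wknf^\omega)$ query time. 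The minimum cost is read off as half of (the minimum $Y$-degree minus $NW$). Building $A$ costs $O(k|E_0|)$, which is dominated by the stated terms, or can be absorbed into them.

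\emph{Main obstacle.} The delicate point is that \Cref{lemma:factor-to-parity} requires the column representation to be generic, which here means fresh indeterminates. We must ensure the single random instantiation, fixed at preprocessing, works with high probability simultaneously for the updated graph $G+U$. I would argue this per query. For a fixed $U$, the Pfaffian of the restricted matrix, viewed as a polynomial in all the random entries and the $x_e$, is nonzero whenever an $\alpha$-factor exists. The reason is that distinct factors give distinct monomials in the $x_e$, and, as in \Cref{lemma:factor-to-parity}, each minor contains an uncancelled monomial. The Schwartz--Zippel lemma then gives failure probability $\operatorname{poly}(kn)/|\F|$. If one wants a guarantee uniform over all $\le\binom{\sigma}{f}$ queries, a field of size $\operatorname{poly}(n)^{f}$ is needed; otherwise the per-query high-probability guarantee matches the convention used elsewhere in the paper.
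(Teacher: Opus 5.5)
Your proposal is correct and follows the paper's own route: the paper's proof is the one-line instantiation of \Cref{thm:parity_subset_sensitivity} with rank $r = N \le kn$, via the parity formulation of \Cref{lemma:factor-to-parity}. Your extra details make explicit what the paper leaves implicit. These are building the representation on $E \cup E_Q$ with zeroed diagonal entries so that insertions become diagonal switches, counting two diagonal entries per pair update, and the per-query Schwartz--Zippel argument for the random instantiation of the indeterminates.
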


\begin{proof}
Recall from \Cref{sec:bounded-alpha} that when $\max_v \alpha(v) \le k$, the underlying matrix $A$ for the matroid parity problem is of size $kn \times 2m$ (yielding a rank $r = kn$), and each edge corresponds to one pair in the matroid ground set. Therefore, plugging $r = kn$ into \Cref{thm:parity_subset_sensitivity} directly yields the desired oracle.
\end{proof}

\paragraph{Arbitrary Degree Bounds via Auxiliary Graphs}
We now remove the assumption $\alpha \le k$. As established in \Cref{sec:arbitrary-alpha}, a direct expansion is prohibitively large, so we retain the auxiliary graph $H$ and the bijection of \Cref{thm:bijection}: a query on $G+U$ is answered by an equivalent query on a $\beta$-factor of $H + U_H$, where $H$ has degree bounds $\beta \le 2f$, dimension $N_H = O(nf)$, and update size $|U_H| = O(f)$. In the subset model, the susceptible edges $E_Q$ of $G$ translate, under the encoding of \Cref{sec:arbitrary-alpha}, to a susceptible set $E_Q^H$ in $H$: each susceptible edge $(a,b) \in E_Q$ affects only its Layer 1 copy $(a_1,b_1)$ or Layer 2 copy $(a_2,b_2)$, together with the associated parallel and sink edges at $a$ and $b$. Hence, $|E_Q^H| = O(\sigma)$, and we can instantiate the $k$-bounded subset oracle of \Cref{thm:subset-bounded-alpha} on $H$ using the parameter $k := 2f$.

\begin{theorem}
\label{thm:subset-arbitrary-alpha}
Let $G = (V,E,\cost)$ be an undirected graph with integer edge costs in $[-W,W]$, arbitrary degree bounds $\alpha$, and a susceptible edge set $E_Q$ of size $\sigma$. Then there exist the following subset sensitivity oracles that, for any update set $U \subseteq E_Q$ with $f = |U|$, answer queries on $G+U$ with high probability:

\begingroup
\setlength{\aboverulesep}{0pt}
\setlength{\belowrulesep}{0pt}
\begin{center}
\renewcommand{\arraystretch}{1.25}
\resizebox{0.98\linewidth}{!}{%
\begin{tabular}{@{}lllll@{}}
\rowcolor{yellow!15}
\textsc{Oracle} & \textsc{Prep.} & \textsc{Space} & \textsc{Query} & \textsc{Output} \\
\midrule
Existence &
$O(\sigma^2 (fn)^{\omega-2} + (fn)^\omega + T_{\alpha}) $ &
$O(\sigma^2)$ &
$O(f^\omega)$ &
Whether $G+U$ has an $\alpha$-factor \\
Min-Cost &
$\widetilde{O}(W \sigma^2 (fn)^{\omega-1} + W(fn)^{\omega+1})$ &
$\widetilde{O}(W fn \sigma^2)$ &
$\widetilde{O}(W n f^{\omega+1} )$ &
Weight of a min-cost $\alpha$-factor of $G+U$ \\
\end{tabular}%
}
\end{center}
\endgroup
\end{theorem}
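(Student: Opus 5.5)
The plan is to combine the auxiliary-graph reduction of \Cref{thm:bijection} with the $k$-bounded subset oracle of \Cref{thm:subset-bounded-alpha}, instantiated on $H$ with $k := 2f$.

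\textbf{Preprocessing.} I would proceed in the following order.
\begin{enumerate}
\item Compute a min-cost $\alpha$-factor $M$ of $G$ in time $T_\alpha$.
\item Build the auxiliary graph $H$ and its degree bounds $\beta \le 2f$, exactly as in \Cref{sec:arbitrary-alpha}. Its parity matrix has rank $r_H = \sum_x \beta(x) = O(nf)$.
\item Define a susceptible set $E_Q^H$ as the union, over all $(a,b) \in E_Q$, of every $H$-element that the encoding $U \mapsto U_H$ may touch:
\begin{itemize}
\item the Layer~2 copy $(a_2,b_2)$ when $(a,b)\notin M$;
\item when $(a,b)\in M$: the Layer~1 copy $(a_1,b_1)$, one designated parallel edge each of the forms $(a_1,a_2)$ and $(b_1,b_2)$, one designated edge each of the forms $(r_a,s_a)$ and $(r_b,s_b)$, and the three potential insertions $(r_a,r_b)$, $(s_a,a_1)$, $(s_b,b_1)$.
\end{itemize}
The potential insertions are placed in the extended ground set with their diagonal entries initially zero. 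This is exactly the ``extension to $E_0$'' allowed in \Cref{thm:subset_sensitivity_packing} and \Cref{thm:parity_subset_sensitivity}.
\end{enumerate}

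\textbf{The designated-copy subtlety.} The encoding deletes ``one edge'' of each parallel or sink form, so the chosen copy must be fixed in advance. When several deleted $M$-edges share an endpoint $a$, they must use distinct copies. I would therefore associate with each $M$-edge in $E_Q$ incident to $a$ its own parallel copy and its own sink copy at $a$. There are at most $f_v \le 2f$ such copies, so if more than $f_v$ edges of $E_Q\cap M$ meet $a$, copies are assigned per query from a pool of $f_v$ copies, all of which are included in $E_Q^H$. Either way $|E_Q^H| = O(\sigma + \sum_{v\in V(E_Q)} f_v) = O(\sigma f)$. I first hoped to bound this by $O(\sigma)$, but that does not appear to work directly. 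I would resolve it by choosing copies by index, giving the $i$-th deleted $M$-edge at $v$ in a query the $i$-th copy. This uses only $\min(\deg_{E_Q\cap M}(v), f_v)$ copies at $v$, which is at most $\deg_{E_Q}(v)$, so the total is $O(\sigma)$.

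\textbf{Queries.} A query $U \subseteq E_Q$ with $|U| \le f$ becomes $U_H$, which toggles $O(f)$ diagonal entries, all inside $E_Q^H$. By \Cref{thm:bijection}, $G+U$ has an $\alpha$-factor iff $H+U_H$ has a $\beta$-factor, which holds iff $\det(\mathcal{L}_{parity}) \ne 0$ for $H+U_H$. For costs, the minimum $Y$-degree gives $\cost_H(F)$. I then add $\cost(M)$ and subtract the costs of the deleted $M$-edges, which are known from $U$ in $O(f)$ time.

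\textbf{Plugging in the bounds.} Substituting $kn := fn$ (since $r_H = O(nf)$), susceptible size $O(\sigma)$, and update size $O(f)$ into \Cref{thm:subset-bounded-alpha} gives:
\begin{itemize}
\item preprocessing $O(\sigma^2 (fn)^{\omega-2} + (fn)^\omega)$ plus $T_\alpha$;
\item space $O(\sigma^2)$;
\item query time $O(f^\omega)$ for existence;
\item query time $\widetilde{O}(W\cdot fn\cdot f^\omega) = \widetilde{O}(Wnf^{\omega+1})$ for min-cost, with space $\widetilde{O}(Wfn\sigma^2)$.
\end{itemize}

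\textbf{Main obstacle.} I expect the main difficulty to be the designated-copy bookkeeping above. It must guarantee that every element toggled by any admissible $U_H$ lies in a susceptible set of size $O(\sigma)$ fixed at preprocessing, while keeping the bijection argument of \Cref{thm:bijection} intact. Copies are interchangeable there, since parallel edges are symmetric and randomized independently, so the particular index-based choice of copies should be harmless.
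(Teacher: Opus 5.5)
Your proposal is correct and is essentially the paper's argument: fix a min-cost $\alpha$-factor $M$, build $H$ with $\beta\le 2f$ and rank $O(nf)$, and map $E_Q$ to a susceptible set $E_Q^H$ of size $O(\sigma)$. You then instantiate the $k$-bounded subset oracle with $k=2f$ via \Cref{thm:bijection}. Your index-based assignment of parallel and sink copies (the $i$-th deleted $M$-edge at $v$ uses the $i$-th copy, so at most $\min(\deg_{E_Q}(v), f_v)$ copies per vertex are ever touched) explicitly justifies the bound $|E_Q^H|=O(\sigma)$, which the paper only asserts.
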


\begin{proof}
First, we compute a fixed initial $\alpha$-factor of minimum-cost, and refer to it as $M$. Finding the $\alpha$-factor of $G$ takes $O(T_{\alpha})$ time~\cite{Anstee85, DuanHZ20}. Then we construct the graph $H$ and the function $\beta$.

By \Cref{thm:bijection}, a query on $G+U$ is equivalent, with matching cost, to a $\beta$-factor query on $H + U_H$, where $H$ has $4n$ vertices, degree bounds $\beta \le 2f$, dimension $N_H = O(nf)$, and $|U_H| = O(f)$. The susceptible edges $E_Q$ of $G$ correspond to a susceptible set $E_Q^H$ of $H$ with $|E_Q^H| = O(\sigma)$. We instantiate \Cref{thm:subset-bounded-alpha} on $H$ with capacity parameter $k := 2f$, susceptible-set size $O(\sigma)$, and update size $O(f)$. Substituting $k = O(f)$ and matrix dimension $N_H = O(nf)$ into the bounds of \Cref{thm:subset-bounded-alpha} yields the stated preprocessing, space, and query complexities.
\end{proof}

\section{Lower Bounds}
\label{sec:lb}
In this section, we present our lower bound results. 

We first show that any $(s,t)$-\BMF \ sensitivity oracle requires $\Omega(\min\{\sigma^2,n^2\})$ space, irrespective of the query time, assuming at least two edge updates. This holds even for the case when flow value is bounded by 1, and imply that our oracles of~\Cref{theorem:FT-all-pairs-flow},~\ref{theorem:st-flow} have almost optimal space for a constant number of edge updates.

\begin{theorem}[Flows and Cuts]
\label{thm:flow-cut}
For any positive integers $n$ and $\sigma$ satisfying $4 \le \sigma \le n(n-1)$, there exists an $n$-vertex directed unit-capacity graph $G=(V,E)$ and a susceptible edge set $E_Q$ disjoint from $E$, with $|E_Q|=O(\sigma)$, such that any subset sensitivity oracle supporting at most $f=2$ updates requires $\Omega(\min\{\sigma^2,n^2\})$ bits of space for each of the following problems:
\begin{enumerate}
\item[(a)] $k$-bounded $(s,t)$ maximum flow and $(s,t)$ minimum cut, for $k=1$;
\item[(b)] Global directed minimum cut;
\item[(c)] $s$-min-cut and $s$-rooted spanning arborescences (fixed source $s$).
\end{enumerate}
\end{theorem}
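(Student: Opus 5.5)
We prove the bound by an encoding argument. We embed an arbitrary bit matrix $X \in \{0,1\}^{p \times p}$, with $p = \Theta(\min\{\sqrt{\sigma}, n\})$, into the edge set of $G$. We then show that the answers of any oracle to $2$-update queries drawn from $E_Q$ suffice to recover every bit of $X$. Hence the oracle must distinguish $2^{p^2}$ inputs and needs $\Omega(p^2)=\Omega(\min\{\sigma,n^2\})$ bits.

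Getting $\min\{\sigma^2,n^2\}$ rather than $\min\{\sigma,n^2\}$ needs $p = \Theta(\min\{\sigma,n\})$. This requires each bit to be probed by a \emph{pair} of susceptible edges, with only $O(p)$ susceptible edges in total. So the plan is: $E_Q$ consists of $O(p)$ candidate edges, $\{(s,a_i)\}_{i\le p}$ and $\{(b_j,t)\}_{j\le p}$. A query inserts exactly one edge of each kind, giving $\binom{}{}p^2$ queries from only $2p$ susceptible edges.

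\textbf{Part (a).} Take vertex sets $\{a_1,\dots,a_p\}$ and $\{b_1,\dots,b_p\}$ together with $s,t$, and let $E=\{(a_i,b_j): X_{ij}=1\}$. Initially $s$ has no out-edges and $t$ no in-edges, so $\lambda_G(s,t)=0$. After inserting $(s,a_i)$ and $(b_j,t)$, an $s$–$t$ path exists iff $X_{ij}=1$, since any path must be $s\to a_i\to b_j \to t$. The $1$-bounded flow value, equivalently whether the min-cut value is $0$ or $1$, therefore reveals $X_{ij}$. We need $2p+2\le n$ and $|E_Q|=2p=O(\sigma)$, so we set $p=\lfloor\min\{\sigma/2,(n-2)/2\}\rfloor$. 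The hypothesis $\sigma\ge 4$ ensures $p\ge 1$, and $\sigma\le n(n-1)$ is only used to keep $E_Q$ a set of genuine non-edges.

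\textbf{Part (c).} For the $s$-min-cut and for $s$-rooted arborescences, the gadget must be modified so that reachability to \emph{all} vertices encodes the bit. First, add a backbone: edges $(t,v)$ for every vertex $v$ other than the $a_i$'s and $b_j$'s, together with edges $t\to a_{i'}$ for all $i'$. With these, once $t$ is reachable, everything is reachable.

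The vertices $b_{j'}$ with $j'\neq j$ remain the difficulty. We handle them by also adding the edges $(a_{i'},b_{j'})$, so that each $b_{j'}$ is reached from $t$ through the backbone and some $a_{i'}$. This does not create a spurious $s\to t$ route: the only edge into $t$ is the inserted $(b_j,t)$, and reaching $b_j$ requires $s\to a_i\to b_j$ or a route through $t$, which is circular.

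I expect this step to be the main obstacle. Routing every other $b_{j'}$ to $t$ must not create alternative $s\to b_j$ paths. The fix I would try is to make $t$ the hub and give $b_j$ no incoming edges other than those from the $a$-layer. Then every vertex is reachable from $s$ iff $t$ is reachable, iff $X_{ij}=1$. So $\mu_{G+U}(s)\ge1$ iff an $s$-rooted arborescence exists iff $X_{ij}=1$. One must still check that $b_j$ has in-edges only from $a$-vertices. Since $t$ reaches $a$-vertices, which reach $b_j$, after $t$ is reached everything is fine, and before that $b_j$ is reached only via $a_i$.

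\textbf{Part (b).} For the global min-cut, add edges from every vertex back to $s$, so that strong connectivity reduces to reachability from $s$. The global min-cut is then at least $1$ iff the graph is strongly connected iff $X_{ij}=1$. We finally double-check that these back edges into $s$ do not create new $s\to t$ paths; they do not, since they only enter $s$.
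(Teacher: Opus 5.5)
Your part (a) is correct and uses the paper's gadget (the paper's $x_a,y_b$ are your $a_i,b_j$). The two insertions $(s,a_i),(b_j,t)$ from a $2p$-element $E_Q$ leave $s\to a_i\to b_j\to t$ as the only candidate path, and $p=\Theta(\min\{\sigma,n\})$ yields the $\Omega(\min\{\sigma^2,n^2\})$ bound. You should delete the opening false start with $p=\Theta(\min\{\sqrt{\sigma},n\})$.

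The gap is in (c), and therefore also in (b), which you build on top of (c). Your backbone deliberately omits edges from $t$ into the $b$-layer, so $b_{j'}$ can only be entered from some $a_{i'}$ with $X_{i'j'}=1$. If column $j'$ of $X$ is identically zero, then $b_{j'}$ has in-degree $0$ in $G+U$ for every query. So even when $X_{ij}=1$:
the $s$-min-cut is $0$;
no $s$-rooted arborescence exists;
with your back edges, the graph is not strongly connected, so the global min-cut is also $0$.

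Thus your claims ``once $t$ is reachable, everything is reachable'' and ``every vertex is reachable from $s$ iff $t$ is reachable'' are false. All matrices containing a zero column receive the same all-zero answers, so the queries do not recover an arbitrary $X$.

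Your proposed patch does not repair this. If you add $(a_{i'},b_{j'})$ for all pairs, you plant $(a_i,b_j)$ and destroy the encoding. If you add only the pairs with $X_{i'j'}=1$, you add nothing.

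The observation you are missing is that edges leaving $t$ can never help $s$ reach $t$: any walk from $s$ that uses an edge $(t,\cdot)$ has already visited $t$. Hence you may add $(t,v)$ for \emph{every} $v\neq s,t$, including the entire $b$-layer, and your concern about spurious $s\to b_j$ routes disappears. This is exactly the paper's graph, which also contains $(v,s)$ for all $v\neq s,t$.

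In that graph, the set reachable from $s$ before reaching $t$ is $\{s,a_i\}\cup\{b_{j''}:X_{ij''}=1\}$. Once $t$ is reached, every vertex is reached and can return to $s$. Consequently the $s$-min-cut, arborescence existence, and global min-cut all reveal $X_{ij}$ from a single graph.

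Alternatively, you could restrict the encoding to matrices with no zero column, of which there are $(2^p-1)^p=2^{\Omega(p^2)}$. Adding the edges out of $t$ is the simpler fix.
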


\begin{proof}
Let $p = \min(\lfloor \sigma/4 \rfloor, \lfloor n/4 \rfloor)$ and let $M\in\{0,1\}^{p\times p}$ be an arbitrary binary matrix. Define a graph $G$ with vertex-set $V(G)=X\cup Y\cup\{s,t\}\cup Z$, where $X=\{x_1,\ldots,x_p\}$, $Y=\{y_1,\ldots,y_p\}$, and $Z$ consists of $n-2p-2$ isolated dummy vertices, and the edge set 
$$
E(G)=\{(x_a,y_b):M[a,b]=1\}\cup\{(v,s),(t,v):v\in V(G)\setminus\{s,t\}\}.
$$
Also, define the susceptible edge set $E_Q=\{(s,x_a),(y_b,t):a,b\in[p]\}$, so $E_Q\cap E(G)=\emptyset$ and $|E_Q|=2p\le\sigma$.

Consider an update $U=\{(s,x_i),(y_j,t)\}\subseteq E_Q$. In $G+U$, every simple $(s,t)$-path must use the inserted edges $(s,x_i)$ and $(y_j,t)$, since the edges $(v,s)$ and $(t,v)$ cannot lie on a simple $(s,t)$-path. Hence $(s, x_i, y_j, t)$ is the unique candidate $(s,t)$-path, which exists if and only if $(x_i,y_j)\in E(G)$, that is, $M[i,j]=1$.
We now show that each of the listed problems distinguishes whether $M[i,j]=1$.

\begin{enumerate}
\item[(a)] Since there is at most one $(s,t)$-path, the value of the $1$-bounded $(s,t)$-maximum flow (equivalently, the $(s,t)$-minimum cut) value in $G+U$ is $1$ if $M[i,j]=1$ and $0$ otherwise.

\item[(b)] 
Observe that any cut having $s$ on the sink side or $t$ on the source side will have a capacity of 1, because of the fixed edges of form $(v,s)$ and $(t,v)$. As discussed previously, $(s,t)$-min-cut in $G+U$ is 1 if and only if $M[i, j] = 1$. Hence the value of global directed minimum cut in $G+U$ equals $M[i,j]$.


\item[(c)] From the above discussions, $s$-min-cut in $G+U$ has value exactly $M[i,j]$. Moreover, an $s$-rooted spanning arborescence exists if and only if every vertex is reachable from $s$, which holds exactly when the edge $(x_i,y_j)$ is present, that is, when $M[i,j]=1$.
\end{enumerate}

Thus, each update reveals the corresponding matrix entry $M[i,j]$. Since correctly answering all $p^2$ queries uniquely determines $M$ and $p=\Theta(\min\{\sigma,n\})$, any subset sensitivity oracle requires
$\Omega(p^2)=\Omega\ \left(\min\{\sigma^2,n^2\}\right)$ bits of space.
\end{proof}

We next present our lower bound result for colorful spanning trees.

\begin{theorem}[Colorful Spanning Trees]
\label{thm:colorful-tree}
Let $n, \sigma$ be positive integers with $4 \le \sigma \le n(n-1)$. There exists an $n$-vertex edge-colored graph carrying a base spanning tree $T^*$ and a susceptible set $E_Q$ of size $\Theta(\sigma)$ such that any subset sensitivity oracle for colorful spanning tree requires $\Omega(\min(\sigma^2, n^2))$ bits of space under $f=2$ updates.
\end{theorem}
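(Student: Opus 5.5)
The plan is to mirror the encoding argument of \Cref{thm:flow-cut}: embed an arbitrary matrix $M\in\{0,1\}^{p\times p}$, with $p=\Theta(\min\{\sigma,n\})$, into a colored graph so that for every pair $(i,j)$ some update set of size $2$ inside $E_Q$ makes ``does a colorful spanning tree exist'' equivalent to $M[i,j]=1$. An oracle that answers all $p^2$ queries then determines $M$, so it needs $\Omega(p^2)$ bits. The one care point is that updates in the colorful model may change colors (the model of \Cref{thm:sensitivity_colorful,thm:subset_sensitivity_colorful} allows this), so I will use one deletion and one recoloring per query.

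\textbf{Construction.} Let $p=\min(\lfloor\sigma/4\rfloor,\lfloor n/4\rfloor)$. The vertices are a hub $h$, sets $X=\{x_1,\dots,x_p\}$ and $Y=\{y_1,\dots,y_p\}$, and $n-2p-1$ dummy vertices $z$. The color classes are $\alpha_1,\dots,\alpha_p$, $\beta_1,\dots,\beta_p$, and one private color $\zeta_z$ per dummy, which is exactly $n-1$ classes. The edges are:
\begin{itemize}
\item hub edges $(h,x_a)$ of color $\alpha_a$ and $(h,y_b)$ of color $\beta_b$;
\item pendant edges $(h,z)$ of color $\zeta_z$;
\item a ``matrix edge'' $(x_a,y_b)$ of color $\alpha_a$ whenever $M[a,b]=1$.
\end{itemize}
The base tree $T^*$ consists of all hub and pendant edges, and it is colorful. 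Set $E_Q$ to be the $2p\le\sigma$ hub edges. For the query $(i,j)$, the update $U$ deletes $(h,y_j)$ and recolors $(h,x_i)$ from $\alpha_i$ to $\beta_j$.

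\textbf{Key verification.} A colorful spanning tree of $G+U$ has $n-1$ edges, one from each class. The counting runs as follows.
\begin{itemize}
\item Every $\zeta_z$ and every $\beta_b$ with $b\neq j$ is now a singleton class, so the tree must contain all pendant edges and all $(h,y_b)$ with $b\ne j$.
\item The class $\beta_j$ now contains only $(h,x_i)$, so that edge is forced as well.
\item Class $\alpha_i$ now consists only of matrix edges $(x_i,y_b)$. For $b\neq j$ such an edge closes the cycle $x_i\,h\,y_b$ with forced edges, so the tree's $\alpha_i$-edge must be exactly $(x_i,y_j)$, which requires $M[i,j]=1$.
\end{itemize}
Conversely, if $M[i,j]=1$, the forced edges together with $(x_i,y_j)$ and the hub edges $(h,x_a)$ for $a\ne i$ form a colorful spanning tree: it spans all vertices, uses one edge of each class, and is acyclic because $y_j$ hangs off $x_i$. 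Hence the answer on $G+U$ equals $M[i,j]$.

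I expect the main obstacle to be the color bookkeeping. The instance has to have exactly $n-1$ classes both before and after the update, $T^*$ has to be colorful, and a query has to fit in only two updates from a set of size $O(\sigma)$. Recoloring a hub edge, rather than inserting a fresh edge per pair $(i,j)$, is what keeps $|E_Q|=2p$ instead of $p^2$. Once that is settled, the standard counting argument finishes the proof: distinct matrices $M$ must yield distinct oracle states, so the space is $\Omega(p^2)=\Omega(\min\{\sigma^2,n^2\})$ bits. The degenerate cases where $\sigma$ or $n$ is small are absorbed by the assumption $\sigma\ge4$ and the choice of $p$.
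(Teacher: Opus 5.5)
Your proof is correct and follows essentially the paper's encoding. Both use row-colored matrix edges $(x_a,y_b)$, a two-update query that forces the tree's edge of row color $i$ to be $(x_i,y_j)$, and a $p^2$-bit counting argument. The paper's gadget differs only in details: it uses two hubs $s,t$ joined by an edge of its own color, deletes $(t,y_j)$, inserts a new edge $(x_i,t)$ of color $p+j$, and argues via the component of $y_j$ (every $x_a$ with $a\neq i$ is a leaf). Your single hub with a recoloring keeps $E_Q\subseteq E$ and gives a cleaner forced-triangle argument.
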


\begin{proof}
Set $p = \min(\lfloor \sigma/4 \rfloor, \lfloor (n-2)/2 \rfloor) \ge 1$, which ensures that $2p + 2 \le n$ and $p+1 \le \sigma$. Let $M \in \{0,1\}^{p \times p}$ be an arbitrary boolean matrix.

We construct an undirected graph $G = (V, E)$ on exactly $n$ vertices. First, we build the core sub-graph on $2p + 2$ vertices defined by $V_{\text{core}} = \{s, t\} \cup X \cup Y$, where $X = \{x_1, \dots, x_p\}$ and $Y = \{y_1, \dots, y_p\}$. The remaining $n - (2p + 2)$ vertices are added as dummy vertices to satisfy the vertex count. The edge set and color assignments for the core graph are defined as follows:

\begin{enumerate}
\item For each $k \in [p]$, add the edge $(s, x_k)$ with color $k$. These $p$ edges have distinct colors $1, \dots, p$.
\item For each $k \in [p]$, add the edge $(t, y_k)$ with color $p + k$. These $p$ edges have distinct colors $p+1, \dots, 2p$.
\item Add the edge $(s, t)$ with color $2p + 1$.
\item For each $(i,j) \in [p] \times [p]$ with $M[i,j] = 1$, add the edge $(x_i, y_j)$ with color $i$.
\end{enumerate}

To ensure the entire graph is connected, add edges between the $n - (2p + 2)$ dummy vertices and $s$ using $n - (2p + 2)$ new, distinct colors. The total number of colors is exactly $n-1$. Define the susceptible set as $E_{Q} = \{t\}\times (X\cup Y)$ (note $|E_Q| = O(\sigma)$).

Consider a target pair $(i, j) \in [p] \times [p]$. The $f = 2$ update query $U$ consists of: 
\begin{itemize}
\item Deletion of the edge $(y_j, t)$ of color $p+j$.
\item Insertion of a new edge $(x_i,t)$ of color $p + j$.
\end{itemize}

\begin{figure}[t]
\centering
~\quad~
\begin{tikzpicture}[
    vtx/.style={circle, draw, minimum size=7mm, inner sep=0pt},
    sstar/.style={thick, blue!60!black},
    tstar/.style={thick, red!50!black},
    stconn/.style={thick, purple!60!black},
    clique/.style={thick, black!20, bend left=20},
    failed/.style={very thick, blue!60!black, dotted},
    inserted/.style={very thick, green!50!black, dashed},
    matrix/.style={very thick, blue!70},
    scale=0.75, transform shape
]

\node[vtx, fill=green!12] (s) at (0, 6) {$s$};

\node[vtx, fill=blue!10] (x1) at (-3, 3.5) {$x_1$};
\node[vtx, fill=blue!10] (x2) at (0, 3.5)  {$x_2$};
\node[vtx, fill=blue!10] (x3) at (3, 3.5)  {$x_3$};

\node[vtx, fill=orange!12] (y1) at (-3, 1) {$y_1$};
\node[vtx, fill=orange!12] (y2) at (0, 1)  {$y_2$};
\node[vtx, fill=orange!12] (y3) at (3, 1)  {$y_3$};

\node[vtx, fill=red!12] (t) at (0, -1.5) {$t$};

\draw[sstar]  (s) -- (x1) node[midway, left, font=\small] {$1$};
\draw[sstar] (s) -- (x2) node[midway, right, font=\small] {$2$};
\draw[sstar] (s) -- (x3) node[midway, right, font=\small] {$~~3$};

\draw[inserted] (x1) to[bend right=8] node[midway, right, font=\small, green!50!black]{$5~~$}  (t);

\draw[stconn] (s) to[bend left=90,looseness=2.35] node[midway, right, font=\small] {$7$} (t);

\draw[tstar] (t) -- (y1) node[midway, left, font=\small] {$4~$};
\draw[failed, black!45] (t) -- (y2) node[midway, right, font=\small, black!60] {$5$};
\draw[tstar] (t) -- (y3) node[midway, right, font=\small] {$6$};

\draw[matrix] (x1) -- (y2) node[midway, right, font=\small, blue!70]{$1$};

\node[font=\small, darkgray] at (-6.2, 6)   {Layer 1};
\node[font=\small, darkgray] at (-6.2, 3.5) {Layer 2};
\node[font=\small, darkgray] at (-6.2, 1)   {Layer 3};
\node[font=\small, darkgray] at (-6.2,-1.5) {Layer 4};
\end{tikzpicture}
\caption{Depiction of lower bound for colorful spanning tree for a query that deletes $(y_2, t)$ (dotted) and adds $(x_1,t)$ with color $5$ (dashed).}
\label{fig:colorful-spanning-tree}
\end{figure}

Let $G + U$ be the updated graph. To form a valid colorful spanning tree in $G+U$, exactly one edge from each of the $n-1$ color classes must be selected. We argue below that $G+U$ has a colorful spanning tree if and only if $M[i,j]=1$.

If $M[i,j]=1$, then the following is a colorful spanning tree of $G+U$: all edges incident to dummy vertices, edge $(s,t)$, edges $(t,y_k)$ for $k\ne j$, edges $(s,x_r)$ for $r \ne i$, the inserted edge $(t,x_i)$ of color $p+j$, and edge $(x_i,y_j)$ of color $i$. It is spanning, acyclic, and uses each color exactly once.

Now let $T$ be a colorful spanning tree of $G+U$. Since $T$ contains one edge per color, it contains all edges incident to dummy vertices, edge $(s,t)$, edge $(t,y_k)$ for $k \ne j$, and the inserted edge $(t,x_i)$ (the unique edge of color $p+j$ after the update). 
The vertex $y_j$ is incident only to matrix edges, so $T$ must contain one or more edge of the form $(x_a,y_j)$, where $a\in [p]$. 
Let $\A$ be collection of indices $a\in [p]$ such that $(x_a,y_j)$ is an edge in $T$. If $i\in \A$, then we have proved $M[i,j]=1$. If $i\notin \A$, then the component of $y_j$ in $T$ only consists of $y_j$ together with degree-one vertices from $X\setminus\{x_i\}$ and cannot contain $t$. Indeed, each $x_a\in X$ must have exactly one edge of color $a$ incident to it in $T$.
This contradicts connectivity of $T$, and proves that a colorful spanning tree cannot exist if $M[i,j]=0$.

For each $(i, j) \in [p]^2$, the oracle's answer uniquely determines the entry $M[i, j]$. The oracle must therefore distinguish between $2^{p^2}$ configurations, establishing a space lower bound of $p^2 = \Omega(\min(\sigma^2, n^2))$ bits.
\end{proof}

We finally show that any oracle which reports the maximum matching size of a bipartite graph, after $f=2$ edge updates, requires at least $\Omega(\min(\sigma^2, n^2))$ space, irrespective of the query time.

\begin{theorem}[Bipartite Matching]
\label{thm:matching}
Let $n, \sigma$ be positive integers with $4 \le \sigma \le n(n-1)$. There exists an $n$-vertex bipartite graph $G=(L \cup R, E)$ and a susceptible set $E_Q \subseteq E$ of size $\Theta(\sigma)$ such that any subset sensitivity oracle for maximum bipartite matching requires $\Omega(\min(\sigma^2, n^2))$ bits of space under $f=2$ updates.
\end{theorem}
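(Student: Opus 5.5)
We encode an arbitrary $p\times p$ boolean matrix $M$ into a bipartite graph so that each $2$-edge deletion query reveals one entry $M[i,j]$ through the maximum matching size. The counting argument is then the same as in \Cref{thm:flow-cut,thm:colorful-tree}. Since $E_Q\subseteq E$ is required, the queries will consist only of deletions.

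\textbf{Construction.} Set $p=\min(\lfloor\sigma/4\rfloor,\lfloor n/4\rfloor)\ge 1$. Let the left side be $L=\{u_1,\dots,u_p\}\cup\{l_1,\dots,l_p\}$ and the right side be $R=\{w_1,\dots,w_p\}\cup\{r_1,\dots,r_p\}$. Pad with $n-4p$ isolated dummy vertices, split arbitrarily between the two sides. The edge set consists of:
\begin{itemize}
\item the \emph{matrix edges} $(u_a,w_b)$ for every $(a,b)$ with $M[a,b]=1$;
\item the \emph{pendant edges} $(u_k,r_k)$ and $(l_k,w_k)$ for all $k\in[p]$.
\end{itemize}
Take $E_Q=\{(u_k,r_k),(l_k,w_k):k\in[p]\}\subseteq E$, which has size $2p=\Theta(\min(\sigma,n))$; in the regime $\sigma\le n$ this is $\Theta(\sigma)$. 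Note that $r_k$ is adjacent only to $u_k$, and $l_k$ is adjacent only to $w_k$.

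\textbf{Query and key claim.} For a pair $(i,j)$, the query $U$ deletes $(u_i,r_i)$ and $(l_j,w_j)$. We claim that the maximum matching of $G+U$ has size $2p-1$ if $M[i,j]=1$, and size $2p-2$ otherwise.

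After the deletion, $r_i$ and $l_j$ are isolated. The remaining non-isolated vertices are the $p$ vertices $u_a$, the $p$ vertices $w_b$, the $p-1$ vertices $r_k$ with $k\neq i$, and the $p-1$ vertices $l_k$ with $k\ne j$, for a total of $4p-2$. So every matching has size at most $2p-1$, with equality exactly when all of these vertices are matched.

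Suppose such a matching exists. Each $r_k$ with $k\neq i$ has the unique neighbour $u_k$, so $u_k$ is used by $r_k$. Symmetrically, each $w_k$ with $k\neq j$ is used by $l_k$. The only remaining vertices are $u_i$ and $w_j$, so they must be matched to each other. This forces $(u_i,w_j)\in E$, i.e.\ $M[i,j]=1$.

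Conversely, if $M[i,j]=1$ then
\[
\{(u_k,r_k):k\ne i\}\cup\{(l_k,w_k):k\ne j\}\cup\{(u_i,w_j)\}
\]
is a matching of size $2p-1$. If $M[i,j]=0$, the matching $\{(u_k,r_k):k\ne i\}\cup\{(l_k,w_k):k\ne j\}$ still has size $2p-2$, and by the argument above nothing larger exists. This proves the claim.

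\textbf{Counting.} Each of the $p^2$ possible queries recovers one entry $M[i,j]$. Hence the oracle's contents determine $M$, so it must distinguish $2^{p^2}$ inputs and needs $p^2=\Omega(\min(\sigma^2,n^2))$ bits, regardless of query time.

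The only delicate step is the forcing argument, which relies on the pendant vertices $r_k$ and $l_k$ having degree one. The rest is routine bookkeeping of the vertex budget ($4p\le n$) and the susceptible-set size ($2p\le\sigma$).
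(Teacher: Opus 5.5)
Your proof is correct and is the paper's construction up to relabeling: your $u_a, w_b, r_k, l_k$ are the paper's $l_a, r_b, r_{p+k}, l_{p+k}$, with the same two-deletion query isolating two pendant vertices and the same forcing argument through the degree-one vertices. Your caveat that $|E_Q| = 2p$ is $\Theta(\sigma)$ only when $\sigma \le n$ applies equally to the paper's proof, which records only $|E_Q| = 2p \le \sigma$.
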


\begin{proof}
Let $p=\min(\lfloor\sigma/4\rfloor,\lfloor n/4\rfloor)$ and let $K\in\{0,1\}^{p\times p}$ be an arbitrary binary matrix. Define a bipartite graph $G$ with bipartitions
$$
L=\{l_1,\ldots,l_{2p}\}, \qquad
R=\{r_1,\ldots,r_{2p}\},
$$
and edge set
$$
E=\{(l_i,r_j):K[i,j]=1\}\cup M^*,
$$
where
$$
M^*=\{(l_i,r_{p+i}),(l_{p+i},r_i):i\in[p]\}.
$$
The remaining $n-4p$ vertices, if any, are isolated. Let $E_Q=M^*$, so $|E_Q|=2p\le\sigma$.
Observe that $M^*$ is the unique perfect matching of $G$. Now consider the deletion update
$F=\{(l_i,r_{p+i}),(l_{p+j},r_j)\}\subseteq E_Q$.
The deleted edges isolate the vertices $r_{p+i}$ and $l_{p+j}$, so every matching in $G-F$ has size at most $2p-1$. See \Cref{fig:dual-FT-match-LB}.

\begin{figure}[!ht]
\centering
\begin{tikzpicture}[
    Lvtx/.style={circle, draw, minimum size=6mm, inner sep=0pt, font=\small, fill=blue!8},
    Rvtx/.style={circle, draw, minimum size=6mm, inner sep=0pt, font=\small, fill=orange!10},
    mstar/.style={very thick, gray!65!black}, 
    bedge/.style={dashed, thick, blue!45},      
    comp/.style={very thick, blue!70},      
]

\foreach \k/\y in {1/3,2/2,3/0.6,4/-0.4}{
  \node[Lvtx] (l\k) at (-2.6,\y) {$l_{\k}$};
  \node[Rvtx] (r\k) at ( 2.6,\y) {$r_{\k}$};
}

\draw[mstar] (l1) -- (r3);
\draw[mstar] (l2) -- (r4);
\draw[mstar] (l3) -- (r1);
\draw[mstar] (l4) -- (r2);

\draw[bedge] (l1) -- (r1);   
\draw[bedge] (l1) -- (r2);   
\draw[bedge] (l2) -- (r1);
\draw[bedge] (l2) -- (r2);

\node at ($(l1)!0.19!(r3)$) {\textcolor{red!70!black}{\Large $\times$}};
\node at ($(r2)!0.16!(l4)$) {\textcolor{red!70!black}{\Large $\times$}};

\draw[comp] (l1) -- (r2);

\draw (-3.4,3.25) -- (-3.4,1.75)
      node[midway,left=5pt, gray!60!black]{$l_1,\dots,l_p$};
\draw (-3.4,0.85) -- (-3.4,-0.65)
      node[midway,left=5pt, gray!60!black]{$l_{p+1},\dots,l_{2p}$};
\draw (3.4,3.25) -- (3.4,1.75)
      node[midway,right=5pt, gray!60!black]{$r_1,\dots,r_p$};
\draw (3.4,0.85) -- (3.4,-0.65)
      node[midway,right=5pt, gray!60!black]{$r_{p+1},\dots,r_{2p}$};
\end{tikzpicture}
\caption{Depiction of space lower bound for matching  under $f=2$ updates.
Query $(1,2)$ deletes $(l_1,r_3)$ and $(l_4,r_2)$, isolating $r_3$ and $l_4$.
The matching in updated graph has size $2p{-1}$ iff $K[1,2]=1$.
}
\label{fig:dual-FT-match-LB}
\end{figure}

We claim that the maximum matching in $G-F$ has size $2p-1$ if and only if $K[i,j]=1$. Indeed, if $K[i,j]=1$, then $(l_i,r_j)\in E$, and replacing the two deleted matching edges by $(l_i,r_j)$ yields a matching of size $2p-1$. Conversely, suppose $G-F$ admits a matching of size $2p-1$. Since $r_{p+i}$ and $l_{p+j}$ are isolated, every other vertex must be matched. In particular, all remaining degree-one vertices are forced to use their unique incident edges, leaving $l_i$ and $r_j$ to be matched together. Hence $(l_i,r_j)\in E$, implying $K[i,j]=1$.

Thus, each query reveals the corresponding matrix entry $K[i,j]$. Since correctly answering all $p^2$ queries uniquely determines $M$ and $p=\Theta(\min\{\sigma,n\})$, any subset sensitivity oracle requires
$\Omega(p^2)=\Omega\left(\min\{\sigma^2,n^2\}\right)$
bits of space.
\end{proof}

\newcommand{\etalchar}[1]{$^{#1}$}

\appendix

\section*{Appendix}

\section{Omitted Proofs}

\subsection{Subset Sensitivity Oracle for Matrix Determinant}
\label{subsec:det}

\begin{lemma}
\label{lemma:det}
Let $A, B \in \F_d[Y]^{r \times m}~(r\leq m)$,
$D \in \F_d[Y]^{m \times m}$ be a diagonal matrix and let
$L = ADB^T$. Let $Q \subseteq [m]$ be a designated
\emph{susceptible set} of size $\sigma$, identified at preprocessing
time, such that all future queries modify only diagonal entries of $D$
indexed by $Q$.

Then there is an algorithm that preprocesses $A$, $B$, and $D$ to
construct a data structure such that, given a diagonal matrix $D'$ differing from $D$ in at most $f$ positions (all within $Q$), 
computes $\det(AD'B^T)$ with the following complexity bounds:

\begin{enumerate}
\item \textbf{Scalar ($d=0$):} If the matrices $A,B$ have entries in $\F$, then the preprocessing time is $O(\sigma^2 r^{\omega-2} + r^\omega)$, the space requirement 
is $O(\sigma^2)$, and the query time is $O(f^\omega)$.

\item \textbf{Polynomial ($d \ge 1$):} If the matrices have entries in
$\F_d[Y]$, then the preprocessing time is $\widetilde{O}(d\,\sigma^2 r^{\omega-1} + d\,r^{\omega+1})$, the space requirement is $\widetilde{O}(d\,r\,\sigma^2)$, and the query time is
$\widetilde{O}(d\,r\,f^{\omega})$.
\end{enumerate}
\end{lemma}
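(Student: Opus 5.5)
The plan is to reduce every query to the Matrix Determinant Lemma (\Cref{lemma:matrix_det}) applied to a preprocessed, invertible perturbation of $L$. The only information a query needs will then be a $\sigma\times\sigma$ matrix computed once.

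\textbf{Handling singularity.} $L$ may be singular, so I first replace it by an invertible matrix. Since queries only change entries of $D$ indexed by $Q$, I would choose a random diagonal matrix $R$ supported on $Q$ and on a random pad. The simplest version is to add a random low-rank term $PP'^T$, with $P,P'\in\F^{r\times r}$ uniformly random, as in the appendix extension \Cref{lemma:BrandS-det-General-Q}. This gives $\tilde L := L + PP'^T$, and $\tilde L$ is invertible w.h.p. by Schwartz--Zippel, because $\det(\tilde L)$ is a nonzero polynomial in the entries of $P,P'$. In the polynomial case we work over $\F[Y]$ and invert modulo $Y^{\Theta(dr)}$ after a random shift $Y\mapsto Y+y_0$, so that the constant term is invertible. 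The degree of any determinant we need is at most $dr$, so truncating at $O(dr)$ terms loses nothing.

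\textbf{Preprocessing.} Write the query matrix as
\[
AD'B^T - PP'^T \;=\; \tilde L + A_S\Delta_S B_S^T - PP'^T \;=\; \tilde L + [A_S \mid P]\begin{pmatrix}\Delta_S & 0\\ 0 & -I\end{pmatrix}[B_S\mid P']^T .
\]
This expresses the query matrix as $\tilde L$ plus a low-rank correction. The pad, however, contributes $r$ extra columns, which would make query time $r^\omega$. To avoid this, I would instead choose the pad inside the diagonal framework. I append $r$ random columns $P$ to $A$ and $P'$ to $B$ with diagonal weights $\epsilon_i$, obtaining $\hat L = [A\mid P]\,\mathrm{diag}(D,E)\,[B\mid P']^T$.

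The real issue is that the pad must be \emph{removed} at query time. Removing it costs rank $r$, and this is the main obstacle. I expect the resolution is that only $\det(AD'B^T)$ matters, and by Cauchy--Binet this equals the coefficient of the lowest power of a fresh indeterminate $\epsilon$ attached to the pad. Hence I set $E=\epsilon I$ and extract the $\epsilon^0$ coefficient, at the price of another polynomial variable.

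Failing that, the fallback is the Schur-complement view. One computes $\det(AD'B^T)$ through the $(r+\sigma)$-dimensional bordered matrix $\begin{pmatrix} \tilde L & A_Q\\ -\Delta\text{-dependent}\ B_Q^T & \cdot\end{pmatrix}$. Its dependence on the query lives entirely in a $\sigma\times\sigma$ block, which reduces to storing $X:=B_Q^T\tilde L^{-1}A_Q$ together with $\det\tilde L$.

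\textbf{Computing and storing $X$.} The stored data is $X\in\F^{\sigma\times\sigma}$ and $\det \tilde L$. It is computed by inverting $\tilde L$ in $O(r^\omega)$ time and then forming $B_Q^T(\tilde L^{-1}A_Q)$ by rectangular multiplication in $O(\sigma^2r^{\omega-2})$ time, splitting into $r\times r$ blocks. This gives the stated scalar preprocessing bound and $O(\sigma^2)$ space. In the polynomial case each entry is a truncated series of length $O(dr)$, which yields the $\widetilde O(d\sigma^2r^{\omega-1}+dr^{\omega+1})$ preprocessing bound and $\widetilde O(dr\sigma^2)$ space.

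\textbf{Query.} For a query changing the positions in $S\subseteq Q$ with $|S|\le f$, I would use
\[
\det\bigl(\tilde L + A_S\Delta_S B_S^T\bigr)=\det\tilde L\cdot\det\bigl(I_f+X_{S,S}\Delta_S\bigr),
\]
which costs $O(f^\omega)$ in the scalar case and $\widetilde O(dr f^\omega)$ in the polynomial case, where polynomial arithmetic is done modulo $Y^{O(dr)}$. The pad correction is then undone via the $\epsilon$-extraction or bordering step above. I expect this correction — showing that the pad can be removed without paying rank $r$ per query — to be the delicate part of the argument.
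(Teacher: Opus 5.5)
There is a genuine gap, and it is exactly the step you defer as ``delicate''.

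\textbf{What your query actually computes.} As written, your query returns
$\det(\tilde L)\det(I_f+X_{S,S}\Delta_S)=\det\bigl(AD'B^T+PP'^T\bigr)$.
That is the determinant of the \emph{padded} matrix, not $\det(AD'B^T)$.

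\textbf{Why neither repair closes it.}
The bordered/Schur-complement variant is still built around $\tilde L$. It again yields a determinant of the form $\det(\tilde L+A_S(\cdot)B_S^T)$, so the rank-$r$ pad is never removed.

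The $\epsilon$-extraction variant is correct in principle, since the $\epsilon^0$ coefficient of $\det(AD'B^T+\epsilon PP'^T)$ is trivially $\det(AD'B^T)$. But $\hat L(0)=L$ is singular, so $\hat L(\epsilon)^{-1}$, and hence your stored $X$, has entries with a pole at $\epsilon=0$. You would then have to:
\begin{enumerate}
\item store Laurent expansions and bound the precision needed;
\item show the $\epsilon^0$ coefficient can be recovered from $O(\sigma^2)$ stored data in $O(f^\omega)$ time;
\item handle $\epsilon$ as a second variable on top of $Y$ in the polynomial case.
\end{enumerate}
None of this is done, and it is not clear it fits the claimed space and query budget.

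\textbf{The missing idea.} You never need a rank-$r$ pad. A pad of rank $\delta:=r-\rank(L)$ suffices, and $\delta$ is small whenever the answer can be nonzero.
\begin{enumerate}
\item Since $\rank(AD'B^T)\le\rank(L)+\rank(A_S\Delta_SB_S^T)\le r-\delta+f$, if $\delta>f$ every query has answer $0$. You detect this once at preprocessing.
\item Otherwise $\delta\le f$. Random $U,V\in\F^{r\times\delta}$ make $L_{\mathrm{pad}}:=L+UV^T$ invertible w.h.p.\ by Schwartz--Zippel, because a rank-$\delta$ completion of $L$ exists.
\item You then write
\[
AD'B^T=L_{\mathrm{pad}}+\begin{pmatrix}A_S & U\end{pmatrix}\begin{pmatrix}\Delta_S & 0\\ 0 & -I_\delta\end{pmatrix}\begin{pmatrix}B_S & V\end{pmatrix}^T .
\]
\item Store $\det(L_{\mathrm{pad}})$ and the $(\sigma+\delta)\times(\sigma+\delta)$ matrix $W_Q:=\begin{pmatrix}B_Q & V\end{pmatrix}^T L_{\mathrm{pad}}^{-1}\begin{pmatrix}A_Q & U\end{pmatrix}$. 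It has size $O(\sigma^2)$ because $\delta\le f\le\sigma$.
\item A query reads off the principal submatrix of $W_Q$ on $S$ together with the $\delta$ pad indices. It applies the Matrix Determinant Lemma to an $(f+\delta)\times(f+\delta)$ matrix with $f+\delta\le 2f$. This costs $O(f^\omega)$ in the scalar case and $\widetilde O(d\,r\,f^\omega)$ in the polynomial case.
\end{enumerate}
Your preprocessing accounting (one inversion plus two rectangular products, with series arithmetic in the polynomial case) then applies with $\tilde L$ replaced by $L_{\mathrm{pad}}$. This is the paper's argument.
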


\begin{proof}
Let $A, B \in \F_d[Y]^{r \times m}$ with $r\leq m$, and let $D \in \F_d[Y]^{m \times m}$. Let $Q \subseteq [m]$ be the given \emph{susceptible set} of size $\sigma$.
Let $L = A D B^T \in \F_d[Y]^{r \times r}$. For a query $D'$ that differs from $D$ on a set $S \subseteq Q$ with $|S| \le f$, let $\Delta_S$ be the $f \times f$ diagonal matrix with diagonal entries $D'_{i_j i_j} - D_{i_j i_j}$ for $i_j \in S$, and let $A_S,B_S$ be the submatrices of $A,B$ with columns indexed by $S$. The updated matrix is:
\[
L' = A D' B^T = L + A_S \Delta_S B_S^T.
\]
Let $\delta = r - \operatorname{rank}(L) \ge 0$. Since $\operatorname{rank}(A_S \Delta_S B_S^T) \le f$, we have:
\[
\operatorname{rank}(L') \le (r - \delta) + f.
\]
If $\delta > f$, then $\operatorname{rank}(L') < r$ for all valid queries, so $\det(L') = 0$ identically. In this case, we can answer every query in $O(1)$ time and we are done. Thus, we assume $\delta \le f$.

Take random matrices $U,V \in \F_d^{r \times \delta}$ and let $L_{\mathrm{pad}} = L + U V^T$. Since the rank deficiency of $L$ is exactly $\delta$, there exist $U,V$ for which $L_{\mathrm{pad}}$ is nonsingular. By the Schwartz-Zippel lemma, with high probability a random choice of $U,V$ yields $\det(L_{\mathrm{pad}}) \neq 0$, allowing us to assume $L_{\mathrm{pad}}$ is invertible. (Note that we assume the field $\F$ is sufficiently large, in particular of size $\mathrm{poly}(r)$, which ensures that a random choice succeeds with high probability directly over $\F$).

For any query, we can rewrite $L'$ as:
\[
L' = L_{\mathrm{pad}} + \begin{pmatrix} A_S & U \end{pmatrix} \begin{pmatrix} \Delta_S & 0 \\ 0 & -I_\delta \end{pmatrix} \begin{pmatrix} B_S^T \\ V^T \end{pmatrix}.
\]
Let $X = \begin{pmatrix} A_S & U \end{pmatrix}$, $Y = \begin{pmatrix} B_S & V \end{pmatrix}$, and $M_S = \begin{pmatrix} \Delta_S & 0 \\ 0 & -I_\delta \end{pmatrix}$. Applying the Matrix Determinant lemma to the invertible matrix $L_{\mathrm{pad}}$ yields:
\begin{equation}
\label{eq:det_update_appendix}
\det(L') = \det(L_{\mathrm{pad}}) \det\!\bigl(I_{f+\delta} + M_S Y^T L_{\mathrm{pad}}^{-1} X\bigr).
\end{equation}

\paragraph{Preprocessing.}
We compute $L$, its rank, and $\delta$. If $\delta > f$, we store a flag to return $0$ for all queries. Otherwise, we sample $U,V$ until $\det(L_{\mathrm{pad}}) \ne 0$, then compute and store $\det(L_{\mathrm{pad}})$.

Because all queries are restricted to the susceptible set $Q$ of size $\sigma$, the matrix $Y^T L_{\mathrm{pad}}^{-1} X$ will always be a principal submatrix of a larger precomputed matrix restricted to $Q$ and the pad. We construct and store this $(\sigma+\delta) \times (\sigma+\delta)$ matrix:
\[
W_Q := \begin{pmatrix} B_Q & V \end{pmatrix}^T L_{\mathrm{pad}}^{-1} \begin{pmatrix} A_Q & U \end{pmatrix}.
\]
Note that since $\delta \le f \le \sigma$, the dimensions of $W_Q$ are $O(\sigma \times \sigma)$.
\begin{itemize}
\item \textbf{For $d=0$:} Inverting $L_{\mathrm{pad}}$ costs $O(r^\omega)$ operations. Multiplying the $r \times r$ matrix $L_{\mathrm{pad}}^{-1}$ by the $r \times (\sigma+\delta)$ matrix $(A_Q \mid U)$ requires $O(\sigma r^{\omega-1} + r^\omega)$ time using rectangular matrix multiplication. Multiplying $(B_Q \mid V)^T$ by the result takes $O(\sigma^2 r^{\omega-2} + r^\omega)$ time. Thus, total preprocessing is $O(\sigma^2 r^{\omega-2} + r^\omega)$, and storing $W_Q$ requires $O(\sigma^2)$ space.
\item \textbf{For $d \ge 1$:} $L_{\mathrm{pad}}$ is an $r \times r$ matrix with entries of degree $d$. Thus, $L_{\mathrm{pad}}^{-1}$ has entries that are rational functions of degree $O(dr)$. From~\cite{Storjohann15}, computing $L_{\mathrm{pad}}^{-1}$ costs $\widetilde{O}(d\,r^{3})$ operations. By the same rectangular multiplication bounds scaled for polynomials, computing $W_Q$ requires $\widetilde{O}(d\,\sigma^2 r^{\omega-1} + d\,r^{\omega+1})$ time. Storing the $O(\sigma \times \sigma)$ matrix $W_Q$ requires $\widetilde{O}(d\,r\,\sigma^2)$ space.
\end{itemize}

\paragraph{Query.}
If we stored $\delta > f$, we immediately answer $0$. Otherwise, let $I_S$ be the index set corresponding to the queried columns $S$ alongside the $\delta$ pad indices. The required $(f+\delta) \times (f+\delta)$ matrix $Y^T L_{\mathrm{pad}}^{-1} X$ is exactly the submatrix $(W_Q)_{I_S,I_S}$. 

We extract $(W_Q)_{I_S,I_S}$, multiply it by $M_S$, add the identity matrix $I_{f+\delta}$, and compute the determinant of the resulting $(f+\delta) \times (f+\delta)$ matrix as per Equation \eqref{eq:det_update_appendix}. 
\begin{itemize}
\item \textbf{For $d = 0$:} Since $f+\delta \le 2f$, evaluating this determinant costs $O(f^\omega)$ field operations.
\item \textbf{For $d \ge 1$:} The entries of $(W_Q)_{I_S,I_S}$ have degree $O(dr)$, so computing the determinant costs $\widetilde{O}(d\,r\,f^\omega)$ field operations. Multiplying the result by the precomputed scalar $\det(L_{\mathrm{pad}})$ completes the evaluation.
\end{itemize}
This proves the claim.
\end{proof}

\subsection{Extension of \Cref{lemma:BrandS-det} to General matrices}

\begin{lemma}
\label{lemma:BrandS-det-General-Q}
Let $M \in \F_d[Y]^{N \times N}$ be a matrix (possibly singular). Let $Q \subseteq [N]$ be a designated \emph{susceptible set} of size $\sigma$, identified at preprocessing time, such that all future queries add a modification matrix $C \in \F_d[Y]^{N \times N}$ whose non-zero entries are entirely confined to $Q \times Q$.
Then, there is a randomized algorithm that preprocesses $M$ using $\widetilde{O}(dN^{3})$ operations to create a data structure occupying $\widetilde{O}(d N \sigma^2)$ space with the following guarantee: given $C$ with at most $\delta_C$ nonzero entries, one can query $\det(M+C)$ in $\widetilde{O}(dN \delta_C^{\omega})$ operations.

Moreover, if $d=0$, then the preprocessing time, space, and query time are $O(N^\omega)$, $O(\sigma^2)$, and $O(\delta_C^\omega)$ respectively.
\end{lemma}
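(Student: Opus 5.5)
The plan is to reduce to the nonsingular case, which is \Cref{lemma:BrandS-det} (and, for the subset-restricted space bound, the same principal-submatrix trick as in \Cref{lemma:det}), by a random low-rank pad. In preprocessing we compute $\rho=\operatorname{rank}(M)$ and the deficiency $\delta_0=N-\rho$. A query $C$ supported on $Q\times Q$ has rank at most $\min(\sigma,\delta_C)$. Hence if $\delta_0>\sigma$, then $\det(M+C)=0$ for every admissible $C$, and we store a flag and answer in $O(1)$. Otherwise we sample $U,V\in\F^{N\times\delta_0}$ uniformly and set $M_{\mathrm{pad}}:=M+UV^T$.

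The first thing to establish is that $M_{\mathrm{pad}}$ is nonsingular with high probability. Write $M$ in a rank factorization, or pass to a basis in which it has the form $\mathrm{diag}(I_\rho,0)$ up to invertible row and column operations. There is a choice of $U,V$ completing $M$ to an invertible matrix. So $\det(M+UV^T)$ is a nonzero polynomial of degree at most $2\delta_0$ in the entries of $U,V$ (with coefficients in $\F_d[Y]$), and \Cref{lemma:SZ} applies once $|\F|\ge\poly(N)$. We resample until $\det(M_{\mathrm{pad}})\neq0$, which is checkable in $\widetilde O(dN^\omega)$.

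For a query we then write
\[
M+C=M_{\mathrm{pad}}+\begin{pmatrix} C_{Q} & U\end{pmatrix}\begin{pmatrix} I & 0\\ 0 & -I_{\delta_0}\end{pmatrix}\begin{pmatrix} E_Q^T\\ V^T\end{pmatrix}.
\]
Here $C=E_Q C_{QQ}E_Q^T$, with $E_Q$ the coordinate injection. More economically, we restrict to the at most $\delta_C$ rows and columns $R,K\subseteq Q$ touched by $C$, so $C=E_R C_{RK}E_K^T$. \Cref{lemma:matrix_det} then gives
\[
\det(M+C)=\det(M_{\mathrm{pad}})\det\bigl(I+\Lambda\,[E_K\ V]^T M_{\mathrm{pad}}^{-1}[E_R C_{RK}\ \ U]\bigr).
\]
The inner matrix needs only the entries of $M_{\mathrm{pad}}^{-1}$ indexed by $Q\times Q$, together with the blocks $V^TM_{\mathrm{pad}}^{-1}E_Q$, $E_Q^TM_{\mathrm{pad}}^{-1}U$, and $V^TM_{\mathrm{pad}}^{-1}U$. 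These form a $(\sigma+\delta_0)\times(\sigma+\delta_0)$ table, which has $O(\sigma^2)$ entries because $\delta_0\le\sigma$. Preprocessing computes $M_{\mathrm{pad}}^{-1}$ in $O(N^\omega)$ for $d=0$, or in $\widetilde O(dN^3)$ via \cite{Storjohann15}, and then takes the needed blocks.

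The expected obstacle is the query cost. The pad contributes $\delta_0$ extra dimensions, so the small determinant has size $\delta_C+\delta_0$ rather than $\delta_C$. The bound $O(\delta_C^\omega)$ therefore needs $\delta_0\le\delta_C$. I would handle this by a per-query check. If $\delta_C<\delta_0$, then $\operatorname{rank}(C)\le\delta_C<\delta_0$ forces $\det(M+C)=0$, so we answer $0$. Otherwise the small matrix has size $O(\delta_C)$. We form it from $O(\delta_C^2)$ stored entries, multiplying by $C_{RK}$ in $O(\delta_C^\omega)$ time, and take its determinant in $O(\delta_C^\omega)$. In the polynomial case the entries are rational functions of degree $O(dN)$ and $M_{\mathrm{pad}}^{-1}$ is stored as $\mathrm{adj}/\det$, so the determinant costs $\widetilde O(dN\delta_C^\omega)$ and the space is $\widetilde O(dN\sigma^2)$, matching \Cref{lemma:BrandS-det}. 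A final union bound over the Schwartz–Zippel failure events gives the high-probability guarantee.
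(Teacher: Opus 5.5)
Your proposal is correct and follows essentially the same route as the paper: pad $M$ with a random rank-$\delta_0$ term $UV^T$, fold $-UV^T$ back into the query update as an extra $-I_{\delta_0}$ block, apply the matrix determinant lemma, and precompute the $(\sigma+\delta_0)\times(\sigma+\delta_0)$ bordered $Q\times Q$ block of $M_{\mathrm{pad}}^{-1}$. Your handling of the deficiency is slightly cleaner than the paper's: you flag $\delta_0>\sigma$ at preprocessing and check $\delta_C<\delta_0$ per query, whereas the paper compares $\delta$ against the query-time quantity $\delta_C$ already during preprocessing.
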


\begin{proof}
Let $\delta = N - \operatorname{rank}(M) \ge 0$ be the rank deficiency of the base matrix $M$. The modification matrix $C$ has at most $\delta_C$ nonzero entries, meaning $\operatorname{rank}(C) \le \delta_C$. Therefore, the rank of the updated matrix $M' = M + C$ is bounded by:

$$\operatorname{rank}(M') \le (N - \delta) + \delta_C.$$

If $\delta > \delta_C$, then $\operatorname{rank}(M') < N$ for all possible queries, implying $\det(M+C) = 0$ identically. In this scenario, the algorithm flags this condition during preprocessing and answers all queries with $0$ in $O(1)$ time. Thus, we assume henceforth that $\delta \le \delta_C$.

We sample random matrices $U, V \in \F_d^{N \times \delta}$. Since the rank deficiency of $M$ is exactly $\delta$, a generic rank-$\delta$ correction makes the matrix nonsingular. By the Schwartz-Zippel lemma, assuming the field $\F_d$ is of size $\mathrm{poly}(N)$, a random choice of $U,V$ ensures that $M_{\mathrm{pad}} = M + UV^T$ is invertible with high probability.

Since the matrix $C$ contains at most $\delta_C$ nonzero entries, all located within $Q \times Q$, we can explicitly factor it as a low-rank product using only indices from $Q$. Let $k \le \delta_C$ be the exact number of nonzero entries in $C$, located at coordinates $(i_1, j_1), \dots, (i_k, j_k)$ where $i_m, j_m \in Q$. We decompose $C$ as $C = U_C \Delta_C V_C^T$, where $\Delta_C \in \F_d[Y]^{k \times k}$ is a diagonal matrix with its $m$-th diagonal entry being $C_{i_m, j_m}$. The matrices $U_C, V_C \in \F_d^{N \times k}$ are constructed from the standard basis vectors of $\F_d^N$: the $m$-th column of $U_C$ is $e_{i_m}$, and the $m$-th column of $V_C$ is $e_{j_m}$.

We can now express the query matrix $M'$ by introducing and immediately subtracting the numeric pad:

$$M' = M + C = M_{\mathrm{pad}} + C - UV^T = M_{\mathrm{pad}} + \begin{pmatrix} U_C & U \end{pmatrix} \begin{pmatrix} \Delta_C & 0 \\ 0 & -I_\delta \end{pmatrix} \begin{pmatrix} V_C^T \\ V^T \end{pmatrix}.$$

Let $X = \begin{pmatrix} U_C & U \end{pmatrix}$, $Y = \begin{pmatrix} V_C & V \end{pmatrix}$, and $\Sigma = \begin{pmatrix} \Delta_C & 0 \\ 0 & -I_\delta \end{pmatrix}$. Applying the Matrix Determinant lemma to the invertible matrix $M_{\mathrm{pad}}$ yields:
\begin{equation}
\label{eq:adjoint_det_update_Q}
\det(M+C) = \det(M_{\mathrm{pad}}) \det\bigl(I_{k+\delta} + \Sigma Y^T M_{\mathrm{pad}}^{-1} X\bigr).
\end{equation}

\paragraph{Preprocessing.}
We compute the rank of $M$ and determine $\delta$. If $\delta > \delta_C$, we store a short-circuit flag. Otherwise, we sample $U, V$ until $\det(M_{\mathrm{pad}}) \neq 0$. We then compute $\det(M_{\mathrm{pad}})$ and the complete inverse matrix $M_{\mathrm{pad}}^{-1}$.

Because all queries restrict the non-zero entries of $C$ to the susceptible set $Q$ of size $\sigma$, the selection matrices $U_C$ and $V_C$ only ever sample rows and columns from $Q$. Thus, the inner matrix $Y^T M_{\mathrm{pad}}^{-1} X$ will always be a principal submatrix of a larger precomputed matrix restricted to $Q$ and the pad. We construct and store this $(\sigma+\delta) \times (\sigma+\delta)$ matrix:

$$W_Q := \begin{pmatrix} I_{N, Q} & V \end{pmatrix}^T M_{\mathrm{pad}}^{-1} \begin{pmatrix} I_{N, Q} & U \end{pmatrix},$$

where $I_{N, Q}$ represents the $N \times \sigma$ submatrix of the identity matrix $I_N$ consisting of the columns indexed by $Q$. Note that since $\delta \le \delta_C \le \sigma$, the dimensions of $W_Q$ are $O(\sigma \times \sigma)$.
\begin{itemize}
\item \textbf{For $d=0$:} Computing $M_{\mathrm{pad}}^{-1}$ costs $O(N^\omega)$ arithmetic operations. Extracting $W_Q$ and storing it requires $O(\sigma^2)$ space.
\item \textbf{For $d \ge 1$:} The entries of $M_{\mathrm{pad}}$ have degree $d$, meaning the entries of $M_{\mathrm{pad}}^{-1}$ are rational functions of degree $O(dN)$. From \cite{Storjohann15}~(Corollary 6.12), computing this inverse costs $\widetilde{O}(dN^{3})$ operations. Storing the $O(\sigma) \times O(\sigma)$ matrix $W_Q$ requires $\widetilde{O}(dN \sigma^2)$ space.
\end{itemize}

\paragraph{Query.}
Given a modification matrix $C$, if the short-circuit flag $\delta > \delta_C$ is active, we return $0$. Otherwise, the query algorithm identifies the $k \le \delta_C$ coordinates of the nonzero entries to form the standard basis selection matrices $U_C$ and $V_C$.

To form the inner matrix $Y^T M_{\mathrm{pad}}^{-1} X$, we simply extract the relevant rows and columns from our precomputed matrix $W_Q$. Because $U_C$ and $V_C$ map directly to indices within $Q$, this amounts to an $O(\delta_C) \times O(\delta_C)$ submatrix lookup, alongside the static $U$ and $V$ blocks. Since $\delta \le \delta_C$, the total dimension of this inner matrix is at most $2\delta_C \times 2\delta_C$.

We multiply this submatrix by $\Sigma$, add $I_{k+\delta}$, and evaluate its determinant.
\begin{itemize}
\item \textbf{For $d = 0$:} Computing the determinant of this $O(\delta_C) \times O(\delta_C)$ scalar matrix costs $O(\delta_C^\omega)$ field operations.
\item \textbf{For $d \ge 1$:} The entries extracted from $W_Q$ have rational degree $O(dN)$. Computing the determinant of an $O(\delta_C) \times O(\delta_C)$ matrix with entries of degree $O(dN)$ takes $\widetilde{O}(dN \delta_C^\omega)$ operations.
\end{itemize}

Multiplying this final determinant by the precomputed scalar $\det(M_{\mathrm{pad}})$ completes the query procedure.
\end{proof}

\subsection{Proof of~\Cref{lemma:union_rep}}
\begin{proof}
We treat the diagonal entries $\{c_{i,e} : i \in [k],\, e \in E\}$ of the matrices $C_1, \ldots, C_k$ as formal variables and work over the polynomial ring $\F[\{c_{i,e}\}]$.

We first show that $S \subseteq E$ is independent in $M^{\vee k}$ if and only if the submatrix $U_S$ has full column rank over the polynomial ring (i.e., some maximal minor of $U_S$ is a non-zero polynomial).

Fix $S \subseteq E$ with $|S| = s$. Suppose $S$ is independent in $M^{\vee k}$. By definition, there is a partition $S = S_1 \uplus \dots \uplus S_k$ such that each $S_i$ is independent in $M$. Since $M$ is represented by $A$, for each $i \in [k]$ there exists a subset $R_i \subseteq [r]$ with $|R_i| = |S_i|$ such that $\det(A_{R_i, S_i}) \neq 0$.

Let $R$ be the set of $s$ rows in $U$ formed by taking the rows corresponding to $R_i$ from the $i$-th block  $A C_i$ (more formally, the set of the rows in $R$ is given by $\{(i-1)r + j: i \in [k], j \in R_i\}$).

Consider the $s \times s$ square submatrix $U_{R,S}$. By the generalized Laplace expansion
(Lemma~\ref{lemma:generalised-laplace-expansion}), on expanding along
the $k$ row blocks $R_1, \dots, R_k$, $\det(U_{R,S})$ can be written as:
\begin{equation}
\label{eq:detURS}
\det(U_{R,S}) \;=\; \sum_{\substack{T_1 \uplus \dots \uplus T_k = S \\ |T_i| = |R_i|}} \pm\, \prod_{i=1}^k \det\bigl((A C_i)_{R_i, T_i}\bigr).
\end{equation}

For each $i$ and block $T_i$, we can factor out the diagonal variables:
$$
\det\bigl((A C_i)_{R_i, T_i}\bigr) = \det\bigl(A_{R_i, T_i}\bigr) \prod_{e \in T_i} c_{i,e},
$$
where $c_{i,e}$ denotes the $(e,e)$ diagonal entry of the matrix $C_i$.

For two distinct partitions $(T_1, \ldots, T_k)$ and $(T_1', \ldots, T_k')$, the monomials $\prod_{i \in [k]} \prod_{e \in T_i} c_{i,e}$ and $\prod_{i \in [k]} \prod_{e \in T_i'} c_{i,e}$ are distinct (since the variable indices $(i, e)$ differ). Therefore, distinct partitions contribute linearly independent monomials, and $\det(U_{R,S})$ is a non-zero polynomial if at least one partition yields a non-zero coefficient. The partition $(S_1, \ldots, S_k)$ contributes the coefficient $\prod_{i=1}^k \det(A_{R_i, S_i}) \neq 0$. Therefore, $\det(U_{R,S})$ is a non-zero polynomial.

Conversely, suppose the submatrix $U_S$ has full column rank. Then $\det(U_{R,S})$ is a non-zero polynomial for some choice of rows $R = R_1 \uplus \dots \uplus R_k$. By~\eqref{eq:detURS}, there exists a partition $(T_1, \dots, T_k)$ with $\prod_{i=1}^k \det(A_{R_i, T_i}) \neq 0$, implying each $T_i$ is independent in $M$. Hence $S$ is independent in $M^{\vee k}$.

If $S$ is independent in $M^{\vee k}$, we have shown that $P_S := \det(U_{R,S})$ is a non-zero polynomial for some $R$. The degree of $P_S$ is at most $s \le kr$ (each variable appears with degree at most $1$). By the Schwartz-Zippel lemma, under a uniformly random assignment from $\F$ to the variables:
$$
\Pr[P_S = 0] \;\le\; \frac{\deg(P_S)}{|\F|} \;\le\; \frac{kr}{|\F|}.
$$
This proves the first claim as $\F$ is of size $\poly(kr)$.
To establish the second claim, note that any set $S \subseteq E$ which is dependent in $M^{\vee k}$ yields the identically zero polynomial for all choices of $R$. So, such sets remain deterministically dependent under any field assignment.
\end{proof}

\subsection{Extension of Perfect $\alpha$-factor Sensitivity Oracle to Maximum $\alpha$-factor}
\label{subsec:max-alpha-factor}

We extend the sensitivity oracle results of the (perfect) $\alpha$-factor (\Cref{thm:sensitivity-arbitrary-alpha,thm:subset-arbitrary-alpha}) to the (maximum) $\alpha$-factor, achieving near-identical bounds. 
Using the terminology from~\cite{GabowS21_I}, the (maximum) $\alpha$-factor treats $\alpha(v)$ as an upper bound on the desired degree of $v$, and the objective is to maximize the total number of edges picked. In contrast, the (perfect) $\alpha$-factor treats $\alpha(v)$ as the exact required degree for $v$, and the oracle strictly answers its existence. 

Given a graph $G = (V, E)$, the function $\alpha$, edge costs $\cost: E \rightarrow [-W, W]$, a precomputed min-cost (maximum) $\alpha$-factor $M$ of $G$, and a parameter $f > 0$, we first define the \emph{deficiency} $def(v)$ for any vertex $v \in V$ as the difference between $\alpha(v)$ and the number of edges incident to $v$ in $M$. 
Let $\delta_M(v)$ denote the set of edges incident to $v$ in $M$. Hence, $def(v) = \alpha(v) - |\delta_M(v)|$. By definition, $def(v) \geq 0$. 

We now construct an auxiliary undirected graph $\G$ from $G$ by adding three dummy vertices $z_1, z_2, z_3$, along with the following sets of zero-cost edges: 
\begin{itemize}
    \item $E_{z_1}$: For all $v \in V$, add $def(v) + f$ parallel edges between $z_1$ and $v$.
    \item $E_{triangle}$: Add $f$ parallel edges between each of the pairs $(z_1, z_2), (z_2, z_3),$ and $(z_3, z_1)$.
\end{itemize} 

This yields the graph $\G = (\V, \E)$, where $\V = V \cup \{z_1, z_2, z_3\}$ and $\E = E \cup E_{z_1} \cup E_{triangle}$. 

We now define a function $\beta$ over the vertices of $\G$ as follows: $\beta(z_1) = 2f + \sum_{v \in V} def(v)$; $\beta(z_2) = \beta(z_3) = 2f$; and $\beta(v) = \alpha(v)$ for all $v \in V$. By construction, $\G$ admits a (perfect) $\beta$-factor using a subset of the edges of $M$, $E_{z_1}$, and $E_{triangle}$. Furthermore, the cost of the min-cost maximum $\alpha$-factor of $G$ is exactly equal to the cost of the min-cost perfect $\beta$-factor of $\G$.

For any set of edge updates $U$ in graph $G$, we describe the corresponding updates $\U$ to the graph $\G$. For an update to any edge $e \in G$, we apply the identical update to $e$ in $\G$. We then augment the set $\U$ into a parameterized set $\U^i$ for $i \in [-f, f]$ with the following additional dummy-edge updates: 
\begin{itemize}
    \item If $i > 0$: delete $i$ edges between $z_2$ and $z_3$; add $i$ zero-cost edges between both $(z_1,z_3)$ and $(z_1, z_2)$.  
    \item If $i \le 0$: add $|i|$ zero-cost edges between $z_2$ and $z_3$; delete $|i|$ edges between both $(z_1,z_3)$ and $(z_1, z_2)$.  
\end{itemize}

The following lemma establishes the equivalence between the two factors under these updates:

\begin{lemma}
    \label{lemma:maximum-to-perfect-alpha}
    The cardinality of the maximum $\alpha$-factor of $G+U$ is equal to $|M| + i^*$, where $i^*$ is the maximum possible integer $i \in [-f,f]$ such that the graph $\G + \U^i$ admits a perfect $\beta$-factor. Furthermore, the cost of the minimum-cost maximum $\alpha$-factor of $G + U$ is equal to the cost of the minimum-cost perfect $\beta$-factor of $\G + \U^{i^*}$.
\end{lemma}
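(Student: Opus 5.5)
The plan is to prove a two-way correspondence, for each fixed $i\in[-f,f]$. On one side are perfect $\beta$-factors $F$ of $\G+\U^i$. On the other are degree-constrained subgraphs $M'\subseteq G+U$ with $\deg_{M'}(v)\le \alpha(v)$ satisfying two conditions: (a) $|M'| = |M|+i$, and (b) $\deg_{M'}(v)\ge \deg_M(v)-f$ for every $v\in V$. The correspondence will preserve cost exactly. Given it, the lemma follows once we show one more fact (the structural step below): some min-cost maximum $\alpha$-factor $M'$ of $G+U$ satisfies (b) and $|M'|-|M|\in[-f,f]$. Then $i^*\ge |M'|-|M|$. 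Conversely, feasibility at any $i$ yields an $\alpha$-factor of size $|M|+i$, so $i^*\le |M'|-|M|$. Hence $i^* = |M'|-|M|$. Since all dummy edges have zero cost, the min-cost statements transfer as well.

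\emph{Counting step.} Let $F$ be a perfect $\beta$-factor of $\G+\U^i$ and put $M' := F\cap E(G+U)$. Let $x,y,w$ be the numbers of selected $z_1z_2$, $z_1z_3$, $z_2z_3$ edges. The degree constraints at $z_2$ and $z_3$ give $x+w=2f=y+w$, so $x=y$. Each $v\in V$ receives $\alpha(v)-\deg_{M'}(v)$ edges from $z_1$. This forces $\alpha(v)-\deg_{M'}(v)\le def(v)+f$, which is exactly (b). The $z_1$-edges into $V$ then sum to $\sum_v def(v)+2|M|-2|M'|$. Plugging into $\beta(z_1)=2f+\sum_v def(v)$ gives $x=f+|M'|-|M|$ and $w=f-(|M'|-|M|)$. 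In $\U^i$ the available multiplicities are $f+i$ for each of $z_1z_2$ and $z_1z_3$, and $f-i$ for $z_2z_3$. So $x\le f+i$ and $w\le f-i$ together force $|M'|-|M|=i$, which is (a).

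Conversely, given $M'$ satisfying (a) and (b), we fill in the $z_1$-edges and choose $x=y=f+i$ and $w=f-i$ triangle edges. This is feasible exactly because of (a), (b), $\alpha(v)-\deg_{M'}(v)\ge 0$ and $|i|\le f$. Costs agree since every dummy edge has cost $0$.

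\emph{Structural step (main obstacle).} Among min-cost maximum $\alpha$-factors of $G+U$, choose $M'$ minimizing $|M\triangle M'|$. Decompose $M\triangle M'$ into edge-disjoint alternating trails. Each trail is either closed, or open with endpoints at vertices where the two degrees differ, in the spirit of \Cref{obs:trail-decomposition}.

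The key claim is that every trail contains an edge of $U$. Suppose a trail $T$ contains no edge of $U$. Then $M\triangle T$ is a valid $\alpha$-factor of $G$, and $M'\triangle T$ is a valid $\alpha$-factor of $G+U$. Toggling changes the sizes by $\pm s$ and the costs by $\pm c$ in opposite directions for $M$ and $M'$. Maximality of $|M|$ and $|M'|$ forces $s=0$, and then the min-cost property on both sides forces $c=0$, exactly as in \Cref{lem:cycle-free-witness}. This contradicts the minimality of $|M\triangle M'|$.

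Hence there are at most $f$ trails. Each trail changes $|\cdot|$ by at most $1$, which gives $|M'|-|M|\in[-f,f]$. Each trail changes the degree of any vertex by at most $1$ net, since interior passes are balanced and only endpoints contribute $\pm1$. This gives $\deg_M(v)-\deg_{M'}(v)\le f$, i.e.\ (b).

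The delicate points are two. First, defining the open trails so that each has net degree effect only at its endpoints. Second, checking that toggling a trail keeps degrees at most $\alpha$ at its endpoints; this holds because the endpoint is deficient on the appropriate side.
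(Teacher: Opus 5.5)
\textbf{There is a genuine gap, and it sits in the structural step.} Your counting step is correct. It is also more explicit than the paper's proof, which only argues the extension direction. Your claim that every trail of $M\triangle M'$ contains an edge of $U$ is also correct.

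The step that fails is ``each trail changes the degree of any vertex by at most $1$ net.'' An open alternating trail can have both endpoints at the same vertex. If $v$ carries two unpaired $M$-ends, an odd-length trail can leave $v$ and return to it through $M$-edges, and toggling it lowers $\deg(v)$ by $2$. Your argument therefore only gives $\deg_M(v)-\deg_{M'}(v)\le 2f$. This cannot be tightened: (b) genuinely fails, and so does the cost part of the lemma.

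\emph{Counterexample.} Take the $5$-cycle $v,x,u,w,y$ with $\alpha(v)=2$ and $\alpha\equiv1$ elsewhere. Set $\cost(vx)=\cost(vy)=1$ and all other costs $0$.
\begin{itemize}
\item Since $\sum_v\alpha(v)=6$, $M=\{vx,uw,vy\}$ is the unique maximum $\alpha$-factor, so $def\equiv0$.
\item Let $U$ delete $uw$, so $f=1$. The path $G+U$ has maximum size $2$, and its unique min-cost maximum $\alpha$-factor is $M'=\{xu,wy\}$, of cost $0$.
\item Here $M\triangle M'$ is the single trail $vx,xu,uw,wy,yv$. It contains the $U$-edge, but both of its ends are at $v$, whose degree drops from $2$ to $0$.
\item Every $z_1v$ bundle has multiplicity $def(v)+f=1$, so $M'$ does not extend to a $\beta$-factor.
\item Then $i^*=-1$, which still gives the right size. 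However, the cheapest perfect $\beta$-factor of $\G+\U^{-1}$ comes from $\{xu,vy\}$ or $\{vx,wy\}$ and costs $1\ne0$.
\end{itemize}

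\textbf{Relation to the paper, and what can be saved.} The paper's proof has the same hole: it asserts $\bigl|\,|\delta_M(v)|-|\delta_{M'}(v)|\,\bigr|\le f$ ``without loss of generality.''

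What you did prove suffices for a corrected construction. Use $def(v)+2f$ parallel $z_1v$ edges, in line with $f_v=\min(2f,\alpha(v))$ in \Cref{sec:arbitrary-alpha}. Your counting step then goes through with (b) replaced by $\deg_{M'}(v)\ge\deg_M(v)-2f$. Your structural step proves exactly that bound, since there are at most $f$ trails and each has at most two endpoints at $v$.

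The cardinality claim as stated is true, but it needs a different witness. Process the updates one at a time, each time moving to a maximum $\alpha$-factor closest to the current one. By your toggling argument without costs, the two differ in at most one trail through the updated edge. That trail is either $\{e\}$ for a deleted edge $e$ of the current factor, or a balanced trail, or a trail whose two ends are both of the new type. So each update lowers every degree by at most one, giving (b) after $f$ updates.
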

\begin{proof}    
    Let $M'$ be an arbitrary maximum $\alpha$-factor in $G+U$. Since $f$ updates can change the size of the maximum $\alpha$-factor by at most $f$, we know that $|M| - f \leq |M'| \leq |M| + f$. Without loss of generality, we can assume that for all $v \in V$, the local change is bounded: $| |\delta_M(v)| - |\delta_{M'}(v)| | \leq f$. 
    
    Observe that $M'$ can be extended to a perfect $\beta$-factor $M_{i}$ in $\G + \U^{i}$, where $i = |M'| - |M|$. This is achieved by picking exactly $\sum_{v \in V} \alpha(v) - 2|M'|$ edges from $E_{z_1}$ (which is always possible since $| |\delta_M(v)| - |\delta_{M'}(v)| | \leq f$). This leaves a remaining demand of $2f + 2(|M'| - |M|)$ at vertex $z_1$, which must be strictly satisfied using edges from $E_{triangle}$.
    
    Since $z_2$ and $z_3$ have fixed capacities of $2f$, this remaining demand uniquely forces $M_{i}$ to pick exactly $f + (|M'| - |M|)$ edges on both $(z_1, z_2)$ and $(z_1, z_3)$, and exactly $f - (|M'| - |M|)$ edges on $(z_2, z_3)$. This assignment forms a valid perfect $\beta$-factor in $\G + \U^i$. 
    
    Thus, the maximum possible $i$ (denoted $i^*$) that yields a perfect $\beta$-factor in $\G + \U^{i}$ perfectly corresponds to the maximum $\alpha$-factor of $G+U$. This proves the cardinality equivalence. Finally, because all dummy edges carry zero cost, the min-cost perfect $\beta$-factor strictly optimizes and matches the cost of the min-cost maximum $\alpha$-factor.
\end{proof}

We can now construct any perfect $\beta$-factor sensitivity oracle over the graph $\G$. By \Cref{lemma:maximum-to-perfect-alpha}, this oracle resolves sensitivity queries for the maximum $\alpha$-factor in $G$. When an update $\U$ is queried, the optimal integer $i^*$ can be found using a binary search over the interval $[-f, f]$. This binary search introduces a logarithmic overhead of $O(\log f)$, mapping the previous $O(f^{3\omega})$ query time to $\widetilde{O}(f^{3\omega})$.

Hence, the following corollaries of \Cref{thm:sensitivity-arbitrary-alpha} and \Cref{thm:subset-arbitrary-alpha} are immediate:

\begin{corollary}
    \label{corollary:sensitivity-arbitrary-max-alpha}
    Let $G = (V,E,\cost)$ be an undirected graph with integer edge costs in $[-W,W]$ and arbitrary degree bounds $\alpha$, and let $M$ be a fixed initial (maximum) $\alpha$-factor. Then there exist the following sensitivity oracles that, for any update set $U$ with $f = |U|$, answer queries on $G+U$ with high probability:
    \begingroup
    \setlength{\aboverulesep}{0pt}
    \setlength{\belowrulesep}{0pt}
    \begin{center}
    \renewcommand{\arraystretch}{1.25}
    \resizebox{0.98\linewidth}{!}{
    \begin{tabular}{@{}l l l l l@{}}
    \rowcolor{yellow!15}
    \textsc{Oracle} & \textsc{Prep.} & \textsc{Space} & \textsc{Query} & \textsc{Output} \\
    \midrule
    Maximum $\alpha$-factor &
    $O((fn)^\omega + T_{\alpha})$ &
    $O((fn)^2 \log n)$ &
    $\widetilde{O}(f^{3\omega})$ &
    Size of maximum $\alpha$-factor in $G+U$ \\
    Min-Cost Max $\alpha$-factor &
    $\widetilde{O}(W(fn)^3)$ &
    $O(W(fn)^3 \log n)$ &
    $\widetilde{O}(W n f^{3\omega+1})$ &
    Cost of min-cost max $\alpha$-factor in $G+U$ \\
    \end{tabular}%
    }
    \end{center}
    \endgroup
\end{corollary}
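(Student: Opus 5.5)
The plan is to derive Corollary~\ref{corollary:sensitivity-arbitrary-max-alpha} as a black-box consequence of \Cref{thm:sensitivity-arbitrary-alpha} applied to the auxiliary instance $(\G,\beta)$, with \Cref{lemma:maximum-to-perfect-alpha} translating perfect $\beta$-factors of $\G+\U^i$ back into maximum $\alpha$-factors of $G+U$. The key observation is that $\G$ is fixed at preprocessing: it depends only on $G$, $\alpha$, the precomputed $M$, and $f$. Only the query-time update $\U^i$ varies. So I will preprocess $(\G,\beta)$ once, using the perfect $\beta$-factor oracle of \Cref{thm:sensitivity-arbitrary-alpha}. By construction $\G$ already admits a perfect $\beta$-factor (the one built from $M$, $E_{z_1}$ and $E_{triangle}$), so the standing assumption of that theorem holds without further augmentation.

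I will then check the parameters. The graph $\G$ has $n+3=O(n)$ vertices. For each $i\in[-f,f]$, the update $\U^i$ consists of the $f$ original updates plus $3|i|\le 3f$ dummy-edge updates, so $|\U^i|=O(f)$. Invoking \Cref{thm:sensitivity-arbitrary-alpha} with update budget $f':=4f=O(f)$ and $n':=n+3$ gives the following bounds:
\begin{itemize}
\item preprocessing $O((f'n')^\omega+T_\alpha)=O((fn)^\omega+T_\alpha)$, where computing $M$ is absorbed into $T_\alpha$;
\item space $O((fn)^2\log n)$;
\item per-evaluation query time $O(f^{3\omega})$.
\end{itemize}
The weighted case is identical and gives $\widetilde O(W(fn)^3)$ preprocessing, $O(W(fn)^3\log n)$ space, and $\widetilde O(Wnf^{3\omega+1})$ per evaluation.

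At query time I will binary search for the largest $i\in[-f,f]$ such that $\G+\U^i$ has a perfect $\beta$-factor. This costs $O(\log f)$ existence queries, hence $\widetilde O(f^{3\omega})$ in total. By \Cref{lemma:maximum-to-perfect-alpha}, the size is then $|M|+i^*$. For the cost version, one additional min-cost query on $\G+\U^{i^*}$ returns the min-cost maximum $\alpha$-factor cost, again by \Cref{lemma:maximum-to-perfect-alpha}, within $\widetilde O(Wnf^{3\omega+1})$.

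The main obstacle is justifying the binary search: feasibility in $i$ must be monotone, meaning that if $\G+\U^i$ admits a perfect $\beta$-factor then so does $\G+\U^{i'}$ for every $i'\le i$. I will argue this through the correspondence in the proof of \Cref{lemma:maximum-to-perfect-alpha}. A perfect $\beta$-factor of $\G+\U^i$ restricted to $E$ gives an $\alpha$-bounded subgraph of size $|M|+i$. Deleting $i-i'$ of its edges and compensating with $E_{z_1}$ edges and triangle edges yields a $\beta$-factor of $\G+\U^{i'}$; the slack of $f$ extra parallel edges per vertex to $z_1$ pays for the compensation. If this monotonicity turns out to be delicate, the fallback is a linear scan over all $2f+1$ values of $i$. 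That costs $O(f^{3\omega+1})$, which would still be polynomial but would weaken the $\widetilde O(f^{3\omega})$ bound claimed in the corollary.
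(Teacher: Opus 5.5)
Your plan is the paper's proof: instantiate the perfect $\beta$-factor oracle of \Cref{thm:sensitivity-arbitrary-alpha} once on the fixed graph $\G$, which has $O(n)$ vertices and satisfies $|\U^i|=O(f)$. Then binary search for the largest feasible $i\in[-f,f]$, read off $|M|+i^*$ via \Cref{lemma:maximum-to-perfect-alpha}, and make one extra weighted query on $\G+\U^{i^*}$. The one step that needs repair is the monotonicity that makes the binary search valid. The paper leaves it implicit, and your justification of it does not work as written.

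\textbf{Why the deletion argument fails.} Take a $\beta$-factor of $\G+\U^i$ and let $M'$ be its restriction to $E$. Vertex $v$ uses $\alpha(v)-\deg_{M'}(v)$ of its $def(v)+f$ edges to $z_1$. Its spare capacity is therefore $f-(\deg_M(v)-\deg_{M'}(v))$, not $f$. This is $0$ whenever $M'$ has already dropped $f$ edges at $v$, so deleting an arbitrary edge of $M'$ can violate the $z_1$-capacity at an endpoint. You give no argument that a safely deletable edge always exists.

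\textbf{A clean fix.}
\begin{enumerate}
\item The triangle forces exactly $f+i$, $f+i$ and $f-i$ edges on the three dummy pairs. Hence $i$ is feasible iff $G+U$ has a subgraph $M''$ with $\deg_M(v)-f\le\deg_{M''}(v)\le\alpha(v)$ for all $v$ and $|M''|=|M|+i$.
\item The sizes of such degree-constrained subgraphs form an interval. Take two of them, $F_1$ and $F_2$. Decompose $F_1\triangle F_2$ into alternating trails so that all trail ends at $v$ carry the colour in excess at $v$. Switch $F_1$ along these trails one at a time. Each degree stays between its values in $F_1$ and $F_2$, and the size changes by at most one per trail.
\item $i=-\min(f,|M|)$ is always feasible: remove from $M$ its deleted edges plus further $M$-edges, $\min(f,|M|)$ edges in total.
\end{enumerate}
Hence feasibility is downward closed on $[-\min(f,|M|),f]$. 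The binary search, and with it the $\widetilde{O}(f^{3\omega})$ bound, is then justified, so the linear-scan fallback is unnecessary.
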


\begin{corollary}
    \label{corollary:subset-arbitrary-alpha}
    Let $G = (V,E,\cost)$ be an undirected graph with integer edge costs in $[-W,W]$, arbitrary degree bounds $\alpha$, a fixed initial min-cost maximum $\alpha$-factor $M$, and a susceptible edge set $E_Q$ of size $\sigma$. Then there exist the following subset sensitivity oracles that, for any update set $U \subseteq E_Q$ with $f = |U|$, answer queries on $G+U$ with high probability:
    
    \begingroup
    \setlength{\aboverulesep}{0pt}
    \setlength{\belowrulesep}{0pt}
    \begin{center}
    \renewcommand{\arraystretch}{1.25}
    \resizebox{0.98\linewidth}{!}{
    \begin{tabular}{@{}lllll@{}}
    \rowcolor{yellow!15}
    \textsc{Oracle} & \textsc{Prep.} & \textsc{Space} & \textsc{Query} & \textsc{Output} \\
    \midrule
    Maximum $\alpha$-factor &
    $O(\sigma^2 (fn)^{\omega-2} + (fn)^\omega + T_{\alpha})$ &
    $O(\sigma^2)$ &
    $\widetilde{O}(f^\omega)$ &
    Size of maximum $\alpha$-factor in $G+U$ \\
    Min-Cost Max $\alpha$-factor &
    $\widetilde{O}(W \sigma^2 (fn)^{\omega-1} + W(fn)^{\omega+1})$ &
    $\widetilde{O}(W fn \sigma^2)$ &
    $\widetilde{O}(W n f^{\omega+1} )$ &
    Cost of min-cost max $\alpha$-factor in $G+U$ \\
    \end{tabular}%
    }
    \end{center}
    \endgroup
\end{corollary}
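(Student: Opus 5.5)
The plan is to derive \Cref{corollary:subset-arbitrary-alpha} by combining the reduction of \Cref{lemma:maximum-to-perfect-alpha} with the subset oracle of \Cref{thm:subset-arbitrary-alpha}. The second oracle will be run on the auxiliary graph $\G$ with degree function $\beta$.

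\textbf{Preprocessing.} Compute the min-cost maximum $\alpha$-factor $M$ in $T_\alpha$ time. Then build $\G$ by adding $z_1,z_2,z_3$, the edge set $E_{z_1}$, and the triangle $E_{triangle}$, together with $\beta$. By construction, $\G$ has a perfect $\beta$-factor formed from $M$ plus dummy edges, so the precomputed perfect $\beta$-factor needed by \Cref{thm:subset-arbitrary-alpha} is available without further work.

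\textbf{Susceptible set.} Define the susceptible set of $\G$ as $E_Q$ together with every dummy edge that some $\U^i$ may touch. This means $f$ extra parallel copies each of $(z_1,z_2)$, $(z_1,z_3)$, $(z_2,z_3)$, on top of the existing $f$ copies per pair. This set has size $\sigma+O(f)=O(\sigma)$. For every $i\in[-f,f]$, the update $\U^i$ modifies at most $f+3|i|=O(f)$ edges, all inside this set.

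\textbf{Query.} Binary search for $i^*$ over $[-f,f]$. Each probe asks the existence oracle whether $\G+\U^i$ has a perfect $\beta$-factor. This needs monotonicity: if $i$ is feasible then every smaller $i'$ is feasible too. I would prove it by the construction in the lemma's proof: take the maximum $\alpha$-factor, drop $i-i'$ of its edges, and absorb the freed degree through $E_{z_1}$ and the triangle. By \Cref{lemma:maximum-to-perfect-alpha}, the answer is $|M|+i^*$, computed in $O(\log f)$ calls of $O(f^\omega)$ each, giving $\widetilde O(f^\omega)$. For the cost, one weighted call on $\G+\U^{i^*}$ returns the min-cost perfect $\beta$-factor, which equals the required cost, in $\widetilde O(Wnf^{\omega+1})$ time.

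\textbf{Bounds.} The dummy vertices have large $\beta$: $\beta(z_1)$ can be $\Theta(\sum_v \mathrm{def}(v))$. This is not an issue, because \Cref{thm:subset-arbitrary-alpha} handles arbitrary degree bounds and its dimension depends only on $n+3$ vertices and $f$. Preprocessing, space and query bounds therefore carry over with $n\to n+3$ and $\sigma\to O(\sigma)$.

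\textbf{Main obstacle.} The step I expect to be hardest is justifying monotonicity in $i$, and that each $\U^i$ stays within the preprocessed susceptible set. If monotonicity fails, I would fall back to testing all $2f+1$ values of $i$. That costs $O(f^{\omega+1})$, so the claimed $\widetilde O(f^\omega)$ query time depends on the binary-search argument going through.
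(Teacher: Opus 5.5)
Your route is the paper's: build $\G$, run the perfect-factor subset oracle of \Cref{thm:subset-arbitrary-alpha} on it with an $O(f)$ update budget and an $O(\sigma)$ susceptible set that contains the $6f$ dummy triangle copies, and binary-search for $i^*$. The paper simply asserts a binary search over $[-f,f]$. The gap is the monotonicity step you flagged: as stated it is false, and the ``drop edges'' argument does not establish it.

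\textbf{Why it fails.} $\G+\U^i$ has a perfect $\beta$-factor iff some $M'\subseteq G+U$ has $|M'|=|M|+i$ and $\deg_M(v)-f\le \deg_{M'}(v)\le \alpha(v)$ for all $v$. The lower bound holds because there are only $def(v)+f$ copies of $(z_1,v)$. So dropping an edge at a vertex already at $\deg_M(v)-f$ breaks feasibility. Moreover, levels $i<-|M|$ are never feasible. Concretely, let $G$ be a star with four leaves, $\alpha\equiv 1$, $M$ a single edge, and let $U$ delete all four edges. Then in $[-4,4]$ only $i=-1$ is feasible. The predicate is therefore not monotone on $[-f,f]$, and a binary search there can return a wrong level.

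\textbf{The fix.}
\begin{enumerate}
\item[(i)] The level $i_0:=-|F\cap M|$ is always feasible, witnessed by $M\setminus F$, since each vertex loses at most $f$ edges.
\item[(ii)] Let $X,Y$ witness levels $i>i'$. At each vertex, pair edges of $X\setminus Y$ with edges of $Y\setminus X$. Following the pairing splits $X\triangle Y$ into alternating closed trails and open trails. Every open-trail end at $v$ is an unpaired edge of the majority side at $v$, and there are $|\deg_X(v)-\deg_Y(v)|$ such ends.
\item[(iii)] Since $|X|>|Y|$, some open trail $P$ starts and ends with $X$-edges. Then $X\triangle P\subseteq G+U$ has size $|X|-1$, and every degree stays between $\deg_Y(v)$ and $\deg_X(v)$, hence inside the window $[\deg_M(v)-f,\alpha(v)]$.
\end{enumerate}
Hence the feasible levels in $[i_0,f]$ are exactly $[i_0,i^*]$. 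Binary search over $[i_0,f]$ rather than $[-f,f]$ is valid and gives the $\widetilde O(f^\omega)$ bound.

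\textbf{A smaller omission.} For the weighted oracle you need $M$ plus the dummy edges to be a \emph{min-cost} perfect $\beta$-factor of $\G$, because \Cref{lem:cycle-free-witness} uses this. It is: every perfect $\beta$-factor of $\G$ restricts to a maximum $\alpha$-factor of $G$ of the same cost.
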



\begin{thebibliography}{VDBCK{\etalchar{+}}24}

\bibitem[AB19]{AssadiB19}
Sepehr Assadi and Aaron Bernstein.
\newblock {Towards a Unified Theory of Sparsification for Matching Problems}.
\newblock In Jeremy~T. Fineman and Michael Mitzenmacher, editors, {\em 2nd
  Symposium on Simplicity in Algorithms (SOSA 2019)}, volume~69 of {\em Open
  Access Series in Informatics (OASIcs)}, pages 11:1--11:20, Dagstuhl, Germany,
  2019. Schloss Dagstuhl -- Leibniz-Zentrum f{\"u}r Informatik.

\bibitem[ABR24]{AzarmehrBR24}
Amir Azarmehr, Soheil Behnezhad, and Mohammad Roghani.
\newblock {\em Fully Dynamic Matching: $(2-\sqrt{2})$-Approximation in Polylog
  Update Time}, pages 3040--3061.
\newblock 2024.

\bibitem[ACP{\etalchar{+}}26]{AhiCPPS26}
Mridul Ahi, Keerti Choudhary, Shlok Pande, Pushpraj, and Lakshay Saggi.
\newblock {Maximum-Flow and Minimum-Cut Sensitivity Oracles for Directed
  Graphs}.
\newblock In Shubhangi Saraf, editor, {\em 17th Innovations in Theoretical
  Computer Science Conference (ITCS 2026)}, volume 362 of {\em Leibniz
  International Proceedings in Informatics (LIPIcs)}, pages 5:1--5:24,
  Dagstuhl, Germany, 2026. Schloss Dagstuhl -- Leibniz-Zentrum f{\"u}r
  Informatik.

\bibitem[AGI{\etalchar{+}}19]{AbboudGIKPTUW19}
Amir Abboud, Loukas Georgiadis, Giuseppe~F. Italiano, Robert Krauthgamer, Nikos
  Parotsidis, Ohad Trabelsi, Przemyslaw Uznanski, and Daniel Wolleb{-}Graf.
\newblock Faster algorithms for all-pairs bounded min-cuts.
\newblock In Christel Baier, Ioannis Chatzigiannakis, Paola Flocchini, and
  Stefano Leonardi, editors, {\em 46th International Colloquium on Automata,
  Languages, and Programming, {ICALP} 2019, July 9-12, 2019}, volume 132 of
  {\em LIPIcs}, pages 7:1--7:15, 2019.

\bibitem[AJ23]{AkmalJ23}
Shyan Akmal and Ce~Jin.
\newblock An efficient algorithm for all-pairs bounded edge connectivity.
\newblock In Kousha Etessami, Uriel Feige, and Gabriele Puppis, editors, {\em
  50th International Colloquium on Automata, Languages, and Programming,
  {ICALP} 2023, July 10-14, 2023, Paderborn, Germany}, volume 261 of {\em
  LIPIcs}, pages 11:1--11:20. Schloss Dagstuhl - Leibniz-Zentrum f{\"{u}}r
  Informatik, 2023.

\bibitem[Akm24]{Akmal24}
Shyan Akmal.
\newblock {\em An Enumerative Perspective on Connectivity}, pages 179--198.
\newblock 2024.

\bibitem[Ans85]{Anstee85}
R.P Anstee.
\newblock An algorithmic proof of tutte's f-factor theorem.
\newblock {\em Journal of Algorithms}, 6(1):112--131, 1985.

\bibitem[BBP22]{BaswanaBP22}
Surender Baswana, Koustav Bhanja, and Abhyuday Pandey.
\newblock Minimum+1 (s, t)-cuts and dual edge sensitivity oracle.
\newblock In Mikolaj Bojanczyk, Emanuela Merelli, and David~P. Woodruff,
  editors, {\em 49th International Colloquium on Automata, Languages, and
  Programming, {ICALP} 2022, July 4-8, 2022, Paris, France}, volume 229 of {\em
  LIPIcs}, pages 15:1--15:20. Schloss Dagstuhl - Leibniz-Zentrum f{\"{u}}r
  Informatik, 2022.

\bibitem[BCC{\etalchar{+}}23]{BiloCCCFKS23}
Davide Bil\`{o}, Shiri Chechik, Keerti Choudhary, Sarel Cohen, Tobias
  Friedrich, Simon Krogmann, and Martin Schirneck.
\newblock Approximate distance sensitivity oracles in subquadratic space.
\newblock In {\em Proceedings of the 55th Annual ACM Symposium on Theory of
  Computing}, STOC 2023, page 1396–1409, New York, NY, USA, 2023. Association
  for Computing Machinery.

\bibitem[BCDW24]{BansalCDW24}
Shivam Bansal, Keerti Choudhary, Harkirat Dhanoa, and Harsh Wardhan.
\newblock {Fault-Tolerant Bounded Flow Preservers}.
\newblock In Juli\'{a}n Mestre and Anthony Wirth, editors, {\em 35th
  International Symposium on Algorithms and Computation (ISAAC 2024)}, volume
  322 of {\em Leibniz International Proceedings in Informatics (LIPIcs)}, pages
  9:1--9:14, Dagstuhl, Germany, 2024. Schloss Dagstuhl -- Leibniz-Zentrum
  f{\"u}r Informatik.

\bibitem[BCFS21]{BiloCFS21}
Davide Bil\`{o}, Sarel Cohen, Tobias Friedrich, and Martin Schirneck.
\newblock {Near-Optimal Deterministic Single-Source Distance Sensitivity
  Oracles}.
\newblock In Petra Mutzel, Rasmus Pagh, and Grzegorz Herman, editors, {\em 29th
  Annual European Symposium on Algorithms (ESA 2021)}, volume 204 of {\em
  Leibniz International Proceedings in Informatics (LIPIcs)}, pages
  18:1--18:17, Dagstuhl, Germany, 2021. Schloss Dagstuhl -- Leibniz-Zentrum
  f{\"u}r Informatik.

\bibitem[BCR16]{BaswanaCR16}
Surender Baswana, Keerti Choudhary, and Liam Roditty.
\newblock Fault tolerant subgraph for single source reachability: generic and
  optimal.
\newblock In {\em Proceedings of the Forty-Eighth Annual ACM Symposium on
  Theory of Computing}, STOC '16, page 509–518, New York, NY, USA, 2016.
  Association for Computing Machinery.

\bibitem[BCR17]{BaswanaCR:17}
Surender Baswana, Keerti Choudhary, and Liam Roditty.
\newblock An efficient strongly connected components algorithm in the fault
  tolerant model.
\newblock In {\em 44th International Colloquium on Automata, Languages, and
  Programming, {ICALP} 2017, July 10-14, 2017, Warsaw, Poland}, pages
  72:1--72:15, 2017.

\bibitem[BCR19]{BaswanaCR19scc}
Surender Baswana, Keerti Choudhary, and Liam Roditty.
\newblock An efficient strongly connected components algorithm in the fault
  tolerant model.
\newblock {\em Algorithmica}, 81(3):967--985, 2019.

\bibitem[Beh23]{Behnezhad23}
Soheil Behnezhad.
\newblock {\em Dynamic Algorithms for Maximum Matching Size}, pages 129--162.
\newblock 2023.

\bibitem[BFGM25]{BentertFGM25}
Matthias Bentert, Fedor~V. Fomin, Petr~A. Golovach, and Laure Morelle.
\newblock {Fault-Tolerant Matroid Bases}.
\newblock In Anne Benoit, Haim Kaplan, Sebastian Wild, and Grzegorz Herman,
  editors, {\em 33rd Annual European Symposium on Algorithms (ESA 2025)},
  volume 351 of {\em Leibniz International Proceedings in Informatics
  (LIPIcs)}, pages 83:1--83:14, Dagstuhl, Germany, 2025. Schloss Dagstuhl --
  Leibniz-Zentrum f{\"u}r Informatik.

\bibitem[BGH{\etalchar{+}}24]{BuchbinderGHKS24}
Niv Buchbinder, Anupam Gupta, Daniel Hathcock, Anna~R. Karlin, and Sherry
  Sarkar.
\newblock {\em Maintaining Matroid Intersections Online}, pages 4283--4304.
\newblock 2024.

\bibitem[BGLP16]{Bilo16-stacs}
Davide Bil{\`{o}}, Luciano Gual{\`{a}}, Stefano Leucci, and Guido Proietti.
\newblock Multiple-edge-fault-tolerant approximate shortest-path trees.
\newblock In {\em 33rd Symposium on Theoretical Aspects of Computer Science,
  {STACS} 2016, February 17-20, 2016, Orl{\'{e}}ans, France}, pages
  18:1--18:14, 2016.

\bibitem[BGRS22]{BanerjeeGRS22}
Niranka Banerjee, Manoj Gupta, Venkatesh Raman, and Saket Saurabh.
\newblock Output sensitive fault tolerant maximum matching.
\newblock In {\em Computer Science – Theory and Applications: 17th
  International Computer Science Symposium in Russia, CSR 2022, Virtual Event,
  June 29 – July 1, 2022, Proceedings}, page 115–132, Berlin, Heidelberg,
  2022. Springer-Verlag.

\bibitem[Bha24]{Bhanja24}
Koustav Bhanja.
\newblock Optimal sensitivity oracle for steiner mincut.
\newblock In Juli{\'{a}}n Mestre and Anthony Wirth, editors, {\em 35th
  International Symposium on Algorithms and Computation, {ISAAC} 2024, December
  8-11, 2024, Sydney, Australia}, volume 322 of {\em LIPIcs}, pages
  10:1--10:18. Schloss Dagstuhl - Leibniz-Zentrum f{\"{u}}r Informatik, 2024.

\bibitem[Bha25]{Bhanja25}
Koustav Bhanja.
\newblock {Minimum+1 Steiner Cut and Dual Edge Sensitivity Oracle: Bridging Gap
  between Global and (s,t)-cut}.
\newblock In Keren Censor-Hillel, Fabrizio Grandoni, Jo\"{e}l Ouaknine, and
  Gabriele Puppis, editors, {\em 52nd International Colloquium on Automata,
  Languages, and Programming (ICALP 2025)}, volume 334 of {\em Leibniz
  International Proceedings in Informatics (LIPIcs)}, pages 27:1--27:20,
  Dagstuhl, Germany, 2025. Schloss Dagstuhl -- Leibniz-Zentrum f{\"u}r
  Informatik.

\bibitem[BK09]{BernsteinK09}
Aaron Bernstein and David~R. Karger.
\newblock {A Nearly Optimal Oracle for Avoiding Failed Vertices and Edges}.
\newblock In {\em Proceedings of the 41st Symposium on Theory of Computing
  (STOC)}, pages 101--110, 2009.

\bibitem[BKSW23]{BhattacharyaKSW23}
Sayan Bhattacharya, Peter Kiss, Thatchaphol Saranurak, and David Wajc.
\newblock {\em Dynamic Matching with Better-than-2 Approximation in
  Polylogarithmic Update Time}, pages 100--128.
\newblock 2023.

\bibitem[BMNT23]{BlikstadMNT23}
Joakim Blikstad, Sagnik Mukhopadhyay, Danupon Nanongkai, and Ta{-}Wei Tu.
\newblock Fast algorithms via dynamic-oracle matroids.
\newblock In Barna Saha and Rocco~A. Servedio, editors, {\em Proceedings of the
  55th Annual {ACM} Symposium on Theory of Computing, {STOC} 2023, Orlando, FL,
  USA, June 20-23, 2023}, pages 1229--1242. {ACM}, 2023.

\bibitem[BP22]{BaswanaP22}
Surender Baswana and Abhyuday Pandey.
\newblock Sensitivity oracles for all-pairs mincuts.
\newblock In Joseph~(Seffi) Naor and Niv Buchbinder, editors, {\em Proceedings
  of the 2022 {ACM-SIAM} Symposium on Discrete Algorithms, {SODA} 2022, Virtual
  Conference / Alexandria, VA, USA, January 9 - 12, 2022}, pages 581--609.
  {SIAM}, 2022.

\bibitem[BS15]{BernsteinS15}
Aaron Bernstein and Cliff Stein.
\newblock Fully dynamic matching in bipartite graphs.
\newblock In Magn{\'u}s~M. Halld{\'o}rsson, Kazuo Iwama, Naoki Kobayashi, and
  Bettina Speckmann, editors, {\em Automata, Languages, and Programming}, pages
  167--179, Berlin, Heidelberg, 2015. Springer Berlin Heidelberg.

\bibitem[BS16]{BernsteinS16}
Aaron Bernstein and Cliff Stein.
\newblock Faster fully dynamic matchings with small approximation ratios.
\newblock In {\em Proceedings of the Twenty-Seventh Annual ACM-SIAM Symposium
  on Discrete Algorithms}, SODA '16, page 692–711, USA, 2016. Society for
  Industrial and Applied Mathematics.

\bibitem[BS21]{BercziS21}
Krist{\'o}f B{\'e}rczi and Tam{\'a}s Schwarcz.
\newblock Complexity of packing common bases in matroids.
\newblock {\em Mathematical Programming}, 188(1):1--18, Jul 2021.

\bibitem[CC20]{ChechikC20}
Shiri Chechik and Sarel Cohen.
\newblock Distance sensitivity oracles with subcubic preprocessing time and
  fast query time.
\newblock STOC 2020, page 1375–1388, New York, NY, USA, 2020. Association for
  Computing Machinery.

\bibitem[CCFK17]{ChechikC:17}
Shiri Chechik, Sarel Cohen, Amos Fiat, and Haim Kaplan.
\newblock (1 + epsilon)-approximate \emph{f}-sensitive distance oracles.
\newblock In {\em Proceedings of the Twenty-Eighth Annual {ACM-SIAM} Symposium
  on Discrete Algorithms, {SODA} 2017, Barcelona, Spain, Hotel Porta Fira,
  January 16-19}, pages 1479--1496, 2017.

\bibitem[CCZ25]{ChandrasekaranCZ25}
Karthekeyan Chandrasekaran, Chandra Chekuri, and Weihao Zhu.
\newblock {Online Disjoint Spanning Trees and Polymatroid Bases}.
\newblock In Keren Censor-Hillel, Fabrizio Grandoni, Jo\"{e}l Ouaknine, and
  Gabriele Puppis, editors, {\em 52nd International Colloquium on Automata,
  Languages, and Programming (ICALP 2025)}, volume 334 of {\em Leibniz
  International Proceedings in Informatics (LIPIcs)}, pages 44:1--44:20,
  Dagstuhl, Germany, 2025. Schloss Dagstuhl -- Leibniz-Zentrum f{\"u}r
  Informatik.

\bibitem[CGM92]{CameriniGM92}
P.M Camerini, G~Galbiati, and F~Maffioli.
\newblock Random pseudo-polynomial algorithms for exact matroid problems.
\newblock {\em Journal of Algorithms}, 13(2):258--273, 1992.

\bibitem[Cho16]{Choudhary16}
Keerti Choudhary.
\newblock An optimal dual fault tolerant reachability oracle.
\newblock In {\em 43rd International Colloquium on Automata, Languages, and
  Programming, {ICALP} 2016}, pages 130:1--130:13, 2016.

\bibitem[CKL{\etalchar{+}}23]{ChenKLPGS23}
Li~Chen, Rasmus Kyng, Yang~P. Liu, Richard Peng, Maximilian~Probst Gutenberg,
  and Sushant Sachdeva.
\newblock Almost-linear-time algorithms for maximum flow and minimum-cost flow.
\newblock {\em Commun. ACM}, 66(12):85–92, November 2023.

\bibitem[CKL{\etalchar{+}}24]{ChenKLMP24}
Li~Chen, Rasmus Kyng, Yang~P. Liu, Simon Meierhans, and Maximilian
  Probst~Gutenberg.
\newblock Almost-linear time algorithms for incremental graphs: Cycle
  detection, sccs, s-t shortest path, and minimum-cost flow.
\newblock In {\em Proceedings of the 56th Annual ACM Symposium on Theory of
  Computing}, STOC 2024, page 1165–1173, New York, NY, USA, 2024. Association
  for Computing Machinery.

\bibitem[CLL11]{CheungLL11}
Ho~Yee Cheung, Lap~Chi Lau, and Kai~Man Leung.
\newblock Graph connectivities, network coding, and expander graphs.
\newblock In {\em 2011 IEEE 52nd Annual Symposium on Foundations of Computer
  Science}, pages 190--199, 2011.

\bibitem[CLL14]{CheungLL14}
Ho~Yee Cheung, Lap~Chi Lau, and Kai~Man Leung.
\newblock Algebraic algorithms for linear matroid parity problems.
\newblock {\em {ACM} Trans. Algorithms}, 10(3):10:1--10:26, 2014.

\bibitem[CLPR12]{ChechikLPR:12}
Shiri Chechik, Michael Langberg, David Peleg, and Liam Roditty.
\newblock f-sensitivity distance oracles and routing schemes.
\newblock {\em Algorithmica}, 63(4):861--882, 2012.

\bibitem[Cun86]{Cunningham86}
William~H. Cunningham.
\newblock Improved bounds for matroid partition and intersection algorithms.
\newblock {\em SIAM Journal on Computing}, 15(4):948--957, 1986.

\bibitem[DG24]{Dey024}
Dipan Dey and Manoj Gupta.
\newblock Near optimal dual fault tolerant distance oracle.
\newblock In Timothy~M. Chan, Johannes Fischer, John Iacono, and Grzegorz
  Herman, editors, {\em 32nd Annual European Symposium on Algorithms, {ESA}
  2024, Royal Holloway, London, United Kingdom, September 2-4, 2024}, volume
  308 of {\em LIPIcs}, pages 45:1--45:23. Schloss Dagstuhl - Leibniz-Zentrum
  f{\"{u}}r Informatik, 2024.

\bibitem[DHZ20]{DuanHZ20}
Ran Duan, Haoqing He, and Tianyi Zhang.
\newblock {A Scaling Algorithm for Weighted f-Factors in General Graphs}.
\newblock In Artur Czumaj, Anuj Dawar, and Emanuela Merelli, editors, {\em 47th
  International Colloquium on Automata, Languages, and Programming (ICALP
  2020)}, volume 168 of {\em Leibniz International Proceedings in Informatics
  (LIPIcs)}, pages 41:1--41:17, Dagstuhl, Germany, 2020. Schloss Dagstuhl --
  Leibniz-Zentrum f{\"u}r Informatik.

\bibitem[DK26]{DeyK26}
Dipan Dey and Telikepalli Kavitha.
\newblock Low-cost arborescence under edge faults, 2026.

\bibitem[DM76]{DaviesMcDiarmid1976}
Joan Davies and Colin McDiarmid.
\newblock Disjoint common transversals and exchange structures.
\newblock {\em Journal of the London Mathematical Society}, s2-14(1):55--62,
  1976.

\bibitem[DN95]{DinitzN95}
Yefim Dinitz and Zeev Nutov.
\newblock A 2-level cactus model for the system of minimum and minimum+1
  edge-cuts in a graph and its incremental maintenance.
\newblock In {\em Proceedings of the Twenty-Seventh Annual ACM Symposium on
  Theory of Computing}, STOC '95, page 509–518, New York, NY, USA, 1995.
  Association for Computing Machinery.

\bibitem[DP09]{DuanP09}
Ran Duan and Seth Pettie.
\newblock {\em Dual-Failure Distance and Connectivity Oracles}, pages 506--515.
\newblock 2009.

\bibitem[DP10]{DuanP:10}
Ran Duan and Seth Pettie.
\newblock Connectivity oracles for failure prone graphs.
\newblock In {\em Proceedings of the 42nd {ACM} Symposium on Theory of
  Computing, {STOC} 2010, Cambridge, Massachusetts, USA, 5-8 June 2010}, pages
  465--474, 2010.

\bibitem[DP17]{DuanP:17}
Ran Duan and Seth Pettie.
\newblock Connectivity oracles for graphs subject to vertex failures.
\newblock In {\em Proceedings of the Twenty-Eighth Annual {ACM-SIAM} Symposium
  on Discrete Algorithms, {SODA} 2017, Barcelona, Spain, Hotel Porta Fira,
  January 16-19}, pages 490--509, 2017.

\bibitem[DR22]{DuanR22}
Ran Duan and Hanlin Ren.
\newblock Maintaining exact distances under multiple edge failures.
\newblock In {\em Proceedings of the 54th Annual ACM SIGACT Symposium on Theory
  of Computing}, STOC 2022, page 1093–1101, New York, NY, USA, 2022.
  Association for Computing Machinery.

\bibitem[DTCR08]{DTCR08}
Camil Demetrescu, Mikkel Thorup, Rezaul~Alam Chowdhury, and Vijaya
  Ramachandran.
\newblock Oracles for distances avoiding a failed node or link.
\newblock {\em {SIAM} J. Comput.}, 37(5):1299--1318, 2008.

\bibitem[DV00]{DinitzV00}
Yefim Dinitz and Alek Vainshtein.
\newblock The general structure of edge-connectivity of a vertex subset in a
  graph and its incremental maintenance. odd case.
\newblock {\em {SIAM} J. Comput.}, 30(3):753--808, 2000.

\bibitem[dVC26]{VosC26}
Tijn de~Vos and Aleksander Christiansen.
\newblock Tree-packing revisited: Faster fully dynamic min-cut and arboricity.
\newblock {\em Algorithmica}, 88(3):52, Jun 2026.

\bibitem[dVG26]{VosG26}
Tijn de~Vos and Mara Grilnberger.
\newblock Dynamic matroids: Base packing and covering, 2026.

\bibitem[Edm03]{Edmonds2003}
Jack Edmonds.
\newblock {\em Submodular Functions, Matroids, and Certain Polyhedra}, pages
  11--26.
\newblock Springer Berlin Heidelberg, Berlin, Heidelberg, 2003.

\bibitem[FF62]{FulkersonF:62}
J.R. Ford and D.R. Fullkerson.
\newblock {\em Flows in networks}.
\newblock Princeton University Press, Princeton, 1962.

\bibitem[Fra90]{Frank90}
A.~Frank.
\newblock Augmenting graphs to meet edge-connectivity requirements.
\newblock In {\em Proceedings [1990] 31st Annual Symposium on Foundations of
  Computer Science}, pages 708--718 vol.2, 1990.

\bibitem[FS05]{FleischnerS05}
Herbert Fleischner and Stefan Szeider.
\newblock On edge-colored graphs covered by properly colored cycles.
\newblock {\em Graphs and Combinatorics}, 21:301--306, 2005.

\bibitem[GDHI{\etalchar{+}}17]{GeorgiadisDIKP17}
Loukas Georgiadis, Thomas Dueholm~Hansen, Giuseppe~F. Italiano, Sebastian
  Krinninger, and Nikos Parotsidis.
\newblock {Decremental Data Structures for Connectivity and Dominators in
  Directed Graphs}.
\newblock In Ioannis Chatzigiannakis, Piotr Indyk, Fabian Kuhn, and Anca
  Muscholl, editors, {\em 44th International Colloquium on Automata, Languages,
  and Programming (ICALP 2017)}, volume~80 of {\em Leibniz International
  Proceedings in Informatics (LIPIcs)}, pages 42:1--42:15, Dagstuhl, Germany,
  2017. Schloss Dagstuhl -- Leibniz-Zentrum f{\"u}r Informatik.

\bibitem[GHN{\etalchar{+}}23]{GoranciHNSTW23}
Gramoz Goranci, Monika Henzinger, Danupon Nanongkai, Thatchaphol Saranurak,
  Mikkel Thorup, and Christian Wulff-Nilsen.
\newblock {\em Fully Dynamic Exact Edge Connectivity in Sublinear Time}, pages
  70--86.
\newblock 2023.

\bibitem[GIK{\etalchar{+}}17]{GeorgiadisIKPP17}
Loukas Georgiadis, Giuseppe~F. Italiano, Aikaterini Karanasiou, Charis
  Papadopoulos, and Nikos Parotsidis.
\newblock Sparse certificates for 2-connectivity in directed graphs.
\newblock {\em Theor. Comput. Sci.}, 698:40--66, 2017.

\bibitem[GIP17]{GIP17}
Loukas Georgiadis, Giuseppe~F. Italiano, and Nikos Parotsidis.
\newblock Strong connectivity in directed graphs under failures, with
  applications.
\newblock In {\em Proceedings of the Twenty-Eighth Annual {ACM-SIAM} Symposium
  on Discrete Algorithms, {SODA} 2017, Barcelona, Spain, Hotel Porta Fira,
  January 16-19}, pages 1880--1899, 2017.

\bibitem[GP13]{GuptaP13}
Manoj Gupta and Richard Peng.
\newblock Fully dynamic (1+ e)-approximate matchings.
\newblock In {\em Proceedings of the 2013 IEEE 54th Annual Symposium on
  Foundations of Computer Science}, FOCS '13, page 548–557, USA, 2013. IEEE
  Computer Society.

\bibitem[GR21]{GuR21}
Yong Gu and Hanlin Ren.
\newblock {Constructing a Distance Sensitivity Oracle in $O(n^2.5794 M)$ Time}.
\newblock In Nikhil Bansal, Emanuela Merelli, and James Worrell, editors, {\em
  48th International Colloquium on Automata, Languages, and Programming (ICALP
  2021)}, volume 198 of {\em Leibniz International Proceedings in Informatics
  (LIPIcs)}, pages 76:1--76:20, Dagstuhl, Germany, 2021. Schloss Dagstuhl --
  Leibniz-Zentrum f{\"u}r Informatik.

\bibitem[GS18]{GuptaS18}
Manoj Gupta and Aditi Singh.
\newblock Generic single edge fault tolerant exact distance oracle.
\newblock In Ioannis Chatzigiannakis, Christos Kaklamanis, D{\'{a}}niel Marx,
  and Donald Sannella, editors, {\em 45th International Colloquium on Automata,
  Languages, and Programming, {ICALP} 2018, Prague, Czech Republic, July 9-13,
  2018}, volume 107 of {\em LIPIcs}, pages 72:1--72:15. Schloss Dagstuhl -
  Leibniz-Zentrum f{\"{u}}r Informatik, 2018.

\bibitem[GS21a]{GabowS21_I}
Harold~N. Gabow and Piotr Sankowski.
\newblock Algorithms for weighted matching generalizations i: Bipartite graphs,
  b-matching, and unweighted f-factors.
\newblock {\em SIAM Journal on Computing}, 50(2):440--486, 2021.

\bibitem[GS21b]{GabowS21_II}
Harold~N. Gabow and Piotr Sankowski.
\newblock Algorithms for weighted matching generalizations ii: f-factors and
  the special case of shortest paths.
\newblock {\em SIAM Journal on Computing}, 50(2):555--601, 2021.

\bibitem[GW19]{GrandoniVW19}
Fabrizio Grandoni and Virginia~Vassilevska Williams.
\newblock Faster replacement paths and distance sensitivity oracles.
\newblock {\em ACM Trans. Algorithms}, 16(1), December 2019.

\bibitem[Har09]{Harvey09}
Nicholas J.~A. Harvey.
\newblock Algebraic algorithms for matching and matroid problems.
\newblock {\em SIAM Journal on Computing}, 39(2):679--702, 2009.

\bibitem[HKP24]{HuKP24}
Bingbing Hu, Evangelos Kosinas, and Adam Polak.
\newblock {Connectivity Oracles for Predictable Vertex Failures}.
\newblock In Timothy Chan, Johannes Fischer, John Iacono, and Grzegorz Herman,
  editors, {\em 32nd Annual European Symposium on Algorithms (ESA 2024)},
  volume 308 of {\em Leibniz International Proceedings in Informatics
  (LIPIcs)}, pages 72:1--72:16, Dagstuhl, Germany, 2024. Schloss Dagstuhl --
  Leibniz-Zentrum f{\"u}r Informatik.

\bibitem[HS24]{HuangS24}
Chien-Chung Huang and François Sellier.
\newblock {\em Robust Sparsification for Matroid Intersection with
  Applications}, pages 2916--2940.
\newblock 2024.

\bibitem[HSSY24]{HenzingerSSY24}
Monika Henzinger, Barna Saha, Martin~P. Seybold, and Christopher Ye.
\newblock {On the Complexity of Algorithms with Predictions for Dynamic Graph
  Problems}.
\newblock In Venkatesan Guruswami, editor, {\em 15th Innovations in Theoretical
  Computer Science Conference (ITCS 2024)}, volume 287 of {\em Leibniz
  International Proceedings in Informatics (LIPIcs)}, pages 62:1--62:25,
  Dagstuhl, Germany, 2024. Schloss Dagstuhl -- Leibniz-Zentrum f{\"u}r
  Informatik.

\bibitem[IKP21]{ItalianoKP21}
Giuseppe~F. Italiano, Adam Karczmarz, and Nikos Parotsidis.
\newblock Planar reachability under single vertex or edge failures.
\newblock In D{\'{a}}niel Marx, editor, {\em Proceedings of the 2021 {ACM-SIAM}
  Symposium on Discrete Algorithms, {SODA} 2021, Virtual Conference, January 10
  - 13, 2021}, pages 2739--2758. {SIAM}, 2021.

\bibitem[JK82]{JensenK82}
Per~M. Jensen and Bernhard Korte.
\newblock Complexity of matroid property algorithms.
\newblock {\em SIAM Journal on Computing}, 11(1):184--190, 1982.

\bibitem[JS22]{JinS22}
Wenyu Jin and Xiaorui Sun.
\newblock Fully dynamic s-t edge connectivity in subpolynomial time (extended
  abstract).
\newblock In {\em 2021 IEEE 62nd Annual Symposium on Foundations of Computer
  Science (FOCS)}, pages 861--872, 2022.

\bibitem[JST24]{JinST24}
Wenyu Jin, Xiaorui Sun, and Mikkel Thorup.
\newblock {\em Fully Dynamic Min-Cut of Superconstant Size in Subpolynomial
  Time}, pages 2999--3026.
\newblock 2024.

\bibitem[Kar24]{Karczmarz24}
Adam Karczmarz.
\newblock Max \emph{s}, \emph{t}-flow oracles and negative cycle detection in
  planar digraphs.
\newblock In David~P. Woodruff, editor, {\em Proceedings of the 2024 {ACM-SIAM}
  Symposium on Discrete Algorithms, {SODA} 2024, Alexandria, VA, USA, January
  7-10, 2024}, pages 1606--1620. {SIAM}, 2024.

\bibitem[Kis23]{Kiss23}
Peter Kiss.
\newblock Deterministic dynamic matching in worst-case update time.
\newblock {\em Algorithmica}, 85(12):3741--3765, Dec 2023.

\bibitem[KKPW22]{KimKPW22}
Eun~Jung Kim, Stefan Kratsch, Marcin Pilipczuk, and Magnus Wahlstr\"{o}m.
\newblock Directed flow-augmentation.
\newblock STOC 2022, page 938–947, New York, NY, USA, 2022. Association for
  Computing Machinery.

\bibitem[KS23a]{KarczmarzS23}
Adam Karczmarz and Piotr Sankowski.
\newblock Sensitivity and dynamic distance oracles via generic matrices and
  frobenius form.
\newblock In {\em 64th {IEEE} Annual Symposium on Foundations of Computer
  Science, {FOCS} 2023, Santa Cruz, CA, USA, November 6-9, 2023}, pages
  1745--1756. {IEEE}, 2023.

\bibitem[KS23b]{KarczmarzSmule23}
Adam Karczmarz and Marcin Smulewicz.
\newblock {On Fully Dynamic Strongly Connected Components}.
\newblock In Inge~Li G{\o}rtz, Martin Farach-Colton, Simon~J. Puglisi, and
  Grzegorz Herman, editors, {\em 31st Annual European Symposium on Algorithms
  (ESA 2023)}, volume 274 of {\em Leibniz International Proceedings in
  Informatics (LIPIcs)}, pages 68:1--68:15, Dagstuhl, Germany, 2023. Schloss
  Dagstuhl -- Leibniz-Zentrum f{\"u}r Informatik.

\bibitem[Law75]{Lawler75}
Eugene~L. Lawler.
\newblock Matroid intersection algorithms.
\newblock {\em Mathematical Programming}, 9(1):31--56, Dec 1975.

\bibitem[Law76]{Lawler76}
Eugene~L. Lawler.
\newblock {\em Combinatorial Optimization: Networks and Matroids}.
\newblock Holt, Rinehart and Winston, New York, 1976.

\bibitem[Lov80]{Lovasz80}
L~Lovász.
\newblock Matroid matching and some applications.
\newblock {\em Journal of Combinatorial Theory, Series B}, 28(2):208--236,
  1980.

\bibitem[LS22]{LongS22}
Yaowei Long and Thatchaphol Saranurak.
\newblock Near-optimal deterministic vertex-failure connectivity oracles.
\newblock In {\em 2022 IEEE 63rd Annual Symposium on Foundations of Computer
  Science (FOCS)}, pages 1002--1010, 2022.

\bibitem[LS24]{LiuS24}
Quanquan~C. Liu and Vaidehi Srinivas.
\newblock The predicted-updates dynamic model: Offline, incremental, and
  decremental to fully dynamic transformations.
\newblock In Shipra Agrawal and Aaron Roth, editors, {\em Proceedings of Thirty
  Seventh Conference on Learning Theory}, volume 247 of {\em Proceedings of
  Machine Learning Research}, pages 3582--3641. PMLR, 30 Jun--03 Jul 2024.

\bibitem[MMN{\etalchar{+}}25]{McCauleyMNNS25}
Samuel McCauley, Benjamin Moseley, Aidin Niaparast, Helia Niaparast, and Shikha
  Singh.
\newblock {Incremental Approximate Single-Source Shortest Paths with
  Predictions}.
\newblock In Keren Censor-Hillel, Fabrizio Grandoni, Jo\"{e}l Ouaknine, and
  Gabriele Puppis, editors, {\em 52nd International Colloquium on Automata,
  Languages, and Programming (ICALP 2025)}, volume 334 of {\em Leibniz
  International Proceedings in Informatics (LIPIcs)}, pages 117:1--117:20,
  Dagstuhl, Germany, 2025. Schloss Dagstuhl -- Leibniz-Zentrum f{\"u}r
  Informatik.

\bibitem[MMNS24]{McCauleyMNS24}
Samuel McCauley, Benjamin Moseley, Aidin Niaparast, and Shikha Singh.
\newblock Incremental topological ordering and cycle detection with
  predictions.
\newblock In {\em Proceedings of the 41st International Conference on Machine
  Learning}, ICML'24. JMLR.org, 2024.

\bibitem[NSV94]{NarayananSV94}
H.~Narayanan, Huzur Saran, and Vijay~V. Vazirani.
\newblock Randomized parallel algorithms for matroid union and intersection,
  with applications to arborescences and edge-disjoint spanning trees.
\newblock {\em SIAM Journal on Computing}, 23(2):387--397, 1994.

\bibitem[PQ82]{PicardQ82}
Jean{-}Claude Picard and Maurice Queyranne.
\newblock On the structure of all minimum cuts in a network and applications.
\newblock {\em Math. Program.}, 22(1):121, 1982.

\bibitem[PR23]{PengR23}
Binghui Peng and Aviad Rubinstein.
\newblock {\em Fully-dynamic-to-incremental reductions with known deletion
  order (e.g. sliding window)}, pages 261--271.
\newblock 2023.

\bibitem[PT07]{PatrascuT:07}
Mihai Patrascu and Mikkel Thorup.
\newblock Planning for fast connectivity updates.
\newblock In {\em 48th Annual {IEEE} Symposium on Foundations of Computer
  Science {(FOCS} 2007), October 20-23, 2007, Providence, RI, USA,
  Proceedings}, pages 263--271, 2007.

\bibitem[Qua24]{Quanrud24}
Kent Quanrud.
\newblock {\em Faster exact and approximation algorithms for packing and
  covering matroids via push-relabel}, pages 2305--2336.
\newblock 2024.

\bibitem[San04]{Sankowski04}
Piotr Sankowski.
\newblock Dynamic transitive closure via dynamic matrix inverse (extended
  abstract).
\newblock In {\em 45th Symposium on Foundations of Computer Science, {FOCS}
  2004, Rome, Italy, October 17-19, 2004, Proceedings}, pages 509--517. {IEEE}
  Computer Society, 2004.

\bibitem[San05]{Sankowski05}
Piotr Sankowski.
\newblock Shortest paths in matrix multiplication time.
\newblock In Gerth~St{\o}lting Brodal and Stefano Leonardi, editors, {\em
  Algorithms - {ESA} 2005, 13th Annual European Symposium, Palma de Mallorca,
  Spain, October 3-6, 2005, Proceedings}, volume 3669 of {\em Lecture Notes in
  Computer Science}, pages 770--778. Springer, 2005.

\bibitem[Sto15]{Storjohann15}
Arne Storjohann.
\newblock On the complexity of inverting integer and polynomial matrices.
\newblock {\em computational complexity}, 24(4):777--821, Dec 2015.

\bibitem[Ter25]{Terao25}
Tatsuya Terao.
\newblock Faster matroid partition algorithms.
\newblock {\em ACM Trans. Algorithms}, 21(2), January 2025.

\bibitem[Tho07]{Thorup07}
Mikkel Thorup.
\newblock Fully-dynamic min-cut*.
\newblock {\em Combinatorica}, 27(1):91--127, Feb 2007.

\bibitem[vdB21]{Brand21}
Jan van~den Brand.
\newblock {\em Unifying Matrix Data Structures: Simplifying and Speeding up
  Iterative Algorithms}, pages 1--13.
\newblock 2021.

\bibitem[vdBCK{\etalchar{+}}]{BrandCKLPPSS24}
Jan van~den Brand, Li~Chen, Rasmus Kyng, Yang~P. Liu, Richard Peng,
  Maximilian~Probst Gutenberg, Sushant Sachdeva, and Aaron Sidford.
\newblock {\em Incremental Approximate Maximum Flow on Undirected Graphs in
  Subpolynomial Update Time}, pages 2980--2998.

\bibitem[VDBCK{\etalchar{+}}24]{BrandCKLMGS24}
Jan Van Den~Brand, Li~Chen, Rasmus Kyng, Yang~P. Liu, Simon Meierhans,
  Maximilian~Probst Gutenberg, and Sushant Sachdeva.
\newblock { Almost-Linear Time Algorithms for Decremental Graphs: Min-Cost Flow
  and More via Duality }.
\newblock In {\em 2024 IEEE 65th Annual Symposium on Foundations of Computer
  Science (FOCS)}, pages 2010--2032, Los Alamitos, CA, USA, October 2024. IEEE
  Computer Society.

\bibitem[vdBFNP24]{BrandFNP24}
Jan van~den Brand, Sebastian Forster, Yasamin Nazari, and Adam Polak.
\newblock {\em On Dynamic Graph Algorithms with Predictions}, pages 3534--3557.
\newblock 2024.

\bibitem[vdBKZ26]{BrandKZ26}
Jan van~den Brand, Vishal Kumar, and Daniel~J. Zhang.
\newblock {Dynamic Rank, Basis, and Matching}.
\newblock In Sayan Bhattacharya, Danupon Nanongkai, Michael Benedikt, and
  Gabriele Puppis, editors, {\em 53rd International Colloquium on Automata,
  Languages, and Programming (ICALP 2026)}, volume 374 of {\em Leibniz
  International Proceedings in Informatics (LIPIcs)}, pages 45:1--45:26,
  Dagstuhl, Germany, 2026. Schloss Dagstuhl -- Leibniz-Zentrum f{\"u}r
  Informatik.

\bibitem[vdBNS19]{BrandNS19}
Jan van~den Brand, Danupon Nanongkai, and Thatchaphol Saranurak.
\newblock { Dynamic Matrix Inverse: Improved Algorithms and Matching
  Conditional Lower Bounds }.
\newblock In {\em 2019 IEEE 60th Annual Symposium on Foundations of Computer
  Science (FOCS)}, pages 456--480, Los Alamitos, CA, USA, November 2019. IEEE
  Computer Society.

\bibitem[vdBS19]{BrandS19}
Jan van~den Brand and Thatchaphol Saranurak.
\newblock Sensitive distance and reachability oracles for large batch updates.
\newblock In {\em 60th {IEEE} Annual Symposium on Foundations of Computer
  Science, {FOCS} 2019, Baltimore, Maryland, USA, November 9-12, 2019}, pages
  424--435, 2019.

\bibitem[WY13]{WeimannY13}
Oren Weimann and Raphael Yuster.
\newblock {Replacement Paths and Distance Sensitivity Oracles via Fast Matrix
  Multiplication}.
\newblock {\em ACM Transactions on Algorithms}, 9:14:1--14:13, 2013.

\end{thebibliography}
\end{document}